\keywords{Linear logic, contractual logic, declarative policy language, fair exchange of resources}
\newcommand{\viola}[1]{}
\lstdefinelanguage{MuAC}{
		keywords = {Gives, Me, Requester, with},
	    comment=[l]{//},
	}
\lstdefinestyle{MuAC}{
		language=MuAC,
		basicstyle=\footnotesize\ttfamily, 
		otherkeywords={:-},
		keywordstyle=\color{blue},
		commentstyle=\color{gray},
	}
\def\makenewenum#1#2{%
\newcounter{cnt#1}
\newenvironment{#1}%
{\begin{list}{\makebox[0pt][r]{#2}}%
{\setlength{\itemsep}{0pt}%
 \setlength{\parsep}{.2em}%
 \setlength{\leftmargin}{2em}%
 \setlength{\labelwidth}{.2em}%
 \usecounter{cnt#1}}}%
{\end{list}}}
\newenvironment{restate-theorem}[1]%
  {\begin{trivlist}\item[]{\normalsize\bfseries{\sffamily}
        Restatement of Theorem~#1.}\hspace*{0mm}\it}%
  {\end{trivlist}}
\newenvironment{restate-lemma}[1]%
  {\begin{trivlist}\item[]{\normalsize\bfseries{\sffamily}
        Restatement of Lemma~#1.}\hspace*{0mm}\it}%
  {\end{trivlist}}
\newenvironment{restate-proposition}[1]%
  {\begin{trivlist}\item[]{\normalsize\bfseries{\sffamily}
        Restatement of Proposition~#1.}\hspace*{0mm}\it}%
  {\end{trivlist}}
\newcommand{\tr}{tr}
\newcommand{\semdeno}[1]{\llbracket #1 \rrbracket}
\newcommand{\semden}[1]{\llparenthesis\, #1 \,\rrparenthesis}
\def\makenewenum#1#2{%
	\newcounter{cnt#1}
	\newenvironment{#1}%
	{\begin{list}{\makebox[0pt][r]{#2}}%
			{\setlength{\itemsep}{0pt}%
				\setlength{\parsep}{.2em}%
				\setlength{\leftmargin}{2em}%
				\setlength{\labelwidth}{.2em}%
				\usecounter{cnt#1}}}%
		{\end{list}}}
\newcommand{\semantics}[1]{\llparenthesis#1\rrparenthesis}
\newcommand{\semanticsusr}[2]{\llparenthesis#1\rrparenthesis_{#2}}
\DeclareMathOperator{\linearcontract}{\multimap\hspace{-8pt}\multimap}
\DeclareMathOperator{\smlinearcontract}{\multimap\hspace{-6pt}\multimap}
\newcommand{\contract}{\twoheadrightarrow}
\newcommand{\MuACL}{\textnormal{MuACL}}
\newcommand{\MuACstate}{\mathit{st}}
\newcommand{\st}{\MuACstate}
\newcommand{\MuACStates}{\mathit{St}}
\newcommand{\usr}{\mathit{usr}}
\newcommand{\Usr}{\mathit{Usr}}
\newcommand{\Res}{\mathit{Res}}
\newcommand{\res}{\mathit{res}}
\newcommand{\Tran}{\mathit{Tr}}
\newcommand{\tran}{\mathit{tr}}
\newcommand{\exc}{\mathit{exc}}
\newcommand{\Exc}{\mathit{Exc}}
\newcommand{\mset}[1]{\{\!| { #1} |\!\}}
\newcommand{\MuACLs}{\MuACL$_{(*\text{-cut})}$}
\newcommand{\MuACLst}{\MuACL$^2_{(*\text{-cut})}$}
\newcommand{\Pol}{\mathit{Pol}}
\newcommand{\pol}{\mathit{pol}}
\newcommand{\CS}{\mathit{CS}}
\newcommand{\Pred}{\mathit{P}}
\newcommand{\GiveLs}{\mathit{GiveLs}}
\newcommand{\PredLs}{\mathit{PredLs}}
\newcommand{\spell}{\mathit{sb}}
\newcommand{\heavy}{\mathit{hw}}
\newcommand{\light}{\mathit{lw}}
\newcommand{\heal}{\mathit{hp}}
\newcommand{\exlo}{\MuACL{\textsuperscript{0}}}
\theoremstyle{plain} %
\begin{document}

\title{Policies for Fair Exchanges of Resources}

\author[L.~Ceragioli]{Lorenzo Ceragioli\lmcsorcid{0000-0002-1288-9623}}[a]
\author[P.~Degano]{Pierpaolo Degano\lmcsorcid{0000-0002-8070-4838}}[a,b]
\author[L.~Galletta]{Letterio Galletta\lmcsorcid{0000-0003-0351-9169}}[a]
\author[L.~Vigan\`o]{Luca Vigan\`o\lmcsorcid{0000-0001-9916-271X}}[c]

\address{IMT School for Advanced Studies Lucca, Italy}	%
\email{lorenzo.ceragioli@imtlucca.it, letterio.galletta@imtlucca.it}  %

\address{Dipartimento di Informatica, Universit\`a di Pisa, Italy}	%
\email{pierpaolo.degano@unipi.it}  %

\address{Department of Informatics, King’s College London, UK}	%
\email{luca.vigano@kcl.ac.uk}  %

\begin{abstract}
	People increasingly use digital platforms to exchange resources in accordance with some policies stating what resources users offer and what they require in return.
	In this paper, we propose a formal model of these environments, focussing on how users' policies are defined and enforced, so ensuring that malicious users cannot take advantage of honest ones.
	To that end, we introduce the declarative policy language MuAC and equip it with a formal semantics.
	To determine if a resource exchange is fair, i.e., if it respects the MuAC policies in force, we introduce the non-standard logic \MuACL\ that combines non-linear, linear and contractual aspects, and prove it decidable.
	Notably, the operator for contractual implication of \MuACL\  is not expressible in linear logic. 
	We define a semantics preserving compilation of MuAC policies into \MuACL, thus establishing that exchange fairness is reduced to finding a proof in \MuACL.
	Finally, we show how this approach can be 
	put to work	on a blockchain to exchange non-fungible tokens.
\end{abstract}

\maketitle

\section{Introduction}\label{sec:intro}
    
Exchanging and sharing of resources and assets have commonly occurred in diverse human contexts for thousands of years, but only the advent of the Internet and modern online platforms have enabled the idea of \emph{sharing economy} to emerge as a new disruptive socio-economic system able to challenge traditional models~\cite{sharing}.  
A typical sharing economy scenario involves a community of users who rely on a digital platform to foster collaboration and to share and transfer to each other resources and assets via peer-to-peer transactions.

As an example, consider a scenario inspired by \emph{Home Exchange}, an online community in which members agree to swap homestays for a period of time~\cite{homeexchange}.
Say Alice offers to exchange her house in Paris with Bob who offers his in Rome.
The decision is made by the two users based both on their preferences, and on the properties and availability of their houses.
More involved transactions may also occur when a direct exchange is not possible.
Consider, e.g., Bob's friend Carl, who owns a flat in London and would like to spend a week in Paris at Alice's house.
However, Alice does not plan to visit London, so there is no direct agreement with Carl, but Bob can generously ``pay for Carl'', giving Alice his house in place of Carl's.
As this example shows, when some user requests the resource(s) of another, the two, and possibly more, start bargaining until an agreement is found.

A digital platform has to support users in at least two key issues:
\begin{enumerate}
\item[(1)] The first issue is ensuring that users' requests and expectations match.
To address this problem, the platform should implement mechanisms through which each user specifies conditions on what she offers and what she requires in return, namely \emph{exchange policies}.
In addition, the platform must provide all the involved users with the guarantee that the proposed exchange is \emph{fair}, i.e., it obeys all their policies. 
\item[(2)] The second issue is guaranteeing that the agreed exchange takes place properly so as to prevent malicious users from taking advantage of others. 
For that,  so-called fair exchange protocols have been studied, proving that a trusted third party (TTP) is always required to ensure fairness and to solve disputes, even in quite restricted cases~\cite{Pagnia99}.\footnote{There are a number of other interesting issues that could be considered. For instance, the platform could facilitate the negotiation by ensuring that users can express what they offer and what they desire, or it could support the users in reaching the most advantageous agreement for them all. In this paper, we do not consider these additional issues and instead focus on the two key issues we mentioned.}
\end{enumerate}

In this paper, we provide a foundational approach to investigate these two issues, particularly the first one, 
and we introduce a formal model of digital platforms.  %
We adopt a minimalist approach by abstracting away from all details about user management that is up to a centralised authority.
We focus on resource ownership and transfers, and, in particular, on the exchange policies that regulate them.
Hence, we do not consider issues like registration to or cancellation from the platform,
handling of user profiles, group definition, interaction mechanisms between users, etc.

We provide four main contributions.
The first contribution is the basic notion of \emph{exchange environment} that formally models the behaviour of exchange platforms.
We define an exchange environment to be a
labeled transition system.
Its states record the ownership of the resources and its transitions represent transfers. 
Moreover, we introduce the notion of \emph{exchange policy} 
to formally represent the requirements of users on resource exchanges.
The exchanges in a transition must guarantee that a fair agreement has indeed been reached among users so that each of them
gives what she promised and gets what she required.
Fairness will prevent a dishonest participant from deceiving others and make them accept exchanges that do not satisfy their policies.

Our second contribution is MuAC, a declarative access control policy language similar to Datalog, 
through which users define their exchange policies in isolation.
Again, these amount to basic conditions on when a resource can be given to another user, in particular conditional promises on which resources the giving user expects in return.
These high-level policies will then be mapped to exchange policies.

Checking that a resource exchange is fair requires controlling that it obeys the policies of all the users involved, which is the key issue (1) 
discussed above.
However, a crucial point is that such agreements may be circular, as it is typical of human and of virtual contracts.
In addition, an exchange may ``consume'' the resources.
To see why, consider the example above.
Alice promises her house to Bob if Bob is willing to do the same (and vice versa):
a guarantee should be offered that the promises match and will be kept.
Moreover, once the agreement is reached and Alice has given her house to Bob, she cannot give it also to David until Bob gives it back to her, otherwise a \emph{double spending} occurs.
We delegate the task of ensuring the fairness of such a distributed agreement to the TTP that is anyway in charge of keeping the current association between users, their policies and their resources. 
Crucially, to avoid misbehaviour the TTP is required to enforce the policies during exchanges.

For that, we set up a logical framework that extends classical logic to 
deal \emph{at the same time} with consumable resources and circularity.
This is our third contribution: we propose \MuACL, a logic that 
features a linear fragment and a non-linear one inspired by LNL~\cite{LNL}.
To handle circularity, \MuACL\ is equipped with a specific operator, called \emph{linear contractual implication}, inspired by PCL~\cite{BZ}.
To the best our knowledge, \MuACL\ is the first logic that combines linear and contractual aspects.
Notably, the expressive power of the standard computational fragment of linear logic and that of \MuACL\ are different, because the operator of contractual implication cannot be encoded in the first logic.
Indeed, there is no homomorphic encoding of \MuACL\ into the standard computational fragment of linear logic (\autoref{thm:homo}).

We then compile MuAC policies into \MuACL\ formulas in a correct and complete way.
The main technical result is 
that the validity of \MuACL\ formulas is decidable (\autoref{thm:MuACLdecide}).
The TTP then finds a witness that the proposed exchange satisfies (or not) the policies of all the involved users, with no double spending.
We discuss the efficacy of our proposal by showing that the TTP only applies fair resource exchanges and prevents different kinds of misbehaviour, which is the key issue (2) discussed above.

To show our policy framework at work, we propose an implementation schema as a blockchain smart contract. 
Through it, users define their policies and exchange resources like non-fungible tokens (NFTs).
Our implementation also plays the role of TTP.
We also propose an off-chain client to reduce the on-chain cost of performing an exchange.

In summary, the main contributions of this paper are both of a theoretical, logical, nature and of a more applied one:
\begin{enumerate}
	\item The notion of exchange environment as a minimalist and abstract formal model of exchange platform. We use this model to precisely characterise when the exchange of resources is fair and when a user misbehaves.
	\item The access control language MuAC through which users of an exchange environment can easily express which resources they are willing to give and what they require in return. 
	MuAC is the first %
	logical access control policy language that allows for expressing promises and exchange contracts on consumable resources.
	\item The non-standard logic \MuACL\ that interprets MuAC policies and  certifies with a proof when a resource exchange is fair. 
	Besides standard constructs, this logic has both linear operators to deal with consumable resources and a contractual implication to express promises that require a circular reasoning to be checked, which is not expressible in linear logic.
	We prove that satisfiability is decidable for \MuACL\ and provide a correct and complete compilation procedure from MuAC policies to this logic.
	To the best of our knowledge, \MuACL\ is the first linear non-linear, contractual and decidable logic --- as such worth studying also in itself.
	\item We instantiate our policy framework on the specific case of non-fungible tokens and we show that its intrinsic policy enforcement prevents different kinds of misbehaviour. 
\end{enumerate}

In the rest of the paper we proceed as follows. In Section~\ref{sec:col-atwork}, we overview our approach.
In Section~\ref{sec:env}, we formalise our exchange environment. In Section~\ref{sec:col-MuAC}, we formalise the MuAC language for exchange policies. 
In Section~\ref{sec:col-formal}, we introduce \MuACL\, and we show how it computes fair exchanges.
In Section~\ref{sec:smart}, we present the implementation of the smart contract and show how MuAC policies only allow fair exchanges of non-fungible tokens. 
In Section~\ref{sec:discuss} and Section~\ref{sec:col-related}, we discuss limitations and the connections to related work.
Finally, in Section~\ref{sec:col-conclude}, we draw conclusions and discuss our plans for future work. 
The appendices A--E contain a summary of our notation, the proofs of our theorems and all the technical details.

\section{An Overview of the Approach}\label{sec:col-atwork}

We first introduce the notion of exchange environment.
Then we introduce a running example that allows us to provide an overview of our proposal.

\subsection{Setting the Context}\label{sec:model}

In an exchange environment, users own their resources and may transfer them to others in order to obtain something in return, thus performing exchanges.
Here, we do not consider how users communicate with the digital platform, e.g., to register themselves and handle their profile, how they interact with each other to bargain an agreement,  etc.
Rather, we focus on the basic notions of exchange
and of its fairness,
on the language users can use to define their own policies, and on mechanisms to verify whether an exchange is fair.
We assume that these mechanisms are trustworthy and that policies 
express all the exchanges that users are willing to accept in a sort of default deny approach.

\subsection{MuAC on a Running Example}\label{sec:col-runex}

\begin{figure}[t]
\begin{subfigure}{0.4\textwidth}
\small
\begin{enumerate}[label=\textbf{A\arabic*}]
	\item I give a $\spell$ if I get a $\heavy$ in return\label{rule:col-a1}
	\item I give a $\spell$ if I get a $\heal$ in return\label{rule:col-a2}
\end{enumerate}
\caption{Alice's policy.}
\end{subfigure}

\vspace{.5cm}

\begin{subfigure}{0.43\textwidth}
\small
\begin{enumerate}[label=\textbf{B\arabic*}]
	\item I give a $\light$ if I get a $\spell$ in return\label{rule:col-b1}
	\item I give you a $\heal$ if you give me a $\spell$\label{rule:col-b2}
	\item If you are a paladin, then I give you a $\light$ if you give me a $\heal$\label{rule:col-b3}
\end{enumerate}
\caption{Bob's policy.}
\end{subfigure}
\qquad
\begin{subfigure}{0.43\textwidth}
\small
\begin{enumerate}[label=\textbf{C\arabic*}]
	\item I give a $\heavy$ if I get a $\light$ in return\label{rule:col-c1}
	\item I give you a $\heal$ if you give me a $\light$\label{rule:col-c2}
	\item If you give a $\spell$ to a paladin, then I give you a $\heal$\label{rule:col-c3}
\end{enumerate}
\caption{Carl's policy.}
\end{subfigure}
\caption{Policies of Alice, Bob, and Charlie expressed in natural language.}
\label{fig:policies}
\end{figure}

Blockchains like Ethereum host several decentralised competitive card games, e.g., Gods Unchained~\cite{godsunchained}, Splinterlands~\cite{splinterlans}, Skyweaver~\cite{skyweaver}. %
In these games, cards are NFTs, associated with the owner's blockchain account.
This enables users to trade and exchange their cards freely, with the same level of ownership as if they were real, tangible cards.

We consider a fictional card game, played by Alice, Bob and Carl.
As it is common in online games, players can join guilds of players for helping each other getting stronger, so let Bob and Carl belong to the guild called \emph{paladins}.
We assume that four cards are available in the game (in multiple copies): healing potions (\emph{hp} in the following), spell books (\emph{sb}), light and heavy weapons (\emph{lw} and \emph{hw}).
Moreover, we assume that the game developers manage the creation and distributions of cards NFTs, and record the membership of users in guilds.

Finally, we assume that users define in their policies which exchanges they are willing to accept.
Let the policies of Alice (rules \ref{rule:col-a1} and \ref{rule:col-a2}), Bob (\ref{rule:col-b1}, \ref{rule:col-b2}) and Carl (\ref{rule:col-c1}, \ref{rule:col-c2}, \ref{rule:col-c3}) be  
the ones in \autoref{fig:policies}.
The rules explicitly say who is giving what to whom and what is required in return. For instance, in rule \ref{rule:col-b2} Bob is happy to give another player a $\heal$ if that player gives him a $\spell$ in return. Instead, in rule \ref{rule:col-a1}, Alice is ready to give a $\spell$ to some other player if she gets a $\heavy$ in return, from whomever.

We show some examples of agreements (see \autoref{fig:ex:agreements}) with increasing complexity that may lead to exchanges.
We assume that Alice has two $\spell$ cards, Bob has one $\light$ card, and Carl has three $\heavy$ and two $\heal$ cards. 
\begin{exa}[Direct exchange]\label{ex:col-dir}
	The simplest case is when the resources of two players are exchanged.
	Bob wants a $\heal$ card and asks Carl, who is willing to exchange $\heal$ with a $\light$ (rule \ref{rule:col-c2}).
	Bob has a $\light$, and is willing to exchange it with a $\heal$, but only with a paladin (rule \ref{rule:col-b3}).
	As a result of the exchange, Bob will get the $\heal$ he needs but no $\light$, and Carl will have also one $\light$ and a single $\heal$.
\end{exa}

\begin{exa}[I pay for you]\label{ex:col-ipfu}
	If Bob wants a $\spell$ card, he can contact Alice, who offers a $\spell$ in return for $\heal$ (rule \ref{rule:col-a2}), regardless of who gives her the $\heal$ card.
	Bob has no $\heal$ to exchange for $\spell$,
	but	luckily he is 
	a paladin,
	so he 
	asks
	Carl 
	who is willing to pay for other members of the paladin gild (rule \ref{rule:col-c3}).
	Bob takes the $\spell$ of Alice, and Alice the $\heal$ of Carl.	
\end{exa}

\begin{exa}[Circular Exchange]\label{ex:col-cir}
	Assume Alice wants a $\heavy$ card.
	She offers $\spell$ in return (rule \ref{rule:col-a1}), but no one is willing to make such an exchange.
	The only one that offers $\heavy$ is Carl, who wants $\light$ in return (rule \ref{rule:col-c1}).
	Alice has no $\light$ resource.
	No agreement is possible between any two users, but if Bob comes into play then an exchange is possible.
	Bob proposes to give $\light$ for $\spell$ (rule \ref{rule:col-b1}).
	A satisfactory agreement is proposed: Alice gives her $\spell$ to Bob (satisfying the condition of rule \ref{rule:col-b1}), 
	Bob gives his $\light$ to Carl (satisfying the condition of rule \ref{rule:col-c1}), and
	Carl gives his $\heavy$ to Alice (satisfying the condition of rule \ref{rule:col-a1}).
	In practice, every user $\usr$ is paying for some other user $\usr'$, provided that some $\usr''$ is paying for $\usr$.
	It is trivial to verify that everyone is happy: they are receiving what they wanted by paying what they promised.
\end{exa}

\begin{figure*}[t]
\centering
\begin{subfigure}{0.25\textwidth}
	\centering
	\footnotesize
	\begin{tikzpicture}
		\node[inner sep=0pt] (bob)  at (1.75,.66) {\begin{tabular}{c}\includegraphics[scale=0.04]{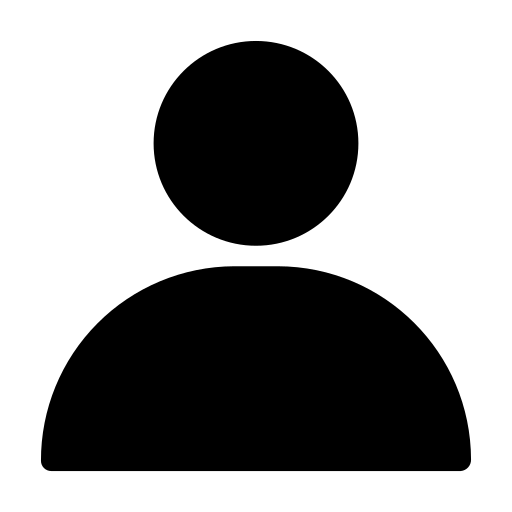}
				\\[-0.02cm] Bob
		\end{tabular}};
		
		\node[inner sep=0pt] (carl)  at (3.5,0) {\begin{tabular}{c}\includegraphics[scale=0.04]{images/user}
				\\[-0.02cm] Carl
		\end{tabular}};

		{
			\draw[o->, thick, bend left] (bob) edge node[above]{$\light$} (carl);
			\draw[<-o, thick, bend right] (bob) edge node[below]{$\heal$} (carl);
		}
	\end{tikzpicture}
	\label{fig:ex:1}
	\caption{Direct exchange.}
\end{subfigure}
\begin{subfigure}{0.4\textwidth}
	\centering
	\footnotesize
	\begin{tikzpicture}
		\node[inner sep=0pt] (alice)  at (0,0) {\begin{tabular}{c}\includegraphics[scale=0.04]{images/user}
				\\[-0.02cm] Alice
		\end{tabular}};
		
		\node[inner sep=0pt] (bob)  at (1.5,.66) {\begin{tabular}{c}\includegraphics[scale=0.04]{images/user}
				\\[-0.02cm] Bob
		\end{tabular}};
		
		\node[inner sep=0pt] (carl)  at (3,0) {\begin{tabular}{c}\includegraphics[scale=0.04]{images/user}
				\\[-0.02cm] Carl
		\end{tabular}};	
		
		{
			\draw[o->, thick, bend left] (alice) edge node[above]{$\spell$} (bob);
			\draw[<-o, thick, out=-20, in=-160] (alice) edge node[below]{$\heal$} (carl);
		}
	\end{tikzpicture}
	\label{fig:ex:2}
	\caption{I pay for you.}
\end{subfigure}
\begin{subfigure}{0.25\textwidth}
	\centering
	\footnotesize
	\begin{tikzpicture}
		\node[inner sep=0pt] (alice)  at (0,0) {\begin{tabular}{c}\includegraphics[scale=0.04]{images/user}
				\\[-0.02cm] Alice
		\end{tabular}};
		
		\node[inner sep=0pt] (bob)  at (1.5,.66) {\begin{tabular}{c}\includegraphics[scale=0.04]{images/user}
				\\[-0.02cm] Bob
		\end{tabular}};
		
		\node[inner sep=0pt] (carl)  at (3,0) {\begin{tabular}{c}\includegraphics[scale=0.04]{images/user}
				\\[-0.02cm] Carl
		\end{tabular}};	
		
		{
			\draw[o->, thick, bend left] (alice) edge node[above]{$\spell$} (bob);
			\draw[o->, thick, bend left] (bob) edge node[above]{$\light$} (carl);
			\draw[<-o, thick, out=-20, in=-160] (alice) edge node[below]{$\heal$} (carl);
		}
	\end{tikzpicture}
	\label{fig:ex:3}
	\caption{Circular exchange.}
\end{subfigure}
\\[2ex]
\begin{subfigure}{0.8\textwidth}
	\centering
	\footnotesize
	\begin{tikzpicture}
		\node[inner sep=0pt] (alice)  at (0,0) {\begin{tabular}{c}\includegraphics[scale=0.04]{images/user}
				\\[-0.02cm] Alice
		\end{tabular}};
		
		\node[inner sep=0pt] (bob)  at (2,.66) {\begin{tabular}{c}\includegraphics[scale=0.04]{images/user}
				\\[-0.02cm] Bob
		\end{tabular}};
		
		\node[inner sep=0pt] (carl)  at (4,0) {\begin{tabular}{c}\includegraphics[scale=0.04]{images/user}
				\\[-0.02cm] Carl
		\end{tabular}};	
		
		{
			\draw[<-o, thick, bend left] (alice) edge node[above]{(3) $\heal$} (bob);
			\draw[o->, thick, bend right,out=-15] (alice) edge node[below]{(4) $\spell$} (bob);
			\draw[o->, thick, bend left] (bob) edge node[above]{(1) $\light$} (carl);
			\draw[<-o, thick, out=-20, bend right] (bob) edge node[below]{(2) $\heal$} (carl);
		}
	\end{tikzpicture}
	\label{fig:ex:4}
	\caption{Resource supplier: the numbers denote the flow of the exchanges.}
\end{subfigure}
\caption{Examples of agreements among players. }
\label{fig:ex:agreements}
\end{figure*}

\begin{exa}[Resource Supplier]\label{ex:col-res}
	The last case we consider is an agreement between two parts that would be reachable, but one of the two does not have the needed resource.
	Assume Alice wants $\heal$.
	A simple agreement would be between her and Bob: Alice proposes to give $\spell$ for $\heal$ (rule \ref{rule:col-a2}), and Bob gives $\heal$ for $\spell$ (rule \ref{rule:col-b2}).
	Unfortunately, Bob has no $\heal$, but in spite of that there is an agreement where Alice gets a $\heal$ card and Bob a $\spell$ card.
	Indeed, Bob has $\light$ that can be exchanged with Carl for $\heal$ (rule \ref{rule:col-c2}), and Bob agrees since Carl is in his guild (rule \ref{rule:col-b3}).
	Thus, Carl takes $\light$ from Bob and gives him $\heal$, which Bob can now exchange with Alice for $\spell$.
\end{exa}
\noindent
So far all the agreements are fair, i.e., satisfactory for all users.
Note that checking the fairness of the agreements does not require us to consult users, as long as we know their policies that precisely reflect their wishes and expectations.

We now show an example of an unfair agreement, where one partner does not obey the policy of the other.
\begin{exa}[Policy violation]\label{ex:viol}
The exchange where Bob gives a $\light$ to Alice while she gives back a $\heavy$ violates all Bob's policy requests.
This exchange is clearly not fair and our logical machinery will rule it out.
\end{exa}

We finally display an example of an unfair agreement, caused by a double spending.
For that, we extend Alice's policy with the following rule where she offers to pay with a $\light$ for a $\heal$ in place of a paladin.
\begin{enumerate}[label=\textbf{A\arabic*}]
	\item[\textbf{A3}] 
	I give you a $\light$ if a paladin gets a $\heal$ in return.\label{rule:col-a3}
\end{enumerate}
\begin{exa}[Double Spending]\label{ex:double}
The exchange where Carl gives a $\heal$ to Bob and Alice pays it with her $\light$ is fair (rules~\textbf{A3} and~\ref{rule:col-c2}), and so is the one in which instead Bob pays (rules~\ref{rule:col-b3} and~\ref{rule:col-c2}).
However, a double spending arises when Carl gives a single $\heal$ to Bob, and both Alice and Bob pay.
Thus, this last exchange is not fair. %
\end{exa}

After the needed definitions of exchange environment, of the language for expressing policies and the consequent notion of fairness, and of the logic \MuACL, we will show that provability there ensures an exchange to be fair.
To also give an operational flavour to our proposal, we then instantiate in Section~\ref{sec:smart} our formal model to a blockchain scenario where the digital platform is implemented by a smart contract and the users interact with it in a standard way by sending transactions.
We assume that the smart contract records and publicly displays users' policies and resource ownership that it also manages.
Importantly, we delegate the smart contract to verify the fairness of exchanges.

\section{Exchange Environment}\label{sec:env}

We formalise online platforms that host users who exchange their resources, assuming completed the registration phases and the like.
Our basic model is a transition system called \emph{exchange environment},
where
the transitions 
represent the exchange of resources between users.
We neither impose a topology nor limit the number of participants and of resources, which are however conserved by exchanges.
As discussed above, exchange environments also host participants who behave dishonestly and scam others to steal their resources.
To contrast them, users resort to \emph{exchange policies}.
These policies grant a resource, or more, in return for other resources.
We %
call \emph{fair transitions} those resource exchanges where all the policies of the involved users are obeyed \emph{and} no double spending occurs, 
and we show that no attacks are possible in exchange environments with fair transitions only.

In this section, we introduce policies directly on transitions, in a 
basic form.
The next two sections will then provide users with a logical language to define their policies and with a mechanism for proving a transition fair, which are the main contributions of this paper.

\subsection{Exchange Environments}
\noindent
Below, we assume the following finite sets:
\begin{itemize}
\item a set $\Res$ of \emph{resources}, ranged over by $\res, \res', \res''$;
\item a set $\Usr$ of \emph{users}, ranged over by $\usr, \usr', \usr''$.
\end{itemize}
Hereafter, we omit specifying $\Res$ and $\Usr$ unless required, and we write $x \in X$ to define simultaneously the element $x$ and $X$, the set of all $x$.
Throughout the paper, we make use of multisets, i.e., sets where different occurrences of the same object may occur.
As usual, we represent a multiset $M$ as a function from each element $a$ of the set to the number of its occurrences
$M(a)$.
For simplicity, we carry the set notation over multisets
and we omit the curly brakets when they are not necessary.

Next, we introduce the notions of transfer and exchange, upon which we build the exchange environments as transition systems.

\begin{defi}[Exchange and Exchange Environment] 
	An \emph{exchange} is a multiset $\exc \in \Exc$ of \emph{transfers} $\tran \in \Tran$, i.e., of triples 
	$\usr \xmapsto{\res} \usr'$, 
	with $\usr' \neq \usr$.

	An \emph{exchange environment} %
	is a pair $(\MuACStates, \rightarrow)$, where
	\begin{itemize}
		\item $\MuACStates$ is the set of \emph{states} $\MuACstate \colon \Usr \rightarrow (\Res \rightarrow \mathbb{N})$;
		\item $\rightarrow\ \subseteq \MuACStates \times \Exc \times \MuACStates$ is the \emph{transition relation} that contains the triple $\MuACstate \xrightarrow{\exc} \MuACstate'$ if and only if for all $\usr \in \Usr$ and $\res \in \Res$ the following two conditions hold
	\end{itemize}
	\begin{align*}
	&\quad \text{(1)} \ \sum_{\usr'} \exc (\usr \xmapsto{\res} \usr') \leq \MuACstate(\usr)(\res) \quad \text{and}\\
	&\quad \text{(2)} \ \MuACstate'(\usr)(\res) = \MuACstate(\usr)(\res)\ -
	\sum_{\usr'} \exc (\usr \xmapsto{\res} \usr') + \sum_{\usr''} \exc (\usr'' \xmapsto{\res} \usr)
	\end{align*}
	
We say that  $\MuACstate_0 \rightarrow^* \MuACstate_1$ is a \emph{computation} from the state $\MuACstate_0$ to the state $\MuACstate_1$, where 
$\rightarrow^* \subseteq \MuACStates \times \MuACStates$ is the smallest reflexive transitive relation such that $\MuACstate \rightarrow^* \MuACstate'$ if $\exc$ exists such that $\MuACstate \xrightarrow{\exc} \MuACstate'$.

\end{defi}

An \emph{exchange environment} is a transition system where  a \emph{state} represents resource ownership as a total function $\MuACstate$ associating each user $\usr$ with the multiset of resources $\res$ she owns, and a \emph{transition} represents the occurrence of an exchange that modifies the current state. 
Intuitively, a \emph{transfer} occurs when a user $\usr$ sends her resource $\res$ to another user $\usr'$.
An \emph{exchange} is a finite multiset of transfers, so allowing one user to transfer more resources to another user in a single transition.
Through condition $(1)$ we ensure that an exchange $exc$ is possible only when a user $usr$ owns enough resources, whereas condition $(2)$ makes sure that the state is correctly updated and that no resource is created or destroyed.

\begin{exa}\label{ex:ex-env}
	The following are some states of the exchange environment of our running example of~Section~\ref{sec:col-runex}:
\begin{align*}
	\MuACstate &= \{ (\mathit{Alice}, \{\spell\}), (\mathit{Bob}, \{\light\}), (\mathit{Carl}, \{\heavy, \heavy, \heavy, \heal, \heal\}) \}\\
	\MuACstate' &= \{ (\mathit{Alice}, \{\heavy\}), (\mathit{Bob}, \{\spell\}), (\mathit{Carl}, \{\light, \heavy, \heavy, \heal, \heal\}) \}
\end{align*}
where the resources are $\heal$, $\light$ and $\heavy$, and the users are $\mathit{Alice}$, $\mathit{Bob}$ and $\mathit{Carl}$. 	

The exchange environment contains, e.g., the transition $\MuACstate \xrightarrow{\exc} \MuACstate'$ of~\autoref{ex:col-cir}, with:
\begin{gather*} 
	\exc = \{ \mathit{Alice} \xmapsto{\spell} \mathit{Bob}, \mathit{Bob} \xmapsto{\light} \mathit{Carl}, \mathit{Carl} \xmapsto{\heavy} \mathit{Alice} \}.
\end{gather*} 
\end{exa}

\subsection{Exchange Policies}
So far, users' intents play no role, and thus there is no guarantee that a transition of the exchange environment complies with them.
We introduce below a basic way to define which exchanges users agree on, hence which transitions are beneficial to all the involved users.
Every user in isolation defines her \emph{exchange policy} that specifies when one of its resources can be exchanged for some resources belonging to other users.

Roughly, an exchange policy is a set of \emph{exchange approvals}, written 
$\usr \xmapsto{\res} \usr' \triangleleft \exc$.
It reads as follows: the user $\usr$ is willing to give her resource $\res$ to the user $\usr'$ in return of the exchange $\exc$.
The exchange policy determines whether $\exc$ requires the payoff to be given directly to $\usr$ or to another user $\usr''$ chosen by $\usr$.
Formally:
\begin{defi}[Exchange Approval and Policies]\label{def:policy} 
	An \emph{exchange approval} of a user $\usr$ is a pair $\usr \xmapsto{\res} \usr' \triangleleft \exc \in \Tran \times \Exc$ such that 
	for each $\usr'' \xmapsto{\res} \usr''' \in \exc$ it is $\usr'' \neq \usr$.
	
	The \emph{exchange policy} $pol_\usr$ of $\usr$ is a set of exchange approvals.
\end{defi}
	\begin{exa}\label{ex:exc-pol}
		In the following policies, both Alice and Bob are willing to pay Carl with a $\light$ if he gives a $\heal$ to Bob, and Carl accepts to be paid by any of them.
		\begin{align*}
		\pol_{Alice} =& \{ Alice \xmapsto{\light} Carl \triangleleft \{ Carl \xmapsto{\heal} Bob \} \}\\
		\pol_{Bob} =& \{ Bob \xmapsto{\light} Carl \triangleleft \{ Carl \xmapsto{\heal} Bob \} \}\\
		\pol_{Carl} =& \{ Carl \xmapsto{\heal} Bob \triangleleft \{ Alice \xmapsto{\light} Carl\},%
		 Carl \xmapsto{\heal} Bob \triangleleft \{ Bob \xmapsto{\light} Carl \} \}
		\end{align*}
	\end{exa}

We now move towards the definition of fair exchange, which is better done in two steps. 
We begin by defining when an exchange respects the policy of a single user, which can be done locally (Definition~\ref{def:accepted-exc}).
However, a transfer may be unfair even if accepted by the policies of all the users involved as the \emph{same} resource can be \emph{offered more than once} to different users by an attacker, in other words when a \emph{double spending} occurs. Definition~\ref{def:fairexc} of Fair Exchange rules out such a case, but it requires a \emph{global} check.

Intuitively, the policy $\pol_\usr$ of a user $usr$ locally accepts an exchange $\exc$ when for all the transfers in $\exc$ involving $\usr$ as a giver there is a subset $\exc' \subseteq \exc$ of transfers that grants the payoff required by the approvals of $\pol_\usr$.
Formally:
\begin{defi}[Accepted Exchange]\label{def:accepted-exc} 
		
We write $\pol_\usr \vDash_{\exc'} \exc$ if and only if the triple $(\pol_\usr, \exc', \exc)$ belongs to the minimal subset of $\Pol \times \Exc \times \Exc$ satisfying the inference rules below:		
\begin{mathpar}
	\inferrule
	{\exc \subseteq \{ \usr' \xmapsto{\res} \usr'' \mid \usr' \neq \usr \} }
	{\pol_\usr \vDash_{\emptyset} \exc} \label{accept:others}
	
	\inferrule
	{ \usr \xmapsto{\res} \usr'  \triangleleft \exc \in \pol_\usr  \\  \pol_\usr \vDash_{\exc''} \exc'}
	{pol_\usr \vDash_{\exc \uplus \exc''} \{\usr \xmapsto{\res} \usr'\} \uplus \exc \uplus \exc'} \label{accept:me}
\end{mathpar}
When $\pol_\usr \vDash_{\exc'} \exc$  holds we say that $\exc$ is \emph{accepted} by $\pol_\usr$\emph{ because of} $\exc'$.
\end{defi}

Condition~\ref{accept:others} says that a transfer is always accepted by $\usr$ when she gives no resource. 
Condition \ref{accept:me} requires that for each transfer where $\usr$ is giving something, she should get back what specified by her policy.
Note that $\exc'$ works like a witness for the acceptance and that $\exc' \subseteq \exc$ whenever $\pol_\usr \vDash_{\exc'} \exc$ holds.
Below, we sometimes omit $\exc'$ and just say that $\exc$ is accepted by $\pol_\usr$.

As an example of double spending consider the following.

\begin{exa}\label{ex:fair-nonlocal}
	Consider~\autoref{ex:exc-pol}, and the following exchanges where 
	Carl gives two $\heal$ to Bob and both Alice and Bob pay for one of them with a $\light$:
	\begin{align*}
		\exc = \{ Carl \xmapsto{\heal} Bob, Carl \xmapsto{\heal} Bob, Alice \xmapsto{\light} Carl, Bob \xmapsto{\light} Carl \}.
	\end{align*} 
	This exchange is accepted by the three players.
	However, also the following is accepted by all of them in isolation, where the double spending of~\autoref{ex:double} occurs:
	\[ 
	\exc' = \{ Carl \xmapsto{\heal} Bob, Alice \xmapsto{\light} Carl, Bob \xmapsto{\light} Carl \}. 
	\]
\end{exa}

We finally define 
\emph{fair transitions}
(and show the use of the extra $\exc'$ in Definition~\ref{def:accepted-exc}). 
\begin{defi}[Fair Transition]\label{def:fairexc} 
	The transition $\MuACstate \xrightarrow{\exc} \MuACstate'$ is \emph{fair} if and only if for all $\usr \in \Usr$ there exists
	an exchange $\exc_{\usr}$ such that $\pol_{\usr} \vDash_{\exc_\usr} \exc$ and $\biguplus_{\usr \in \Usr} \exc_{\usr} \subseteq \exc$.
	We will occasionally call \emph{fair} the label $\exc$ of a fair transition.
A computation $\MuACstate_0 \rightarrow^* \MuACstate_1$ is \emph{fair} when its \mbox{steps are fair}.
\end{defi}

Roughly, a transition, or its label $\exc$, is fair when it is accepted by the policies of all the users involved
and, in addition, the inclusion of the disjoint union of $\exc_{\usr}$ in $\exc$ guarantees that each transfer $\tr$ in $\exc$ can be used at most once as a justification.
Clearly this prevents double spending.

\begin{exa}\label{ex:env}
	The exchange $\exc =  \{ Carl \xmapsto{\heal} Bob, Carl \xmapsto{\heal} Bob, Alice \xmapsto{\light} Carl, Bob \xmapsto{\light} Carl \}$ of \autoref{ex:fair-nonlocal}, where Bob takes two $\heal$ from Carl and both Bob and Alice pays each for one resource, is fair because it is accepted by the three users with the following witness:
	\begin{gather*}
		\exc_{Alice} = \{ Carl \xmapsto{\heal} Bob \} \qquad
		\exc_{Bob} = \{ Carl \xmapsto{\heal} Bob \} \\
		\exc_{Carl} = \{ Alice \xmapsto{\light} Carl, Bob \xmapsto{\light} Carl \}
	\end{gather*}
	\noindent	
	Instead, the exchange $\exc'$ of \autoref{ex:fair-nonlocal} is unfair because $\{Carl \xmapsto{\heal} Bob\}$ appears twice in the disjoint union of the witnesses of Alice and Bob.
\end{exa}

\section{MuAC: A logical language for Exchange Policies}\label{sec:col-MuAC}

To simplify the definition of the users' intents, we introduce the language MuAC that allows one to express exchange policies in a simple and declarative manner. 
MuAC is a logical language similar to Datalog and is parametric with respect to a set of predicates, the definition of which we leave implicit.
Intuitively, these predicates group users in categories, like fellowship or affinity, which are convenient to define policies, whereas the context stores this information on users.
In the following, we assume as given:
\begin{itemize}
\item
a \emph{set of user variables} $U$, ranged over by $u, u', u'', u_i$, and the distinguished variable \texttt{Me}$\,\notin U$ to represent the owner of the policy;
\item
a \emph{set of predicate symbols}  $\Pred$, ranged over by $p, p', p''$;
\item
a \emph{context} $C$, i.e.,
an interpretation of the predicates such that $C(p) \subseteq \Usr^n$, where $n$ is the arity
of $p$.
\end{itemize}
An exchange policy is represented as a MuAC ruleset, the syntax of which is defined below. 
Roughly, a rule in a ruleset is a Horn clause stating that the policy owner \texttt{Me} is willing to give a resource $\res$ to a requester if a (possibly empty) list of conditions are satisfied.
These conditions consist of two parts: 
the resources that the policy owner requires in return, and some properties of the users involved in the exchange.
\begin{defi}[MuAC ruleset]\label{def:MuPol}
The MuAC ruleset $R_\usr$ of $\usr$ is a set of rules $r$ given by the following grammar, under the assumption that $u \neq \texttt{Me}$,
$u \neq u'$, $\epsilon$ is the empty list, and the part within square brakets is optional:
\begin{align*}
r &::= \texttt{Gives}(\texttt{Me}, \res, u) \texttt{ :- } \GiveLs \ 
[
\texttt{with}\  \PredLs 
] \\ 
\PredLs &::= p(u_1, \dots , u_n)\ \PredLs \mid \epsilon \\
\GiveLs &::= \texttt{Gives}(u, \res, u')\ \GiveLs \mid \epsilon
\end{align*}
\end{defi}

\begin{exa}\label{ex:col-MuACpol}
	Continuing the running example of Section~\ref{sec:col-runex}, we express in MuAC the rulesets of Alice, Bob and Carl from~\autoref{fig:policies}.
	The ruleset 
	of Alice is:
	\begin{lstlisting}[style=MuAC]
Gives(Me, spell_book, u) :- Gives(u', heavy_weapon, Me)       // Rule A1
Gives(Me, spell_book, u) :- Gives(u', healing_potion, Me)     // Rule A2
	\end{lstlisting}
(Where the text after \texttt{//} is a comment). The one of Bob is:
	\begin{lstlisting}[style=MuAC]
Gives(Me, light_weapon, u) :- Gives(u', spell_book, Me)       // Rule B1
Gives(Me, healing_potion, u) :- Gives(u, spell_book, Me)      // Rule B2
Gives(Me, light_weapon, u) :- 
	Gives(u, healing_potion, Me)  with is_paladin(u)      // Rule B3
	\end{lstlisting}
	Finally, the one of Carl follows:
	\begin{lstlisting}[style=MuAC]
Gives(Me, heavy_weapon, u) :- Gives(u', light_weapon, Me)     // Rule C1
Gives(Me, healing_potion, u) :- Gives(u, light_weapon, Me)    // Rule C2
Gives(Me, healing_potion, u) :- 
	Gives(u, spell_book, u')  with is_paladin(u')         // Rule C3
	\end{lstlisting}
\end{exa}

Intuitively, the evaluation of a rule requires first to bind the distinguished element \texttt{Me}, the user variable $u$ and the properties $p$ to actual users and properties.
We interpret the MuAC policy of a user $\usr$ in terms of exchange policies given the context $C$.

\begin{defi}[MuAC ruleset interpretation]
Let $R_{\usr}$ be the set of rules of $\usr$, and let $\rho$ range over interpretations $U \rightarrow \Usr$ such that 
	$\rho(\texttt{Me}) = \usr$
	and in all $\texttt{Gives}(u, \res, u')$ it is $\rho(u) \neq \rho(u')$.
	Then
\[
\pol_\usr = \bigcup_{r \in R_\usr} \semdeno{r}\ \rho\ C
\]
where the semantics $\semdeno{r}\ \rho\ C$ of a single rule $r$ of the MuAC policy of $\usr$ is defined as
	\begin{align*}
		\semdeno{\texttt{Gives}(\texttt{Me}, \res, u') &\texttt{ :- } \GiveLs \text{ with } \PredLs}\ \rho\ C\ = \\
		& \{ \semdeno{\texttt{Gives}(\texttt{Me}, \res, u')}\ \rho\ \triangleleft\ \semdeno{\GiveLs}\ \rho\ \mid\ \semdeno{\PredLs}\ \rho\ C \}  
	\end{align*}
	with $\semdeno{\PredLs}\ \rho\ C$ defined as	
	\begin{align*}
		&\semdeno{\epsilon}\ \rho\ C = true\\
		&\semdeno{p(u_1, \dots, u_n) \PredLs}\ \rho\ C =
		(\rho(u_1), \dots, \rho(u_n)) \in C(p) \land \semdeno{\PredLs}\ \rho\ C 
	\end{align*}
	and where $\semdeno{\GiveLs}\ \rho$ is the homomorphic extension of  
	\[
	\semdeno{\texttt{Gives}(u, \res, u')}\ \rho = \mset{\rho(u) \xmapsto{\res} \rho(u')}
	\text{\ \ with \ }  \semdeno{\epsilon}\ \rho = \emptyset
	\]
\end{defi}

\begin{exa}
	Consider again~\autoref{ex:col-MuACpol}. Alice's ruleset is interpreted as the following set of exchange approvals:
	\begin{align*}
		\pol_{Alice} =  (\semdeno{\texttt{Rule A1}} C) \cup (\semdeno{\texttt{Rule A2}} C)
	\end{align*}
	where
	\begin{align*}
	    \semdeno{\texttt{Rule A1}}\ C =\ \{\  & Alice \xmapsto{\spell} Bob \triangleleft \{ Bob \xmapsto{\heavy} Alice \},\ 
		 Alice \xmapsto{\spell} Bob \triangleleft \{ Carl \xmapsto{\heavy} Alice \},\\[-0.1cm]
		& Alice \xmapsto{\spell} Carl \triangleleft \{ Carl \xmapsto{\heavy} Alice \},\ 
		Alice \xmapsto{\spell} Carl \triangleleft \{ Bob \xmapsto{\heavy} Alice \} \}\\
		 \semdeno{\texttt{Rule A2}}\ C =\ \{\ & Alice \xmapsto{\spell} Bob \triangleleft \{ Bob \xmapsto{\heal} Alice \},\ Alice \xmapsto{\spell} Bob \triangleleft \{ Carl \xmapsto{\heal} Alice \},\\[-0.1cm]
	       & Alice \xmapsto{\spell} Carl \triangleleft \{ Carl \xmapsto{\heal} Alice \},\ 
		 Alice \xmapsto{\spell} Carl \triangleleft \{ Bob \xmapsto{\heal} Alice \}\ \} 
	\end{align*}
\end{exa}

\section{A Logic for Characterizing Fair Exchanges}\label{sec:col-formal}

\begin{figure}[t]
	\begin{center}
		\small
		\begin{tabular}{c}
			\textbf{Non-linear Rules}\\
			
			\\[-0,2cm]
			
			\prftree[r]
			{($\top$-right)}
			{}
			{\Vdash \top}
			
			\qquad
			
			\prftree[r]
			{($\Omega$-Ax)}
			{}
			{\omega \Vdash \omega}

			\qquad
			
			\prftree[r]
			{(Cont)}
			{\Omega, \omega, \omega \Vdash \omega'}
			{\Omega, \omega \Vdash \omega'}
			
			\qquad
			
			\prftree[r]
			{(Weak)}
			{\Omega \Vdash \omega'}
			{\Omega, \omega \Vdash \omega'}
			
			\\[.33cm]
			
			\prftree[r]
			{($\land$-left1)}
			{\Omega, \omega \Vdash \omega''}
			{\Omega, \omega \land \omega' \Vdash \omega''}
			
			\qquad
			
			\prftree[r]
			{($\land$-left2)}
			{\Omega, \omega' \Vdash \omega''}
			{\Omega, \omega \land \omega' \Vdash \omega''}
			
			\qquad
			
			\prftree[r]
			{($\land$-right)}
			{\Omega \Vdash \omega}
			{\Omega' \Vdash \omega'}
			{\Omega, \Omega' \Vdash \omega \land \omega'}
			
			\\[.33cm]
			
			\prftree[r]
			{($\rightarrow$-left)}
			{\Omega \Vdash \omega}
			{\Omega', \omega' \Vdash \omega''}
			{\Omega, \omega \rightarrow \omega', \Omega' \Vdash \omega''}
			
			\qquad
			
			\prftree[r]
			{($\rightarrow$-right)}
			{\Omega, \omega \Vdash \omega'}
			{\Omega \Vdash \omega \rightarrow \omega'}

			\\[0.3cm]
			\textbf{Non-linear L-Rules}\\
			
			\\[-0,2cm]
						
			\prftree[r]
			{(L-Cont)}
			{\Omega, \omega, \omega; \Theta, \Delta, \Sigma \vdash \sigma}
			{\Omega, \omega; \Theta, \Delta, \Sigma \vdash \sigma}
			
			\qquad
			
			\prftree[r]
			{(L-Weak)}
			{\Omega;  \Theta, \Delta, \Sigma \vdash \sigma}
			{\Omega, \omega ;  \Theta, \Delta, \Sigma \vdash \sigma}
			
			\\[.33cm]
			
			\prftree[r]
			{(L-$\land$-left1)}
			{\Omega, \omega ;  \Theta, \Delta, \Sigma \vdash \sigma}
			{\Omega, \omega \land \omega' ;  \Theta, \Delta, \Sigma \vdash \sigma}
			
			\qquad
			
			\prftree[r]
			{(L-$\land$-left2)}
			{\Omega, \omega' ;  \Theta, \Delta, \Sigma \vdash \sigma}
			{\Omega, \omega \land \omega' ;  \Theta, \Delta, \Sigma \vdash \sigma}
			
			\\[.33cm]
			
			\prftree[r]
			{(L-$\rightarrow$-left)}
			{\Omega \Vdash \omega}
			{\Omega', \omega' ;  \Theta, \Delta, \Sigma \vdash \sigma}
			{\Omega, \omega \rightarrow \omega', \Omega' ;  \Theta, \Delta, \Sigma \vdash \sigma}
			
			\\[0.3cm]
			\textbf{Linear Rules}\\
			
			\\
			
			\prftree[r]
			{($I$-right)}
			{}
			{\vdash I}
			
			\qquad
			
			\prftree[r]
			{($\Sigma$-Ax)}
			{\Omega; \res@\usr \vdash \res@\usr}
			
			\\[.33cm]
			
			\prftree[r]
			{($\otimes$-left-$\Theta$)}
			{\Omega; \Theta, \theta, \theta', \Delta, \Sigma \vdash \sigma}
			{\Omega; \Theta, \theta \otimes \theta', \Delta, \Sigma \vdash \sigma}
			
			\qquad
			
			\prftree[r]
			{($\otimes$-left-$\Delta$)}
			{\Omega; \Theta, \Delta, \delta, \delta', \Sigma \vdash \sigma}
			{\Omega; \Theta, \Delta, \delta \otimes \delta', \Sigma \vdash \sigma}
			
			\\[.33cm]
			
			\prftree[r]
			{($\otimes$-left-$\Sigma$)}
			{\Omega; \Theta, \Delta, \Sigma, \sigma', \sigma'' \vdash \sigma}
			{\Omega; \Theta, \Delta, \Sigma, \sigma' \otimes \sigma'' \vdash \sigma}
			
			\\[.33cm]
			
			\prftree[r]
			{($\otimes$-right)}
			{\Omega; \Theta, \Delta, \Sigma \vdash \sigma\quad}
			{\Omega; \Theta', \Delta', \Sigma' \vdash \sigma'}
			{\Omega; \Theta,  \Theta', \Delta, \Delta', \Sigma, \Sigma' \vdash \sigma \otimes \sigma'}
			
			\qquad

				\prftree[r]
				{($\multimap$-left)}
				{\Omega; \Sigma \vdash \res@\usr}
				{\Omega; \Sigma, \res@\usr \multimap \res'@\usr' \vdash \res'@\usr'}

			\\[.33cm]
			
			\prftree[r]
			{($\linearcontract$-left)}
			{\delta \subseteq \delta'}
			{\Omega; \Theta, \Delta, \delta', \Sigma \vdash \sigma}
			{\Omega; \Theta, \delta \linearcontract \delta', \Delta, \Sigma \vdash \sigma}
			
			\qquad
			
			\prftree[r]
			{($\linearcontract$-split)}
			{\Omega; \Theta, \delta \otimes \delta'' \linearcontract \delta' \otimes \delta''', \Delta, \Sigma \vdash \sigma}
			{\Omega; \Theta, \delta \linearcontract \delta', \delta'' \linearcontract \delta''', \Delta, \Sigma \vdash \sigma}
			
			\\[0.3cm]
			\textbf{Linear Non-linear Interaction Rules}\\
			
			\\[-0,2cm]
			
			\prftree[r]
			{(G-left-$\theta$)}
			{\Omega; \Theta, \theta, \Delta, \Sigma \vdash \sigma}
			{\Omega, G(\theta); \Theta, \Delta, \Sigma \vdash \sigma}
			
			\qquad
			
			\prftree[r]
			{(G-left-$\delta$)}
			{\Omega; \Theta, \Delta, \delta, \Sigma \vdash \sigma}
			{\Omega, G(\delta); \Theta, \Delta, \Sigma \vdash \sigma}			
			
			\\[.33cm]
			
			\prftree[r]
			{($\Omega$-cut)}
			{\Omega \Vdash \omega}
			{\Omega', \omega; \Theta, \Delta, \Sigma \vdash \sigma}
			{\Omega, \Omega';  \Theta, \Delta, \Sigma \vdash \sigma}	
		\end{tabular}
	\end{center}
	\caption{\MuACL\ rules.}
	\label{fig:MuACLfull}
\end{figure}

So far, we have characterised fairness at the basic level of exchange environment and we have introduced a language for expressing the users' policies.
As said, verifying that an exchange respects a single user's policy can be done locally, but ruling out double spending requires a global check.
Clearly, it is crucial to devise a sound technique and a tool that users can rely on for proving an exchange fair.
To do that, we still keep the logical flavour of MuAC and we define the decidable logic \MuACL\ 
that characterises fair exchanges,
to which we compile MuAC rulesets.
Then, we show that an exchange is fair if and only if there is a \MuACL\ proof of it, which can also be used as a witness of fairness for the TTP.

\subsection{A Logic for MuAC}\label{sec:col-logic}

The logic \MuACL\ 
it is basically a linear logic with a non-linear fragment in the spirit of \emph{LNL}~\cite{LNL}.
The non-linear part encodes reasoning on the predicates $p \in \Pred$ and on the context $C$.
The linear part encodes  exchanges and resource ownership (represented by atomic predicates $\res@\usr$ stating that a resource $\res$ belongs to the user $\usr$).
The linear fragment has also an operator to express the typical offer/return in contracts, inspired by PCL~\cite{BZ}, not expressible in standard linear logic.

The syntax of \MuACL\ propositions follows.
In the style of~\cite{Kanovich94}, we define a different syntactic category for each class of propositions with its specific constraints on operators.
\begin{defi}[MuACL propositions]\label{def:prop-muacl}
	A \MuACL\ proposition is a formula $\varphi$ in either form $\sigma$, $\delta$, $\theta$ or $\omega$, as defined below
	\begin{align*}
		\sigma &::= I \mid \res@\usr \mid \sigma \otimes \sigma\\
		\delta &::= I \mid \res@\usr \multimap \res'@\usr' \mid \delta \otimes \delta\\
		\theta &::= \delta \linearcontract \delta\\
		\omega &::= \top \mid p(\usr_1, \dots, \usr_n) \mid \omega \land \omega \mid \omega \rightarrow \omega \mid G( \theta) \mid G (\delta)
	\end{align*}
\end{defi}
We refer to the common resource-based interpretation of linear logic for describing the intuitive meaning of the propositions above~\cite{Porello2010}.
Moreover, we abuse the notation: tensor products are seen as multisets, given that the conjunction $\otimes$ is associative and commutative ($I$, standing for \emph{true}, is similarly seen as the empty multiset).

A proposition $\sigma$
is a multiset of atomic linear predicates representing resource ownership, namely the computation states.
A proposition $\delta$ 
is an exchange, i.e., a linear conjunction of linear implications representing transfers, where %
 $\multimap$ 
is the usual linear implication.
A proposition $\theta$
is a linear contract
defined via the new operator $\delta \linearcontract \delta'$, called \emph{linear contractual implication}.
Roughly, it states that the promised exchange $\delta'$ will eventually be performed provided that $\delta$ is \emph{true}.
Finally, 
a proposition $\omega$
represents non-linear knowledge where $\top$, $\land$ and $\to$ are 
the usual classical operators,
 $p(\usr, \dots, \usr')$ is an atomic non-linear predicate encoding a relation among users, and $G$ ``lifts'' a linear formula to a non-linear one $\omega$. 

\begin{exa}
	A state where Alice has one $\heal$ and Bob two $\heavy$ resources is represented as
	\[
	\sigma = \heal@Alice \otimes \heavy@Bob \otimes \heavy@Bob
	\]
	The exchange where Alice is giving both her $\heal$ to Bob is
	\[
	\delta = (\heal@Alice \multimap \heal@Bob) \otimes (\heal@Alice \multimap \heal@Bob)
	\]
	A contract stating that Alice has agreed to give a $\heal$ to Bob if she receives a $\heavy$ from him is
	\[
	\theta = (\heavy@Bob \multimap \heavy@Alice) \linearcontract (\heal@Alice \multimap \heal@Bob)
	\]
	A policy saying that Alice is willing to accept the contracts as before is represented through the non-linear propositions $G(\theta)$.
\end{exa}

The sequents of \MuACL\ are defined as follows.

\begin{defi}[\MuACL\ Sequent]
A \MuACL\ \emph{sequent} is of form
\begin{align*}
\Omega; \Theta, \Delta, \Sigma \vdash \sigma.
\end{align*}
where $\Sigma$, $\Delta$, $\Theta$ and $\Omega$ are multisets with elements $\sigma$, $\delta$, $\theta$ and $\omega$, respectively.
A sequent is \emph{initial} if $\Theta, \Delta = \emptyset$,
i.e., 
$\Omega; \emptyset, \emptyset, \Sigma \vdash \sigma$, 
that we write also as $\Omega; \Sigma \vdash \sigma$
since from now onwards we will omit the empty
components for brevity.

\end{defi}

The \MuACL\ judgments have either one of the following forms:
\[
\Omega \Vdash \omega    \hspace{1.5cm}    \Omega; \Theta, \Delta, \Sigma \vdash \sigma
\]
Roughly, the left one is for non-linear reasoning and the right for mixed linear non-linear reasoning.
Also, $\Omega; \Theta, \Delta, \Sigma \vdash \sigma$ intuitively means that the state $\sigma$ is a possible transformation of $\Sigma$ under the assumption $\Omega; \Theta, \Delta$, representing the policies and some classical information $\Omega$, the proposed contracts $\Theta$ and the accepted exchanges $\Delta$.

The rules of \MuACL\ are in~\autoref{fig:MuACLfull}.
The \emph{non-linear} rules for $\Vdash$ are the standard ones of 
the fragment of non-linear logic we consider
(i.e., without negation and disjunction).
These rules
 are displayed in the top-most part of
the figure.
In the second block of rules from top,
we follow~\cite{LNL}: for each structural and left non-linear rule, such as (Weak), there is 
an almost duplicated
 \emph{non-linear L-rule} for $\vdash$ that modifies $\Omega$ in the same way, such as (L-Weak).

The \emph{linear rules} for $\vdash$ are in the third block.
They result from instantiating the standard ones on the \MuACL\ sequents.
A result of this instantiation is that some rules are replicated for each syntactic category of~\autoref{def:prop-muacl}, while other standard rules of  linear logic are omitted,
in particular the cut rule.
Indeed, the omitted rules are immaterial to our modeling purpose, and may alter our decidability and correctness results.
There
are two rules for the linear contractual implication: the ($\linearcontract$-left) rule introduces the operator on the left if what is required by the contract is satisfied by the consequences; the
($\linearcontract$-split) rule deals with composition of contracts.
Intuitively, when building proofs bottom-up, ($\linearcontract$-left) enables us to transform the state $\Sigma$ with $\delta'$ if $\delta'$ is the consequence of a satisfied promise, whereas
the rule ($\linearcontract$-split) expresses that contracts can be combined 
in order for the requirements of a contract to match the promised resources of other contracts.

Finally, the remaining rules
govern the interaction between linear and non-linear derivations~\cite{LNL}.
The rules ($G$-left-$\theta$) and ($G$-left-$\delta$) say that a $G$-labeled linear formula is 
non-linear,
and ($\Omega$-cut) is the cut rule where the left premise uses $\Vdash$ and the right one $\vdash$. 

\begin{exa}
		A linear implications $\res@\usr \multimap \res@\usr'$ naturally represents an exchange where a predicate $\res@\usr$ is consumed and a new $\res@\usr'$ is created.
		Note for example that $\res@\usr \multimap \res@\usr',$ $\res@\usr \vdash \res@\usr'$ is indeed a valid sequent.
		
		Linear contractual implication $\delta \linearcontract \delta'$ encodes a promise of $\delta'$ in return of $\delta$.
\end{exa}

\begin{exa}\label{ex:contractreason}
	We now represent agreements and exchanges of our running example of Section~\ref{sec:col-runex} in \MuACL{}.
	Circular promises like those of~\autoref{ex:col-dir} are expressed by a sequent of the form
	\[
	\delta \linearcontract \delta', \delta' \linearcontract \delta, \Sigma \vdash \sigma\,,
	\]
	where the exchange $\delta'$ is promised in return for $\delta$ and vice versa.
	The following derivation proves that the exchange is fair, provided that $\delta, \delta', \Sigma \vdash \sigma$, intuitively meaning that $\delta, \delta'$ transform the state $\Sigma$ in $\sigma$:
	\begin{align*}
		\prftree[r]
		{($\linearcontract$-split)}
		{
			\prftree[r]
			{($\linearcontract$-left)}
			{\delta \otimes \delta' \subseteq \delta' \otimes \delta}
			{
				\prftree[r]
				{($\otimes$-left-$\Delta$)}
				{\delta, \delta', \Sigma \vdash \sigma}
				{\delta' \otimes \delta, \Sigma \vdash \sigma}
			}
			{\delta \otimes \delta' \linearcontract \delta' \otimes \delta, \Sigma \vdash \sigma}	
		}
		{\delta \linearcontract \delta', \delta' \linearcontract \delta, \Sigma \vdash \sigma}	
	\end{align*}
	Similarly for a circular exchange like the one of~\autoref{ex:col-cir} represented as 
	\[\delta \linearcontract \delta', \delta' \linearcontract \delta'',\delta'' \linearcontract \delta, \Sigma \vdash \sigma.\]
	The derivation 
requires that $\delta, \delta', \delta'', \Sigma \vdash \sigma$ and uses two applications of ($\linearcontract$-split).
\end{exa}

A must for \MuACL\ to be adequate for reasoning about MuAC semantics is that of being decidable.

\begin{restatable}[\MuACL\ decidability]{thm}{MuACLsdec}\label{thm:MuACLdecide}
	An always-terminating algorithm exists that decides if an initial sequent is valid in \MuACL.
\end{restatable}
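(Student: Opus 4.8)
The plan is to exhibit a goal-directed proof-search procedure for sequents of the form $\Omega; \Theta, \Delta, \Sigma \vdash \sigma$, run it on the given initial sequent, and prove that it halts on every branch while staying complete, so that validity (that is, provability in \MuACL) is exactly its success. I would organize the search in two layers, exploiting that the non-linear judgment $\Omega \Vdash \omega$ is self-contained: no rule with a $\Vdash$-conclusion mentions the linear zones, every such rule has the subformula property, and the signature is finite ($\Usr$ and $\Pred$ are finite, so there are finitely many atoms $p(\usr_1,\dots,\usr_n)$, and the $G(\cdot)$-formulas are only propagated, never created). Hence $\Omega \Vdash \omega$ lives in what is essentially propositional intuitionistic conjunction--implication logic over a fixed finite set of atoms, which is decidable; I would invoke a standard decision procedure for it (a contraction-free or loop-checked search) as an oracle for the non-linear premises generated by (L-$\to$-left).

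The first step for the mixed judgment $\vdash$ is to remove cut. The linear cut is already absent; for ($\Omega$-cut) I would prove it admissible --- any purely non-linear derivation of a cut formula $\omega$ from $\Omega$ can instead be performed ``in place'' by the left $\Omega$-rules ((L-Weak), (L-Cont), (L-$\land$-left1), (L-$\land$-left2), (L-$\to$-left)), whose non-linear side premises are discharged by the oracle above --- or, less cleanly, show that it suffices to take $\omega$ among the finitely many $G(\theta)$, $G(\delta)$ and non-linear subformulas of the endsequent. Either way every node of the search tree acquires finitely bounded branching and the subformula property is restored.

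The second, and main, step is termination along each branch. The contract machinery is essentially benign: top-level contracts enter $\Theta$ only through (G-left-$\theta$) from the finite $\Omega$ of the endsequent, ($\linearcontract$-split) strictly lowers the number of contracts in $\Theta$, and ($\linearcontract$-left) removes a contract while adding to $\Delta$ only its return part $\delta'$, a strictly smaller formula; moreover, because every implication involved is atom-to-atom, a multiset of contracts can be usefully combined in only finitely many ways (a balancing condition $\biguplus_i \delta_i \subseteq \biguplus_i \delta_i'$ among finitely many contract types, with small solutions). The genuinely delicate point --- and the obstacle I expect to dominate the proof --- is the interaction of (G-left-$\delta$) with contraction on a non-linear $G(\delta)$ in $\Omega$: this lets a transfer bundle $\delta$ be copied into $\Delta$ arbitrarily often and then fired by ($\multimap$-left). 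I would tame it by observing that, since every linear implication is atom-to-atom, firing is resource-conservative ($|\Sigma|$ is invariant, and only the multiset $\Sigma$ over the fixed finite atom pool together with the current goal $\sigma$ matters), so a branch ranges over a finite set of essentially distinct sequents; a loop-check that prunes a branch as soon as it revisits such a configuration then guarantees termination without losing completeness (equivalently, the residual reachability question reduces to coverability/reachability in a conservative vector-addition system, which is decidable). Putting together the decidable non-linear oracle, the elimination of ($\Omega$-cut), and this terminating, loop-checked strategy yields the required algorithm; soundness is immediate since only \MuACL\ rules are applied, and completeness follows from cut-admissibility together with the fact that pruning only discards redundant branches.
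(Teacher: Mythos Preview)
Your high-level plan --- a decidable oracle for $\Vdash$, then a reachability-style argument for the linear zone --- is in the same spirit as the paper, which first proves a normal-form theorem stratifying every proof of an initial sequent into a non-linear block, a contractual block, and a purely linear block, and then reduces the contract-free case to Petri-net reachability in the style of Kanovich. Where your sketch has a genuine gap is the contractual block.

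You call the contract machinery ``benign'' because the balancing condition $\biguplus_i \delta_i \subseteq \biguplus_i \delta'_i$ has ``small solutions''. But via (L-Cont) followed by (G-left-$\theta$) one may load $n_j$ copies of each contract type $\theta_j = \delta_j \linearcontract \delta'_j$ into $\Theta$ for arbitrary $n_j\ge 0$; after ($\linearcontract$-split) the side condition of ($\linearcontract$-left) becomes the system of linear inequalities $\sum_j n_j\,\bar u_{\delta_j} \le \sum_j n_j\,\bar u_{\delta'_j}$ over the $n_j$, whose nonnegative-integer solutions form an infinite cone. Asserting that small solutions exist is not sufficient for termination: you must either bound the $n_j$ a priori or show that every solution is a nonnegative integer combination of a finite generating set. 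The paper takes the second route, invoking the Hilbert-basis theorem to compute such generators and replacing each by a synthetic $G(\delta)$, which collapses the contractual case to the contract-free one. (For \MuACL\ without ($*$-cut) there is also a direct bound: in a normal proof each atomic implication in $\Delta$ is discharged by exactly one ($\multimap$-left), so $|\Delta|\le|\sigma|$ and hence $\sum_j n_j\le|\sigma|$; but you do not make this argument either.) Your loop-check on $(\Sigma,\sigma)$ alone cannot see the unbounded growth of $\Theta$ and $\Delta$, so as written the search need not terminate on branches that keep pumping contract copies, and the parenthetical VAS reduction does not cover the contractual interaction because the side condition of ($\linearcontract$-left) is a global multiset comparison rather than a local transition.
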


The above theorem mentions initial sequents that are sufficient to reason about fairness of exchanges as \autoref{th:fair-exchange} will make clear.
An overview of the proof of this theorem is given in the next subsection.

\subsubsection*{Overview of the proof of \MuACL\ decidability}
At the high level, we proceed as follows to prove the decidability of \MuACL.
First, we define 
two normal forms for proofs (numbered $1$ and $2$),
 and show that they are general, i.e., a proof exists for an initial sequent only if a proof in normal form exists.
Then, we reduce the problem of finding a proof in the normal form $1$ to reachability in Petri Nets, which is known to be decidable~\cite{PetriReach}.
Finally, we reduce the problem of finding a proof in the normal form $2$ to a proof in the normal form $1$.

The following notation helps:
\begin{nota}\label{not:rulesets}
Let $Sr$, $Cr$, $Lr$, $Gr$, $Pr$ be sets of \MuACL\ rules defined as follows.
\begin{align*}
	Sr &= \{ \text{(L-Weak), (L-Cont)} \}\\
	Cr &= \{ \text{($\top$-right), ($\Omega$-Ax), (Cont), (Weak), ($\land$-left1),}
	 \ \text{($\land$-left2), ($\rightarrow$-left), ($\rightarrow$-right), } \\ 
	& \qquad \text{(L-$\land$-left1), ($\Omega$-Cut)} \} \\
	Lr &=  \{ \text{($\multimap$-left), ($\otimes$-right), ($\otimes$-left-$\Theta$),}
         \ \text{($\otimes$-left-$\Delta$), ($\otimes$-left-$\Sigma$)}
         \}\\
	Gr &= \{ \text{(G-left-$\theta$), (G-left-$\delta$)} \}\\
	Pr &= \{ \text{($\linearcontract$-left), ($\linearcontract$-split)} \}
\end{align*}
Intuitively, the set $Sr$ contains structural rules; the rules in $Cr$ and $Lr$ are those  for the non-linear and the linear fragments, respectively; $Gr$ contains the rules driving the interactions between the two fragments; and the rules $Pr$ govern the contractual implication.  
\end{nota}

In the following, we will call \emph{proof} the derivation of a theorem from the axioms, and only use the term \emph{derivation} for a derivation with open assumptions, i.e., a proof tree where the leaves are not only axioms.
We also say that two proofs are \emph{equivalent} if they prove the same sequent.
Moreover, for a set $A$ of rules, we write $\Pi_A$ for a proof or derivation that only applies rules in $A$. 
Finally, we write $\Omega_G$ for a multiset that only contains formulas of the forms $G(\theta)$ and $G(\delta)$.
Recall also that in an initial sequent $\Theta$ and $\Sigma$ are empty.

\begin{figure}
	\begin{center}
		\begin{tabular}{c}
			\begin{tabular}{c}
				\prfsummary[$\Pi_{Cr \cup Sr}$]
				{
					\prfsummary[$\Pi_{Gr \cup Sr}$]
					{
						\prftree
						{\Pi_{Lr \cup \{ \text{($\Sigma$-Ax), (I-right)}\}}}
						{\Delta, \Sigma \vdash \sigma}
					}
					{\Omega_G; \Sigma \vdash \sigma}
				}
				{\Omega; \Sigma \vdash \sigma}\\[0.2cm]
				\textit{normal form 1}
			\end{tabular}
			
			\begin{tabular}{c}
				\hspace{0.5cm}	
				\prfsummary[$\Pi_{Cr \cup Sr}$]
				{
					\prfsummary[$\Pi_{Gr \cup Sr}$]
					{
						\prfsummary[$\Pi_{\{ \text{($\smlinearcontract$-split)} \}}$]
						{
							\prftree[r]
							{($\linearcontract$-left)}
							{
								\prftree
								{\Pi_{Lr \cup \{ \text{($\Sigma$-Ax), (I-right)}\}}}
								{\Delta, \Sigma \vdash \sigma}
							}
							{\theta, \Delta', \Sigma \vdash \sigma}
						}
						{
							{\Theta, \Delta', \Sigma \vdash \sigma}
						}
					}
					{\Omega_G; \Sigma \vdash \sigma}
				}
				{\Omega; \Sigma \vdash \sigma}\\[0.2cm]
				\textit{normal form 2}
			\end{tabular}
		\end{tabular}
	\end{center}
	\caption{Normal forms for \MuACL\ proofs.}
	\label{fig:normals}
\end{figure}

\begin{defi}[Normal proofs]\label{def:normalforms}
	A \MuACL\ proof for an initial sequent is \emph{normal} if 
	it can be decomposed in either one of the 
	forms in~\autoref{fig:normals}.
\end{defi}
Notice that the two normal forms essentially coincide, but normal form 1 corresponds the cases in which the proof uses no rules in $Pr$ (the ones for $\linearcontract$), and normal form 2 to those in which instead they appear at least once.
We use the former to exploit known results for linear logic, and the latter for addressing the  cases that are peculiar to \MuACL.
\autoref{app:proofs} contains some auxiliary definitions and lemmata that help proving that we can only consider normal proofs in either form 1 or 2, as stated by the following theorem.
\begin{thm}[\MuACL\ Normal proofs]\label{thm:col-normal-ncr}
	For 
	every
	$\Omega, \Sigma, \sigma$, the initial sequent $\Omega; \Sigma \vdash \sigma$ is valid in 
	\MuACL\ if and only if a normal proof $\  \Pi$ exists for $\Omega; \Sigma \vdash \sigma$.
\end{thm}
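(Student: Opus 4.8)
The ``if'' direction is immediate, since a normal proof is in particular a proof; so I concentrate on ``only if''. The plan is a \emph{proof normalization} argument: starting from an arbitrary proof $\Pi$ of the initial sequent $\Omega; \Sigma \vdash \sigma$, I repeatedly permute adjacent rule applications until $\Pi$ is stratified as in \autoref{fig:normals} --- reading from the conclusion upwards: first the non-linear and L-structural rules $Cr \cup Sr$, which act only on the classical context and bring $\Omega$ down to a multiset $\Omega_G$ of $G$-formulas; then the interaction rules $Gr \cup Sr$, which unpack the $G(\delta)$'s into $\Delta$ and the $G(\theta)$'s into $\Theta$ and empty $\Omega$; then (only in form~$2$) a block of $(\linearcontract\text{-split})$ applications followed by a single $(\linearcontract\text{-left})$; and finally the purely linear rules $Lr$ with $(\Sigma\text{-Ax})$ and $(I\text{-right})$. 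Note there is no linear cut to eliminate, as \MuACL\ does not include one. Termination of the rewriting I would obtain by performing the permutations in phases --- first pushing all $Cr \cup Sr$ rules toward the conclusion, then all $Gr$ rules below the contract and linear layer, and so on --- or, equivalently, by a suitable well-founded measure on proofs such as a lexicographic combination of the height of $\Pi$ and the number of adjacent rule pairs violating the target layering.

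The technical core is a family of commutation lemmata: whenever an adjacent pair of rules violates the target layering, it can be swapped --- occasionally duplicating the lower rule, or absorbing it into a subderivation --- to obtain an equivalent derivation closer to normal form. The facts that make these go through are: \emph{(i)} every rule of $Cr$ acts only on $\Omega$, each of the two rules of $Sr$ only on the $\Omega$ of an $L$-sequent, and every rule of $Lr \cup Pr$ only on $\Theta, \Delta, \Sigma$; hence a $Cr \cup Sr$ rule commutes with any $Lr \cup Pr$ rule sitting above it, the only delicate case being $(\Omega\text{-cut})$ against a $G$-left rule, where one checks that the cut's self-contained non-linear left subderivation can be carried below unchanged while the $G$-formula being unpacked stays available; and \emph{(ii)} once $\Omega$ has been reduced to $\Omega_G$, all $G$-left rules can be fired consecutively, after which the remaining derivation lives in $\Omega$-free sequents $\Theta, \Delta, \Sigma \vdash \sigma$.

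It then remains to normalize the $\Omega$-free part. If no contract is used, we are already in form~$1$. Otherwise the derivation contains several $(\linearcontract\text{-left})$ applications, possibly interleaved with $Lr$ rules. Since no $Lr$ rule inspects $\Theta$, I would first permute all $Lr$ rules above the $(\linearcontract\text{-left})$ applications, so that the contract firings form one consecutive block; I would then collapse that block into a single $(\linearcontract\text{-left})$ preceded by $(\linearcontract\text{-split})$ rules, using that $(\linearcontract\text{-split})$ combines $\delta_1 \linearcontract \delta'_1, \dots, \delta_k \linearcontract \delta'_k$ into $(\delta_1 \otimes \cdots \otimes \delta_k) \linearcontract (\delta'_1 \otimes \cdots \otimes \delta'_k)$ and that the side conditions $\delta_i \subseteq \delta'_i$ lift to the combined inclusion $\delta_1 \otimes \cdots \otimes \delta_k \subseteq \delta'_1 \otimes \cdots \otimes \delta'_k$; after the single firing, the combined $\delta'_1 \otimes \cdots \otimes \delta'_k$ is decomposed by $(\otimes\text{-left-}\Delta)$ rules into exactly the $\Delta$ that the original proof carried, yielding form~$2$.

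The step I expect to be the main obstacle is this serialization and merging of the contract firings: one must show that postponing a $(\linearcontract\text{-left})$ until all contracts have been moved into $\Theta$ never blocks the proof --- i.e.\ that the resources produced by an early firing are not irrevocably consumed before a later firing occurs, which is exactly why the $Lr$ rules are first pushed upward --- and that $(\linearcontract\text{-split})$ is complete, in the sense that every pattern of firings achievable step by step is achievable with one merged contract. A related subtlety is absorbing the non-linear left subderivations of $(\Omega\text{-cut})$, $(\rightarrow\text{-left})$ and $(\text{L-}\rightarrow\text{-left})$ into the bottom $\Pi_{Cr \cup Sr}$ block without losing or duplicating the $G$-formulas they contribute to $\Omega_G$. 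I expect these to be precisely the points handled by the auxiliary definitions and lemmata announced for \autoref{app:proofs}.
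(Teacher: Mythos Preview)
Your overall strategy matches the paper's, but there is a genuine gap in the permutation argument. The rule $(\otimes\text{-right})$ is \emph{additive} in the non-linear context: both of its premises carry the \emph{same} $\Omega$. Hence your claim that a $Cr \cup Sr$ rule commutes with any $Lr$ rule ``because they act on disjoint parts of the sequent'' fails precisely at $(\otimes\text{-right})$. Concretely, if $(\text{L-Weak})$ --- or $(\text{L-}\!\rightarrow\!\text{-left})$, or $(\Omega\text{-cut})$ --- is the last rule producing the left premise of an $(\otimes\text{-right})$, pushing it below would require a proof of the right premise with the corresponding modification of $\Omega$ undone; but the right subproof may use that very formula essentially, so no local swap is available. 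The delicate cases are therefore not ``$(\Omega\text{-cut})$ against $G$-left'' but rather any $\Omega$-modifying rule against the context-sharing $(\otimes\text{-right})$.

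The paper resolves this by detouring through an auxiliary calculus \MuACLst\ in which $(\otimes\text{-right})$ is replaced by a \emph{multiplicative} variant $(\otimes\text{-right}')$ that splits $\Omega$ between the two premises; in that calculus your permutation lemmata do go through. A preparatory step first rewrites every $(\Sigma\text{-Ax})$ so that axiom leaves carry $\Omega=\emptyset$; after stratification in \MuACLst\ one then observes that, since no structural $L$-rule occurs above the bottom block and the leaves are empty, the non-linear context is empty throughout the upper $Lr$ part, so $(\otimes\text{-right}')$ coincides with $(\otimes\text{-right})$ and the normalized proof is a genuine \MuACL\ proof. Your handling of the contract block --- permuting $Lr$ rules above the $(\linearcontract\text{-left})$ firings, then merging the latter into a single one preceded by $(\linearcontract\text{-split})$'s --- is essentially what the paper does.
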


As a second auxiliary result we get rid of $\Pi_{Cr\cup Sr}$ in both forms by showing that we can build a canonical $\Omega_\star$ from $\Omega$ such that:
$(i)$ $\Omega, \Sigma \vdash \sigma$ is always derivable from $\Omega_\star, \Sigma \vdash \sigma$ using only rules in $Cr\cup Sr$, and
$(ii)$ 
every
proof for $\Omega_G, \Sigma \vdash \sigma$ can be transformed into one for $\Omega_\star, \Sigma \vdash \sigma$.
We build $\Omega_\star$ by including a single occurrence of every $G(\delta)$ and $G(\theta)$ appearing as a subterm in $\Omega$ with valid classical preconditions.

	Our next step is proving that the existence of a \MuACL\ proof in the normal form 1 for a given initial sequent is decidable.
	Note that proofs in the normal form 1 correspond to the case where no contractual rule is ever applied, and where linear implications can be used ad libitum for building the proof in a bottom-up approach (roughly, $G$ is the same as the bang operator ($!$) of linear logic).
	Then, decidability follows from a suitable application of Kanovich's technique~\cite{Kanovich94} that reduces the problem to reachability in Petri Nets, which can be decided using the algorithm proposed in~\cite{PetriReach}.

\begin{restatable}[\MuACL\ Normal form 1 decidability]{lem}{fstnfdecidencr}\label{thm:fstnfdecide-ncr}
		An always-terminating algorithm exists that decides if an initial sequent is provable in \MuACL\ using a proof in the normal form 1.	
\end{restatable}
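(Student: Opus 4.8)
The plan is to show that a proof in normal form~1 exists for an initial sequent $\Omega; \Sigma \vdash \sigma$ exactly when a certain marking is reachable in a Petri net built effectively from the sequent, and then to appeal to the decidability of Petri net reachability~\cite{PetriReach}. First I would use the second auxiliary result mentioned right after Theorem~\ref{thm:col-normal-ncr}: replace $\Omega$ by the canonical $\Omega_\star$, so that it suffices to decide provability of $\Omega_\star, \Sigma \vdash \sigma$ by a proof whose shape is the inner part of normal form~1, namely a $\Pi_{Gr\cup Sr}$-phase over a purely linear $\Pi_{Lr\cup\{(\Sigma\text{-Ax}),(I\text{-right})\}}$-derivation of $\Delta, \Sigma \vdash \sigma$. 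Since $\Omega_\star$ is a fixed, finite multiset of formulas of the form $G(\delta)$ (and $G(\theta)$, but the latter cannot be used in normal form~1 because no $Pr$ rule is applied), the $(G\text{-left-}\delta)$ and structural L-rules let us draw any finite multiset $\Delta$ of linear implications $\res@\usr \multimap \res@\usr'$, each implication being an ``unbounded resource'' obtained by first contracting the corresponding $G(\delta)\in\Omega_\star$ as many times as needed and then stripping the $G$. Thus normal form~1 provability reduces to: does there exist a finite multiset $\Delta$, each element drawn from the fixed finite pool $P$ of implications licensed by $\Omega_\star$, such that the purely linear sequent $\Delta, \Sigma \vdash \sigma$ is derivable using only $Lr$ plus the linear axioms?

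\textbf{The linear core.} The key step is then exactly Kanovich's encoding~\cite{Kanovich94}: a purely multiplicative linear sequent built from $\otimes$ and $\multimap$ over atoms, with a fixed supply of reusable Horn-like implications, is provable iff a target marking is reachable from an initial marking in a Petri net. Concretely, I would take the places of the net to be the atomic propositions $\res@\usr$ occurring in $\Sigma$, $\sigma$, and $P$; the initial marking is the multiset $\Sigma$ (after fully decomposing $\otimes$ via the $(\otimes\text{-left-}\Sigma)$ rules); each implication $\res@\usr \multimap \res'@\usr'$ in $P$ becomes a transition consuming one token from place $\res@\usr$ and producing one in $\res'@\usr'$ — and because these come from $G$-wrapped formulas they may fire arbitrarily often, which is precisely the Petri net semantics; the goal is the marking $\sigma$ (again after $\otimes$-decomposition). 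One has to check the two directions of the correspondence between bottom-up proof search in this restricted calculus and transition firing; this is the standard argument, with the mild twist that our implications always have a single atom on each side (so every transition is ``$1$-in/$1$-out'', making the net especially simple), and that $I$/$(I\text{-right})$ corresponds to the empty marking. Having established that normal-form-1 provability is equivalent to a Petri net reachability instance that is computable from $\Omega, \Sigma, \sigma$, decidability follows immediately from~\cite{PetriReach}, and the algorithm is always terminating because the reduction is computable and the reachability procedure halts.

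\textbf{Anticipated obstacle.} The main obstacle I expect is bookkeeping around the $\Pi_{Gr\cup Sr}$-phase and the multiplicative splitting in $(\otimes\text{-right})$: one must argue that, without loss of generality, all $(G\text{-left-}\delta)$ and L-structural rules can be permuted to the bottom of the linear part, so that the derivation genuinely factors as ``instantiate a finite $\Delta$ from the pool, then do pure linear logic'', and that the context-splitting in $(\otimes\text{-right})$ and $(\multimap\text{-left})$ does not interact badly with the fact that elements of $\Delta$ are reusable. This is where the normal-form machinery of Theorem~\ref{thm:col-normal-ncr} and the canonical $\Omega_\star$ construction do the real work; modulo those results the rest is a faithful transcription of Kanovich's technique. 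A secondary point to get right is that in normal form~1 the $\Theta$ zone and the $Pr$ rules are genuinely absent, so no $G(\theta)$ ever contributes — this is what keeps the encoding within the purely multiplicative, decidable fragment rather than needing the contractual operator, which is handled separately when reducing normal form~2 to normal form~1.
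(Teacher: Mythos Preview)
Your plan follows the paper's overall strategy (reduce to $\Omega_\star$, then reduce the linear core to Petri-net reachability via Kanovich), but the concrete Petri-net encoding you describe is the one for \MuACLs, not for \MuACL, and as written it is incorrect for the present lemma.

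The issue is the shape of the rule $(\multimap\text{-left})$ in \MuACL. Its premise is $\Omega;\Sigma\vdash\sigma$: there is no $\Delta$ component, only atoms. Hence, without the rule $(*\text{-cut})$, the output of one implication can never be fed as the input of another. Concretely, $G(a\multimap b),\,G(b\multimap c);\,a\vdash c$ is \emph{not} provable in \MuACL\ in normal form~1: after the $Gr\cup Sr$ phase you are left trying to prove $a\multimap b,\,b\multimap c,\,a\vdash c$ with only $Lr\cup\{(\Sigma\text{-Ax}),(I\text{-right})\}$, and no rule applies bottom-up (the conclusion of $(\multimap\text{-left})$ must contain exactly one implication, and $(\otimes\text{-right})$ needs a tensor on the right). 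Yet your Petri net, with one place per atom and a transition $a\to b$ and another $b\to c$, happily reaches the marking $\{c\}$ from $\{a\}$. So your reduction produces false positives: reachability in your net does \emph{not} imply normal-form-1 provability in \MuACL.

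The paper handles this by adapting Kanovich's encoding rather than applying it verbatim: each atom $p$ becomes \emph{two} places $p_s$ and $p_t$; each implication $\res@\usr\multimap\res'@\usr'$ becomes a transition from $(\res@\usr)_s$ to $(\res'@\usr')_t$; and there is an extra ``identity'' transition $p_s\to p_t$ to model $(\Sigma\text{-Ax})$. Since transitions only consume from source places and only produce into target places, outputs can never be reused as inputs, which is exactly the restriction imposed by the absence of $(*\text{-cut})$ in \MuACL. Your ``anticipated obstacle'' about permuting $Gr\cup Sr$ below the linear part is not where the difficulty lies; the real point you are missing is this no-chaining constraint and the bipartite net that enforces it.
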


Finally, we show how to reduce the normal form 2 case to the previous one: we prove that a proof in the normal form 2 can be effectively rewritten in the normal form 1.
Consider a vector space with 
a basis composed by the linear implications appearing as subterms in $\Omega_\star$ 
(i.e., all the transfers that we are considering).
Note that every $\Delta$ (and $\delta$) is uniquely determined by a vector $\bar u_{\Delta}$ (and $\bar u_{\delta}$), associating each linear implication with the number of occurrences in $\Delta$.
The reduction from the normal form 2 to the normal form 1 will be performed in this linear algebraic framework.

In the following, we consider the derivations in a bottom-up fashion, starting with the sequent we are proving and deriving the premises.
Consider the normal form 1, and note that no contractual rule is ever applied, hence we can assume $\Omega_G$ only contains formulas of the form $G(\delta)$.
We let a vector $\bar{x}$ represent how many occurrences of each $\delta$ rule we take in the derivation $\Pi_{Gr \cup Sr}$.
The set $\Omega_G$ itself can be represented as a linear transformation $A_{\Omega_G}$, with $\bar u_\delta$ its columns, mapping each vector $\bar{x}$ with the outcome of taking $x_i$ occurrences of each $\delta_i$ rule.
Each $\Delta$ is the outcome of composing a number of occurrences (non-negative, possibly $0$) of 
every
$\delta$ such that $G(\delta) \in \Omega_G$.

Formally, a derivation $\Pi_{Gr \cup Sr}$ exists from $\Delta, \Sigma \vdash \sigma$ to $\Omega_G, \Sigma \vdash \sigma$ if and only if $\bar u_{\Delta} = A_{\Omega_G} \bar{x}$ with $\bar{x}$ a vector of non-negative integers.
Note that also the opposite is true: we can always interpret 
a matrix
 $A$ as a specific set of rules $G(\delta)$ of some $\Omega_G$.

Consider now the normal form 2.
We encode $\Omega_G$ as three matrices: $A_{\Omega_G}$ defined as before; 
$B_{\Omega_G}$ and $C_{\Omega_G}$ with a column $\bar b_\theta$ and $\bar c_\theta$ for each rule $\theta$ such that $G(\theta) \in \Omega_G$.
The vector $\bar b_\theta$ represents the required transfers appearing to the left of $\linearcontract$ in $\theta$, whereas $\bar c_\theta$ represents the promised transfers to the right.
Note that we can take 
every
 rule in $\Omega_G$ as many times as we want, and assume $\bar{y}$ is a vector representing how many occurrences for each rule in $\Omega_G$ we take.

The encoding represents agreement as the solutions of a system of linear equations (representing possible compositions of offers) 
constrained by linear inequalities (representing fairness).
Formally, an exchange $\Delta$ is the result of a fair agreement if and only if its encoding as the vector $\bar u_{\Delta}$ satisfies 
\begin{align*}
	\bar u_{\Delta} =
	\begin{bmatrix}
		\begin{array}{c|c}
			{A_{\Omega_G}} & {C_{\Omega_G}}\\
		\end{array}
	\end{bmatrix}
		\bar{y}
	\quad\text{ and }\quad
	\begin{bmatrix}
		\begin{array}{c|c}
			\mathbf{0}	 & C_{\Omega_G} - B_{\Omega_G}\\
		\end{array}
	\end{bmatrix}
		\bar{y}
	\geq
	\bar{0}
\end{align*}
for some column vector of non-negative integers $\bar{y}$.
We then apply the Hilbert basis theorem~\cite{HB} to show that the set of nonnegative integer solutions $\bar{y}$ of the inequality above are generated by
$ \bar{y} = 
	\begin{bmatrix}
		H_{\Omega_G}\\
	\end{bmatrix}
	\bar{x}
$ for 
every
 $\bar{x}$ of nonnegative integers, where the matrix $H_{\Omega_G}$ can be computed using~\cite{computeHB}.
As a consequence, an exchange $\Delta$ results from a fair agreement if and only if $\bar u_{\Delta} = D_{\Omega_G} \bar{x}$
for some column vector of non-negative integers $\bar{x}$ and with $
D_{\Omega_G} = \begin{bmatrix}
	\begin{array}{c|c}
		{A_{\Omega_G}} & {C_{\Omega_G}}\\
	\end{array}
\end{bmatrix}
\begin{bmatrix}
	\begin{array}{c}
		H_{\Omega_G}
	\end{array}
\end{bmatrix}
\!.
$

This is exactly our encoding of the proofs in the normal form 1.
Finally, by applying the encoding backward we interpret $D_{\Omega_G}$ as a multiset of \MuACL\ non-linear propositions $\Omega_G'$ without contractual implications, and prove the following lemma.
\begin{restatable}{lem}{sftoff}\label{thm:col-sftoff}
	For 
	every
	 $\Omega_G, \Delta, \Sigma, \sigma$, there is a computable multiset of non-linear propositions $\Omega_G'$ such that there exists a derivation in the normal form 2 from $\Delta, \Sigma \vdash \sigma$ to $\Omega_G; \Sigma \vdash \sigma$ if and only if there exists a derivation in the  normal form 1 from $\Delta, \Sigma \vdash \sigma$ to $\Omega_G'; \Sigma \vdash \sigma$.
\end{restatable}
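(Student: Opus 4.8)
The plan is to recast derivations between the sequents appearing in \autoref{fig:normals} as statements about nonnegative integer solutions of linear (in)equalities over a fixed finite-dimensional space, and then to use the Hilbert basis theorem to replace the inequalities forced by the normal form~$2$ shape by a single ``matrix applied to a nonnegative vector'', which is exactly the algebraic signature of a normal form~$1$ derivation. I fix coordinates indexed by the finitely many linear implications (transfers) occurring as subterms of $\Omega_G$, so that every exchange $\Delta$, every $\delta$, and every side of a contract $\theta = \delta^L \linearcontract \delta^R$ is identified with its multiplicity vector ($\bar u_\Delta$, $\bar u_\delta$, and so on). The first ingredient is a dictionary fact for normal form~$1$: a derivation $\Pi_{Gr \cup Sr}$ from $\Delta, \Sigma \vdash \sigma$ to $\Omega_G; \Sigma \vdash \sigma$ exists iff $\bar u_\Delta = A_{\Omega_G}\bar x$ for some nonnegative integer vector $\bar x$, where $A_{\Omega_G}$ has one column $\bar u_\delta$ per proposition $G(\delta) \in \Omega_G$; conversely, every nonnegative integer matrix is $A_{\Omega_G'}$ for a computable $\Omega_G'$ (read each column off as a tensor product of linear implications and collect the corresponding $G(\delta)$). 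This is pure bookkeeping on $Gr \cup Sr$: reading bottom-up from $\Omega_G; \Sigma \vdash \sigma$, only (L-Cont), (L-Weak) and (G-left-$\delta$) can contribute, while (G-left-$\theta$) would leave a contract in $\Theta$ that no rule of $Gr \cup Sr$ can remove, so it never occurs in a derivation ending at $\Delta, \Sigma \vdash \sigma$; also $\Sigma$ and $\sigma$ stay untouched throughout.

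The second ingredient is the analogous dictionary fact for normal form~$2$. Encode $\Omega_G$ also by matrices $B_{\Omega_G}, C_{\Omega_G}$ whose columns $\bar b_\theta, \bar c_\theta$ record, respectively, the left- and right-hand exchanges of the contracts $G(\theta) \in \Omega_G$. I claim a normal form~$2$ derivation from $\Delta, \Sigma \vdash \sigma$ to $\Omega_G; \Sigma \vdash \sigma$ exists iff there is a nonnegative integer vector $\bar y$ with
\[
\bar u_\Delta = [\,A_{\Omega_G}\mid C_{\Omega_G}\,]\,\bar y \quad\text{and}\quad [\,\mathbf 0 \mid C_{\Omega_G} - B_{\Omega_G}\,]\,\bar y \ge \bar 0 .
\]
Indeed, in such a derivation the $\Pi_{Gr \cup Sr}$ block introduces (bottom-up) an exchange $\Delta'$ with $\bar u_{\Delta'} = A_{\Omega_G}\bar x$ together with some copies of the contracts of $\Omega_G$, say $\bar y_\Theta$ many; then ($\linearcontract$-split) merges them all --- since $\otimes$ is associative and commutative, the merged contract depends only on these multiplicities --- into a single $\delta^L \linearcontract \delta^R$ with $\bar u_{\delta^L} = B_{\Omega_G}\bar y_\Theta$ and $\bar u_{\delta^R} = C_{\Omega_G}\bar y_\Theta$; finally the unique ($\linearcontract$-left) step replaces it by $\delta^R$, its side condition $\delta^L \subseteq \delta^R$ being precisely $B_{\Omega_G}\bar y_\Theta \le C_{\Omega_G}\bar y_\Theta$ coordinatewise. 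Setting $\bar y = (\bar x, \bar y_\Theta)$ yields the displayed system, and every such $\bar y$ is realisable: replicate the relevant $G(\delta)$ and $G(\theta)$ with (L-Cont) according to $\bar x, \bar y_\Theta$, discard the rest with (L-Weak), merge with ($\linearcontract$-split), and apply ($\linearcontract$-left), which is legal by the inequality.

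It remains to collapse the inequality. The set $S = \{\bar y \ge \bar 0 \text{ integer} \mid [\,\mathbf 0 \mid C_{\Omega_G} - B_{\Omega_G}\,]\,\bar y \ge \bar 0\}$ is the monoid of nonnegative integer solutions of a homogeneous system of linear inequalities, hence finitely generated by the Hilbert basis theorem~\cite{HB}: $S = \{H_{\Omega_G}\bar x \mid \bar x \ge \bar 0 \text{ integer}\}$, with the columns of the computable matrix $H_{\Omega_G}$~\cite{computeHB} forming the Hilbert basis. Consequently the set of $\bar u_\Delta$ admitting a normal form~$2$ derivation from $\Omega_G; \Sigma \vdash \sigma$ equals $\{D_{\Omega_G}\bar x \mid \bar x \ge \bar 0 \text{ integer}\}$ with $D_{\Omega_G} = [\,A_{\Omega_G}\mid C_{\Omega_G}\,]\,H_{\Omega_G}$, again a computable nonnegative integer matrix. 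By the converse half of the normal form~$1$ dictionary fact, read $D_{\Omega_G}$ back as a multiset $\Omega_G'$ of propositions $G(\delta)$ without contractual implications, computable from $\Omega_G$; then $\bar u_\Delta = D_{\Omega_G}\bar x$ for some nonnegative integer $\bar x$ iff there is a normal form~$1$ derivation from $\Delta, \Sigma \vdash \sigma$ to $\Omega_G'; \Sigma \vdash \sigma$, and by the two dictionary facts this is equivalent to the existence of a normal form~$2$ derivation from $\Delta, \Sigma \vdash \sigma$ to $\Omega_G; \Sigma \vdash \sigma$. Since $\Sigma$ and $\sigma$ are spectators of these derivations, this gives the lemma.

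The main obstacle I expect is dictionary fact~(ii): one must justify rigorously that the shape prescribed by \autoref{fig:normals} lets every normal form~$2$ derivation be read as ``introduce contracts, then merge them all by ($\linearcontract$-split), then one ($\linearcontract$-left)'', in particular that the merged contract is insensitive to the order and grouping of the split steps, and that the multiset inclusion $\delta \subseteq \delta'$ in the side condition of ($\linearcontract$-left) translates, coordinate by coordinate, into $C_{\Omega_G} - B_{\Omega_G}$ being nonnegative on the chosen $\bar y$. Once this bridge between the proof-theoretic picture and the linear-algebraic one is established, the Hilbert basis step and the read-back to $\Omega_G'$ are routine.
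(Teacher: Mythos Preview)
Your proposal is correct and follows essentially the same route as the paper: encode the $Gr\cup Sr$ and $(\smlinearcontract$-split$)/(\smlinearcontract$-left$)$ blocks as linear-algebraic conditions on multiplicity vectors, apply the Hilbert basis theorem to eliminate the inequality coming from the side condition of $(\smlinearcontract$-left$)$, and read the resulting nonnegative integer matrix $D_{\Omega_G}$ back as a contract-free $\Omega_G'$. The paper packages your two ``dictionary facts'' into separate auxiliary lemmas (one characterising $\Pi_{Gr\cup Sr}$ via multiplicities, one characterising $\Pi_{\{(\smlinearcontract\text{-split})\}}$, and one combining them into exactly your displayed system), but the argument and the construction of $\Omega_G'$ are the same.
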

We can therefore conclude the decidability of \MuACL.
\subsection{MuACL vs Linear Logic}\label{sec:independence}

We now show that the linear contractual implication is necessary for representing the circular reasoning of contracts, otherwise expressive power is lost.
Indeed, we prove that the traditional operators of linear logic cannot express predicates with $\linearcontract$,
by rewriting them in a fully compositional way.
Some useful notation follows.

Let \exlo\ be the logic obtained by removing from \MuACL\ the formulas with $\linearcontract$ and the rules governing it.
Note that \exlo\ is essentially the computational fragment of LNL, which in turns is an alternative way of expressing linear logic, where the non-linear fragment is made explicit by the $G$ operator.
We show that $\linearcontract$ is not just syntactic sugar by proving that there is no homomorphic map $m$ from \MuACL\ to \exlo\ (multisets of) formulas, i.e., $m$ preserves the operators of \exlo.

\begin{defi}
A \emph{homomorphic map} from \MuACL\ to \exlo\ is a function $m(\cdot)$ from \MuACL\ to \exlo\ propositions that preserves the operators of \exlo\, i.e.,  such that for all \MuACL\ propositions $\varphi$ and $\varphi'$ it holds that:
\begin{align*}
m(\varphi) &= \varphi &&\text{if } \varphi = I, \top, \res@\usr, p(\usr_1, \dots, \usr_n)\\
m(\varphi \star \varphi') &= m(\varphi) \star m(\varphi') &&\text{if } \star = \otimes, \multimap, \land, \rightarrow\\
m(G(\varphi)) &=\ G(m(\varphi)) &&
\end{align*} 
\end{defi}

Notice that $m(\cdot)$ has no constraints when it maps propositions with $\linearcontract$ as the top level operator.
Hereafter, we apply $m(\cdot)$ pointwise to multisets of \MuACL\ propositions.
Moreover, we write $\vdash_{\MuACL}$ and $\vdash_{\exlo}$ to represent sequents that are valid in $\MuACL$ and $\exlo$, respectively.

\begin{defi}
A homomorphic map $m$ from \MuACL\ to \exlo\ is
\begin{itemize}
	\item  \emph{complete} if 
	for all $\Omega$, $\Theta$, $\Delta$, $\Sigma$ and $\sigma$, it holds that 
	\[
	\Omega; \Theta, \Delta, \Sigma \vdash_{\MuACL} \sigma \quad\text{ implies }\quad m(\Omega); m(\Theta), m(\Delta), m(\Sigma) \vdash_{\MuACL} m(\sigma);
	\]
	\item \emph{correct} if
	for all $\Omega$, $\Theta$, $\Delta$, $\Sigma$ and $\sigma$, it holds that 
	\[
	m(\Omega); m(\Theta), m(\Delta), m(\Sigma) \vdash_{\MuACL} m(\sigma)
	\quad\text{ implies }\quad 
	\Omega; \Theta, \Delta, \Sigma \vdash_{\MuACL} \sigma.
	\]
	\end{itemize}
 \end{defi}

The following theorem ensures that no correct and complete homomorphic map is possible:
\begin{restatable}{thm}{nohomo}\label{thm:homo}
There is no complete and correct homomorphic map of \MuACL\ to \exlo.
\end{restatable}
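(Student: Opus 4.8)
The plan is to argue by contradiction, exhibiting a tiny \MuACL\ configuration built from two circularly-referencing contracts whose behaviour no homomorphic $m$ can reflect in \exlo. The defining effect of $\linearcontract$ is to close a cycle of promises so that the promised exchanges become available \emph{in lockstep} — each usable exactly as often as the others — and this coupling disappears once the hypotheses sit under the $G$-modality, which, like the exponential of linear logic, makes the guarded resource freely and independently reusable.

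Concretely, pick four distinct resource atoms $a,b,c,d$ (of the form $\res@\usr$) and set $\theta_1 = (a\multimap b)\linearcontract(c\multimap d)$ and $\theta_2 = (c\multimap d)\linearcontract(a\multimap b)$, as in Example~\ref{ex:contractreason}. The first part of the proof establishes three facts in \MuACL. \textbf{(i)} $G(\theta_1),G(\theta_2);\,a\otimes c\vdash b\otimes d$ is valid: by two uses of (G-left-$\theta$), one ($\linearcontract$-split) forming the contract $(a\multimap b)\otimes(c\multimap d)\linearcontract(c\multimap d)\otimes(a\multimap b)$, one ($\linearcontract$-left) (the two sides are the same multiset, so the side condition $\subseteq$ holds), ($\otimes$-left-$\Delta$), and finally ($\otimes$-right) with two uses of ($\multimap$-left) and ($\Sigma$-Ax). \textbf{(ii)} Both $G(\theta_1),G(\theta_2);\,a\vdash b$ and $G(\theta_1),G(\theta_2);\,a\vdash d$ are \emph{invalid}: inspecting the rules, the only way a contract of $\{\theta_1,\theta_2\}$ can be discharged by ($\linearcontract$-left), even after arbitrarily many ($\linearcontract$-split)s, is to produce equally many linear copies of $a\multimap b$ and of $c\multimap d$, all of which must then be consumed; consuming a copy of $c\multimap d$ requires a $c$ in the state (no available exchange produces one), so from an initial state with no $c$ no contract fires, leaving just $a\vdash b$ (resp.\ $a\vdash d$), invalid for distinct atoms. (The same reasoning shows, more generally, that $G(\theta_1),G(\theta_2);\,a^{\otimes n}\otimes c^{\otimes m}\vdash b^{\otimes n}\otimes d^{\otimes m}$ is valid iff $n=m$.) \textbf{(iii)} Each $\theta_i$ is \emph{inert}: since $\{a\multimap b\}\not\subseteq\{c\multimap d\}$ and this persists under ($\linearcontract$-split) with copies of the same contract, $\theta_i$ can never be discharged, so $G(\theta_i);\Sigma\vdash\sigma$ is valid iff $\Sigma\vdash\sigma$ is.

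Now suppose a complete and correct homomorphic $m$ exists. Because $m$ preserves the operators of \exlo\ — in particular $\otimes$, $\multimap$ (which act on atoms) and $G$ — it sends resource atoms to resource atoms, and the image of a $\linearcontract$-formula occurring under $G$ must lie in the $\delta$-fragment of \exlo\ (tensors of atom-to-atom implications); write $\hat a=m(a)$, etc., so $\hat a,\hat b,\hat c,\hat d$ are pairwise distinct \exlo\ atoms (if $\hat a=\hat b$ then $\hat a\vdash_{\exlo}\hat b$, whence by correctness $a\vdash_{\MuACL}b$, false). Put $\phi_i=m(\theta_i)$. Completeness applied to (i) gives $G(\phi_1),G(\phi_2),\hat a,\hat c\vdash_{\exlo}\hat b\otimes\hat d$; correctness applied to (ii) gives that $G(\phi_1),G(\phi_2),\hat a\vdash_{\exlo}\hat b$ and $G(\phi_1),G(\phi_2),\hat a\vdash_{\exlo}\hat d$ are \emph{not} derivable; and (iii) with completeness and correctness gives that each $G(\phi_i)$ is \emph{conservative} over $\hat a,\hat b,\hat c,\hat d$, i.e.\ adding it does not change validity of such sequents. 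The contradiction now comes from the token-routing reading of \exlo\ underlying \autoref{thm:fstnfdecide-ncr}: in a normal-form-$1$ derivation, $G(\phi_1),G(\phi_2)$ contribute a fixed finite set of \emph{reusable} atom-to-atom transitions (exponential bundles split into their components), so over atom states $G(\phi_1),G(\phi_2)$ behave as a directed graph $\mathcal G=\mathcal G_1\cup\mathcal G_2$ on atoms, and $G(\phi_1),G(\phi_2),M\vdash_{\exlo}N$ holds iff the tokens of $M$ can be routed independently along paths of $\mathcal G$ onto the tokens of $N$. Conservativity of each $\phi_i$ in isolation says that within $\mathcal G_i$ there is no path between two distinct atoms of $\{\hat a,\hat b,\hat c,\hat d\}$. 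From the first derivable sequent, routing one $\hat a$-token and one $\hat c$-token to a $\hat b$- and a $\hat d$-token forces, in $\mathcal G$, either $\hat a\leadsto\hat b$ or $\hat a\leadsto\hat d$; but either case makes one of $G(\phi_1),G(\phi_2),\hat a\vdash\hat b$ or $G(\phi_1),G(\phi_2),\hat a\vdash\hat d$ derivable in \exlo, and by correctness the corresponding \MuACL\ sequent from (ii) would be valid — contradiction. Hence no complete and correct homomorphic $m$ exists.

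The heart of the argument, and the part needing the most care, is the passage from \MuACL\ to this combinatorial model of \exlo: one must justify that for $G$-only hypotheses over atom states, provability in \exlo\ coincides with token routing in a finite graph (so that $G$-guarded resources are genuinely decoupled — the source of the expressiveness gap), and one must rule out that $m$ hides the lockstep coupling inside \emph{fresh} auxiliary atoms occurring in $\phi_1,\phi_2$. The latter is exactly what the conservativity statement from (iii) excludes: a fresh atom used by $\phi_1$ can be carried back to an atom of $\{\hat a,\hat b,\hat c,\hat d\}$ only through $\mathcal G_2$, and closing such a cycle would already give $\mathcal G_1$ or $\mathcal G_2$ a path between two atoms of $\{\hat a,\hat b,\hat c,\hat d\}$, contradicting conservativity — so the coupling cannot be reconstructed, and the decomposition of the joint routing of (i) into independent single-token routings, which \MuACL\ forbids by (ii), is unavoidable.
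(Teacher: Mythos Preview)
Your route is genuinely different from the paper's and much more elaborate. The paper's argument is essentially three lines and never analyses \exlo\ provability at all: since a homomorphic $m$ must be the identity on \exlo\ formulas (Lemma~\ref{thm:homolemma}), the multisets $\Phi'=\{\delta\linearcontract\delta',\ \delta'\linearcontract\delta\}$ and $\Phi=\{\delta\linearcontract\delta',\ m(\delta'\linearcontract\delta)\}$ satisfy $m(\Phi)=m(\Phi')$; from $\Phi'\vdash_{\MuACL}\delta\otimes\delta'$, completeness then correctness yield $\Phi\vdash_{\MuACL}\delta\otimes\delta'$; but $\Phi$ consists of one \exlo\ formula together with a single contract $\delta\linearcontract\delta'$ with $\delta\not\subseteq\delta'$, and any such sequent is unprovable in \MuACL\ because that lone contract can never be discharged (Lemma~\ref{thm:notyou}). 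No $G$, no token routing, no case analysis on $\phi_i$.

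Your argument has a real gap at the point where you pass to a combinatorial model of \exlo. You assert that ``exponential bundles split into their components'' and that ``$G(\phi_1),G(\phi_2)$ behave as a directed graph $\mathcal G=\mathcal G_1\cup\mathcal G_2$'' with independent token routing along paths. Neither is true in this calculus: $(G\text{-left-}\delta)$ introduces the \emph{whole} $\phi_i$ into $\Delta$, and every implication of that copy must then be consumed --- there is no rule that turns $G(\phi_i)$ into separate $G(\ell)$'s for the implications $\ell$ of $\phi_i$. Hence the validity of $G(\phi_1),G(\phi_2);\hat a,\hat c\vdash\hat b\otimes\hat d$ does \emph{not} decompose into two independent single-token routings, and your inference ``either $\hat a\leadsto\hat b$ or $\hat a\leadsto\hat d$, hence one of the (ii)-sequents is derivable'' fails. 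For a concrete illustration of why bundling matters: with $\phi_1=(\hat a\multimap\hat b)\otimes(\hat c\multimap\hat d)$ and $\phi_2=I$ the joint sequent is derivable while both singleton sequents of (ii) are not. Of course this particular $\phi_1$ violates your conservativity constraint (iii), but your graph-based deduction does not invoke bundling and so cannot exploit that. Likewise, your final paragraph on fresh auxiliary atoms asserts that ``closing such a cycle would already give $\mathcal G_1$ or $\mathcal G_2$ a path between two atoms of $\{\hat a,\hat b,\hat c,\hat d\}$'' --- but that is precisely what requires proof, and it is false in the graph model you set up (e.g.\ $\mathcal G_1=\{\hat a\to e\}$, $\mathcal G_2=\{e\to\hat b\}$).

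Your strategy \emph{can} be made to work, and in fact more simply than you present it, by using (iii) harder. Since $m$ fixes atoms, correctness and completeness transfer (iii) to: $G(\phi_i);\Sigma\vdash_{\exlo}\sigma$ iff $\Sigma\vdash_{\exlo}\sigma$ for \emph{all} atom-multisets $\Sigma,\sigma$. Instantiating with $\Sigma$ the multiset of sources of $\phi_i$ and $\sigma$ the tensor of its targets (which is always derivable using one copy of $\phi_i$) forces $\text{sources}(\phi_i)=\text{targets}(\phi_i)$ as multisets. But then every copy of either $\phi_i$ contributes zero net change to the atom-multiset of the state, so $G(\phi_1),G(\phi_2);\hat a,\hat c\vdash\hat b\otimes\hat d$ would require $\{\hat a,\hat c\}=\{\hat b,\hat d\}$, contradicting (i). Note that (ii) is never needed. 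Even so, this is a good deal more machinery than the paper's idempotence trick.
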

The proof of the theorem is by counterexample, and the sequent presented is indeed an initial one, showing that finding a homomorphic map that works well at least with initial sequents is also impossible (see~\autoref{thm:homoinit} in the Appendix).
Consequently, the computational fragment of linear logic does not natively support circular reasoning.
For this reason, \MuACL\ extends it with the operator $\linearcontract$ achieving a different expressive power.

\subsection{Compiling MuAC to MuACL}\label{sec:MuACencode}

The definition below compiles MuAC to \MuACL\ and paves the way to use \MuACL\ for proving an exchange fair.

As abbreviations, we write $[u] = u_0, \dots, u_n$ for the \emph{finite} sequence of user variables occurring in a MuAC rule $r$, and we denote with the symbol $\Uplambda [u]$ a restricted universal quantifier over the users $[u]$; 
note that this quantifier is only syntactic sugar used to compactly represent a finite conjunction of propositions $\omega$ over the \emph{finite} set of users $[u]$.

\begin{defi}[From MuAC to \MuACL]\label{def:compile}
The compilation of the MuAC ruleset $R_\usr$ of the user $\usr \in Usr$, in symbols $\semantics{R_\usr }$,  is defined as follows:
\begin{align*}
&\semantics{R_\usr } = \{ \semanticsusr{r}{\usr} \mid r \in R_\usr \}\\
&\semanticsusr{\texttt{Gives}(\texttt{Me}, \res, u) \texttt{ :- } \GiveLs \ \texttt{with}\ \PredLs}{\usr} = \\
&\qquad \Uplambda [u].\semanticsusr{\PredLs}{\usr} \rightarrow\ 
G(\semanticsusr{\GiveLs}{\usr} \linearcontract \semanticsusr{\texttt{Gives}(u, \res, \texttt{Me})}{\usr} )
\end{align*}
where 
	\begin{align*}
		&\semanticsusr{\texttt{Gives}(u, \res', u')}{\usr} = \res'@\semanticsusr{u}{\usr} \multimap \res'@\semanticsusr{u'}{\usr}\\
		&\semanticsusr{\PredLs}{\usr} = 
		\begin{cases}
		\top 
		&\text{if }\PredLs = \epsilon\\
		p(\semanticsusr{u_1}{\usr}, \dots, \semanticsusr{u_i}{\usr}) \land \semanticsusr{\PredLs'}{\usr} 
			\hspace{-.2cm}&\text{if } \PredLs = p(u_1, \dots, u_i\texttt{)}\PredLs'
		\end{cases}\\
		&\semanticsusr{\GiveLs}{\usr} =
		\begin{cases}
		I 
			&\text{if }\GiveLs = \epsilon\\
		\semanticsusr{\texttt{Gives}(u, \res', u')}{\usr} \otimes \semanticsusr{GiveLs'}{\usr}
			\hspace{-.2cm}&\text{if } \GiveLs = \texttt{Gives}(u, \res', u')\ \GiveLs'
		\end{cases}\\
		&\text{ with } \semanticsusr{u}{\usr} =
		\begin{cases}
			\usr &\text{if } u = \texttt{Me}\\
			u   &\text{otherwise }
		\end{cases}
	\end{align*}
\end{defi}

Some comments are in order.
The compilation of a ruleset $R_\usr$ is a set of non-linear formulas, one for each rule $r \in R_\usr$.
A rule $\texttt{Gives}(\texttt{Me}, \res, u) \texttt{ :- } \GiveLs \ \texttt{with}\ \PredLs$ is compiled as a universally quantified non-linear formula $\Uplambda [u] . \omega \rightarrow G(\delta \linearcontract \delta')$ where:
($i$) $\omega$ encodes the non-linear conditions in $\PredLs$;
($ii$) $\delta$ represents the (linear) exchanges the user asks in return for $\res$;
and ($iii$) $\delta'$ corresponds to the promise of $\usr$ to give $\res$ to the requester if the conditions are met.
Recall that a MuAC statement $\texttt{Gives}(u, \res, u')$ intuitively represents an exchange, where $u$ gives a resource $\res$ to $u'$, i.e., $\res@u \multimap \res@u'$.
As expected, the non-linear requirements of $r$ are joined with $\land$ and the linear ones with $\otimes$.
Finally, user variables $u$ are bound to users in $\Usr$ by the finite universal quantifier $\Uplambda$, with the exception of \texttt{Me}, which is interpreted as $\usr$, the owner of the ruleset.

\begin{exa}\label{ex:somerules}
	Consider the MuAC rulesets of~\autoref{ex:col-MuACpol}.
	The rules \texttt{A1} of Alice's policy, \texttt{B1} of Bob's, and \texttt{C1} of Carl's are compiled as
	\begin{align*}
	&\Uplambda u, u'. \top \rightarrow G((\heavy@u \multimap \heavy@\mathit{Alice}) \linearcontract (\spell@\mathit{Alice} \multimap \spell@u')),\\
	&\Uplambda u, u'. \top \rightarrow G((\spell@u \multimap \spell@\mathit{Bob}) \linearcontract (\light@\mathit{Bob} \multimap \light@u')),\\	
	&\Uplambda u, u'. \top \rightarrow G((\light@u \multimap \light@\mathit{Carl}) \linearcontract (\heavy@\mathit{Carl} \multimap \heavy@u')).
	\end{align*}
\end{exa}

\subsection{Proving the Fairness of Exchanges}

Before completing our tour on applying \MuACL\ to verify the fairness of exchanges, we need to translate states and contexts.

\begin{defi}\label{def:compile-state}
A state $\MuACstate$ is compiled into a multiset of \MuACL\ atoms as follows.
\[
\semden{\MuACstate}(\res@\usr) = \MuACstate(\usr)(\res)
\]
In addition, a context $C$ is compiled as follows
\[
\forall p.\,\semden{C} \Vdash p(\usr, \dots \usr') \text{ iff } (\usr, \dots \usr') \in C(p)
\]
\end{defi}
Note that the definition above constrains us to only consider contexts such that their compilation returns a finite non-linear theory.
\begin{exa}\label{ex:comp-state}
	Consider our running example of Section~\ref{sec:col-runex}.
	The state at the beginning of the exchanges is represented as
	\begin{align*}
		&\Sigma_0 = \{\spell@\mathit{Alice}, \light@\mathit{Bob}, \heavy@\mathit{Carl}, \heavy@\mathit{Carl}, \heavy@\mathit{Carl}, \heal@\mathit{Carl}, \heal@\mathit{Carl}\}.
	\end{align*}
	Since Bob and Carl are paladins, the context is compiled as
	\begin{align*}
		&\semden{C} = \{is\_paladin(\mathit{Bob}), is\_paladin(\mathit{Carl})\}.
	\end{align*}
\end{exa}
In the theorem below the MuAC rulesets, the context, and the current state $\MuACstate$ determine the left part of an initial sequent, whereas the right part is for the next state $\MuACstate'$ reachable with $\MuACstate \xrightarrow{\exc} \MuACstate'$.  
Then, $\exc$ is fair if and only the obtained initial sequent is valid, and its proof is a witness of fairness.
\begin{restatable}[Fairness = Validity]{thm}{correction}\label{th:fair-exchange}
Let  $(\MuACStates, \rightarrow)$ be an exchange environment; 
let $R_\usr$ be the MuAC ruleset of the user $\usr$; 
let $\MuACstate$ and $\MuACstate'$ be states in $\MuACStates$; and
let $C$ be a context.

Then, the transition
$\MuACstate \xrightarrow{\exc} \MuACstate'$ is fair
if and only if
$\biguplus_{\usr \in \Usr} \semden{R_\usr}, \semden{C}; \semden{\MuACstate} \vdash \semden{\MuACstate'}$ is valid in \MuACL.
\end{restatable}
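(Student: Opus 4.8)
�The plan is to prove both directions of the biconditional by going through the characterisation of \MuACL-provability via normal proofs (\autoref{thm:col-normal-ncr}) and, ultimately, via the linear-algebraic encoding developed in the decidability proof. The key observation is that the compiled left-hand side $\biguplus_{\usr} \semden{R_\usr}$ is a multiset $\Omega$ consisting solely of universally quantified formulas of shape $\Uplambda[u].\,\omega \rightarrow G(\delta \linearcontract \delta')$. Since $\Uplambda$ is finite conjunction, instantiating the quantifiers and resolving the classical implications against $\semden{C}$ yields (by the ``canonical $\Omega_\star$'' construction sketched after \autoref{thm:col-normal-ncr}) a multiset $\Omega_G$ containing exactly one $G(\delta \linearcontract \delta')$ for each pair (rule $r$ of some user $\usr$, instantiation $\rho$ of its variables) whose predicate side-conditions $\semanticsusr{\PredLs}{\usr}$ hold in $C$. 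By \autoref{def:compile}, these are in one-to-one correspondence with the exchange approvals in $\pol_\usr$ under \autoref{def:policy}: the approval $\usr \xmapsto{\res} \usr' \triangleleft \exc$ corresponds to $G(\semanticsusr{\GiveLs}{\usr} \linearcontract (\res@\usr \multimap \res@\usr'))$, where $\semanticsusr{\GiveLs}{\usr}$ is the $\otimes$-encoding of $\exc$.

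First I would make this correspondence precise: establish a lemma that $\semden{\pol_\usr}$ (the elementwise \MuACL-image of the exchange approvals computed from the MuAC semantics) equals, up to the canonicalisation, the $G(\theta)$-part of $\Omega_\star$ derived from $\semden{R_\usr}$ and $\semden{C}$. This is essentially unwinding \autoref{def:MuPol}, the ruleset interpretation, \autoref{def:compile}, and \autoref{def:compile-state} in parallel — routine but the heart of the ``semantics preserving'' claim. Then I would invoke \autoref{thm:col-normal-ncr} to reduce provability of the initial sequent $\Omega; \semden{\MuACstate} \vdash \semden{\MuACstate'}$ to the existence of a normal proof, and then \autoref{thm:col-sftoff} to reduce a normal-form-2 proof to a normal-form-1 one, i.e., to the linear-algebraic characterisation: the sequent is provable iff there is a vector $\bar y \geq \bar 0$ of rule-multiplicities with $\bar u_\Delta = [A_{\Omega_G}\mid C_{\Omega_G}]\,\bar y$ and $[\mathbf 0 \mid C_{\Omega_G}-B_{\Omega_G}]\,\bar y \geq \bar 0$, where $\Delta$ is the exchange transforming $\semden{\MuACstate}$ into $\semden{\MuACstate'}$ (uniquely determined by conservation, as in condition (2) of the exchange-environment definition).

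The crux is then to identify this linear system with \autoref{def:fairexc}. The component of $\bar y$ indexed by approval $\usr \xmapsto{\res} \usr' \triangleleft \exc$ counts how many times user $\usr$ invokes that approval; the $C_{\Omega_G}$-part produces the promised transfers ($\usr \xmapsto{\res} \usr'$), giving the ``totally promised'' exchange $\Delta$, while the $B_{\Omega_G}$-part produces the required transfers ($\exc$). The inequality $C_{\Omega_G}\bar y - B_{\Omega_G}\bar y \geq \bar 0$ says: for every transfer, the number of times it is \emph{promised} by some invoked approval is at least the number of times it is \emph{required} as payoff — and since the outcome vector equals $\bar u_\Delta$ exactly, equality in fact holds transfer-by-transfer, so the required payoffs are a sub-multiset of $\exc$. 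This is precisely the existence, for each $\usr$, of a witness $\exc_\usr$ (the union over $\usr$'s invoked approvals of their required payoffs, read off from $\bar y$) with $\pol_\usr \vDash_{\exc_\usr} \exc$ and $\biguplus_\usr \exc_\usr \subseteq \exc$. For the forward direction I would take a fair transition, collect the witnesses $\exc_\usr$ from \autoref{def:fairexc}, unfold the definition of $\vDash_{\exc_\usr}$ (\autoref{def:accepted-exc}) to extract, for each user, a multiset of used approvals, assemble these into $\bar y$, check $\bar y \geq \bar 0$ and the two (in)equalities using $\biguplus \exc_\usr \subseteq \exc$ and conservation; then the linear-algebra characterisation plus \autoref{thm:col-sftoff} and \autoref{thm:col-normal-ncr} give the proof. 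For the converse, from a valid sequent I extract $\bar y$, define $\exc_\usr$ by projecting $\bar y$ onto $\usr$'s approvals and reading the required payoffs, and verify $\pol_\usr \vDash_{\exc_\usr} \exc$ by induction matching clauses \ref{accept:others}/\ref{accept:me} of \autoref{def:accepted-exc}.

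The main obstacle I expect is the bookkeeping in matching $\vDash_{\exc'}$ with the vector equations: \autoref{def:accepted-exc} builds the witness $\exc'$ recursively, one giving-transfer at a time, whereas the linear system treats all invocations simultaneously. Making the induction go through — in particular showing that the sub-multiset relation $\biguplus_\usr \exc_\usr \subseteq \exc$ corresponds exactly to the slack in $C_{\Omega_G}\bar y \geq B_{\Omega_G}\bar y$ being ``absorbed'' by the equality $\bar u_\Delta = [A_{\Omega_G}\mid C_{\Omega_G}]\bar y$, and that $\Delta$ on the \MuACL\ side coincides with the environment-level exchange $\exc$ on the nose rather than just up to the conserved quantities — is the delicate part, and also where a subtlety about the side condition ``$\usr'' \neq \usr$ for transfers in an approval's payoff'' (\autoref{def:policy}) must be used to guarantee no self-payment confuses the $A$ versus $C$ columns.
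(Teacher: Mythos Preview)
Your route through the linear-algebraic encoding is viable but is not what the paper does. The paper works directly at the proof-theoretic level, never touching the vector/matrix machinery: for fairness $\Rightarrow$ validity it constructs the \MuACL\ proof by hand via Lemmata~\ref{thm:polinter}, \ref{thm:acceptderive}, \ref{thm:fairderive} and~\ref{thm:deltaexc} (first passing from $\Omega_R,\Omega_C$ down to $\Omega_{\pol}$, then turning each local acceptance $\pol_\usr \vDash_{\exc_\usr} \exc$ into a chain of $(G\text{-left-}\theta)/(\linearcontract\text{-split})$ applications, and discharging the single $(\linearcontract\text{-left})$ precisely via the no-double-spending inclusion $\biguplus_\usr \exc_\usr \subseteq \exc$); for validity $\Rightarrow$ fairness it reads the invoked approvals and the no-double-spending bound directly off the shape of a normal-form-2 proof (Lemmata~\ref{thm:specnormal}, \ref{thm:formavaltofair}, \ref{thm:acceptderivetofair}, \ref{thm:deltaexctofair}). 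Besides being more elementary, this direct construction yields the explicit proof-size bounds later exploited for the blockchain cost analysis (\autoref{app:bcopt}), which your detour through the decidability apparatus would not naturally give.

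Two corrections to your plan. First, the lemma you actually want for ``provability $\Leftrightarrow$ conditions (\ref{prop:col-1})--(\ref{prop:col-2})'' is \autoref{thm:col-lltola}; \autoref{thm:col-sftoff} is the further Hilbert-basis reduction to normal form~1, which is irrelevant here. Second, $\Delta$ is \emph{not} uniquely determined by $\MuACstate,\MuACstate'$: several exchanges realise the same state change (e.g.\ the empty exchange and a back-and-forth swap both take $\MuACstate$ to $\MuACstate$), so your characterisation must quantify existentially over $\Delta$ as well as over $\bar y$, and you still need the separate fact that $\Delta_\exc,\semden{\MuACstate}\vdash\semden{\MuACstate'}$ is provable by $Lr$-rules iff $\MuACstate \xrightarrow{\exc} \MuACstate'$ is a legal transition. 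With those fixes your argument goes through.
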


\begin{figure*}
	{
	\begin{center}
		\[
			\prftree[r,double,straight]
			{(L-Weak)}
			{
				\prftree[r,double,straight]
				{(L-$\land$-left)}
				{
					\prftree[r,double,straight]
					{(L-$\rightarrow$-left)}
					{
						\begin{matrix}
						\\
						\\
						\Vdash \top\qquad
						\end{matrix}				
					}
					{
						\prftree[r, double]
						{(G-left)}
						{
							\prftree[double, dashed]
							{ \text{Same as~\autoref{ex:contractreason}}}
							{	\begin{matrix}
									(\heavy@\mathit{Carl} \multimap \heavy@\mathit{Alice}) \linearcontract (\spell@\mathit{Alice} \multimap \spell@\mathit{Bob}),\\
									(\spell@\mathit{Alice} \multimap \spell@\mathit{Bob}) \linearcontract (\light@\mathit{Bob} \multimap \light@\mathit{Carl}),\\
									(\light@\mathit{Bob} \multimap \light@\mathit{Carl}) \linearcontract (\heavy@\mathit{Carl} \multimap \heavy@\mathit{Alice}), \Sigma_0\\
								\end{matrix}
								\vdash \semden{\MuACstate'}
							}
						}
						{
							\begin{matrix}
								G((\heavy@\mathit{Carl} \multimap \heavy@\mathit{Alice}) \linearcontract (\spell@\mathit{Alice} \multimap \spell@\mathit{Bob})),\\
								G((\spell@\mathit{Alice} \multimap \spell@\mathit{Bob}) \linearcontract (\light@\mathit{Bob} \multimap \light@\mathit{Carl})),\\
								G((\light@\mathit{Bob} \multimap \light@\mathit{Carl}) \linearcontract (\heavy@\mathit{Carl} \multimap \heavy@\mathit{Alice})); \Sigma_0\\
							\end{matrix}
							\vdash \semden{\MuACstate'}
						}
					}
					{
						\begin{matrix}
							\top \rightarrow G((\heavy@\mathit{Carl} \multimap \heavy@\mathit{Alice}) \linearcontract (\spell@\mathit{Alice} \multimap \spell@\mathit{Bob})),\\
							\top \rightarrow G((\spell@\mathit{Alice} \multimap \spell@\mathit{Bob}) \linearcontract (\light@\mathit{Bob} \multimap \light@\mathit{Carl})),\\
							\top \rightarrow G((\light@\mathit{Bob} \multimap \light@\mathit{Carl}) \linearcontract (\heavy@\mathit{Carl} \multimap \heavy@\mathit{Alice}));
							\Sigma_0\\
						\end{matrix}
						\vdash \semden{\MuACstate'}
					}
				}
				{
					\begin{matrix}
						\Uplambda u, u' \ldotp \top \rightarrow G((\heavy@u \multimap \heavy@\mathit{Alice}) \linearcontract (\spell@\mathit{Alice} \multimap \spell@u')),\\
						\Uplambda u, u' \ldotp \top \rightarrow G((\spell@u \multimap \spell@\mathit{Bob}) \linearcontract (\light@\mathit{Bob} \multimap \light@u')),\\
						\Uplambda u, u' \ldotp \top \rightarrow G((\light@u \multimap \light@\mathit{Carl}) \linearcontract (\heavy@\mathit{Carl} \multimap \heavy@u'));
						\Sigma_0\\
					\end{matrix}
					\vdash \semden{\MuACstate'}
				}
			}
			{
				\biguplus_{\usr \in \Usr} \semden{R_\usr}, \semden{C}; \semden{\MuACstate} \vdash \semden{\MuACstate'}
			}
		\]
	\end{center}}

		\caption{A \MuACL\ proof for \autoref{ex:col-cir} where double lines represent multiple applications of the same rule and dashed lines represent omitted trivial derivations.}
		\label{fig:logicexcir}
\end{figure*}

\begin{exa}\label{ex:cir-logic}
	Consider~\autoref{ex:col-cir} and $\MuACstate \xrightarrow{\exc} \MuACstate'$ where
	\begin{align*}
		\MuACstate &= \{ (\mathit{Alice}, \{\spell\}), (\mathit{Bob}, \{\light\}), (\mathit{Carl}, \{\heavy, \heavy, \heavy, \heal, \heal\}) \}\\
		\exc &= \{ \mathit{Alice} \xmapsto{\spell} \mathit{Bob}, \mathit{Bob} \xmapsto{\light} \mathit{Carl}, \mathit{Carl} \xmapsto{\heavy} \mathit{Alice} \}\\
		\MuACstate' &= \{ (\mathit{Alice}, \{\heavy\}), (\mathit{Bob}, \{\spell\}), (\mathit{Carl}, \{\light, \heavy, \heavy, \heal, \heal\}) \}
	\end{align*}
	The proof in~\autoref{fig:logicexcir} certifies the fairness of the transition, where $\Sigma_0$ is as in~\autoref{ex:comp-state}.
	We build the proof bottom-up, starting from the initial sequent of~\autoref{th:fair-exchange}.
	We first use the structural rules to select the configuration rules of $\biguplus_{\usr \in \Usr} \semden{R_\usr}$ to apply (in our derivation there is a single occurrence of \texttt{A1}, \texttt{B1} and \texttt{C1}, which are compiled as in~\autoref{ex:somerules}).
	We then use the non-linear rules and (G-left) for obtaining linear contracts $\theta$, where (L-$\rightarrow$-left) guarantees that the conditions of $\PredLs$ are satisfied.
	Finally, we obtain a sequent of the form $\Theta, \Sigma \vdash \sigma$ and we resolve circularity between promises and requirements in $\Theta$ (the three contractual implications in the topmost sequent) by applying the rules for $\linearcontract$ as we did in~\autoref{ex:contractreason}.
\end{exa}

\subsubsection*{Overview of the proof of correctness and completeness}

We first define a mapping from exchanges $\exc$ and policies $\pol_\usr$ to \MuACL\ predicates $\Delta_\exc$ and $\Omega_{\pol_\usr}$.
The mapping is injective up to commutativity and associativity of $\otimes$, hence invertible.

We then consider proofs, showing that a proof for 
$\biguplus_{\usr \in \Usr} \semden{R_\usr}, \semden{C}; \semden{\MuACstate} \vdash \semden{\MuACstate'}$ can always be transformed into one in the following 
form:
\[
	\prfsummary[$\Pi_{Cr \cup Sr}$]
	{
		\prfsummary[$\Pi_{Sr \cup Gr \cup Pr}$]
		{
						\prftree
						{\Pi_{Lr \cup \{ \text{($\Omega$-Ax), (I-right)}\}}}
						{\Delta, \semden{\MuACstate} \vdash \semden{\MuACstate'}}
		}
		{\biguplus_{\usr \in \Usr} \Omega_{\pol_\usr}; \semden{\MuACstate} \vdash \semden{\MuACstate'}}
	}
	{\biguplus_{\usr \in \Usr} \semden{R_\usr}, \semden{C}; \semden{\MuACstate} \vdash \semden{\MuACstate'}}
\]

As a first intermediate result, we show that the encoding of \MuACL\ rulesets is correct and complete.
Roughly, $\Pi_{Cr \cup Sr}$ exists if and only if $\Omega_{\pol_\usr}$ is the encoding of the interpretation of $R_{\usr}$ (i.e., if $\pol_\usr = \bigcup_{r \in R_\usr} \semdeno{r}\ C$).
Then we show that the derivation $\Pi_{Sr \cup Gr \cup Pr}$ can be obtained whenever $\Delta = \Delta_\exc$ for some $\exc$ that is accepted by the policies of all the users and where no double-spending occurs.
Finally, a proof $\Pi_{Lr \cup \{ \text{($\Omega$-Ax), (I-right)}\}}$ exists for $\Delta_\exc, \semden{\MuACstate} \vdash \semden{\MuACstate'}$ if and only if 
$\MuACstate \xrightarrow{\exc} \MuACstate'$ is a valid transition (such that users own what they are giving and where resources are preserved).

\subsection{Eventually fair computations}\label{subsec:fair-computations}

	Fair transitions can be combined by performing subsequent exchanges.
	Note that some states that can be reached with a fair computation, i.e., a sequence of transitions, cannot be reached with a single fair transition.
\begin{exa}
	Consider Alice, Bob and Carl with the following policies:
	\begin{align*}
	\Pol_{Alice} = \{ Alice \xmapsto{\res} Bob \triangleleft \emptyset \}\qquad
	\Pol_{Bob} = \{ Bob \xmapsto{\res} Carl \triangleleft \emptyset \}\qquad
	\Pol_{Carl} = \emptyset
	\end{align*}
	and assume that $st_\usr(\usr')(\res)$ equal $1$ when $\usr' = \usr$ and $0$ otherwise.
	Bob cannot straightly get $\res$ from Alice, as the direct exchange
	\[
	\MuACstate_{Alice} \xrightarrow{Alice\ \xmapsto{\res}\ Carl} \MuACstate_{Carl}
	\]
	is not fair. 
	However, Bob can persuade Carl to help him in that, and succeed in getting $\res$, because the following two-step computation is fair
	\[
	\MuACstate_{Alice} \xrightarrow{Alice\ \xmapsto{\res}\ Bob} \MuACstate_{Bob} \xrightarrow{ Bob\ \xmapsto{\res}\ Carl} \MuACstate_{Carl},
	\]		
	Of course, since the computation is fair, one would expect our framework to deem acceptable the transition
	$\MuACstate_{Alice} \xrightarrow{Alice\ \xmapsto{\res}\ Carl} \MuACstate_{Carl}$.
\end{exa}

Moreover, some exchanges of resources can be performed that are beneficial to all the involved users but can neither be performed in a single step (e.g., because of a missing resource as in the example above), nor be decomposed as a sequence of fair transitions, as exemplified below.
\begin{exa}
	Consider the following policies for Alice, Bob and Carl:
	\begin{align*}
	\Pol_{Alice} &= \{ Alice \xmapsto{\res} Bob \triangleleft \{ Bob \xmapsto{\res'} Alice \} \}\\
	\Pol_{Bob} &= \{ Bob \xmapsto{\res} Charlie \triangleleft \{ Charlie \xmapsto{\res'} Bob \} \}\\
	\Pol_{Carl} &= \{ Carl \xmapsto{\res'} Bob \triangleleft \{ Bob \xmapsto{\res} Carl \} \}
	\end{align*}
	Assume Alice has a $\res$, Carl has a $\res'$ and Bob has nothing.
	Consider the following sequence of events:
	Bob asks Alice $\res$, promising to give $\res'$ in return at some point;
	Alice agrees; Bob exchanges the received resource with Carl, obtaining $\res'$ 
	and keeping his promise by giving it to Alice.
	The computation is 
	\[
	\MuACstate \xrightarrow{Alice\ \xmapsto{\res}\ Bob} \MuACstate' \xrightarrow{Bob\ \xmapsto{\res}\ Carl, Carl\ \xmapsto{\res'}\ Bob} \MuACstate'' \xrightarrow{Bob\ \xmapsto{\res'}\ Alice} \MuACstate'''
	\]	
	Note that each request is eventually satisfied and each promise is kept. 
	Indeed, the exchange $\exc = \{Alice\ \xmapsto{\res}\ Bob, Bob\ \xmapsto{\res}\ Carl, Carl\ \xmapsto{\res'}\ Bob, Bob\ \xmapsto{\res'}\ Alice \}$ is fair.
	Nevertheless, the computation is not fair, and $\MuACstate \xrightarrow{\exc} \MuACstate'''$ is not a legal transition of the exchange environment (since Bob has no $\res'$).
\end{exa}
However, we would like also to have this kind of 
computations that traverse
 non fair configurations, but are sanitised afterwards.
As a matter of fact this is acceptable, provided that the computation is done atomically and under the control of a TTP, as implemented by the smart contract outlined in the next section.
For that, we call \emph{eventually fair} a computation that at the end results in an exchange beneficial to all the participants.
We first define when the global outcome is a many-step computation.
\begin{defi}
We call a computation 
$\MuACstate_0 \xrightarrow{\exc_1} \MuACstate_1 \xrightarrow{\exc_2} \dots \xrightarrow{\exc_n} \MuACstate_n$
\emph{eventually fair} whenever $\biguplus_{i=1}^n \exc_i$ is fair.
\end{defi}

Again logic comes to our rescue.
	The rule ($*$-cut) in \figurename~\ref{fig:starcut} enables us to verify whether the result of a computation as a whole respects the policies
	at hand, even though some of its steps are not fair.
	The correspondence between eventual fairness and \MuACL\ is stated by the following corollary of~\autoref{th:fair-exchange}.

\begin{figure}
	\begin{gather*}
	\prftree[r]
	{($*$-cut)}
	{\Omega;  \Theta, \Delta, \Sigma \vdash \sigma'}
	{\Omega';  \Theta', \Delta', \Sigma', \sigma' \vdash \sigma}
	{\Omega, \Omega';  \Theta, \Theta', \Delta, \Delta', \Sigma, \Sigma' \vdash \sigma}	
	\end{gather*}
	\caption{Linear cut rule for \MuACL.}
	\label{fig:starcut}
\end{figure}

\begin{restatable}[Validity = Eventual fairness of computations]{cor}{Logiccorrectnessstar}\label{thm:correctcompletestar}
Under the same conditions~of \autoref{th:fair-exchange}, the computation $\MuACstate \rightarrow^* \MuACstate'$ is eventually fair if and only if $\ \biguplus_{\usr \in \Usr} \semden{R_\usr}, \semden{C}; \semden{\MuACstate} \vdash \semden{\MuACstate'}$ is valid in \MuACL\ augmented with the cut rule $(*\text{-cut})$.
\end{restatable}

Decidability of \MuACL\ is not affected by the ($*$-cut) rule.
\begin{restatable}[\MuACL\ decidability]{cor}{MuACLsdec}\label{coro:MuACLdecide}
	An always-terminating algorithm exists that decides if an initial sequent is valid in \MuACL\ augmented with the cut rule ($*$-cut).
\end{restatable}

Moreover, eventually fair computations, and therefore the fair exchanges, can be effectively computed, given a context $C$, the MuAC policies and the current state.
This result also means that, given the current state $\MuACstate$ and a set of resources $\res_1, \dots \res_n$ that a user $\usr$ requires, there is an algorithm that terminates always and finds an eventually fair computation, if any, granting $\usr$ all the resources $\res_1, \dots \res_n$.

\begin{restatable}{cor}{computation}\label{thm:fairst}
There exists an always-terminating algorithm that, given the MuAC rulesets $\{ R_\usr \}$, the context $C$, the current state $\MuACstate$, a user $\usr$, and a set of resources 
$\{\res_1, \dots, \res_n \}$ returns an eventually fair computation, if any,
from $\MuACstate$ to some $\MuACstate'$ such that  for $1 \leq i \leq n$, $\MuACstate'(\usr)(\res_i) \geq 1$.
\end{restatable}

\section{MuAC as a Smart Contract}\label{sec:smart}

In this section, we show MuAC at work on an exchange environment supporting the exchanges of Non Fungible Tokens (NFTs for short), a common crypto-asset available on blockchain platforms, e.g., in Ethereum~\cite{tokenswap}.
Our exchange environment is rendered as a smart contract that stores the association between users and resources, as usual for wallet smart contracts.
A user interacts with the exchange environment by calling standard methods.
Moreover, we propose an off-chain application for supporting users to manage their requests.
The application and the smart contract rely on \MuACL\ for certifying and validating the fairness of the proposed exchange.
More in detail, the off-chain application produces \MuACL\ proofs, whereas the smart contract checks their validity.
Note that we delegate the client to perform the expensive part of the calculation.
Also, the smart contract plays the role of the TTP for the exchange environment, because the blockchain guarantees that the contract code is public and cannot be changed.
Actually, we rely on the integrity property of the blockchain to publicly maintain the ownership of resources and the computing capability of the smart contract to check the acceptability of the exchanges.

Below, we briefly discuss some assumptions on the blockchain smart contracts we consider;
we give the workflow for performing an exchange; we present the pseudocode for the NFT exchange environment; 
finally, we discuss its security against the typical 
attacks 
that may occur when exchanging goods.
As anticipated, in \autoref{subsec:fair-computations}, there is an algorithm, dubbed below \texttt{fair\_st}, that provides a user with a fair exchange granting her the required resources, possibly with a many-step, eventually fair computation.
In other words, in our implementation schema we can implement an entire computation as a single transaction.
This offers a further advantage because, at the price of the little extension to \MuACL\ with the ($*$-cut) rule, fairness can be checked on the resulting final exchange 
instead of on each individual transition.

\subsection{Assumptions on the blockchain platform}

In our implementation schema we assume to target a blockchain platform meeting the following conditions. 
There are three kinds of addresses: \emph{user accounts}, \emph{smart contracts} and NFTs.
An NFT is associated with an owner, which may be a user account or a smart contract.
A smart contract has a set of fields, namely its internal state, and exposes a set of functions that users or other contracts can call.
Users and smart contracts interact through messages that cause function invocations and NFT transfers.
Every message includes fields storing the \emph{sender} and \emph{destination} addresses (of users or smart contracts).
Optionally, the message may contain a \emph{function} field with the name of the function to call, a field for the actual \emph{parameters}, and a \emph{token} field containing the NFT.
If the receiver of a message is a smart contract and the function field is not empty, the called function  is executed. 
The execution of a function may change the internal state of the contract and may trigger the contract to send messages in turn.
If the message contains a NFT, the receiver becomes its owner.
We assume the NFTs of a contract to be accessible in its implicit \emph{tokens} field.
Note that a message can invoke a function while transferring a NTF, as shown in the \texttt{add\_resource} function displayed in \autoref{fig:MuACimpl}.

\subsection{User-Client-Smart Contract Interaction}

The workflow of the interaction between a user, the MuAC client and the smart contract are in \autoref{fig:col-arc},
and proceeds as follows:
\begin{enumerate}
\item the user asks the client to find an exchange granting her a list of desired tokens;

\item using the algorithm \texttt{fair\_st}, the client derives, if any, a next state $\MuACstate'$ of the smart contract and a \MuACL\ proof $\Pi$ certifying that $\MuACstate'$ is reachable with a fair exchange.
This computation is done off-chain;

\item if the user confirms that she accepts 
$\MuACstate'$
then a message is sent to the MuAC smart contract with 
$\MuACstate'$
and $\Pi$ attached, asking for the desired exchange to be enforced;

\item the smart contract receives the message from the client, checks the validity of $\Pi$ and updates the state, if 
this is the case.
\end{enumerate}
Note that verifying the \MuACL\ proof is linear on the number of \MuACL\ rules of $\Pi$, which depends on the exchanged resources but not on the participants (see \autoref{app:bcopt}).
Reducing the computational cost of 
verifying the proof 
is critical, because in typical blockchains like Ethereum every executed instruction is paid by the requester using an in-block platform-specific currency
(called \emph{gas} in Ethereum).
With the proposed workflow, all is performed off-chain except for a linear portion of the computation.
Nevertheless, the system guarantees transparency and correctness of the exchanges.
The rules for accessing the resources are in clear on the contract, whose execution is ensured by the blockchain.

\subsection{MuAC Client and Smart Contract}

\begin{figure*}[t]
	\centering
	\begin{subfigure}{\textwidth}
	\centering
		\begin{footnotesize}
			\begin{tikzpicture}
				\node[inner sep=0pt] (usr)  at (0,0) {\begin{tabular}{c}\includegraphics[scale=0.04]{images/user}
						\\[-0.02cm] $\usr$
				\end{tabular}};
				
				\node[inner sep=0pt] (cli)  at (5,0) {\begin{tabular}{c}\includegraphics[scale=0.04]{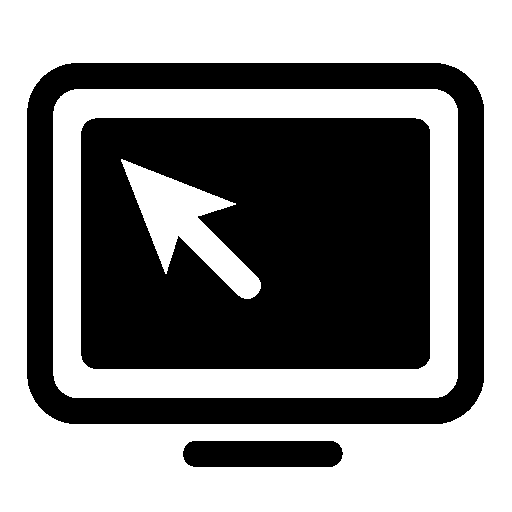}
						\\[-0.02cm] MuAC
						\\[-0.02cm] client
				\end{tabular}};

				\node[inner sep=0pt] (sc)  at (9,0) {\begin{tabular}{c}\includegraphics[scale=0.06]{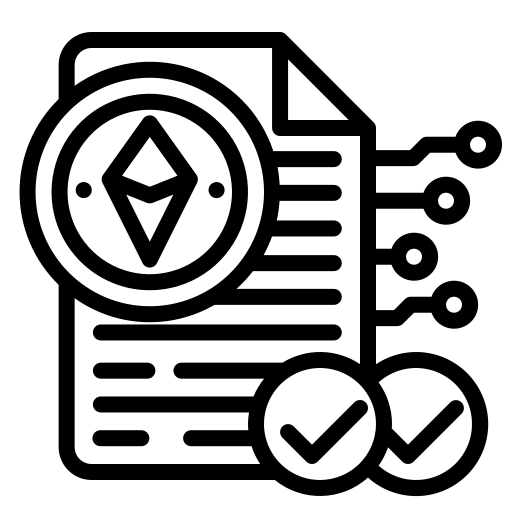}
						\\[-0.02cm] MuAC
						\\[-0.02cm] contract
				\end{tabular}};

				\node[draw, rectangle, minimum width=2.3cm,minimum height=2.25cm] (Eth)  at (9,0.25) {};
				\node[] (Ethname)  at (9,1.2) {Blockchain};
				
				{
					\draw[o->, thick, out=25, in=155] (usr) edge node[above]{(1) $\res_1, \dots \res_n$} (cli);
					\draw[<-o, thick, out=0, in=-180] (usr) edge node[above]{(2) $\MuACstate', \Pi ?$} (cli);
					\draw[o->, thick, out=-25, in=-155] (usr) edge node[above]{(3) yes, no} (cli);
					\draw[o->, thick] (cli) edge node[above]{
						(4) $(\MuACstate', \Pi)$
					} (sc);
				}
			\end{tikzpicture}
		\end{footnotesize}
		\caption{A schema for implementing MuAC on a blockchain platform.}
		\label{fig:col-arc}
	\end{subfigure}

	\begin{subfigure}{0.49\linewidth}
		\centering
		\begin{scriptsize}
			\begin{align*}
				&\texttt{\textbf{Contract} MuAC}\\
				&\qquad \MuACstate : \texttt{user\_address} \times \Res \rightarrow \mathbb{N}\\
				&\qquad Rs : \texttt{user\_address} \rightarrow \texttt{MuAC\_policy}\\
				&\qquad C : \texttt{Non-linear theory}\\[0.1cm]
				&\qquad \texttt{\textbf{function} add\_resource()}\\
				&\qquad\qquad \MuACstate[\texttt{msg.sender}][\texttt{NFT\_to\_Res}(\texttt{msg.token})]\texttt{++}\\[0.1cm]
				&\qquad \texttt{\textbf{function} withdraw\_resource($\res$)}\\
				&\qquad\qquad \texttt{\textbf{Require}(}\MuACstate[\texttt{msg.sender}][\res] > 0\texttt{)}\\
				&\qquad\qquad \MuACstate[\texttt{msg.sender}][\res]\texttt{-\,\!-}\\
				&\qquad\qquad \texttt{token = GetRes(tokens, $\res$)}\\
				&\qquad\qquad \texttt{token.transfer(msg.sender)}\\[0.1cm]
				&\qquad \texttt{\textbf{function} exchange(}\Pi, \MuACstate'\texttt{)}\\[-0.2cm]
				&\qquad\qquad \texttt{\textbf{Require}(Verify(}
				\prftree[]{\Pi}{\biguplus_{\usr} \semden{R[\usr]}, \semden{C}; \MuACstate \vdash \semden{\MuACstate'}} 
				\texttt{))}\\
				&\qquad\qquad \MuACstate \gets \MuACstate'
			\end{align*}
		\end{scriptsize}
		\caption{MuAC contract pseudo-code.}
		\label{fig:MuACsc}
	\end{subfigure}
	\begin{subfigure}{0.49\linewidth}
		\centering
		\begin{scriptsize}
			\begin{align*}
				&\texttt{\textbf{function} serve\_request(}[\res_1, \dots, \res_n], \usr \texttt{)}\\
				&\qquad \MuACstate, C, Rs \gets \texttt{take\_from\_contract()}\\
				&\qquad (\Pi, \MuACstate') \gets \texttt{fair\_st($Rs$,$C$,$\MuACstate$,$\usr$,$\res_1,\dots,\res_n$)}\\
				&\qquad \texttt{\textbf{if}(}\Pi = \texttt{ null) \textbf{then}}\\
				&\qquad\qquad \texttt{print "Error: request denied"}\\
				&\qquad \texttt{\textbf{else}}\\
				&\qquad\qquad \texttt{propose(}\usr, (\Pi, \MuACstate')\texttt{)}\\
				&\qquad\qquad \texttt{\textbf{if} receive(}\usr\texttt{) = yes \textbf{then}}\\
				&\qquad\qquad\qquad \texttt{message} \gets \texttt{empty\_message}\\
				&\qquad\qquad\qquad \texttt{message.function} \gets \texttt{MuAC.evolve}\\
				&\qquad\qquad\qquad \texttt{message.parameters} \gets (\Pi, \MuACstate')\\
				&\qquad\qquad\qquad \texttt{BCsend(message)}
			\end{align*}
		\end{scriptsize}
		\caption{Algorithm of the MuAC client}
		\label{fig:MuACcli}
	\end{subfigure}
	\caption{Implementation of a MuAC system on a blockchain.}
	\label{fig:MuACimpl}
\end{figure*}

The pseudocode of the MuAC smart contract is in \autoref{fig:MuACsc}.
Its internal state consists of the following three fields: $\MuACstate$ is a table storing the assignment of resources to user, namely the state of the exchange environment; $\mathit{Rs}$ is a map associating to each user her MuAC rulesets; and $C$ is a data structure representing the context. 
When a user wants to share a given resource in the system, she transfers the NFTs representing it to the smart contract via the function \texttt{add\_resource}.
The execution of \texttt{add\_resource} assigns the ownership of the NFT to the contract, and updates $\MuACstate$ accordingly.
At any moment, users can withdraw 
some of their resources recorded in the current state
by calling the function \texttt{withdraw\_resource}.
If the resource is currently associated with the requester in $\MuACstate$, this function updates $\MuACstate$  by removing the resource, and sends the user a message carrying the token;
otherwise, the computation fails and the state $\MuACstate$ remains unchanged.
Finally, a user 
proposes exchanges by calling the function \texttt{exchange}
applied to the
new state $\MuACstate'$ for the contract and a \MuACL\ proof $\Pi$
witnessing the fairness of the exchange.
In defining this function, we use the auxiliary one \texttt{verify}, assuming that it uses the \MuACL\ rules of~\autoref{fig:MuACLfull} for checking if $\Pi$ is a valid proof.
If this is the case, then calling \texttt{exchange} causes the current state becomes the wanted $\MuACstate'$.

The pseudocode of the MuAC client is in \autoref{fig:MuACcli}.
Upon a request of resources from a user $\usr$,
the client recovers the MuAC polices, the context $C$ and the current resource assignment $\MuACstate$ from the smart contract.
Then, through the algorithm \texttt{fair\_st}, it finds
a new assignment $\MuACstate'$ satisfying the request and a proof $\ \Pi$ of its fairness, if any. 
If $\usr$ agrees with $\MuACstate'$, then a message is 
sent 
using
the library function \texttt{BCsend} to the blockchain through the user account.
The message has the MuAC contract address as destination and \texttt{exchange} as the function to call.

\begin{exa}
	Take~\autoref{ex:col-cir}, and let the current state of the exchange environment be
	\begin{align*}
		\MuACstate &= \{ (\mathit{Alice}, \{\spell\}), (\mathit{Bob}, \{\light\}), (\mathit{Carl}, \{\heavy, \heavy, \heavy, \heal, \heal\}) \}
	\end{align*}
	Assume Alice makes a request to the client for obtaining a $\heavy$ card.
	Using \texttt{fair\_st}, the client finds a fair exchange satisfying the request, e.g., the one of~\autoref{ex:cir-logic}, and proposes it to Alice.
	If she 	agrees with the proposed exchange, the proof in \autoref{fig:logicexcir} is sent to the smart contract that enforces the exchange by updating the state as
	in~\autoref{ex:col-cir}.
\end{exa}

\subsection{Preventing Attacks}\label{sec:threat}
	The notion of fair transition helps to design our implementation schema so that it resists typical attacks.
	Actually, in our model some users may be dishonest and deceive others for their own advantage.
	The kinds of attacks they can perform are essentially the following.
The attacker can:
\begin{itemize}
\item deceive a honest user into accepting a disadvantageous exchange (\emph{trickery attack});
\item rescind an agreed exchange (\emph{repudiation attack});

\item refuse to give what promised in spite she received something (\emph{infringement attack}).
\end{itemize}
These misbehaviours often depend on a misplaced trust of honest users.
In addition, trickery attacks occur when a honest user has a partial knowledge and misses crucial information, while lack of commitment makes repudiation and infringement attacks easier.
For example, double spending is a form of trickery attack and also of infringement: in the first case, a user can deceive another to pay for a resource already paid, or she can promise the same resource to two different users (this is forbidden by MuAC and by the contract); in the second case, it can pay two different resources sending the same NFT to different users (this is forbidden by the blockchain consensus mechanism).
	Note that in our implementation all unfair transitions are pruned away, because
	the user is required to produce a \MuACL\ proof as a witness of the validity of the exchange, which is then checked by the TTP.
	Actually, the blockchain smart contract \emph{is} the TTP in charge of managing the resources of the users and their transfer. 
(Recall that anyway a TTP is required to ensure fairness of exchange protocols~\cite{Pagnia99}.)
Below, we discuss in details that there are no attacks:
	\begin{description}
		\item[Trickery attacks fail]
		A trickery attack never occurs because it corresponds to an unfair transition.
		This is guaranteed by the existence of a \MuACL\ proof for each exchange, computed off-chain by the user on her own.
		
		\item[Repudiation attacks fail]
		No repudiation attacks occur because only the TTP manages the users' resources, and thus 
		no one
		can refuse to honour a fair agreement.
		
		\item[Infringement attacks fail]
		The TTP has full control over the exchanges, hence no infringement attacks occur.
	\end{description}
Absence of attacks relief the users from carefully inspecting all the consequences of a proposed exchange: the contract manages the resources and evaluates exchange proposals on its own based on the policies.
Note that grieving attacks~\cite{Eckey}, where the attacker tricks
the honest party to pay fees without concluding the exchange, are not convenient for the attacker in this case, because she would have to pay for getting the certificate of the fair transition, and because the smart contract is in charge of actually transferring the resources.

Others security aspects depend on the actual implementation of the chosen blockchain platform.
Since we only present an implementation schema, we leave to developers the burden of taking care of these aspects.

\section{Discussion}\label{sec:discuss}

In this section, we detail some assumptions on which our formal model and MuAC rely and we discuss some limitations of our proposal.

A first assumption is that the context representing users' properties 
is not modified during an exchange.
We believe that this assumption does not hinder the generality of our proposal 
especially because an exchange should be checked for fairness and applied atomically and because the state of the exchange environment should not change during these phases, at least in those parts affected by the exchange.
As a matter of fact, lack of atomicity could jeopardise the fairness of an exchange.
This happens in the house exchange example if Carl gives up his friendship with Bob as soon as he obtains the permission to use Alice's house.
If not granted, atomicity can anyway be enforced by a transaction mechanism that reverts an exchange when its initial conditions cease to hold.

Similar assumptions hold for policies too: we assume users not to change their policies while an exchange is scrutinised and takes place.
Otherwise an extension is in order, e.g., based on transactions, to deal with such forms of volatile policies.

In our model, the policies are assumed public and available to all the members of an exchange environment.
This improves the accountability of a system because policies provide users with a public motivation for each accepted and rejected exchange.

In our proposal, we reduce the problem of verifying the fairness of an exchange to checking the validity of a \MuACL\ proof. 
This check is linear with the proof size.
Given a specific context, the proof size in turn only depends linearly on the number of atomic predicates in the MuAC rules used for the exchange and on the number of exchanged resources (cf.~Section~\ref{sec:fairnesstovalidity}).
Although interesting per se, the study of the properties of the logic and of its decision procedure is outside the scope of the present paper. 
A mitigation of the complexity of proving fairness and of constructing eventually fair computations is to reduce the number of 
involved
policies, e.g., by excluding some users' policies.

Our formalisation is essential and does not 
consider 
time related aspects, like 
expiring resources or offers/requests with a given lifetime.
Back to the home exchange example of~Section~\ref{sec:intro}, Alice may wish to spend two weeks in Rome in June, but Bob can only stay one week in Paris.
The description of the exchange policies, and the definition of the agreements grow richer with such additional information.
Also the difficulty of proving exchange fairness increases accordingly.
However, the overall shape of the exchange environment and the design of the mechanisms for protecting users will not be significantly affected by adding these additional time-dependent aspects.
For home exchanges resources can represent home staying for a given period of the year.
Note that for offers/requests with a given lifetime it is sufficient to (possibly automatically) update the users' policies, which is always safe provided that exchanges are atomic (as in our proposed blockchain implementation).
Another solution would require one to extend the context with information about the time of the requests.

Here, we only focus on token-based resources, and we give no direct mechanism for exchanging a given amount of them, like currency.
For example, the policy that allows one to exchange bitcoins for ethers must be manually encoded by the designer
(see, e.g., \autoref{ex:fair-nonlocal}, where there are two copies of the transfer of $\heal$ from Carl to Bob).

We also do not consider ``contractually conditional'' contracts that require propositions with nested $\linearcontract$.  
Such contracts may express agreements like ``if you trade $\res$ for $\res'$, then I will trade $\res''$ for $\res'''$.''
One can see them as constraints on participants' behaviour, while only exchanging digital resources seems not to require nested contractual implications.

So far, we addressed resources that change owner, but exchange platforms also permit users to share resources, e.g., photographs, without changing their owner.
Hosting this modality is plain: just tag such resources and treat them as if they come in infinitely many copies.
This issue has been addressed in~\cite{itasec20} and we will discuss it in~Section~\ref{sec:col-related}.

\section{Related Work}\label{sec:col-related}

The problem of fairly exchanging electronic assets over a network has been studied since the 80's by different communities.  
In the cryptography community, the focus was on designing protocols that allow several participants to exchange their
assets in such a way that no entity gives away
their own %
resource without also getting the other expected resource. 
In the access control community, the focus was on designing policy languages that allow participants to express the conditions under which an exchange is acceptable and what they expect in return.
Also, mutuality plays a main role in trust negotiation, which permits two parties who do not trust each other to interact.
Finally, linear logic has been used for
modelling resource-aware games and problems in the artificial intelligence community, more precisely in the area of Multi-agent Systems.
Below, we briefly survey these approaches, and some related logic.

\paragraph*{Fair exchange protocols}
The pioneering work by Even and Yacobi~\cite{Even} studied contract-signing protocols, a particular case of fair exchange, and showed that no deterministic protocol exists without a TTP.
Other proposals focused on two party protocols and tried to weaken the need of using a TTP by considering randomised protocols~\cite{Freiling} or the so-called optimistic approach where the TTP intervenes only when a problem arises, e.g., in case of a dispute or crash~\cite{Asokan}.
There are also proposals that address multi-party fair exchanges~\cite{Franklin98,Bao99} where a
group of mutually suspicious parties are involved.
To ensure the fairness of the exchanges a TTP is required also in these protocols. 

More recently, with the growth of blockchain platforms several proposals have been put forward where the TTP is implemented as a smart contract.
Dziembowski et al.~\cite{Dziembowski} proposed FairSwap, a fair exchange protocol that minimises the cost of running the contract and avoids expensive cryptographic primitives. 
The underlying idea is that the initial step of the two parties A and B consists in deploing on the network a smart contract: $A$ deposits the whole price in cryptocurrency and the underlying consensus mechanism of the blockchain guarantees that either $A$ receives the goods and $B$ the money, or $A$ gets her deposit back after the timeout has passed.

Eckey et al.~\cite{Eckey} proposed OptiSwap, which extends FairSwap 
by incorporating an interactive dispute resolution sub-protocol.
It improves the efficiency of the protocol when run by two honest parties and it protects against \emph{grieving attacks}.

Our proposal differs from the above in two main points. 
First, these papers often consider two parties only, while we have no bound on the number of participants.
Second, we focus on the linguistic mechanisms that participants use to 
express the conditions 
when an exchange is acceptable, while these papers only focus on the interactions between the parties for performing an exchange defined previously.

\paragraph*{Access control}
We only consider discretionary access control~\cite{CompSecPrinPra} because it is a natural choice in distributed cooperative settings, where users individually decide the policies for their own resources.
In this context, a main issue is combining individual policies.
To the best of our knowledge, 
existing proposals do not address mutuality, 
but only focus on the resolution of conflicts~\cite{policy-composition, collac, surveyCCCS}.

In the widespread world of social networks, mutuality plays a prominent role, but it is scarcely regulated.
A remarkable exception is~\cite{SACMAT19}, which allows for the  definition of mutual access control policies.
This is done by introducing a new grant, called \textit{mutual}, in addition to the usual \textit{accept} and \textit{deny}.
Suppose that an access request from user $A$ to resource $r$ of $B$ evaluates to \textit{mutual}.
Intuitively, the request is served if and only if a request from $B$ for a \emph{similar} resource $r'$ of $A$ will evaluate to \textit{accept} or \textit{mutual}.
Similarity is fixed once and for all, and it is not user-defined.
A first difference of our proposal is that we allow users to explicitly state what they require in return for the resource they give.
In addition, mutuality in MuAC may involve many users, as in \autoref{ex:col-cir}, and we target consumable resources.

Some of us proposed a logically-based policy language to state conditions about what a user receives in return for allowing an access~\cite{itasec20}. 
Differently from this paper, in~\cite{itasec20} the authors rely only on non-linear logic and focus on data sharing, as in social networks. 
Moreover, they 
neither
propose a formal semantics nor an implementation.

\paragraph*{Trust negotiation}
K\`olar et al.~\cite{KolarGL18} propose a multi-round protocol where the parties exchange pieces of private information (\emph{credentials}) so as to increase their mutual trust.
Each party defines an individual policy specifying the conditions that the other party must satisfy to obtain credentials.
The overall goal is to balance the disclosure of information and the mutual benefit
gained by each party.
Logical languages for specifying trust policies have been proposed, e.g.,
Cassandra~\cite{BeckerS04} and SecPal4P~\cite{secpal4p}.
The main difference with respect to MuAC is that these proposals use
classical logic, and thus circular conditions do not lead to an agreement.

\paragraph*{Logical Modelling of Resource Games in Artificial Intelligence}

Linear logic has been used to model resource aware reasoning in various AI contexts, in particular for Multi-agent Systems.
They all describe the desire of agents in terms of their goals or value functions, and derive or recognise reasonable offers and strategies.
A contribution of ours is instead a way of directly modelling what users offer via an exchange policy language, hence offering a descriptive approach rather than a prescriptive one.
Value function-based policies and our explicit exchange-based policies are introduced in~\cite{Ecai2024} and are discussed below.

In~\cite{Harland02}, Harland et al.~show how linear logic enables reasoning about negotiations, encoding agents' goals and what they offer.
Linear logic proofs recognise the negotiation outcomes that satisfy all parties.

In~\cite{Kungas2003,Kungas2004}, Küngas et al.~propose a model of cooperative problem
solving, and use linear logic for encoding agents’ states, goals and capabilities. Then, each agent determines whether it can solve the problem in isolation. If it cannot, then it starts negotiating with other agents in order to find a cooperative solution.
Partial deduction~\cite{partialdeduction} is used to derive possible deals. 
In~\cite{Kungas2006,Kungas2008}, the authors extend their work by considering coalition formation.

In~\cite{Porello2010}, Porello et al.~target distributed resource allocation.
They encode resource ownership and transfers, as well as value functions representing user preferences in (various fragments of) linear and affine logic.
They show how logic proofs discriminate mutually satisfactory exchanges that increase the value of the assignment for every user, thus recovering a notion of social welfare (in terms of Pareto optimality).
They do not model offers and negotiation,  
because the users value functions used to decide upon exchanges are assumed as known.
They prove that 
every
 sequence of
individually rational deals will always converge to an allocation with
maximal social welfare, as known from~\cite{Sandholm2002}.
In contrast, we directly encode the user exchange policies instead of their value function, and we investigate agreements and reachable resource associations.
Moreover, we extend (a fragment of) linear logic with a contractual implication and we recover decidability results.

In~\cite{Troquard2018}, Troquard models the interaction of resource-conscious agents who share resources to achieve their goals in cooperative games.
Algorithms are proposed for deciding whether a group of agents can form a coalition and act together in a way that
satisfies them all. 
The complexity classes of various related problems for various fragments of linear and affine logic are discussed.
Our focus is instead on resource exchanges, and our context is a mixture of cooperative and competitive behaviour.
In the subsequent work~\cite{Troquard2020}, Troquard studies how a central authority can redistribute the resources in order to modify the set of Nash equilibria of cooperative games based on resource sharing.  
The complexity of this problem is discussed in terms of the chosen (fragment of) resource-sensitive logic.

This paper is closely related to~\cite{Ecai2024}, in which
we tailor exchange environments to Multiagent Systems and adapt the notion of agreement and fair exchange to also take care of values assigned to resources by users' \emph{valuation functions}.
In this way, users accept exchanges that increase the value of the resources they posses and that respect also their policies, which are similar to the ones we use here.
The first difference between \MuACL\ and the Contractual Exchange Logic (CEL) of~\cite{Ecai2024}  
is that \MuACL\ also includes a non-linear fragment, combined with the linear one in the style of LNL~\cite{LNL}.
Secondly, the rules for the linear contractual implication differ: premises in the implications are treated linearly in CEL, while they are affine in \MuACL. 
Intuitively, this reflects the fact that in~\cite{Ecai2024} users must explicitly declare when they accept a resource for free, whereas the policies used here implicitly state that users are always willing to receive such a gift.
Finally, further differences are that here we propose MuAC, a logical language for expressing exchange policies, that we give a compilation procedure targeting \MuACL, and that we show how our machinery can be employed to exchange crypto-assets in a blockchain smart contract scenario.
Differently from~\cite{Ecai2024}, here we do not consider valuation functions when deciding fairness of exchanges, because these functions often require
an agent to know the resources of the other users to evaluate a proposed agreement, which is not always the case in concrete contexts.

\paragraph*{Logic and Types}
We formalised the contractual aspects following the pioneering PCL proposed by Bartoletti and Zunino in~\cite{BZ}, which is a logic for modelling 
the peculiar circular reasoning of contracts.
Our operator $\linearcontract$ is actually a linear version of their $\contract$.
The main difference with respect to PCL is that from $p \contract p', p' \contract p$ one can derive $p$, $p'$, and $p \land p'$, but in our system only the whole pair $p \otimes p'$ can be derived from  $p \linearcontract p', p' \linearcontract p$.
This is critical when dealing with consumable resources, as it guarantees that all the users get what is promised by the agreement.
In addition, our logic mixes linear and non-linear terms
by following the approach of~\cite{LNL}.
Our sequents are inspired by~\cite{Kanovich94}, where a computational fragment of the linear logic is proposed for reasoning about computations with consumable resources.

MuACL extends linear logic with an ad hoc operator.
On a similar account, various generalizations of linear logic and linear type systems have been proposed in the past few years, including~\cite{parmod1,parmod2,parmod3,parmod4,parmod5,parmod6,parmod7,parmod8,parmod9,parmod10}.
Their generality derives from a parametrization over some abstract algebraic structure of grades, extending the standard exponential $!$ modality.
The parameters are usually taken from a semi-ring, the choice of which allows expressing linear, affine or relevant usage of resources, as well as other kinds of ``weighted" usages (integer, non-negative real numbers etc.).
In addition, these generalizations allow expressing effects and coeffects,
i.e., changes that typed programs cause
on the environment, as well as 
requirements over the environment's properties (availability of a resource, security levels, etc.).
These solutions are not immediately applicable to our context, since none of the proposed instances deal with our kind of circular reasoning.
Specularly, we do not consider these generalizations in our work, which could benefit the applicability and impact of our proposal.
A closer investigation on the matter is postponed to future work.

Our work shares some intuitions with session types, which describe desirable properties of protocols~\cite{stypes,stypes2,mstypes}.
\emph{Multiparty Session Types (MPSTs)} work on systems with multiple participants as MuAC does, however they differ for a number of reasons:
(1)~MPSTs deal with message communications, whereas MuAC with resources and their exchange;
(2)~in MPSTs the objective is a protocol specification, expressed  
as a \emph{global type}, roughly a collection of possible sequences of communications, whereas we consider MuAC policies, locally defined by users in isolation and more abstract than a list of possible exchanges;
(3)~in MPSTs, the global type is usually \emph{projected} into \emph{local types} to describe how the behaviour of single participants can enforce the global specification, whereas we start from local policies, and derive the global behaviour as those exchanges obtained by the MuACL proof system;
(4)~in MPSTs correctness guarantees that channels' arity, types, polarity and multiplicity are respected, as well as the property of \emph{deadlock freedom} (which is violated, e.g., by a circular wait),
whereas MuAC abstracts away from these problems and focusses on the reciprocity and fairness of exchanges;
(5)~MPSTs are commonly associated with 
process calculi to explicitly express participants' behaviour,
whereas we only consider policies and abstract away from actual implementation.
As future work, we plan to rephrase MuACL in order to describe MuAC policies as local session types, from which to extract global types that express all and only legal exchanges.
This extraction could exploit a procedure similar to the one for deciding MuACL sequents.

An extension of the Curry–Howard correspondence has been proposed relating linear logic and session types:
proofs correspond to processes, propositions to types, and proof normalization to communications \cite{CHlin, CHlin2}.
In this setting, duality of linear logic corresponds to compatibility of send and receive instructions in binary session types.
In MPSTs, duality is generalized by the notion of \emph{coherence} proposed in~\cite{mstypes2}, which shares some similarity with our notion of agreement.
For example, the dual of sending is receiving and vice-versa, hence this pair of operations is coherent.
Similarly, we could define the dual of the promise of ``giving $\delta$ if $\delta'$ is given in return'' ($\delta \linearcontract \delta'$) as ``giving $\delta'$ if $\delta$ is given in return'' ($\delta' \linearcontract \delta$), which indeed leads to an agreement.
However, coherence corresponds to the correct compatibility of communication instructions, whereas agreements require some 
circularity between preconditions and results of promises.
Indeed, communications in MPSTs correspond to applications of some generalized 
version of the cut rule based on coherence, which is an admissible rule in 
standard linear logic. 
Thus, the set of derivable sequents are not affected by their new notion
of coherence, while our agreements can neither be expressed in linear logic directly, 
nor they can be encoded via some homomorphic map.
Another major difference is 
that send and receive instructions are atomic, whereas promises may speak of collections of exchanges, and a single promise may be the dual of more promises, as, e.g.
$\delta \otimes \delta'' \linearcontract \delta' \otimes \delta'''$ agrees with 
$(\delta \linearcontract \delta') \otimes (\delta'' \linearcontract \delta''')$.
In addition, circularity in the order of send and receive leads to a deadlock, whereas circularity is often required to reach an agreement in MuAC.
The existence of a Curry–Howard correspondence for MuACL remains an open question.

\section{Conclusions and future work}\label{sec:col-conclude}

We considered digital platforms through which users exchange resources in accordance to exchange policies that express what users are willing to give and what they want in return.
We formalised these environments as labeled transition systems, where states record the ownership of the resources and transitions represent title transfers.
We mainly focussed on exchange policies formalising them and characterising as fair those that obey all the policies of the users involved and that avoids double spending. 
We provided users with a Datalog-like language that supports an easy definition of exchange policies and that is equipped with a formal semantics.

A crucial issue is ensuring that resource exchanges never violate the policies in force, so that malicious users cannot take advantage of honest ones. 
To do that, we resorted to logic and defined 
\MuACL, which combines classical non-linear and linear aspects with a 
novel
contractual operator, not expressible with the 
standard
operators. 
Since MuAC is compiled in \MuACL, determining 
the fairness of an exchange amounts to finding a proof in \MuACL, which is decidable.

Finally, we show our proposal effective, by providing a schema for implementing a blockchain smart contract for exchanging NFTs, which records the assignment of NFTs to users and is in charge of managing them.
The main characterising feature of our implementation schema is that users compute in isolation a \MuACL\ proof
witnessing that the desired exchange is fair, 
and propose it to the contract, which efficiently verifies its validity.
If the proof is valid, then the exchange takes place, otherwise it is denied. 
We show then that our implementation is robust against our threat model.

As future work, first we plan to more accurately determine the complexity of the logic \MuACL\ and to define an efficient decision procedure.
Then, we will enrich the MuAC policy language.
For the time being, MuAC has positive \texttt{Gives} grants only.
Having negative rules will significantly extend the language's expressivity, but requires resolving potential conflicts.
Another extension is allowing rules in which a user should \emph{not} perform some exchanges to obtain a resource.
For example, Alice gives an apple to Bob if Bob gives nothing to Carl.
This kind of negative requirement appears necessary in policies regulating conflicts of interest.
Also, we adopted so far a ``default deny'' approach for the evaluation of user 
policies, while ``default allow'' seems sometimes useful.
Furthermore, we plan to enhance the expressivity of exchange environments by attaching a value to resources, along the lines of our preliminary investigation of~\cite{Ecai2024}, discussed in Section~\ref{sec:col-related}.

Finally, we would like to study and define a high-level language for defining exchange platforms, embedding MuAC.
A suitable compilation will then be needed in order to map such a language in an exchange environment.

\section*{Acknowledgment}
This work was partially supported by project SERICS (PE00000014) PNRR MUR - M4C2 - I 1.3 and by project PRIN PNRR AM$\forall$DEUS (P2022EPPHM) M4C2 I 1.1 - D53D23017420001 under the MUR National Recovery and Resilience Plan funded by the European Union - NextGenerationEU.
Part of this work has been carried out while Luca Viganò was visiting professor at the IMT School for Advanced Studies Lucca. 

\bibliographystyle{alphaurl}
\bibliography{references}

\appendix

\section{Notations and Symbols}
{\footnotesize \begin{center}
\begin{tabular}{ l l l }
	\toprule
	\hspace{1cm} & \textbf{Notation} & \textbf{Description}\\
	\midrule
	\textbf{Exchange Environment}\quad\quad &$\mathit{Res} \ni \mathit{res}$ & Resources\\
	&$\mathit{Usr} \ni \mathit{usr}$ & Users\\
	&$\mathit{St} \ni \mathit{st}$ & States\\
	&$\mathit{Tr} \ni \mathit{tr}$ & Transfers\\
	&$\mathit{exc} \ni \mathit{Exc}$ & Exchanges\\
	&$\mathit{Pol} \ni \mathit{pol}$ & Exchange policies\\
	\midrule
	\textbf{MuAC} &$U \ni u$ & User variables\\
	&$\texttt{Me}$ & Variable representing the owner of the policy\\
	&$r \in R$ & MuAC rule in a ruleset\\
	&$C$ & Context\\
	&$R \ni r$ & MuAC ruleset and contained rule\\
	\midrule
	\textbf{MuAC Logic} &$res@usr$ & Atomic linear proposition\\
	&$\Omega \ni \omega$ & Multiset of non-linear propositions\\
	&$\Theta \ni \theta$ & Multiset of linear propositions using $\linearcontract$\\
	& $\Delta \ni \delta$ & Multiset of linear propositions using $\multimap$\\
	& $\Sigma \ni \sigma$ & Multiset of conjunctions atomic linear propositions\\
	\bottomrule	
\end{tabular}
\end{center}
}

\section{Decidability results}\label{app:proofs}

In the following we write \MuACLs\ for \MuACL\ augmented with the cut rule $(*\text{-cut})$.
Moreover, we extend the set of rules $Lr$ of~\autoref{not:rulesets} with ($*$-cut).

To show that proofs can be normalised, we introduce a new logic, named \MuACLst,
derived from \MuACLs by substituting ($\otimes$-right) with the following rule
{\normalsize \begin{gather*}
	\prftree[r]
	{($\otimes$-right')}
	{\Omega; \Theta, \Delta, \Sigma \vdash \sigma\qquad}
	{\Omega'; \Theta', \Delta', \Sigma' \vdash \sigma'}
	{\Omega, \Omega'; \Theta,  \Theta', \Delta, \Delta', \Sigma, \Sigma'\vdash \sigma \otimes \sigma'}
\end{gather*}}

We show that every
 \MuACLs\ proof can be transformed into one in \MuACLst\  for the same sequent.
Then, we reorder the \MuACLst\  proof, 
and we transform it into an equivalent \MuACLs\ where a final reordering takes place, thus obtaining a normal proof.

Recall that we use a double line to represent multiple applications of the same rule.
\begin{lem}\label{MuACLsto2}
	If a \MuACLs\ proof exists for a sequent, then there exists an equivalent one in \MuACLst\  that uses ($*$-cut) only if the original one does.
\end{lem}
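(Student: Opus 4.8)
The plan is to proceed by structural induction on the given \MuACLs\ proof $\Pi$ of a sequent $S$, producing a \MuACLst\ proof of the same $S$. Sub-derivations of non-linear judgments $\Omega \Vdash \omega$ use only rules from the non-linear block, none of which is ($\otimes$-right), so they are already \MuACLst\ derivations and carry over verbatim; thus the real work concerns derivations of linear sequents $\Omega; \Theta, \Delta, \Sigma \vdash \sigma$. The axioms ($\top$-right), ($\Omega$-Ax), ($I$-right) and ($\Sigma$-Ax) belong to both systems and use no ($*$-cut), which settles the base cases.

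For the inductive step, let $\rho$ be the last rule of $\Pi$. If $\rho$ is any rule other than ($\otimes$-right), then $\rho$ is also a rule of \MuACLst; I apply the induction hypothesis to each immediate sub-derivation, obtaining \MuACLst\ proofs of the premises that use ($*$-cut) only if the corresponding sub-derivation of $\Pi$ did, and then I re-apply $\rho$. The resulting \MuACLst\ proof uses ($*$-cut) only if $\Pi$ does, since $\rho$ either is not ($*$-cut) or else already occurs in $\Pi$.

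The only interesting case is $\rho =$ ($\otimes$-right), where the end sequent $\Omega; \Theta, \Theta', \Delta, \Delta', \Sigma, \Sigma' \vdash \sigma \otimes \sigma'$ is derived from $\Omega; \Theta, \Delta, \Sigma \vdash \sigma$ and $\Omega; \Theta', \Delta', \Sigma' \vdash \sigma'$; note that ($\otimes$-right) forces both premises to share the same non-linear context $\Omega$. By the induction hypothesis I get \MuACLst\ proofs of the two premises; applying ($\otimes$-right') to them yields $\Omega, \Omega; \Theta, \Theta', \Delta, \Delta', \Sigma, \Sigma' \vdash \sigma \otimes \sigma'$, which is the desired sequent except that each formula of $\Omega$ now occurs with doubled multiplicity. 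I then apply (L-Cont) $|\Omega|$ times, once for every element of $\Omega$ counted with multiplicity, to collapse $\Omega \uplus \Omega$ back to $\Omega$ and obtain exactly the end sequent of this step. This introduces only occurrences of ($\otimes$-right') and (L-Cont), never ($*$-cut), so ($*$-cut)-freeness is preserved through the transformation.

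I do not expect a genuine obstacle here: the argument amounts to the observation that the non-linear zone behaves like the $!$-zone of linear logic, so that the context-sharing rule ($\otimes$-right) is simulated by the context-splitting rule ($\otimes$-right') followed by structural contractions via (L-Cont). The only points requiring a little care are checking that (L-Cont) can indeed be iterated to contract the doubled multiset $\Omega \uplus \Omega$, which is immediate from its statement (with the degenerate case $\Omega = \emptyset$ needing no contraction at all), and carrying the bookkeeping on ($*$-cut) through the induction.
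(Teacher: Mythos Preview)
Your argument is correct: the only rule to eliminate is ($\otimes$-right), and you simulate it in \MuACLst\ by applying ($\otimes$-right') with both non-linear contexts equal to $\Omega$ and then contracting the doubled multiset $\Omega \uplus \Omega$ back to $\Omega$ via iterated (L-Cont). The structural induction and the bookkeeping on ($*$-cut) are routine.

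This actually differs from the paper's displayed argument. The paper shows that ($\otimes$-right') is derivable in \MuACLs\ by weakening both premises with (L-Weak) and then applying ($\otimes$-right); but that derivation establishes the \emph{converse} implication (every \MuACLst\ proof yields a \MuACLs\ proof), not the direction stated in the lemma. Your construction---($\otimes$-right') followed by (L-Cont)---is the dual one and is what is genuinely needed to pass from \MuACLs\ to \MuACLst. So your proposal matches the lemma as stated, whereas the paper's written proof appears to have the direction reversed.
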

\begin{proof}
	Follows from ($\otimes$-right') being derivable in \MuACLs\ without using ($*$-cut).
	Every occurrence of ($\otimes$-right') can be substituted with the following derivation.
	{\normalsize 
	\[
		\prftree[r]
		{($\otimes$-right)}
		{
			\prftree[r, double]
			{(L-Weak)}
			{\Omega;  \Theta, \Delta, \Sigma \vdash \sigma}
			{\Omega, \Omega';  \Theta, \Delta, \Sigma \vdash \sigma}
		}
		{
			\prftree[r, double]
			{(L-Weak)}
			{\Omega';  \Theta', \Delta', \Sigma' \vdash \sigma'}
			{\Omega, \Omega';  \Theta', \Delta', \Sigma' \vdash \sigma'}
		}
		{\Omega, \Omega'; \Theta,  \Theta', \Delta, \Delta', \Sigma, \Sigma' \vdash \sigma \otimes \sigma'}\qedhere
	\]}
\end{proof}

We prove the following auxiliary lemmata about reordering rules in \MuACLst, where $Lr'$ is the set of \MuACLst\  rules defined as $(Lr \setminus \{\text{($\otimes$-right)}\}) \cup \{\text{($\otimes$-right')}\}$.

\begin{lem}\label{thm:LP-GCS}
	Any \MuACLst\  derivation where $r \in Sr \cup Cr \cup Gr$ is applied before $r' \in Lr' \cup Gr \cup Pr$ can be rewritten as an equivalent derivation where all the rules in $Sr \cup Cr \cup Gr$ are applied after the rules in $Lr' \cup Gr \cup Pr$.
	In addition, the equivalent derivation uses ($*$-cut) only if the original one does.
\end{lem}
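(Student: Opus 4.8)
The plan is to establish the statement by a standard rule\nobreakdash-permutation argument, aiming at the stratification visible in the normal forms of~\autoref{fig:normals}: every \MuACLst\ derivation should be reshaped so that the linear and contractual inferences (those in $Lr' \cup Pr$) lie closer to the leaves than all the structural, classical and $G$-inferences (those in $Sr \cup Cr \cup Gr$). The core will be a \emph{one-step permutation claim}: if, in a derivation, an instance of a rule $r \in Sr \cup Cr \cup Gr$ is applied immediately above an instance of a rule $r' \in Lr' \cup Gr \cup Pr$ (i.e., the conclusion of the former is a premise of the latter), then this local configuration can be replaced by an equivalent one in which $r'$ is applied above $r$, without ever creating a new instance of ($*$-cut). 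Granting the claim, I would iterate it: among all ``misplaced'' pairs of a given derivation one can always select one that is adjacent, and each application of the claim strictly decreases the measure obtained by summing, over all instances $\rho$ of rules in $Lr' \cup Pr$ occurring in the derivation, the number of instances of rules from $Sr \cup Cr \cup Gr$ that appear strictly above $\rho$. This measure is a natural number, so the rewriting terminates, and it hits $0$ exactly when no structural, classical or $G$-inference is applied before any linear or contractual one --- the desired shape. Since no permutation step introduces an instance of ($*$-cut), the resulting derivation uses that rule only if the original one does.

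\textbf{Key steps.} For the one-step claim I would argue by cases on the natures of $r$ and $r'$, exploiting that the two fragments act on disjoint parts of a sequent: every rule in $Sr \cup Cr$ touches only the non-linear zone $\Omega$ (with ($\Omega$-cut) additionally consuming a separate, purely non-linear $\Vdash$-subderivation, which contains no rule of $Lr' \cup Pr$), whereas every rule in $Lr' \cup Pr$ carries $\Omega$ along unchanged and touches only $\Theta, \Delta, \Sigma$ and the succedent. Hence, for $r \in Sr \cup Cr$ and $r' \in Lr' \cup Pr$, the two instances commute essentially verbatim; in the binary cases --- $r'$ being ($\otimes$-right') or ($*$-cut), or $r$ being ($\Omega$-cut) --- one additionally checks that the splitting and re-merging of the non-linear context as $\Omega, \Omega'$ is compatible with the swap, which holds up to associativity and commutativity of multiset union, and that at most a pure $\Vdash$-subderivation is ever copied. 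When $r \in Gr$, the formula that $r$ transfers from $\Omega$ into the linear zone appears only as $G(\cdot)$ in the conclusion of $r$, so it cannot be acted upon by the instance of $r' \in Lr' \cup Pr$ sitting below it; thus $r$ and $r'$ again operate on disjoint material, and pushing $r$ below $r'$ merely re-introduces the $G$-bound formula one inference higher, where the unchanged instance of $r'$ still applies. The remaining sub-cases, with $r' \in Gr$, are symmetric and cause no trouble.

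\textbf{Main obstacle.} I expect the delicate point to be precisely the binary rules ($\otimes$-right'), ($*$-cut) and ($\Omega$-cut): one must verify, case by case, that permuting any of them past another inference redistributes the non-linear contexts correctly and --- crucially --- never duplicates a $\vdash$-subderivation (only pure $\Vdash$-subderivations may be copied, and these carry no rule of $Lr' \cup Pr$), so that the chosen measure genuinely decreases at every step and the ``($*$-cut) only if'' clause is preserved. Once this bookkeeping is in place, the other cases reduce to routine applications of the zone-disjointness observation and the induction closes.
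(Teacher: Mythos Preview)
Your proposal is correct and follows essentially the same route as the paper. The paper's proof is a terse case analysis that exhibits the permutation only for the ``non-trivial'' adjacent pairs---namely $r\in\{(\text{L-}\rightarrow\text{-left}),(\Omega\text{-cut})\}$ against $r'\in\{(\otimes\text{-right}'),(*\text{-cut})\}$---and leaves the remaining swaps implicit; your write-up makes the global iteration and the termination measure explicit, but the local permutations you describe coincide with the paper's, including the observation that only $\Vdash$-subderivations may be duplicated so that ($*$-cut) is never introduced.
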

\begin{proof}
	The only non trivial cases are $r$ = (L-$\rightarrow$-left) or ($\Omega$-Cut), and $r'$ = ($*$-cut) or ($\otimes$-right').
	Take $r$ = (L-$\rightarrow$-left) and $r'$ = ($\otimes$-right'), and let $r$ be applied 
	in the left premise.
	\[
{\normalsize 	\prftree[r]
	{($\otimes$-right')}
	{
		\prftree[r]
		{(L-$\rightarrow$-left)}
		{\Omega \Vdash \omega}
		{
				\Omega', \omega';
				\Theta, \Delta, \Sigma
			\vdash \sigma}
		{
				\Omega, \Omega', \omega \rightarrow \omega';
				\Theta, \Delta, \Sigma
			\vdash \sigma}
	}
	{
		\prftree[noline]
		{
				\Omega'';
				\Theta', \Delta', \Sigma'
			\vdash \sigma'}
	}
	{\Omega, \Omega', \Omega'', \omega \rightarrow \omega'; \Theta,  \Theta', \Delta, \Delta', \Sigma, \Sigma'\vdash \sigma \otimes \sigma'}
}	\]
	Then, swap the rules as follows:
	\[
{\normalsize 	\prftree[r]
	{(L-$\rightarrow$-left)}
	{
		\prftree[noline]
		{\Omega \Vdash \omega}
	}
	{
		\prftree[r]
		{($\otimes$-right')}
		{\Omega''; \Theta', \Delta', \Sigma' \vdash \sigma'}
		{\Omega', \omega'; \Theta, \Delta, \Sigma \vdash \sigma}
		{\Omega', \Omega'', \omega'; \Theta,  \Theta', \Delta, \Delta', \Sigma, \Sigma'\vdash \sigma \otimes \sigma'}
	}
	{\Omega, \Omega', \Omega'', \omega \rightarrow \omega'; \Theta,  \Theta', \Delta, \Delta', \Sigma, \Sigma'\vdash \sigma \otimes \sigma'}}
	\]
	Similarly if $r$ is ($\Omega$-Cut) or it is applied to the derivation of the right premise.
	\\
Take $r$ = (L-$\rightarrow$-left) and $r'$ = ($*$-cut), and let $r$ be applied to the  left premise derivation:
	\[
{\normalsize 	\prftree[r]
	{($*$-cut)}
	{
		\prftree[r]
		{(L-$\rightarrow$-left)}
		{\Omega \Vdash \omega}
		{
				\Omega', \omega';
				 \Theta, \Delta, \Sigma
			\vdash \sigma}
		{
				\Omega, \Omega', \omega \rightarrow \omega';
				 \Theta, \Delta, \Sigma
			\vdash \sigma}
	}
	{
		\prftree[noline]
		{
				\Omega'';
				 \Theta', \Delta', \Sigma', \sigma
			\vdash \sigma'}
	}
	{\Omega, \Omega', \Omega'', \omega \rightarrow \omega'; \Theta,  \Theta', \Delta, \Delta', \Sigma, \Sigma' \vdash \sigma'}}
	\]
	Then, swap the rules as follows:
	\[
	{\normalsize \prftree[r]
	{(L-$\rightarrow$-left)}
	{\Omega \Vdash \omega}
	{
		\prftree[r]
		{($*$-cut)}
		{\Omega', \omega'; \Theta, \Delta, \Sigma \vdash \sigma}
		{
			\prftree[noline]
			{\Omega''; \Theta', \Delta', \Sigma', \sigma \vdash \sigma'}
		}
		{\Omega', \Omega'', \omega'; \Theta, \Theta', \Delta, \Delta', \Sigma, \Sigma' \vdash \sigma'}
	}
	{\Omega, \Omega', \Omega'', \omega \rightarrow \omega'; \Theta,  \Theta', \Delta, \Delta', \Sigma, \Sigma' \vdash \sigma'}
}	\]
	Similarly if $r$ is ($\Omega$-Cut) or it is applied to the derivation of the right premise.
\end{proof}

\begin{lem}\label{thm:G-CS}
	Any \MuACLst\  derivation $\Pi$ where $r' \in Gr$ is applied immediately after $r \in Sr \cup Cr$ can be transformed in an equivalent derivation $\Pi'$
	where no rule in $Gr$ follows a rule in $Sr \cup Cr$.
	Also, the equivalent derivation uses ($*$-cut) only if the original one does.
\end{lem}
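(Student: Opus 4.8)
The plan is to prove the lemma by iterated local rule permutation that pushes each $Gr$-rule upward past the $Sr\cup Cr$-rule immediately preceding it. The first observation is that the premise of each $Gr$-rule, (G-left-$\theta$) and (G-left-$\delta$), is a \emph{linear} sequent $\Omega;\Theta,\Delta,\Sigma\vdash\sigma$; hence the only rules of $Sr\cup Cr$ that can occur immediately above (i.e.\ whose conclusion coincides with the premise of) an instance of a $Gr$-rule are those whose conclusion is again a linear sequent, namely (L-Weak), (L-Cont) from $Sr$ and (L-$\land$-left1), (L-$\land$-left2), (L-$\rightarrow$-left), ($\Omega$-Cut) from $Cr$; for every other rule of $Cr$ the conclusion has the form $\Omega\Vdash\omega$ and cannot feed a $Gr$-rule. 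It therefore suffices to exhibit, for each of these rules $r$ and each $r'\in Gr$, a commutation turning a fragment ``$r$ then $r'$'' into one of the form ``$r'$ then $r$''.

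The structural fact that makes every such commutation go through is that a $Gr$-rule only moves a formula $G(\theta)$ (resp.\ $G(\delta)$) out of $\Omega$ while putting $\theta$ (resp.\ $\delta$) into $\Theta$ (resp.\ $\Delta$), whereas the principal formula of each of the listed $Sr\cup Cr$-rules sits in $\Omega$ and is never of the form $G(\theta)$ or $G(\delta)$ --- the only exception being the structural rules. Consequently, for (L-$\land$-left1), (L-$\land$-left2) and (L-Weak) the two rules commute verbatim, the $G$-rule simply sliding one step up; for (L-Cont) the same works, except in the sub-case where the contracted formula is precisely the $G(\cdot)$ consumed by the $G$-rule, which is dealt with by replacing that (L-Cont) with an (L-Weak). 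For the binary rules (L-$\rightarrow$-left) and ($\Omega$-Cut), whose non-linear premise $\Omega\Vdash\omega$ cannot contain a $G$-rule and whose linear premise is the one feeding $r'$, the rewrite leaves the non-linear premise untouched and pushes the $G$-rule into the derivation of the linear premise, exactly as in the analogous cases of the proof of Lemma~\ref{thm:LP-GCS}. None of these rewrites introduces a ($*$-cut): they merely rearrange rules already present, which is also what yields the last clause of the statement.

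To globalise, I would first invoke Lemma~\ref{thm:LP-GCS} so that in the derivation no rule of $Lr'\cup Pr$ occurs below any rule of $Sr\cup Cr$; then every rule lying between an $Sr\cup Cr$-rule and a $Gr$-rule below it is itself in $Sr\cup Cr\cup Gr$, and by choosing an offending pair with the fewest rules in between one finds an \emph{immediately adjacent} offending pair to which one of the commutations above applies. Termination follows from the measure $M(\Pi)$ counting the ordered pairs $(x,g)$ with $x$ an $Sr\cup Cr$-instance occurring above the $Gr$-instance $g$: the condition $M(\Pi)=0$ is exactly the claimed property, and each local transformation strictly decreases $M$ --- in the unary cases it erases precisely the pair just permuted, and in the binary cases it additionally erases every pair built from the $Sr\cup Cr$-rules of the (untouched) non-linear premise. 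The part that needs genuine care, and which I expect to be the main obstacle, is checking that moving a $G$-rule into the linear sub-derivation of a binary rule spawns no new offending pair: it does not, because that sub-derivation already lay entirely above the $G$-rule's former position and the non-linear premise it crosses contains no $Gr$-rule at all; the commutation diagrams themselves are routine case checking.
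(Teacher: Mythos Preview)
Your approach is essentially the paper's: a case-by-case local commutation pushing the $Gr$-rule upward past the immediately preceding $Sr\cup Cr$-rule. The paper exhibits the swap explicitly for (L-$\rightarrow$-left), (L-Weak) and (L-Cont), remarks that the other $Cr$-cases are analogous, and leaves the iteration implicit; you add an explicit termination measure, which is a welcome bit of extra rigour but not a different argument.

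Two small points where you over-complicate. First, the special sub-case you isolate for (L-Cont)---when the contracted $\omega$ coincides with the $G(\theta)$ principal in the $G$-rule---needs no separate treatment: the paper's uniform swap still applies, since (G-left-$\theta$) on the premise $\Omega,G(\theta),G(\theta);\Theta,\theta,\Delta,\Sigma\vdash\sigma$ gives three copies of $G(\theta)$ in $\Omega$, and one (L-Cont) then yields the required conclusion with two. Your proposed replacement by (L-Weak) does not actually produce a derivation with the same open premise, so that detour should simply be dropped. Second, the preliminary appeal to Lemma~\ref{thm:LP-GCS} is unnecessary: the lemma's hypothesis already gives you an \emph{adjacent} offending pair, and your measure $M$ strictly decreases under each local swap regardless of what $Lr'\cup Pr$-rules lie elsewhere in the derivation, so termination follows directly.
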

\begin{proof}
	Let $r$ and $r'$ be (L-$\rightarrow$-left) and (G-left-$\theta$), respectively, i.e., let $\Pi$ be
	\[
{\normalsize 	\prftree[r]
	{(G-left-$\theta$)}
	{
		\prftree[r]
		{(L-$\rightarrow$-left)}
		{\Omega \Vdash \omega}
		{\Omega', \omega'; \Theta, \theta, \Delta, \Sigma \vdash \sigma}
		{\Omega, \Omega', \omega \rightarrow \omega'; \Theta, \theta, \Delta, \Sigma, \vdash \sigma}
	}
	{\Omega, \Omega', \omega \rightarrow \omega', G(\theta); \Theta, \Delta, \Sigma, \vdash \sigma}}
	\]
	Then, $\Pi'$ is as follows:
	\[
	{\normalsize \prftree[r]
	{(L-$\rightarrow$-left)}
	{\Omega \Vdash \omega}
	{
		\prftree[r]
		{(G-left-$\theta$)}
		{\Omega', \omega'; \Theta, \theta, \Delta, \Sigma \vdash \sigma}
		{\Omega', \omega', G(\theta); \Theta, \Delta, \Sigma, \vdash \sigma}
	}
	{\Omega, \Omega', \omega \rightarrow \omega', G(\theta); \Theta, \Delta, \Sigma, \vdash \sigma}}
	\]
	Similarly for 
	every
	 $r \in Cr$ and $r' \in Gr$.
	
	Let $r$ and $r'$ be (L-Weak) and (G-left-$\theta$), respectively, i.e., let $\Pi$ be
	as below on the left. Then $\Pi'$ is on the right, and the proof for (G-left-$\delta$) is similar.	
		\begin{gather*}
		{\normalsize \prftree[r]
		{(G-left-$\theta$)}
		{
			\prftree[r]
			{(L-Weak)}
			{\Omega; \Theta, \theta, \Delta, \Sigma \vdash \sigma}
			{\Omega, \omega; \Theta, \theta, \Delta, \Sigma, \vdash \sigma}
		}
		{\Omega, \omega, G(\theta); \Theta, \Delta, \Sigma, \vdash \sigma}}
		\qquad
		{\normalsize \prftree[r]
		{(L-Weak)}
		{
			\prftree[r]
			{(G-left-$\theta$)}
			{\Omega; \Theta, \theta, \Delta, \Sigma \vdash \sigma}
			{\Omega, G(\theta); \Theta, \Delta, \Sigma, \vdash \sigma}
		}
		{\Omega, \omega, G(\theta); \Theta, \Delta, \Sigma, \vdash \sigma}}
		\end{gather*}
	
	Let $r$ and $r'$ be (L-Cont) and (G-left-$\theta$), respectively, i.e., let $\Pi$ be
	as below on the left. Then $\Pi'$ is on the right, and the proof for (G-left-$\delta$) is similar.		
	\begin{gather*}
	{\normalsize \prftree[r]
	{(G-left-$\theta$)}
	{
		\prftree[r]
		{(L-Cont)}
		{\Omega, \omega, \omega; \Theta, \theta, \Delta, \Sigma \vdash \sigma}
		{\Omega, \omega; \Theta, \theta, \Delta, \Sigma, \vdash \sigma}
	}
	{\Omega, \omega, G(\theta); \Theta, \Delta, \Sigma, \vdash \sigma}}
	\qquad
	{\normalsize \prftree[r]
	{(L-Cont)}
	{
		\prftree[r]
		{(G-left-$\theta$)}
		{\Omega, \omega, \omega; \Theta, \theta, \Delta, \Sigma \vdash \sigma}
		{\Omega, \omega, \omega, G(\theta); \Theta, \Delta, \Sigma, \vdash \sigma}
	}
	{\Omega, \omega, G(\theta); \Theta, \Delta, \Sigma, \vdash \sigma}
}\qedhere	
	\end{gather*}
\end{proof}

\begin{lem}\label{thm:P-L}
Any  \MuACLst\  derivation $\Pi$ where $r' \in Lr'$ is applied immediately after $r \in Pr$ can be transformed in an equivalent derivation $\Pi'$
	where no rule in $Lr'$ follows a rule in $Pr$.
\mbox{
Also, the
equivalent derivation does not use ($*$-cut) if the original derivation does not.
}
\end{lem}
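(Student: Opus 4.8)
The statement is a standard rule-permutation lemma: it asserts that in a \MuACLst\ derivation, no linear rule $r' \in Lr'$ need ever appear immediately below a contractual rule $r \in Pr = \{(\linearcontract\text{-left}), (\linearcontract\text{-split})\}$, and we may push all such $Lr'$ applications up above the $Pr$ applications. The plan is to argue by a local transposition argument: whenever a rule $r' \in Lr'$ is applied to a sequent that was itself produced by a rule $r \in Pr$, we show the two rules \emph{commute}, i.e.\ we can rewrite the two-step subderivation as one in which $r'$ is applied first and $r$ afterwards. Iterating this local swap (with a suitable termination measure, e.g.\ the number of $(r \in Pr, r' \in Lr')$ inversions read top-down, which strictly decreases at each swap) yields the claimed $\Pi'$. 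Since neither swap introduces any new rule, in particular no $(*\text{-cut})$, the final clause about $(*\text{-cut})$ is immediate.

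First I would enumerate the cases by the pair $(r,r')$. The contractual rules $Pr$ act only on the $\Theta$-component of the sequent: $(\linearcontract\text{-left})$ replaces a contract $\delta \linearcontract \delta'$ in $\Theta$ by $\delta'$ in $\Delta$ (under the side condition $\delta \subseteq \delta'$), and $(\linearcontract\text{-split})$ merges two contracts in $\Theta$ into one. The linear rules in $Lr'$ --- $(\multimap\text{-left})$, $(\otimes\text{-right}')$, and the three $(\otimes\text{-left})$ variants for $\Theta$, $\Delta$, $\Sigma$ --- either act on a formula in $\Sigma$, $\Delta$, or on a $\otimes$ at the top level of a $\Theta$-formula ($(\otimes\text{-left}\text{-}\Theta)$), or split the context ($(\otimes\text{-right}')$). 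The key observation is that a $Pr$ application does not touch the active formula of any $Lr'$ rule except possibly $(\otimes\text{-left}\text{-}\Theta)$: but the $(\linearcontract\text{-left})$ and $(\linearcontract\text{-split})$ rules have \emph{principal} formula of the shape $\delta \linearcontract \delta'$, whereas $(\otimes\text{-left}\text{-}\Theta)$ decomposes a formula $\theta \otimes \theta'$, which by the grammar of $\Theta$ in \autoref{def:prop-muacl} is impossible --- elements of $\Theta$ are exactly single contracts $\delta \linearcontract \delta$. Hence the principal formulas never clash, and each swap is a purely mechanical relabelling: for the non-context-splitting $Lr'$ rules one simply permutes the order of the two inferences; for $(\otimes\text{-right}')$ the $Pr$ rule, acting on a $\Theta$-formula inside one of the two premises, is simply relocated into that premise above the split (mirroring the $(\otimes\text{-right}')/(L\text{-}\rightarrow\text{-left})$ and $(\otimes\text{-right}')/(*\text{-cut})$ swaps already carried out in the proof of \autoref{thm:LP-GCS}).

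I expect the main obstacle to be bookkeeping rather than conceptual: one must check that the side condition $\delta \subseteq \delta'$ of $(\linearcontract\text{-left})$ and the composition pattern of $(\linearcontract\text{-split})$ are preserved verbatim after the swap --- which they are, since the $Lr'$ rule below never alters the contracts $\delta \linearcontract \delta'$ themselves, only other formulas in the context --- and that when $(\otimes\text{-right}')$ is the lower rule, the $\Theta$-formula on which the $Pr$ rule operates lands entirely inside exactly one of the two premises (true, because $(\otimes\text{-right}')$ partitions $\Theta$ into $\Theta$ and $\Theta'$, so the single contract in question goes to one side). Writing out the handful of derivation diagrams for each $(r,r')$ pair, in the same style as Lemmas~\ref{thm:LP-GCS}--\ref{thm:P-L}, completes the argument.
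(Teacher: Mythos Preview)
Your approach is essentially the same as the paper's: a local rule-permutation argument, with the only non-trivial cases being the two-premise, context-splitting rules in $Lr'$.

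One small correction: in this appendix $Lr$ has been extended with $(*\text{-cut})$, so $Lr' = (Lr \setminus \{(\otimes\text{-right})\}) \cup \{(\otimes\text{-right}')\}$ also contains $(*\text{-cut})$. You omit it from your enumeration of $Lr'$, and the paper explicitly treats it as the second non-trivial case (alongside $(\otimes\text{-right}')$). The swap is exactly the same shape as the one you sketch for $(\otimes\text{-right}')$ --- the $Pr$ rule acts on a contract that lies entirely in one premise, so you relocate it above the cut on that side --- so your plan goes through unchanged once you add this case.
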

\begin{proof}
	The only non trivial cases are when $r'$ = ($\otimes$-right') or ($*$-cut).
\\	
	Take $r'$ = ($\otimes$-right'), and let $r$ = ($\linearcontract$-split) be applied to derivation of the left premise
	\[
{\normalsize 	\prftree[r]
	{($\otimes$-right')}
	{
		\prftree[r]
		{($\linearcontract$-split)}
		{
				\Omega; \Theta,
				 \delta \otimes \delta'' \linearcontract \delta' \otimes \delta''',
				 \theta, \Delta, \Sigma 
			\vdash \sigma}
		{
				\Omega; \Theta,
				\delta \linearcontract \delta',
				\delta'' \linearcontract \delta''',
				\Delta, \Sigma 
			\vdash \sigma}
	}
	{
		\prftree[noline]
		{
				\Omega'; \Theta',
				\Delta', \Sigma' 
			\vdash \sigma'}
	}
	{
			\Omega, \Omega'; \Theta, \Theta', \delta \linearcontract \delta', \delta'' \linearcontract \delta''',
			\Delta, \Delta', \Sigma, \Sigma'
		\vdash \sigma \otimes \sigma'}}
	\]
	Then, swap the rules as follows:
	\[
	{\normalsize \prftree[r]
	{($\linearcontract$-split)}
	{
		\prftree[r]
		{($\otimes$-right')}
		{
				\Omega; \Theta, \delta \otimes \delta'' \linearcontract \delta' \otimes \delta''',
				\Delta, \Sigma 
			\vdash \sigma\qquad}
		{
			\prftree[noline]
			{
					\Omega'; \Theta',
					\Delta', \Sigma'
				\vdash \sigma'}
		}
		{
				 \Omega, \Omega'; \Theta, \Theta', \delta \otimes \delta'' \linearcontract \delta' \otimes \delta''',
				 \Delta, \Delta', \Sigma, \Sigma'
			\vdash \sigma \otimes \sigma'}
	}
	{
			\Omega, \Omega'; \Theta, \Theta', \delta \linearcontract \delta', \delta'' \linearcontract \delta''',
			\Delta, \Delta', \Sigma, \Sigma'
		\vdash \sigma \otimes \sigma'}
}	\]
	
	Similarly for ($\linearcontract$-left) and for ($\linearcontract$-split) applied to the derivation of the right premise.
\\	
	Take $r'$ = ($*$-cut), and let $r$ = ($\linearcontract$-left) be applied to the derivation of the left premise
	\[
	{\normalsize \prftree[r]
	{($*$-cut)}
	{
		\prftree[r]
		{($\linearcontract$-left)}
		{\delta \subseteq \delta'}
		{
				\Omega; \Theta, \Delta,
				\delta', \Sigma 
			\vdash \sigma}
		{
				\Omega; \Theta, 
				\delta \linearcontract \delta', \Delta, \Sigma 
			\vdash \sigma}
	}
	{
		\prftree[noline]
		{
				\Omega'; \Theta',
				\Delta', \Sigma', \sigma
		\vdash \sigma'}
	}
	{\Omega, \Omega'; \Theta,  \Theta', \delta \linearcontract \delta', \Delta, \Delta', \Sigma, \Sigma' \vdash \sigma'}
}	\]
	Then, swap the rules as follows:
	\[
	{\normalsize \prftree[r]
	{($\linearcontract$-left)}
	{\delta \subseteq \delta'}
	{
		\prftree[r]
		{($*$-cut)}
		{
				\Omega; \Theta, \Delta,
				\delta', \Sigma
			\vdash \sigma\qquad}
		{
			\prftree[noline]
			{
					\Omega'; \Theta',
					\Delta', \Sigma', \sigma 
				\vdash \sigma'}
		}
		{\Omega, \Omega'; \Theta,  \Theta', \Delta, \Delta', \delta', \Sigma, \Sigma' \vdash \sigma'}
	}
	{\Omega, \Omega'; \Theta,  \Theta', \delta \linearcontract \delta', \Delta, \Delta', \Sigma, \Sigma' \vdash \sigma'}
}	\]
	Similarly for ($\linearcontract$-split) and for ($\linearcontract$-left) applied to the derivation of the right premise.
\end{proof}

We define now normal proofs for \MuACLst.
\begin{defi}
A \MuACLst\  proof is \emph{normalised} if it can be decomposed in
\[
{\normalsize \begin{matrix}
	\Pi_{\{ \text{($\Sigma$-Ax), (I-right)} \}}\\
	\begin{matrix}
		\qquad\qquad
		&
		\rotatebox[origin=c]{90}{
			\ ..................\ \ 
		}
		&
		\begin{matrix}
			\Pi_{Lr'}\\
			\Pi_{Pr}\\
			\Pi_{Gr}\\
			\Pi_{Cr \cup Sr}
		\end{matrix}
	\end{matrix}\\
	\Omega; \Theta, \Delta, \Sigma \vdash \sigma
\end{matrix}
}\]
\end{defi}

Normalised proofs are general for \MuACLst, as shown by the following lemma.

\begin{lem}\label{thm:col-normalization2}
	Any \MuACLst\  proof for a sequent $\Omega; \Theta, \Delta, \Sigma \vdash \sigma$ can be rewritten as an equivalent normalised proof
	that uses ($*$-cut) only if the original one does.
\end{lem}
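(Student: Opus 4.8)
The plan is to normalise an arbitrary \MuACLst\ proof by bringing its rule applications into the layered order that defines the normal form: the classical and structural rules $Cr\cup Sr$ closest to the root, then the interaction rules $Gr$, then the contractual rules $Pr$, then the linear rules $Lr'$, and finally the axioms ($\Sigma$-Ax) and ($I$-right) at the leaves. The three commutation lemmas already proved do all the real work; what remains is to apply them in the right order and check that they do not interfere.

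Concretely, I would proceed in three phases. First, by Lemma~\ref{thm:LP-GCS}, rewrite the proof so that along every branch no rule of $Sr\cup Cr$ lies above a rule of $Lr'\cup Pr$; this splits the proof into a lower part built only from $Sr\cup Cr\cup Gr$ and an upper part built only from $Lr'\cup Gr\cup Pr$, with the $Gr$ rules --- which belong to both sets --- confined to the interface between the $Cr\cup Sr$ block below and the $Pr\cup Lr'$ block above. Second, inside the lower part, by Lemma~\ref{thm:G-CS}, push every $Gr$ rule above the $Sr\cup Cr$ rule immediately beneath it, so that the lower part becomes a block of $Cr\cup Sr$ rules with a block of $Gr$ rules on top. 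Third, inside the upper part, by Lemma~\ref{thm:P-L}, push every $Pr$ rule below the $Lr'$ rule immediately above it, so that the upper part becomes a block of $Pr$ rules below a block of $Lr'$ rules. Since axioms have no premises they necessarily sit at the leaves above the $Lr'$ block, and the resulting proof has exactly the shape of a normalised \MuACLst\ proof.

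The main obstacle is to verify that the second and third phases do not destroy the separation produced by the first: the moves used in Lemma~\ref{thm:G-CS} only swap a $Gr$ rule with an $Sr\cup Cr$ rule, so they stay within the lower part and never lift an $Sr\cup Cr$ rule above an $Lr'\cup Pr$ rule, and the moves used in Lemma~\ref{thm:P-L} only swap a $Pr$ rule with an $Lr'$ rule, so they stay within the upper part; some care is also needed because $Gr$ belongs to both sides of Lemma~\ref{thm:LP-GCS}, which one handles by treating $Gr$ as an intermediate layer and checking that the swaps in Lemmas~\ref{thm:LP-GCS} and \ref{thm:G-CS} are consistent with that. If one prefers, the whole argument can instead be phrased as a single terminating rewriting using a multiset measure that counts, over all branches, the adjacent rule pairs whose layer strictly decreases towards the root. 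The remaining point is routine: each of Lemmas~\ref{thm:LP-GCS}, \ref{thm:G-CS}, \ref{thm:P-L} introduces ($*$-cut) only if its input does and preserves the proved sequent, so the normalised proof obtained at the end is equivalent to the original and uses ($*$-cut) only if the original does --- which is the statement of the lemma.
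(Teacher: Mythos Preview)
Your proposal is correct and follows essentially the same three-phase strategy as the paper: first apply Lemma~\ref{thm:LP-GCS} to separate the $Lr'\cup Pr$ block from the $Sr\cup Cr\cup Gr$ block, then apply Lemma~\ref{thm:G-CS} inside the lower block to lift $Gr$ above $Sr\cup Cr$, and Lemma~\ref{thm:P-L} inside the upper block to push $Pr$ below $Lr'$. The only cosmetic difference is that the paper places $Gr$ entirely in the lower block after phase~1 (obtaining $\Pi_{Lr'\cup Pr}$ over $\Pi_{Gr\cup Cr\cup Sr}$) rather than describing it as an ``interface'' layer, but this does not affect the argument since phase~2 handles that separation anyway; your remarks about termination and preservation of ($*$-cut) are exactly what is needed.
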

\begin{proof}
	Given a proof $\Pi$ in \MuACLst\  for the sequent, we rewrite it using \autoref{thm:LP-GCS} until applicable, obtaining %
	\[
	{\normalsize \begin{matrix}
		\Pi_{\{ \text{($\Sigma$-Ax), (I-right)} \}}\\
		\begin{matrix}
			\qquad\qquad\ 
			&
			\rotatebox[origin=c]{90}{
				\ ........... \ 
			}
			&
			\begin{matrix}
				\Pi_{Lr' \cup Pr}\\
				\Pi_{Gr \cup Cr \cup rS}
			\end{matrix}
		\end{matrix}\\
		\Omega; \Theta, \Delta, \Sigma \vdash \sigma
	\end{matrix}
}	\]	
	We rewrite $\Pi_{Gr \cup Cr \cup Sr}$ using \autoref{thm:G-CS},
	and $\Pi_{Lr' \cup Pr}$ using \autoref{thm:P-L} until applicable, obtaining a normalised proof.
\end{proof}

We now establish some auxiliary results about reordering rules in \MuACLs.

\begin{lem}\label{thm:merge-spend}
	Any \MuACLs\ derivation $\Pi$ where ($\linearcontract$-left) is applied immediately after ($\linearcontract$-split) can be transformed in an equivalent derivation $\Pi'$ where the two rule applications are swapped.
	In addition, the equivalent derivation uses ($*$-cut) only if the original one does.
\end{lem}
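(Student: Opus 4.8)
The plan is a straightforward rule-permutation argument, in the style of the proof of Lemma~\ref{thm:P-L}. Since both ($\linearcontract$-split) and ($\linearcontract$-left) have a single premise, there is exactly one configuration to analyse, and the first thing I would do is pin it down. Reading $\Pi$ bottom-up, its lowest rule is ($\linearcontract$-left): it consumes some contract $\bar\delta \linearcontract \bar\delta'$ from the $\Theta$-zone, under the side condition $\bar\delta \subseteq \bar\delta'$, and moves $\bar\delta'$ into the $\Delta$-zone. Its premise is the conclusion of the ($\linearcontract$-split) sitting immediately above it; since $\bar\delta \linearcontract \bar\delta'$ is no longer present in that sequent, ($\linearcontract$-split) must recombine two \emph{other} occurrences $\delta \linearcontract \delta'$ and $\delta'' \linearcontract \delta'''$ of the $\Theta$-zone into $\delta \otimes \delta'' \linearcontract \delta' \otimes \delta'''$. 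So the bottom fragment of $\Pi$ has the form
\[
\prftree[r]{($\linearcontract$-left)}{\bar\delta \subseteq \bar\delta'}{
  \prftree[r]{($\linearcontract$-split)}
    {\Omega; \Theta, \delta \otimes \delta'' \linearcontract \delta' \otimes \delta''', \Delta, \bar\delta', \Sigma \vdash \sigma}
    {\Omega; \Theta, \delta \linearcontract \delta', \delta'' \linearcontract \delta''', \Delta, \bar\delta', \Sigma \vdash \sigma}}
  {\Omega; \Theta, \delta \linearcontract \delta', \delta'' \linearcontract \delta''', \bar\delta \linearcontract \bar\delta', \Delta, \Sigma \vdash \sigma}
\]
with the rest of $\Pi$ grafted above the topmost sequent.

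I would then obtain $\Pi'$ by keeping that upper part untouched and replacing the fragment above by
\[
\prftree[r]{($\linearcontract$-split)}{
  \prftree[r]{($\linearcontract$-left)}{\bar\delta \subseteq \bar\delta'}
    {\Omega; \Theta, \delta \otimes \delta'' \linearcontract \delta' \otimes \delta''', \Delta, \bar\delta', \Sigma \vdash \sigma}
    {\Omega; \Theta, \delta \otimes \delta'' \linearcontract \delta' \otimes \delta''', \bar\delta \linearcontract \bar\delta', \Delta, \Sigma \vdash \sigma}}
  {\Omega; \Theta, \delta \linearcontract \delta', \delta'' \linearcontract \delta''', \bar\delta \linearcontract \bar\delta', \Delta, \Sigma \vdash \sigma}
\]
Its conclusion is syntactically the same sequent as before, and its single open premise is again $\Omega; \Theta, \delta \otimes \delta'' \linearcontract \delta' \otimes \delta''', \Delta, \bar\delta', \Sigma \vdash \sigma$, so regrafting the unchanged upper part of $\Pi$ on top yields a valid \MuACLs\ proof of the same end-sequent. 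Since the rewriting neither introduces nor removes an application of ($*$-cut), the extra clause about ($*$-cut) follows at once.

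The one point to be careful about — and, I expect, the only (minor) obstacle — is justifying that the occurrences touched by the two rules are genuinely disjoint, so that $\Theta$-zone multiplicities still match after the swap. In $\Pi$ the contract $\bar\delta \linearcontract \bar\delta'$ consumed by ($\linearcontract$-left) has already been removed from the $\Theta$-zone of the sequent on which ($\linearcontract$-split) operates, so the pair $\delta \linearcontract \delta', \delta'' \linearcontract \delta'''$ that ($\linearcontract$-split) recombines is made of occurrences distinct from it, even when some of those contracts happen to be syntactically equal formulas; and all three occurrences appear together in the shared conclusion — which is exactly what ($\linearcontract$-split) followed by ($\linearcontract$-left) needs in $\Pi'$. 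As ($\linearcontract$-split) acts only on the $\Theta$-zone and ($\linearcontract$-left) additionally acts only on the $\Delta$-zone, depositing $\bar\delta'$ there untouched by ($\linearcontract$-split), nothing else couples the two steps, and the permutation is sound.
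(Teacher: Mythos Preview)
Your proposal is correct and follows essentially the same approach as the paper: both display the unique configuration of ($\linearcontract$-left) followed by ($\linearcontract$-split) and then exhibit the swapped derivation with the same premise and conclusion. Your additional remark about the disjointness of the occurrences in the $\Theta$-zone is a welcome clarification that the paper leaves implicit.
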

\begin{proof}
	Let $\Pi$ be
	\[
	{\normalsize \prftree[r]
	{($\linearcontract$-left)}
	{\delta_0 \subseteq \delta_0'}
	{
		\prftree[r]
		{($\linearcontract$-split)}
		{\Omega; \Theta, \delta \otimes \delta'' \linearcontract \delta' \otimes \delta''',  \Delta, \delta_0', \Sigma \vdash \sigma}
		{\Omega; \Theta, \delta \linearcontract \delta', \delta'' \linearcontract \delta''', \Delta, \delta_0', \Sigma \vdash \sigma}
	}
	{\Omega; \Theta, \delta_0 \linearcontract \delta_0', \delta \linearcontract \delta', \delta'' \linearcontract \delta''', \Delta, \Sigma \vdash \sigma}
}	\]
	The derivation $\Pi'$ then is
	\[
	{\normalsize \prftree[r]
	{($\linearcontract$-split)}
	{
		\prftree[r]
		{($\linearcontract$-left)}
		{\delta_0 \subseteq \delta_0'\quad}
		{\Omega; \Theta, \delta \otimes \delta'' \linearcontract \delta' \otimes \delta''',  \Delta, \delta_0', \Sigma \vdash \sigma}
		{\Omega; \Theta, \delta_0 \linearcontract \delta_0', \delta \otimes \delta'' \linearcontract \delta' \otimes \delta''', \Delta, \Sigma \vdash \sigma}
	}
	{\Omega; \Theta, \delta_0 \linearcontract \delta_0', \delta \linearcontract \delta', \delta'' \linearcontract \delta''', \Delta, \Sigma \vdash \sigma}
}\qedhere	\]
\vspace{-7mm}
\end{proof}

\begin{lem}\label{thm:spend-spendmerge}
	If 
	\[
	{\normalsize \prftree[r]
	{($\linearcontract$-left)}
	{\delta \subseteq \delta'}
	{
		\prftree[r]
		{($\linearcontract$-left)}
		{\delta'' \subseteq \delta'''}
		{
			\prftree[noline]
			{\Omega; \Theta, \Delta, \delta', \delta''' \Sigma \vdash \sigma}
		}
		{\Omega; \Theta, \delta'' \linearcontract \delta''', \Delta, \delta' \Sigma \vdash \sigma}
	}
	{\Omega; \Theta, \delta \linearcontract \delta', \delta'' \linearcontract \delta''', \Delta, \Sigma \vdash \sigma}}
	\]
	is a \MuACLs\ derivation, then so is also
	\[
	{\normalsize \prftree[r]
	{($\linearcontract$-split)}
	{
		\prftree[r]
		{($\linearcontract$-left)}
		{\delta \otimes \delta'' \subseteq \delta' \otimes \delta'''}
		{
			\prftree[noline]
			{\Omega; \Theta, \Delta, \delta', \delta''' \Sigma \vdash \sigma}
		}
		{\Omega; \Theta, \delta \otimes \delta'' \linearcontract \delta' \otimes \delta''', \Delta \Sigma \vdash \sigma}
	}
	{\Omega; \Theta, \delta \linearcontract \delta', \delta'' \linearcontract \delta''', \Delta, \Sigma \vdash \sigma}}
	\]
\end{lem}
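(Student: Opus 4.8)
The plan is to check directly that the second proof tree is a legal \MuACLs\ derivation, using as its single open leaf the very subderivation that tops the first tree. Both trees have the same conclusion, $\Omega; \Theta, \delta \linearcontract \delta', \delta'' \linearcontract \delta''', \Delta, \Sigma \vdash \sigma$, and (modulo reading $\otimes$ as multiset union in the linear zones) the same open assumption, $\Omega; \Theta, \Delta, \delta', \delta''', \Sigma \vdash \sigma$; so the whole lemma reduces to verifying that the single ($\linearcontract$-left) step and the single ($\linearcontract$-split) step of the target tree are instances of their respective rules, sharing that same open leaf.

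First I would read off the side conditions that the hypothesis provides: the outer ($\linearcontract$-left) of the first tree carries $\delta \subseteq \delta'$ and the inner one carries $\delta'' \subseteq \delta'''$. Since $\subseteq$ is multiset inclusion and $\otimes$ in the linear zones is multiset union, monotonicity of $\uplus$ gives $\delta \otimes \delta'' \subseteq \delta' \otimes \delta'''$, which is exactly the side condition demanded by the ($\linearcontract$-left) step of the target tree. Its premise, $\Omega; \Theta, \Delta, \delta' \otimes \delta''', \Sigma \vdash \sigma$, is — again under the tensor-as-multiset reading of the $\Delta$-zone — the same sequent as $\Omega; \Theta, \Delta, \delta', \delta''', \Sigma \vdash \sigma$, i.e.\ the open leaf inherited from the first tree. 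Finally, one application of ($\linearcontract$-split), which has no side condition, rewrites $\delta \otimes \delta'' \linearcontract \delta' \otimes \delta'''$ back into the pair $\delta \linearcontract \delta'$, $\delta'' \linearcontract \delta'''$, yielding the desired conclusion. No rule outside \MuACLs\ is used (in particular no ($*$-cut)), so the target tree is a \MuACLs\ derivation as claimed.

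I do not expect a real obstacle: the lemma just says that cashing in two contracts in succession can be replayed by cashing in their tensor-merge and then splitting it. The only thing that needs a moment's care is the bookkeeping in the $\Delta$-zone — namely that $\Delta, \delta' \otimes \delta'''$ and $\Delta, \delta', \delta'''$ denote the same multiset, so that the shared subderivation plugs in unchanged; a reader who prefers not to lean on that convention can insert a harmless ($\otimes$-left-$\Delta$) step between the two — together with the elementary observation that multiset inclusion is preserved under multiset union.
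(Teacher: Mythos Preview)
Your proposal is correct and follows essentially the same approach as the paper: the paper's proof is the single line that $\delta \subseteq \delta'$ and $\delta'' \subseteq \delta'''$ clearly imply $\delta \otimes \delta'' \subseteq \delta' \otimes \delta'''$, which is exactly your key step. Your additional remarks about the bookkeeping in the $\Delta$-zone and the optional ($\otimes$-left-$\Delta$) insertion are helpful elaborations but not needed beyond what the paper already assumes via its multiset convention for $\otimes$.
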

\begin{proof}
	$\delta \subseteq \delta'$ and $\delta'' \subseteq \delta'''$ 
	clearly imply
	$\delta \otimes \delta'' \subseteq \delta' \otimes \delta'''$.
\end{proof}

We extend the definition of normal forms to \MuACLs\ by adding ($*$-cut) to $Lr$ in~\autoref{def:normalforms}.
Hereafter,
we can only consider normal proofs, as stated by the following theorem 
(subsuming~\autoref{thm:col-normal-ncr}).
Recall that initial sequents are
of the form $\Omega; \Sigma \vdash \sigma$.
\begin{restatable}[Normal proofs]{thm}{CLNLnormalform}\label{thm:col-normal}
	Let $\Omega; \Sigma \vdash \sigma$ be an initial sequent. Then $\Omega; \Sigma \vdash \sigma$ is valid in \MuACL\ (resp. \MuACLs) if and only if a \MuACL\ (resp. \MuACLs) normal proof exists for it.
\end{restatable}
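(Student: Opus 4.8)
The plan is to prove Theorem~\ref{thm:col-normal} in two stages, handling the \MuACLs\ case (which subsumes the plain \MuACL\ case, since a \MuACL\ proof is just a \MuACLs\ proof not using ($*$-cut), and all the rewriting lemmata preserve non-use of ($*$-cut)). The ``if'' direction is trivial: a normal proof is in particular a \MuACLs\ proof, so if one exists the sequent is valid. All the work is in the ``only if'' direction, and the strategy is to route it through the auxiliary logic \MuACLst\ via the chain of lemmata already set up in the excerpt.

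First I would take a \MuACLs\ proof $\Pi$ of the initial sequent $\Omega;\Sigma\vdash\sigma$ and apply~\autoref{MuACLsto2} to obtain an equivalent \MuACLst\ proof, using ($*$-cut) only if $\Pi$ did. Then I would invoke~\autoref{thm:col-normalization2} to rewrite it into a \emph{normalised} \MuACLst\ proof, i.e.\ one layered (reading bottom-up) as $\Pi_{Cr\cup Sr}$, then $\Pi_{Gr}$, then $\Pi_{Pr}$, then $\Pi_{Lr'}$, then the axiom leaves. Next I would translate this back to \MuACLs: every application of ($\otimes$-right') gets replaced by the little ($\otimes$-right)+(L-Weak) derivation exhibited in the proof of~\autoref{MuACLsto2}. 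This reintroduces some (L-Weak) steps, which are in $Sr$, but crucially it leaves the gross layering $\Pi_{Cr\cup Sr}$ / $\Pi_{Gr}$ / $\Pi_{Pr}$ / $\Pi_{Lr}$ intact (the newly inserted (L-Weak) rules sit immediately above the (now genuine) ($\otimes$-right), inside the $Lr$ block, which is fine since $Lr\supseteq\{(\otimes\text{-right})\}$ and (L-Weak) commutes down past the linear rules if desired, or can simply be absorbed into the $Cr\cup Sr$ block by one more pass of the reordering lemmata). Since the initial sequent has $\Theta=\Sigma$-component $=\Sigma$ and $\Theta,\Delta=\emptyset$ at the root, I would remark that the $\Pi_{Cr\cup Sr}$ block at the bottom produces exactly an $\Omega_G$ from $\Omega$ (only $G$-labelled non-linear formulas survive a bottom-up pass through $\Pi_{Gr}$), matching the top of the pictures in~\autoref{fig:normals}.

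The remaining gap is that the $\Pi_{Pr}$ block in a normalised proof may interleave ($\linearcontract$-left) and ($\linearcontract$-split) arbitrarily, whereas normal form~2 demands the rigid shape ``a stack of ($\linearcontract$-split)'s, then a \emph{single} ($\linearcontract$-left), then the linear derivation'', and normal form~1 demands \emph{no} contractual rule at all. This is where~\autoref{thm:merge-spend} and~\autoref{thm:spend-spendmerge} come in: the first lets me push every ($\linearcontract$-left) below every ($\linearcontract$-split) that sits above it, so all ($\linearcontract$-split)'s cluster at the bottom of the $Pr$-block; the second lets me coalesce any two adjacent ($\linearcontract$-left) applications into a single ($\linearcontract$-left) applied to a tensored contract (with the side condition $\delta\otimes\delta''\subseteq\delta'\otimes\delta'''$ following from $\delta\subseteq\delta'$ and $\delta''\subseteq\delta'''$). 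Iterating these two moves drives the $Pr$-block to the shape ``($\linearcontract$-split)$^*$ then ($\linearcontract$-left)$^{\le 1}$''. If zero ($\linearcontract$-left) (equivalently zero contractual rules at all, since a lone ($\linearcontract$-split) stack with nothing to discharge can be shown vacuous — or simply: if the $Pr$-block is empty) we land in normal form~1; otherwise in normal form~2. This case split, together with noticing that the picture in normal form~1 is the degenerate case $\Theta=\emptyset$ of normal form~2, completes the decomposition.

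I expect the main obstacle to be bookkeeping the precise layering as the proof is shuttled between \MuACLs\ and \MuACLst: the ($\otimes$-right)$\leftrightarrow$($\otimes$-right') translation inserts (L-Weak) steps that technically belong to $Sr$ but appear in the ``wrong place'' relative to the clean four-block stratification, so I would need one final clean-up pass of~\autoref{thm:LP-GCS} and~\autoref{thm:G-CS} (which commute $Sr\cup Cr\cup Gr$ rules down past $Lr'\cup Gr\cup Pr$ rules and collapse $Gr$-after-$(Cr\cup Sr)$ patterns) to re-normalise. A second delicate point is verifying that every rewriting step genuinely preserves the proved sequent \emph{and} the ``uses ($*$-cut) only if the original did'' invariant, so that the plain \MuACL\ statement drops out as the ($*$-cut)-free special case — but each cited lemma already carries exactly that clause, so it is a matter of threading it through the composition. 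Everything else is routine induction on proof structure.
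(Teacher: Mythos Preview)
Your overall strategy matches the paper's, and your handling of the $Pr$-block via Lemmata~\ref{thm:merge-spend} and~\ref{thm:spend-spendmerge} is correct (modulo a verbal slip: Lemma~\ref{thm:merge-spend} pushes ($\linearcontract$-left) \emph{above} ($\linearcontract$-split), not below). However, there is a genuine gap in your back-translation from \MuACLst\ to \MuACLs. Replacing each ($\otimes$-right') by the ($\otimes$-right)+(L-Weak) derivation from Lemma~\ref{MuACLsto2} scatters (L-Weak) applications throughout what was the $\Pi_{Lr'}$ block, and your proposed cleanup---``one more pass of~\autoref{thm:LP-GCS} and~\autoref{thm:G-CS}''---does not work: those lemmata are stated for \MuACLst\ with ($\otimes$-right'), and the analogous commutation of (L-Weak) past ($\otimes$-right) in \MuACLs\ \emph{fails}, precisely because ($\otimes$-right) shares the non-linear context $\Omega$ between both premises. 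This sharing is the very reason \MuACLst\ was introduced, so invoking the \MuACLst\ reordering lemmata after the translation is circular.

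The paper closes this gap with a small preprocessing step you omit: \emph{before} passing to \MuACLst, rewrite every ($\Sigma$-Ax) leaf $\Omega; \res@\usr \vdash \res@\usr$ as $\;\res@\usr \vdash \res@\usr$ followed by repeated (L-Weak). After normalisation in \MuACLst, these (L-Weak) applications land in the bottom $\Pi_{Cr\cup Sr}$ block, so every leaf of the normalised proof has $\Omega=\emptyset$; since no rule in $Lr'$ enlarges $\Omega$ (they only split or pass it through), the entire $\Pi_{Lr'}$ block then has $\Omega=\emptyset$. But with $\Omega=\emptyset$ the rules ($\otimes$-right') and ($\otimes$-right) coincide, so the normalised \MuACLst\ proof is \emph{already} a \MuACLs\ proof---no translation or cleanup is needed. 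This also delivers the empty non-linear context required by the intermediate sequents $\Delta,\Sigma\vdash\sigma$ and $\Theta,\Sigma\vdash\sigma$ in~\autoref{fig:normals}, which your sketch does not otherwise account for.
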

\begin{proof}
	If a normalised proof exists, then the sequent is valid. 
	Assume $\Omega; \Sigma \vdash \sigma$ 
	is proved in \MuACLs\ by $\Pi$.
	First, we rewrite every occurrence of ($\Sigma$-Ax) where $\Omega \neq \emptyset$ as follows
	\[
	{\normalsize \prftree[double, r]
	{(L-Weak)}
	{	
		\prftree[r]
		{($\Sigma$-Ax)}
		{}
		{\res@\usr \vdash \res@\usr}
	}
	{\Omega; \res@\usr \vdash \res@\usr}}
	\]
	obtaining the equivalent proof 
	$\Pi'$. 
	
	Then, we rewrite 
	$\Pi'$
	as an equivalent proof 
	$\Pi_2$
	in \MuACLst\  using \autoref{MuACLsto2}.
	By \autoref{thm:col-normalization2}, the following normalised proof 
	$\Pi_2'$
	exists 
	{\normalsize \[
	\begin{matrix}
		\Pi_{\{ \text{($\Sigma$-Ax), (I-right)} \}}\\
		\begin{matrix}
			\qquad\quad
			&
			\rotatebox[origin=c]{90}{\ .................\ }
			&
			\begin{matrix}
			\\[-.25cm]
				\Pi_{Lr'}\\
				\Pi_{Pr}\\
				\Pi_{Gr}\\
				\Pi_{Cr \cup Sr}
			\end{matrix}
		\end{matrix}
		\\
		\Omega; \Sigma \vdash \sigma
	\end{matrix}
	\]
}	
Since no (L-Weak) rule appears above $\Pi_{Cr \cup Sr}$, and $\Omega = \emptyset$ in the leaves by construction,
	in the derivation $\Pi_{Lr'}$, the non-linear part of the sequent $\Omega$ is $\emptyset$. Thus, 
	$\Pi_2'$
	is a \MuACLs\ proof as well (note that 
	($\otimes$-right') and ($\otimes$-right) coincide when $\Omega = \emptyset$).
	
	If $\Pi_{Pr}$ is empty, then 
	$\Pi_2'$
	is in the normal form 1, otherwise there exist
	$\Omega_G$, $\Theta$, $\Delta$, $\Delta'$ such that 
	$\Pi_2'$
	is 
	as below on the left, and we can rewrite $\Pi_{Pr}$ using \autoref{thm:merge-spend} until applicable, obtaining the proof an the right. 
		\begin{gather*}
		{\normalsize \prfsummary[$\Pi_{Cr \cup Sr}$]
		{
			\prfsummary[$\Pi_{Gr}$]
			{
				\prfsummary[
				$
				\begin{matrix}
					\Pi_{Pr}
				\end{matrix}$
				]
				{
					\prftree
					{\Pi_{Lr \cup \{ \text{($\Sigma$-Ax), (I-right)} \}}}
					{\Delta', \Sigma \vdash \sigma}
				}
				{
					{\Theta, \Delta, \Sigma \vdash \sigma}
				}
			}
			{\Omega_G; \Sigma \vdash \sigma}
		}
		{\Omega; \Sigma \vdash \sigma}}
	\qquad
		{\normalsize \prfsummary[$\Pi_{Cr \cup Sr}$]
		{
			\prfsummary[$\Pi_{Gr}$]
			{
				\prfsummary[
				$
				\begin{matrix}
					\Pi_{\text{($\smlinearcontract$-left)}}\\
					\Pi_{\text{($\smlinearcontract$-split)}}
				\end{matrix}$
				]
				{
					\prftree
					{\Pi_{Lr \cup \{ \text{($\Sigma$-Ax), (I-right)} \}}}
					{\Delta', \Sigma \vdash \sigma}
				}
				{
					{\Theta, \Delta, \Sigma \vdash \sigma}
				}
			}
			{\Omega_G; \Sigma \vdash \sigma}
		}
		{\Omega; \Sigma \vdash \sigma}}
		\end{gather*}

	Finally, rewrite $\Pi_{\text{($\smlinearcontract$-left)}}$ using \autoref{thm:spend-spendmerge} until applicable, obtaining a proof in the normal form 2.
	For \MuACL\ the same holds, and the resulting proof does not use ($*$-cut).
\end{proof}

\subsection{Decidability of \MuACL\ and \MuACLs}

We now prove our main result, i.e., that \MuACL\ and \MuACLs\ are decidable.
We first focus on \MuACLs, as the case for \MuACL\ can be derived easily.

By~\autoref{thm:col-normal}, we only consider normal proofs.
In the following, we verify if a proof in the normal form 1 exists for an initial sequent, then we show how to reduce the normal form 2 case 
to the normal form 1 case.

\subsubsection{Solving the Normal Form 1}

\begin{lem}\label{thm:vvdashomega}
	If a sequent $\Omega; \Theta, \Delta, \Sigma \vdash \sigma$	is derivable from a sequent $\Omega'; \Theta', \Delta', \Sigma' \vdash \sigma'$ only using rules in $Cr \cup Sr$, then $\Omega \Vdash \omega$ holds for all $\omega \in \Omega'$. 
\end{lem}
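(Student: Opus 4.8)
The plan is to proceed by induction on the derivation $\mathcal{D}$ that witnesses $\Omega; \Theta, \Delta, \Sigma \vdash \sigma$ being obtained from the assumption $\Omega'; \Theta', \Delta', \Sigma' \vdash \sigma'$ using only rules of $Cr \cup Sr$ (taking that assumption to be the single open leaf of $\mathcal{D}$). The first step is to identify which rules of $Cr \cup Sr$ can actually conclude a mixed ($\vdash$) sequent: inspecting \autoref{fig:MuACLfull} together with \autoref{not:rulesets}, these are precisely (L-Weak) and (L-Cont) from $Sr$, and (L-$\land$-left1) and ($\Omega$-Cut) from $Cr$. The remaining rules of $Cr$ --- ($\top$-right), ($\Omega$-Ax), (Cont), (Weak), ($\land$-left1), ($\land$-left2), ($\rightarrow$-left), ($\rightarrow$-right) --- act only on purely non-linear ($\Vdash$) judgments, so in $\mathcal{D}$ they can occur only inside the closed left sub-derivation of an application of ($\Omega$-Cut). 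Consequently the path from the open leaf down to the root of $\mathcal{D}$ uses only those four rules, and the induction will be on the length of that path.

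For the base case ($\mathcal{D}$ empty) we have $\Omega = \Omega'$, and for $\omega \in \Omega'$, writing $\Omega' = \Omega'', \omega$, one application of ($\Omega$-Ax) followed by iterated (Weak) gives $\Omega' \Vdash \omega$. For the inductive step I would case-split on the last rule on the path. If it is (L-Weak), with premise $\Omega_0; \dots$ and conclusion $\Omega_0, \omega_0; \dots$, the induction hypothesis gives $\Omega_0 \Vdash \omega$ for every $\omega \in \Omega'$, and a single (Weak) upgrades this to $\Omega_0, \omega_0 \Vdash \omega$. The (L-Cont) case is identical using (Cont), and the (L-$\land$-left1) case is handled by applying ($\land$-left1) to the distinguished conjunct, turning $\Omega_0, \omega_0 \Vdash \omega$ into $\Omega_0, \omega_0 \land \omega_0' \Vdash \omega$. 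These three cases are pure bookkeeping.

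The case ($\Omega$-Cut) is the one I expect to be the real obstacle. There the conclusion is $\Omega_a, \Omega_b; \dots$, the right premise $\Omega_b, \omega_0; \dots$ carries the path to the open leaf, and the left premise $\Omega_a \Vdash \omega_0$ is a closed non-linear derivation, hence a valid $\Vdash$-judgment. The induction hypothesis applied to the shorter right sub-derivation gives $\Omega_b, \omega_0 \Vdash \omega$ for every $\omega \in \Omega'$; combining this with $\Omega_a \Vdash \omega_0$ amounts to cutting on $\omega_0$ inside the purely non-linear calculus. The plan is to close this gap by invoking admissibility of cut for $\Vdash$ --- a standard cut-elimination argument for the intuitionistic fragment over $\top$, $\land$ and $\rightarrow$, with (Weak) and (Cont) used to reconcile the two contexts --- which yields $\Omega_a, \Omega_b \Vdash \omega$, exactly as required. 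Thus the whole argument is routine except for this last step, where the payload is precisely the availability of cut-admissibility (equivalently, cut-elimination) for the non-linear fragment of \MuACL.
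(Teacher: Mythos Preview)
Your proposal is correct and follows exactly the paper's approach: the paper's proof is simply ``Trivial by rule induction,'' and you have carefully spelled out that induction. Your observation that the ($\Omega$-Cut) case hinges on cut admissibility for the non-linear $\Vdash$ fragment is accurate and makes explicit a detail the paper glosses over; since that fragment is a standard intuitionistic sequent calculus over $\top$, $\land$, $\rightarrow$ (as the paper itself remarks later when appealing to decidability of intuitionistic propositional logic), this step is indeed routine.
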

\begin{proof}
	Trivial by rule induction.
\end{proof}

The existence of the derivation $\Pi_{Cr \cup Sr}$ can be easily verified.
Actually, such a derivation from $\Omega_G; \Sigma \vdash \sigma$ to $\Omega; \Sigma \vdash \sigma$ exists iff $\Omega \Vdash G(\delta)$ for each $G(\delta) \in \Omega_G$.
We actually prove a stronger fact: we give a specific, computable, $\Omega_\star$ that subsumes all the possible cases.

In the following, given a multiset $X$, we write $X^n$ for $\uplus_{i = 1}^n X$.
Moreover, given two linear formulas $x$ and $y$, we identify $\{x \otimes y\}$ with $\{x \} \uplus \{y \}$ as always (given the commutativity and associativity of $\otimes$), and we also write $x^n$ for $\otimes_{i = 1}^n x$.
\begin{restatable}{lem}{CLNLPiC}\label{thm:firsttostar}
	A normal proof exists for $\Omega; \Sigma \vdash \sigma$ if and only if a proof in the same normal form exists for $\Omega_\star; \Sigma \vdash \sigma$  where $\Omega_\star$ contains a single occurrence of 
	every
	 $G(\delta)$ and $G(\theta)$ such that $\Omega \Vdash G(\delta)$ and $\Omega \Vdash G(\theta)$.
	In addition, $\Omega_\star$ can be effectively built starting from $\Omega$.
\end{restatable}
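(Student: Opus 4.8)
The plan is to prove the equivalence by establishing the two implications announced as properties $(i)$ and $(ii)$ just above the statement. The one observation that makes everything run smoothly is that the $\Vdash$-fragment of \MuACL\ is a \emph{cut-free} sequent calculus for $\{\top,\land,\rightarrow\}$ over the atoms $p(\ldots)$, $G(\delta)$, $G(\theta)$, and therefore enjoys the subformula property. From this I would first record that $\Omega_\star$ is well defined and computable: if $\Omega\Vdash G(\delta)$ then, $G(\delta)$ being atomic, it must occur as a subterm of $\Omega$ (and likewise for $G(\theta)$), so only finitely many $G$-formulas are provable from the finite $\Omega$; and each judgement $\Omega\Vdash G(\delta)$ is decidable by bounded proof search over the subterms of $\Omega$. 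Hence $\Omega_\star$ --- which keeps exactly one occurrence of each such formula --- is a finite multiset effectively obtainable from $\Omega$.

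For the direction matching $(i)$ --- a proof in one of the two normal forms for $\Omega_\star;\Sigma\vdash\sigma$ yields a normal proof for $\Omega;\Sigma\vdash\sigma$ --- I would take such a $\Pi_\star$, write $\Omega_\star=\{G(\xi_1),\ldots,G(\xi_k)\}$ with each $\xi_i$ a $\delta$ or a $\theta$, and splice below its conclusion a derivation built only from rules of $Cr\cup Sr$: its root is $\Omega;\Sigma\vdash\sigma$, and reading upward one first uses (L-Cont) repeatedly to expand $\Omega$ into $k$ copies of itself, then applies $(\Omega\text{-cut})$ $k$ times --- the $i$-th with left premise a derivation of $\Omega\Vdash G(\xi_i)$, which exists since $G(\xi_i)\in\Omega_\star$ --- to replace the $i$-th copy of $\Omega$ by the single formula $G(\xi_i)$, reaching $\Omega_\star;\Sigma\vdash\sigma$; attach $\Pi_\star$ there (when $k=0$ a single (L-Weak) step suffices). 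The spliced derivation and the bottom $\Pi_{Cr\cup Sr}$ block of $\Pi_\star$ both use only rules of $Cr\cup Sr$, so they merge into a single such block, and the result is a proof of $\Omega;\Sigma\vdash\sigma$ in the same normal form.

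For the direction matching $(ii)$, I would start from a normal proof $\Pi$ for $\Omega;\Sigma\vdash\sigma$ and decompose it (cf.~\autoref{fig:normals}) as a bottom block $\Pi_{Cr\cup Sr}$ deriving $\Omega;\Sigma\vdash\sigma$ from $\Omega_G;\Sigma\vdash\sigma$ --- where $\Omega_G$ is, by the normal-form convention, a multiset of $G$-formulas --- above which sits $\Pi_{Gr\cup Sr}$ and then the contractual/linear part. By \autoref{thm:vvdashomega}, $\Omega\Vdash\omega$ for every $\omega\in\Omega_G$, so by the subterm observation each $G(\delta)$ or $G(\theta)$ occurring in $\Omega_G$ lies in $\Omega_\star$, with multiplicity one there. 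Since the rules of $\Pi_{Gr\cup Sr}$ are only (L-Cont), (L-Weak), (G-left-$\theta$), (G-left-$\delta$) and its top sequent has empty non-linear context, each $G(\xi)\in\Omega_G$ is consumed by a $G$-left rule some number $\nu_\xi\ge 0$ of times (counting contraction copies) and its remaining occurrences are weakened. I would then rebuild this block starting from $\Omega_\star;\Sigma\vdash\sigma$: for each $\xi$ with $\nu_\xi\ge1$ apply (L-Cont) to turn the single occurrence of $G(\xi)$ in $\Omega_\star$ into $\nu_\xi$ copies, apply (L-Weak) to discard every $G$-formula of $\Omega_\star$ with $\nu_\xi=0$, and then replay the $G$-left applications; the resulting multiset of $\theta$'s in $\Theta$ and $\delta$'s in $\Delta$ is unchanged, so this reaches verbatim the same sequent ($\Delta,\Sigma\vdash\sigma$ in normal form $1$, or $\Theta,\Delta',\Sigma\vdash\sigma$ in normal form $2$), on top of which the original contractual and linear sub-derivations are reattached unchanged, giving a proof of $\Omega_\star;\Sigma\vdash\sigma$ in the same normal form (with empty bottom $\Pi_{Cr\cup Sr}$).

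Putting the two directions together gives the equivalence; the computability of $\Omega_\star$ was settled in the first paragraph, and the whole argument is uniform for \MuACL\ and \MuACLs\ since the $(*\text{-cut})$ rule lives in the linear part, which is left untouched. I expect the delicate point to be direction $(ii)$: one must argue that in a normalised proof the context $\Omega_G$ contains nothing but $G$-formulas, that each of them is provable from $\Omega$ and hence a subterm of it (this is exactly where the cut-freeness/subformula property of the $\Vdash$-fragment is invoked), and that the duplicate-then-consume pattern of the original $\Pi_{Gr\cup Sr}$ is faithfully reproducible from the single-occurrence context $\Omega_\star$ using only (L-Cont) and (L-Weak). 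The $(\Omega\text{-cut})$-and-(L-Cont) construction of direction $(i)$ is, by contrast, routine.
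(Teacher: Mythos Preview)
Your proposal is correct and follows essentially the same approach as the paper: direction $(i)$ via (L-Cont) followed by repeated $(\Omega\text{-cut})$ with the $\Omega\Vdash G(\xi_i)$ premises, direction $(ii)$ via \autoref{thm:vvdashomega} plus structural rules, and computability via the subformula property of the cut-free $\Vdash$-calculus (the paper phrases the last point as an appeal to decidability of intuitionistic propositional logic, but the underlying observation is the same).

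One small simplification worth noting: in direction $(ii)$ you do more work than necessary. Rather than counting the consumption multiplicities $\nu_\xi$ and rebuilding $\Pi_{Gr\cup Sr}$ from scratch, the paper simply observes that every element of the original $\Omega_G$ lies in $\Omega_\star$ (directly by \autoref{thm:vvdashomega} and the definition of $\Omega_\star$ --- no subterm argument is needed at this step), and then bridges $\Omega_\star$ to $\Omega_G$ with (L-Cont) for the extra occurrences and (L-Weak) for the missing ones, reusing the \emph{entire} original derivation $\Pi'$ above $\Omega_G;\Sigma\vdash\sigma$ unchanged. Your reconstruction is sound, but this shortcut avoids the bookkeeping.
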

\begin{proof}
	Assume a proof $\Pi$ exists for $\Omega_\star; \Sigma \vdash \sigma$, and
	let $\Omega_\star$ be 
	\[
	{\normalsize \{ G(\theta_i) \mid i \in [1, n] \} \cup \{ G(\delta_i) \mid i \in [1, m] \}}
	\]
	Then a proof of $\Omega, \Sigma \vdash \sigma$ is
\vspace{-5mm}
	\[
	{\normalsize \prftree[r, double]
	{(L-Cont)}
	{
		\prftree[r]
		{(CL-Cut)}
		{\Omega \vdash G(\theta_1)}
		{
			\prfsummary[$\Pi_{\{ \text{(CL-Cut)} \}}$]
			{
				\prftree[r]
				{(CL-Cut)}
				{\Omega \vdash G(\delta_1)\quad}
				{
					\prfsummary[$\Pi_{\{ \text{(CL-Cut)} \}}$]
					{
						\prftree
						{\Pi}
						{\Omega_\star; \Sigma \vdash \sigma}
					}
					{
							\Omega^{m-1},
							G(\theta_1), \dots, G(\theta_n),
							 G(\delta_1); \Sigma 
						\vdash \sigma}
				}
				{\Omega^{m}, G(\theta_1), \dots, G(\theta_n); \Sigma \vdash \sigma}
			}
			{\Omega^{n+m-1} G(\theta_1); \Sigma \vdash \sigma}		
		}
		{\Omega^{n+m}; \Sigma \vdash \sigma}
	}
	{\Omega; \Sigma \vdash \sigma}}
	\]	
	Assume that a normal proof $\Pi$ exists for $\Omega; \Sigma \vdash \sigma$ as follows,
	with $\Pi'$ in normal form too.
	\[
	{\normalsize \prfsummary[$\Pi_{Cr\cup Sr}$]
	{
		\prftree[]
		{\Pi'}
		{\Omega_G; \Sigma \vdash \sigma}
	}
	{\Omega; \Sigma \vdash \sigma}
}	\]
	
	By \autoref{thm:vvdashomega}, all the elements in $\Omega_G$ also occur in $\Omega_\star$ with a single occurrence.
	We write $\Omega_G = \Omega_{\star} \cup \Omega_{cont} \setminus \Omega_{weak}$ where $\Omega_{cont}$ contains the extra occurrences in $\Omega_G$ with respect to $\Omega_\star$, and $\Omega_{weak}$ contains the elements of $\Omega_\star$ that are not in $\Omega_G$.
	
	Then, the following is a proof for $\Omega_\star; \Sigma \vdash \sigma$, where the same normal form is maintained:
	\[
	{\normalsize \prftree[r, double]
	{L-Cont}
	{
		\prftree[r, double]
		{L-Weak}
		{
			\prftree
			{\Pi'}
			{\Omega_G; \Sigma \vdash \sigma}		
		}
		{\Omega_{\star}, \Omega_{cont}; \Sigma \vdash \sigma}
	}
	{\Omega_\star; \Sigma \vdash \sigma}}
	\]
Since $\Vdash$ is just the same as in the multiplicative fragment of intuitionistic logic, $\Omega_{\star}$ can be computed using the decision procedure for intuitionistic propositional logic of~\cite{decideMIL}.
\end{proof}
Since a proof in the normal form 1 contains no rule for 
$\linearcontract$,
 we can 
avoid
 considering $G(\theta)$ when computing $\Omega_\star$ (they are discarded by (L-Weak) rules in 
 every
  valid proof).

Consider now the proof obtained by composing the derivations  
$\Pi_{Lr \cup \{ \text{($\Sigma$-Ax), (I-right)}\}}$ and $\Pi_{Gr \cup Sr}$.
We give an algorithm for deciding if such a proof exists in \MuACLs.
\begin{restatable}{lem}{CLNLdecidefform}\label{thm:firststardec}
	An always terminating algorithm exists that, given $\Omega_\star, \Sigma$, and $\sigma$, decides if $\Omega_\star; \Sigma \vdash \sigma$ is provable in \MuACLs\ only using rules in $Gr \cup Sr \cup Lr \cup \{ \text{($\Sigma$-Ax), (I-right)}\}$.
\end{restatable}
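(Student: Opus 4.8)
The plan is to show that, in the restricted rule set $Gr \cup Sr \cup Lr \cup \{(\Sigma\text{-Ax}), (I\text{-right})\}$, a proof of the initial sequent $\Omega_\star; \Sigma \vdash \sigma$ must have, up to reordering, the shape used in the normal form~$1$ (with the bottom $Cr\cup Sr$ block reduced to $Sr$, since $Cr$ is unavailable), and then to match such proofs with reachability in an effectively constructible Petri net, which is decidable by~\cite{PetriReach}.

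First I would narrow down the possible proofs. Since the contractual rules $Pr$ and the classical rules $Cr$ are not available and $\Omega_\star$ contains only formulas of the form $G(\delta)$ and $G(\theta)$, the only rules that can act on the non-linear context are (L-Cont), (L-Weak) from $Sr$ and (G-left-$\delta$), (G-left-$\theta$) from $Gr$. A formula $G(\theta)$ moved into the linear part by (G-left-$\theta$) would land in the $\Theta$ component, which can only be emptied by a rule of $Pr$; hence in every such proof each $G(\theta) \in \Omega_\star$ is discarded by (L-Weak), and we may assume $\Omega_\star = G(\delta_1), \dots, G(\delta_m)$. By the same rule-permutation arguments used for~\autoref{thm:col-normal}, restricted to the available rules, any proof can be brought into the form of a bottom block of $Sr \cup Gr$ rules that turns $\Omega_\star; \Sigma \vdash \sigma$ into a purely linear sequent $\Delta, \Sigma \vdash \sigma$, where $\Delta$ is a multiset of copies of the $\delta_i$, followed by a derivation using only $Lr \cup \{(\Sigma\text{-Ax}), (I\text{-right})\}$. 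Fully decomposing the $\delta_i$ with the ($\otimes$-left) rules, the available linear implications are precisely the single implications $\res@\usr \multimap \res@\usr'$ occurring as subterms of some $\delta_i$; call this set $T$, which is finite since $\Res$ and $\Usr$ are. Because each $G(\delta_i)$ may be contracted arbitrarily often by (L-Cont) before being consumed by (G-left-$\delta$), each implication in $T$ may be used any number of times in the linear derivation.

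The heart of the argument is then a bookkeeping step along the lines of Kanovich~\cite{Kanovich94}: build the Petri net $N$ whose places are the atoms $\res@\usr$ with $\res \in \Res$, $\usr \in \Usr$, whose transitions are the elements of $T$ --- the transition $\res@\usr \multimap \res@\usr'$ removing one token from place $\res@\usr$ and putting one in $\res@\usr'$ --- with initial marking $\Sigma$ (read as a multiset over the places, $I$ contributing nothing) and target marking $\sigma$. I would prove that $\Omega_\star; \Sigma \vdash \sigma$ is provable in the restricted fragment if and only if $\sigma$ is reachable from $\Sigma$ in $N$: from a firing sequence one reads off how many times each $\delta_i$ is needed, forms $\Delta$ accordingly, simulates each transition firing by an application of ($\multimap$-left), and closes the leaves with ($\Sigma$-Ax) and ($I$-right), assembling the target with ($\otimes$-right); conversely, from a proof in which the cut ($*$-cut) has been permuted away --- equivalently, absorbed, since reachability is closed under concatenation of firing sequences --- one extracts a firing sequence by reading the ($\multimap$-left) applications bottom-up. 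As $N$, $\Sigma$ and $\sigma$ are computed effectively from the inputs and Petri net reachability is decidable by~\cite{PetriReach}, this yields the claimed always-terminating algorithm.

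The main obstacle is precisely this correspondence: one must check carefully that (i) choosing the multiset $\Delta$ freely truly corresponds to unbounded firing of the transitions of $T$, i.e.\ that $G$ plays here exactly the role of the exponential $!$ of linear logic; (ii) the exact-matching requirement for the target $\sigma$ in the linear sequent corresponds to reaching the marking $\sigma$ \emph{exactly} rather than merely covering it; and (iii) the linear cut ($*$-cut) in $Lr$ does not enlarge the set of provable sequents. All three are handled by standard permutation and cut-elimination arguments for the Horn fragment of linear logic, but (ii) is what forces the use of the full reachability decision procedure of~\cite{PetriReach} rather than the simpler coverability one.
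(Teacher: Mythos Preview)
Your overall strategy coincides with the paper's: both reduce provability of $\Omega_\star; \Sigma \vdash \sigma$ in the restricted fragment to Petri net reachability in the style of Kanovich~\cite{Kanovich94} and then invoke~\cite{PetriReach}. However, your explicit Petri net construction is not correct, and the gap is exactly at the step ``each implication in $T$ may be used any number of times''. Contracting $G(\delta_i)$ $k$ times yields $k$ copies of the \emph{whole} product $\delta_i$, hence $k$ copies of \emph{each} atomic implication occurring in it, and these counts are coupled. Since the linear rules in $Lr$ offer no way to discard an unused implication, the constituents of a given $\delta_i$ cannot be used an independent number of times. Concretely, take $\Omega_\star = \{G((a\multimap b)\otimes(b\multimap a))\}$, $\Sigma = a$, $\sigma = b$: in your net the transition $a\multimap b$ fires once and reaches $b$, yet the sequent is not provable, because any number of copies of the product supplies equally many $a\multimap b$ and $b\multimap a$, and linearity forces all of them to be consumed, returning to $a$. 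Your sentence ``from a firing sequence one reads off how many times each $\delta_i$ is needed'' is precisely where the argument breaks: the firing counts $(n_\ell)_{\ell\in T}$ need not be a non-negative integer combination of the occurrence vectors of the $\delta_i$, so no choice of $\Delta$ realises them.

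The paper sidesteps this by not constructing the net directly: it observes that under the rules in $Gr\cup Sr$ the operator $G$ behaves exactly like the linear-logic exponential $!$, translates $\Omega_\star;\Sigma\vdash\sigma$ to a sequent with $!$ in place of $G$, and then cites Kanovich's reduction of that fragment to Petri net reachability. If you want to keep an explicit construction, you must encode the coupling, e.g.\ by adding one auxiliary ``ticket'' place for each atomic implication $\ell$, a transition for each $G(\delta_i)$ that (with no input) produces tickets according to the multiplicities of the $\ell$'s in $\delta_i$, a transition for each $\ell = a\multimap b$ that consumes one token from $a$ and one $\ell$-ticket and produces one token in $b$, and requiring the target marking to have all ticket places empty.
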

\begin{proof}
	This result derives from a similar one by Kanovich~\cite{Kanovich94}, which is stated for a computational fragment of linear logic that coincides with the sequents that we consider in $\Pi_{Gr \cup Sr}$ and $\Pi_{Lr \cup \{ \text{($\Sigma$-Ax), (I-right)}\}}$.
	Kanovich considers \emph{simple products}, i.e., linear conjunctions of atomic predicates; \emph{Horn-implications}, i.e., linear implications of simple products; and \emph{!-Horn-implications}, i.e., Horn implications preceded by $!$ in the linear logical sense.
	Moreover, he defines \emph{!-Horn-sequents}, i.e., sequents with !-Horn-implications, Horn-implications and simple products as left parts and simple products as right part.
	
	A translation from $\Omega_\star; \Sigma \vdash \sigma$ to !-Horn-sequents is trivially defined: $\Sigma$ and $\sigma$ are simple products, while $\Omega_\star$ is translated by replacing $G$ with $!$ (recall that $\Omega_\star$ contains no contractual implications in the normal form 1).
	Indeed, because of the restriction we have in $\Pi_{Gr \cup Sr}$, the rules applicable to propositions preceded by $G$ are exactly to same of linear logic where $G$ stands for $!$.
	Thus, the sequent $\Omega_\star$; $\Sigma \vdash \sigma$ is provable if and only if the !-Horn-sequent is valid.
	Finally, the problem of checking the validity of a !-Horn sequent (and thus also of our computational sequent) is reduced in~\cite{Kanovich94} to reachability in Petri Nets, which can be decided using the algorithm proposed in~\cite{PetriReach}.
	Roughly, atomic proposition corresponds to places of the Petri Net, and linear implication to transitions.
	The number of tokens in a given place represents the occurrences of the corresponding atomic proposition, and changes according to linear implications that we can use ad libitum.
\end{proof}

\begin{lem}[Normal form 1 decidability]\label{thm:fstnfdecide}
	An always-terminating algorithm exists that decides if an initial sequent is provable in \MuACLs\ using a proof in the normal form 1.
\end{lem}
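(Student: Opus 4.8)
The plan is to obtain the decision procedure by composing two results already established in this appendix: \autoref{thm:firsttostar}, which replaces the non-linear context $\Omega$ by the canonical, effectively computable $\Omega_\star$, and \autoref{thm:firststardec}, which decides provability of a sequent using only the rules in $Gr \cup Sr \cup Lr \cup \{ \text{($\Sigma$-Ax), (I-right)} \}$. Throughout we work with an initial sequent $\Omega; \Sigma \vdash \sigma$ (so $\Theta$ and $\Delta$ are empty).

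The first step is to compute $\Omega_\star$ from $\Omega$ as prescribed by \autoref{thm:firsttostar}. This is effective because $\Vdash$ coincides with the multiplicative fragment of intuitionistic propositional logic, so $\Omega \Vdash G(\delta)$ can be decided with the decision procedure cited there for each relevant $G(\delta)$, and a single occurrence of each such formula is collected into $\Omega_\star$. Moreover, since a proof in the normal form $1$ never applies a rule of $Pr$, the formulas $G(\theta)$ are useless in such a proof — they can only be erased by an (L-Weak) step — so it is harmless, and convenient, to drop them when building $\Omega_\star$. By \autoref{thm:firsttostar}, a normal form $1$ proof exists for $\Omega; \Sigma \vdash \sigma$ if and only if one exists for $\Omega_\star; \Sigma \vdash \sigma$, so from here on I would work with $\Omega_\star; \Sigma \vdash \sigma$.

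The second step is to observe that a normal form $1$ proof of $\Omega_\star; \Sigma \vdash \sigma$ can be taken to use only rules in $Gr \cup Sr \cup Lr \cup \{ \text{($\Sigma$-Ax), (I-right)} \}$, i.e.\ that its bottom block $\Pi_{Cr \cup Sr}$ collapses to structural rules. This is in fact witnessed by the construction in the proof of \autoref{thm:firsttostar}: from a normal form $1$ proof of $\Omega; \Sigma \vdash \sigma$ one builds a normal form $1$ proof of $\Omega_\star; \Sigma \vdash \sigma$ whose $\Pi_{Cr \cup Sr}$ part consists solely of (L-Weak) and (L-Cont) applications (using \autoref{thm:vvdashomega} to see that the $G$-formulas occurring above are already contained in $\Omega_\star$). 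Conversely, any proof restricted to the rules in $Gr \cup Sr \cup Lr \cup \{ \text{($\Sigma$-Ax), (I-right)} \}$ is a fortiori in the normal form $1$. Hence the existence of a normal form $1$ proof of $\Omega; \Sigma \vdash \sigma$ is equivalent to the provability of $\Omega_\star; \Sigma \vdash \sigma$ using only those rules, which is decidable by \autoref{thm:firststardec}; composing the computation of $\Omega_\star$ with that algorithm gives the claimed always-terminating decision procedure.

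The step I expect to require the most care is the second one, namely justifying that over the saturated context $\Omega_\star$ no genuinely non-linear inference is needed, so that the bottom block reduces to structural rules. I would make this precise either by explicitly quoting the shape of the proof produced inside \autoref{thm:firsttostar} (bottom block built only from (L-Weak) and (L-Cont)), or, independently, by noting that $\Omega_\star$ contains neither conjunctions nor implications — which blocks the $\land$- and $\to$-left rules — and is closed under $\Vdash$-derivable $G$-formulas, which makes any ($\Omega$-Cut) applied in this region redundant; these two facts together exclude every rule of $Cr$ other than the structural rules $Sr$.
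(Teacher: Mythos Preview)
Your proposal is correct and follows exactly the paper's approach: the paper's own proof simply states that the result ``trivially derives from \autoref{thm:firsttostar} and \autoref{thm:firststardec}''. Your additional care in the second step---arguing that over $\Omega_\star$ the bottom block $\Pi_{Cr\cup Sr}$ collapses to structural rules, either by quoting the explicit (L-Weak)/(L-Cont) construction inside \autoref{thm:firsttostar} or by noting that $\Omega_\star$ contains only $G$-formulas---is a welcome unpacking of what the paper leaves implicit in the word ``trivially''.
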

\begin{proof}
	Trivially derives from~\autoref{thm:firsttostar} and \ref{thm:firststardec}.
\end{proof}

We recover~\autoref{thm:fstnfdecide-ncr} as a special case of the Lemma above.
\fstnfdecidencr*
\begin{proof}
It suffices to adapt Kanovich's encoding as follows, therefore forbidding the outcome of linear implications to be used in transitions.
For each atomic proposition $p$ we define two places of the Petri Net $p_s$ and $p_t$.
For each linear implication $\delta = \sigma \multimap \sigma'$ such that $G(\delta) \in \Omega_{\star}$, we define a transition in the Petri Net
which consumes the tokens from $p_s$, with $p \in \sigma$ and produces the ones for $p'_t$ for $p' \in \sigma'$.
Moreover, we add transitions from each $p_s$ to $p_t$ allowing atomic propositions to be taken as they are (through the ($\Sigma$-Ax) rule). 
Note that we can still use linear implications ad libitum, but we cannot reuse their outcome as an input for others linear implications.
\end{proof}

\subsubsection{Reducing the Normal Form 2 to the Normal Form 1}

\begin{lem}\label{thm:numberG}
	A derivation that only uses $Gr \cup Sr$ exists from $\Theta, \Delta, \Sigma \vdash \sigma$ to $\Omega_G, \Sigma \vdash \sigma$, with
	$
	\Omega_G = \{ G(\theta_i) \mid i \in [1, n] \} \cup \{ G(\delta_j) \mid j \in [1, m] \}
	$
	if and only if $x_1, \dots x_n$ and $z_1, \dots z_m$ nonnegative integers exist such that 
	\begin{align*}
		\Theta = \{ \theta_i^{x_i} \mid i \in [1, n] \} \qquad
		\Delta = \{ \delta_j^{z_j} \mid j \in [1, m] \}
	\end{align*}
\vspace{-3mm}
\end{lem}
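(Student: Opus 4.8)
The plan is to prove the two implications separately by a direct, rule‑by‑rule argument. Both are elementary, since the only rules available are the two interaction rules (G-left-$\theta$) and (G-left-$\delta$) together with the structural rules (L-Cont) and (L-Weak), none of which touches the $\Sigma$- or the $\sigma$-component of a sequent.

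For the ``if'' direction, given nonnegative integers $x_1,\dots,x_n$ and $z_1,\dots,z_m$ with $\Theta = \{\theta_i^{x_i} \mid i\}$ and $\Delta = \{\delta_j^{z_j} \mid j\}$, I would construct the derivation explicitly, reading it bottom‑up from the root $\Omega_G; \Sigma \vdash \sigma$. I process each $G(\theta_i)$ in turn: when $x_i = 0$ a single application of (L-Weak) discards $G(\theta_i)$; when $x_i \geq 1$ I apply (L-Cont) $x_i - 1$ times to obtain $x_i$ copies of $G(\theta_i)$ in the non‑linear context, and then (G-left-$\theta$) $x_i$ times, each step consuming one copy of $G(\theta_i)$ and inserting one occurrence of $\theta_i$ into $\Theta$. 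The symmetric treatment of each $G(\delta_j)$ via (L-Cont)/(L-Weak) and (G-left-$\delta$) handles $\Delta$. Since the non‑linear context becomes empty exactly once every $G(\cdot)$ has been processed, the leaf reached is $\emptyset;\, \{\theta_i^{x_i} \mid i\},\, \{\delta_j^{z_j} \mid j\},\, \Sigma \vdash \sigma$, that is, $\Theta, \Delta, \Sigma \vdash \sigma$.

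For the ``only if'' direction, I would argue by induction on the length of the given derivation, reading it from root to leaf, with the invariant that at every intermediate sequent $\Omega'; \Theta', \Delta', \Sigma \vdash \sigma$ one has: $\Sigma$ and $\sigma$ unchanged; every formula of $\Omega'$ is some $G(\theta_i)$ or some $G(\delta_j)$; every element of $\Theta'$ equals some $\theta_i$; and every element of $\Delta'$ equals some $\delta_j$. This holds at the root because there $\Theta' = \Delta' = \emptyset$ and $\Omega' = \Omega_G$, and it is preserved by each rule: (L-Cont) and (L-Weak) only duplicate or erase a formula already present in $\Omega'$, while (G-left-$\theta$) removes some $G(\theta)$ — which, by the invariant, is some $G(\theta_i)$ — and adds the matching $\theta_i$ to $\Theta'$, and dually for (G-left-$\delta$). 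Applying the invariant at the leaf $\Theta, \Delta, \Sigma \vdash \sigma$, where the non‑linear context is empty, gives that $\Theta$ has support in $\{\theta_1,\dots,\theta_n\}$ and $\Delta$ has support in $\{\delta_1,\dots,\delta_m\}$; reading off the multiplicities yields the required $x_i$ and $z_j$.

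The only subtle point, and the place where care is needed, is the multiset bookkeeping when a proposition occurs more than once in $\Omega_G$, so that $\theta_i = \theta_{i'}$ or $\delta_j = \delta_{j'}$ for distinct indices: one must count genuine occurrences of $G(\theta_i)$ consumed by (G-left-$\theta$) rather than syntactic shapes, and observe that only the total multiplicity of each distinct proposition is constrained by the conclusion, so its distribution among the equal indices is free. Everything else is a mechanical verification against the four rule schemas, and I expect no further difficulty.
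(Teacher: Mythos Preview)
The proposal is correct and takes essentially the same approach as the paper: both directions are handled by straightforward rule induction/inspection, with the ``if'' direction built explicitly from (L-Weak), (L-Cont), and the two $G$-left rules. The only cosmetic difference is that the paper groups the structural steps (all (L-Weak), then all (L-Cont), then all $Gr$ applications) while you interleave them one formula at a time; your remark on repeated propositions is a fair bookkeeping observation that the paper leaves implicit.
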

\begin{proof}
	Assume a derivation exists.
	By rule induction over the rules in $Gr \cup Sr$ one proves that the linear propositions $\delta$ and $\theta$ appearing in $\Theta, \Delta, \Sigma \vdash \sigma$ are the same that appear in $\Omega_G, \Sigma \vdash \sigma$ preceded by $G$, possibly with a different number of occurrences.
	Let $x_i$ and $z_j$ be such occurrences.
	The thesis trivially follows.
	
	Assume $\Theta$ and $\Delta$ are defined as in the formula above.
	Let $\Omega_G'$ and $\Omega_G''$ be
	\begin{align*}
		\Omega_G' =\ &\{ G(\theta_i) \mid \theta_i^{x_i} \in \Theta \land x_i \neq 0 \}\ \cup\ \{ G(\delta_j) \mid \delta_j^{z_i} \in \Delta \land z_i \neq 0 \}\\
		\Omega_G'' =\ &\{ G(\theta_i^{x_i}) \mid \theta_i^{x_i} \in \Theta \land x_i \neq 0 \}\ \cup\ \{ G(\delta_j^{z_j}) \mid \delta_j^{z_j} \in \Delta \land z_j \neq 0 \}
	\end{align*}
	A derivation exists from $\Theta, \Delta, \Sigma \vdash \sigma$ to $\Omega_G, \Sigma \vdash \sigma$ as follows.
	\[
	{\normalsize \prftree[r, double]
	{(L-Weak)}
	{
		\prftree[r, double]
		{(L-Cont)}
		{
			\prfsummary[$\Pi_{Gr}$]
			{\Theta, \Delta, \Sigma \vdash \sigma}
			{
				\Omega_G'', \Sigma \vdash \sigma
			}
		}
		{\Omega_G', \Sigma \vdash \sigma}	
	}
	{\Omega_G, \Sigma \vdash \sigma}}\qedhere
	\]
\end{proof}

\begin{lem}\label{thm:colapp-merge}
	A derivation that only uses ($\linearcontract$-split) exists from $\theta, \Delta, \Sigma \vdash \sigma$ to $\Theta, \Delta, \Sigma \vdash \sigma$, with 
	$\Theta = \{ \delta_i \linearcontract \delta_i' \mid i \in [0, n] \}$
	if and only if
	\[
	\theta = \bigotimes_{i = 1}^n \delta_i \linearcontract \bigotimes_{i = 1}^n \delta_i'
	\]
\end{lem}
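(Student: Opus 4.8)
The plan is a straightforward double induction that exploits two bookkeeping invariants of the rule ($\linearcontract$-split). Read bottom-up, this rule strictly decreases by one the number of $\linearcontract$-formulas in the $\Theta$-component, and it leaves $\Delta$, $\Sigma$ and $\sigma$ entirely untouched. Recall from~\autoref{def:prop-muacl} that a $\delta$ is identified with the multiset of linear implications it tensors together ($\otimes$ being associative and commutative, with $I$ the empty multiset); write $\mset{\delta}$ for this multiset. Since ($\linearcontract$-split) turns $\delta \linearcontract \delta', \delta'' \linearcontract \delta'''$ into $(\delta \otimes \delta'') \linearcontract (\delta' \otimes \delta''')$ and $\mset{\delta \otimes \delta''} = \mset{\delta} \uplus \mset{\delta''}$, for every multiset $\Theta' = \{\eta_j \linearcontract \eta_j' \mid j\}$ the two multisets $\biguplus_j \mset{\eta_j}$ and $\biguplus_j \mset{\eta_j'}$ are preserved by the rule. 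These three facts — the count drops by one, the rest of the sequent is fixed, and the aggregate left and right contents are invariant — do all the work.

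For the ``if'' direction, the plan is to assume $\theta = \bigotimes_i \delta_i \linearcontract \bigotimes_i \delta_i'$ and induct on $|\Theta|$. If $\Theta$ is a singleton $\{\delta_1 \linearcontract \delta_1'\}$, then $\theta = \delta_1 \linearcontract \delta_1'$ up to the identification above and the empty derivation works. Otherwise pick any two distinct elements $\delta_j \linearcontract \delta_j'$ and $\delta_k \linearcontract \delta_k'$ of $\Theta$ and apply ($\linearcontract$-split) once bottom-up, obtaining a premise sequent whose $\Theta$-component has strictly fewer elements and is again of the form required by the lemma, with lefts $\{\delta_i \mid i \neq j, k\} \cup \{\delta_j \otimes \delta_k\}$ and rights $\{\delta_i' \mid i \neq j, k\} \cup \{\delta_j' \otimes \delta_k'\}$; the induction hypothesis then yields a derivation using only ($\linearcontract$-split) up to a sequent $\theta', \Delta, \Sigma \vdash \sigma$ whose $\Theta$-component is a single contractual implication, and associativity and commutativity of $\otimes$ give $\theta' = \theta$.

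For the ``only if'' direction, suppose a derivation using only ($\linearcontract$-split) goes from the conclusion $\Theta, \Delta, \Sigma \vdash \sigma$ up to the open premise $\theta, \Delta, \Sigma \vdash \sigma$. Because each bottom-up step removes exactly one $\linearcontract$-formula, the derivation has exactly $|\Theta| - 1$ steps and its topmost $\Theta$-component is the singleton $\{\theta\}$; moreover $\Delta$, $\Sigma$, $\sigma$ are unchanged throughout, which is precisely why the suffix $\Delta, \Sigma \vdash \sigma$ is common to the bottom and the top. Writing $\theta = \delta \linearcontract \delta'$ and invoking the content invariant, $\mset{\delta} = \biguplus_i \mset{\delta_i}$ and $\mset{\delta'} = \biguplus_i \mset{\delta_i'}$, which unfolds exactly to $\theta = \bigotimes_i \delta_i \linearcontract \bigotimes_i \delta_i'$ up to associativity and commutativity of $\otimes$.

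I do not expect a real obstacle here: the only delicate points are stating the ``$\delta$ as a multiset of linear implications'' invariant cleanly — so that ``$\bigotimes$ of the $\delta_i$'' genuinely means equality up to AC and $I$ — and making explicit that ($\linearcontract$-split) never touches $\Delta$, $\Sigma$ or $\sigma$, which keeps the remainder of the sequent out of the induction altogether. Everything else is a routine induction on $|\Theta|$ in one direction and on derivation length in the other.
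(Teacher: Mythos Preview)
Your proposal is correct and follows essentially the same approach as the paper: the key observation is precisely that ($\linearcontract$-split) preserves the multiset of $\delta$'s appearing to the left of $\linearcontract$ and the multiset of $\delta$'s appearing to the right, which is exactly the invariant the paper invokes. You simply make the inductions explicit where the paper leaves them implicit.
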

\begin{proof}
	Follows because ($\linearcontract$-split) preserves both the multisets of instances of $\delta$ that appear to the left of $\linearcontract$ and the multiset of the instances of $\delta$ that appear to the right of $\linearcontract$.
\end{proof}

\begin{nota}\label{def:conditions}
	Let  $\mathcal{L}_{\Omega_G} = \{ \ell_1, \dots \ell_p \}$ be the set of linear implications between atomic propositions $\res@\usr \multimap \res'@\usr'$ appearing as terms in $\Omega_G$.
	For every $G(\delta) \in \Omega_G$, let $u_\delta$ be a vector of length $p$ associating each index $k$ with the number of occurrences of $\ell_k$ in $\delta$; 
	and for every $G(\theta) = G(\delta \linearcontract \delta') \in \Omega_G$, let $u_\theta$ and $v_\theta$ be vectors of length $p$ associating each index $k$ with the number of occurrences of $\ell_k$ in $\delta$ and $\delta'$, respectively.
	Moreover, let $u_{\Delta}$ be a vector of length $p$ associating each index $k$ with the sum of the occurrences of $\ell_k$ in every $\delta \in \Delta$.
		Finally, let $A_{\Omega_G}$ be the matrix with a column $u_\delta$ for each $G(\delta) \in \Omega_G$, and let $B_{\Omega_G}$ and $C_{\Omega_G}$ be the matrices with a column $u_\theta$, and $v_\theta$, respectively, for each $G(\theta) \in \Omega_G$.
	{\normalsize \begin{gather*}
		A_{\Omega_G} =
		\begin{bmatrix}
			\vert & \vert \\
			u_{\delta_1} & u_{\delta_n} \\
			\vert & \vert
		\end{bmatrix}
		\qquad
		B_{\Omega_G} =
		\begin{bmatrix}
			\vert & \vert \\
			u_{\theta_1} & u_{\theta_m} \\
			\vert & \vert
		\end{bmatrix}
		\qquad
		C_{\Omega_G} =
		\begin{bmatrix}
			\vert & \vert \\
			v_{\theta_1} & v_{\theta_m} \\
			\vert & \vert
		\end{bmatrix}
	\end{gather*}
}
	Consider the following conditions.
{\normalsize \begin{align}\label{prop:col-1}
	u_{\Delta} =
	\begin{bmatrix}
		\begin{array}{c|c}
			& \\
			{A_{\Omega_G}} & {C_{\Omega_G}}\\
			& \\
		\end{array}
	\end{bmatrix}
	\begin{bmatrix}
		x_{1} \\
		\vdots \\
		x_{n}\\
		z_1\\
		\vdots\\
		z_m
	\end{bmatrix}
\end{align}
}
{\normalsize \begin{align}\label{prop:col-2}
	\begin{bmatrix}
		\begin{array}{c}
			\\
			C_{\Omega_G} - B_{\Omega_G}\\
			\\
		\end{array}
	\end{bmatrix}
	\begin{bmatrix}
		z_{1} \\
		\vdots \\
		z_{m}
	\end{bmatrix}
	\geq
	\begin{bmatrix}
		0 \\
		\vdots \\
		0
	\end{bmatrix}
\end{align}
}
\end{nota}

\begin{restatable}{lem}{CLNLtoLA}\label{thm:col-lltola}
	For %
	every
	 $\Omega_G, \Delta, \Sigma, \sigma$, a derivation exists from $\Delta, \Sigma \vdash \sigma$ to $\Omega_G; \Sigma \vdash \sigma$, in one of the following forms
	{\normalsize \begin{center}	
		\begin{tabular}{c c}
			\prfsummary[$\Pi_{Gr \cup Sr}$]
			{
				\Delta, \Sigma \vdash \sigma
			}
			{\Omega_G; \Sigma \vdash \sigma}	
			&\qquad
			\prfsummary[$\Pi_{Gr \cup Sr}$]
			{
				\prfsummary[
				$\Pi_{\text{($\smlinearcontract$-split)}}$]
				{
					\prftree[r]
					{($\linearcontract$-left)}
					{\Delta, \Sigma \vdash \sigma}
					{\theta, \Delta', \Sigma \vdash \sigma}
				}
				{
					{\Theta, \Delta', \Sigma \vdash \sigma}
				}
			}
			{\Omega_G; \Sigma \vdash \sigma}
		\end{tabular}
	\end{center}
}	
\noindent
if and only if there exist nonnegative integers $x_1, \dots x_n$ and $z_i, \dots z_m$ such that conditions~(\ref{prop:col-1}) and~(\ref{prop:col-2}) of~\autoref{def:conditions} hold.
\end{restatable}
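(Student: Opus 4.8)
The plan is to prove both implications by transporting the combinatorial structure of the derivation into the linear-algebraic setting of \autoref{def:conditions}, using \autoref{thm:numberG} and \autoref{thm:colapp-merge} as the bridge. The basic dictionary is that every tensor product is read as a multiset of linear implications from the finite set $\mathcal{L}_{\Omega_G}$, so that the vector $u_{(\cdot)}$ attached to a $\delta$, to a $\delta'$, or to $\Delta$, is simply its multiset characteristic function; additivity of $u$ on multiset unions makes a multiset identity such as $\Delta = \Delta_1 \uplus \Delta_2$ equivalent to the vector identity $u_\Delta = u_{\Delta_1} + u_{\Delta_2}$, and multiset inclusion $\subseteq$ equivalent to componentwise $\leq$. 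For a contract $\theta_j = \lambda_j \linearcontract \rho_j$ occurring in $\Omega_G$, the $j$-th column of $B_{\Omega_G}$ is $u_{\lambda_j}$ and the $j$-th column of $C_{\Omega_G}$ is $u_{\rho_j}$; writing $\bar{x}$ for the vector of the $x_i$ and $\bar{z}$ for the vector of the $z_j$, conditions~(\ref{prop:col-1}) and~(\ref{prop:col-2}) read $u_\Delta = A_{\Omega_G}\bar{x} + C_{\Omega_G}\bar{z}$ and $(C_{\Omega_G} - B_{\Omega_G})\bar{z} \geq \bar{0}$.

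For the forward direction I would split on the shape of the given derivation. If it has the first (purely $\Pi_{Gr \cup Sr}$) shape, then \autoref{thm:numberG}, applied with an empty contractual context, supplies nonnegative $x_1, \dots, x_n$ with $\Delta = \biguplus_i \delta_i^{x_i}$, whence $u_\Delta = A_{\Omega_G}\bar{x}$; taking $\bar{z} = \bar{0}$ makes both conditions hold. If the derivation has the second shape, I read it top-down: the bottommost block $\Pi_{Gr \cup Sr}$ takes $\Theta, \Delta', \Sigma \vdash \sigma$ to $\Omega_G; \Sigma \vdash \sigma$, so by \autoref{thm:numberG} there are nonnegative $z_1, \dots, z_m$ and $x_1, \dots, x_n$ with $\Theta = \biguplus_j \theta_j^{z_j}$ and $\Delta' = \biguplus_i \delta_i^{x_i}$; the block $\Pi_{(\linearcontract\text{-split})}$ together with \autoref{thm:colapp-merge} forces the single contract $\theta$ above it to be the merge $\bigotimes_j \lambda_j^{z_j} \linearcontract \bigotimes_j \rho_j^{z_j}$ of all contracts in $\Theta$; and the topmost ($\linearcontract$-left) step tells us that $\Delta = \Delta' \uplus \bigotimes_j \rho_j^{z_j}$, hence $u_\Delta = A_{\Omega_G}\bar{x} + C_{\Omega_G}\bar{z}$, which is~(\ref{prop:col-1}), while its side condition $\bigotimes_j \lambda_j^{z_j} \subseteq \bigotimes_j \rho_j^{z_j}$ is precisely $B_{\Omega_G}\bar{z} \leq C_{\Omega_G}\bar{z}$, which is~(\ref{prop:col-2}).

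For the converse, given nonnegative $\bar{x}, \bar{z}$ satisfying~(\ref{prop:col-1}) and~(\ref{prop:col-2}), I would first use~(\ref{prop:col-1}) to decompose $\Delta$ as $\Delta_x \uplus \Delta_z$ with $\Delta_x = \biguplus_i \delta_i^{x_i}$ and $\Delta_z = \biguplus_j \rho_j^{z_j}$; this is legitimate because $u_\Delta = u_{\Delta_x} + u_{\Delta_z}$ and $u$ is injective on multisets. If $\bar{z} = \bar{0}$, then $\Delta = \Delta_x$ and \autoref{thm:numberG} directly yields a $\Pi_{Gr \cup Sr}$ derivation of the first shape. Otherwise I build the second shape bottom-up from the open leaf $\Delta, \Sigma \vdash \sigma$ (which reads $\Delta_x, \Delta_z, \Sigma \vdash \sigma$ since $\Delta = \Delta_x \uplus \Delta_z$): one ($\linearcontract$-left) step, whose side condition $\bigotimes_j \lambda_j^{z_j} \subseteq \bigotimes_j \rho_j^{z_j}$ is guaranteed by~(\ref{prop:col-2}), yields $\theta, \Delta_x, \Sigma \vdash \sigma$ with $\theta$ the merged contract; \autoref{thm:colapp-merge} then supplies a $\Pi_{(\linearcontract\text{-split})}$ derivation down to $\Theta, \Delta_x, \Sigma \vdash \sigma$ with $\Theta = \biguplus_j \theta_j^{z_j}$; and \autoref{thm:numberG} finally supplies the $\Pi_{Gr \cup Sr}$ block down to $\Omega_G; \Sigma \vdash \sigma$.

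The delicate part, and the one I expect to be the main obstacle, is the bookkeeping that makes the multiset decomposition $\Delta = \Delta_x \uplus \Delta_z$ and the description of the merged contract line up exactly with the column structure of $A_{\Omega_G}$, $B_{\Omega_G}$, $C_{\Omega_G}$: one must keep rigorously separate the two families of multiplicities --- those of the pure rules $G(\delta_i)$, collected in $\bar{x}$, and those of the contractual rules $G(\theta_j)$, collected in $\bar{z}$ --- which are indexed in the opposite order in \autoref{thm:numberG} and in \autoref{def:conditions}, and one must check that no spurious cancellation occurs when distinct $\delta_i$ (or distinct $\rho_j$) share a linear implication. This last point is in fact harmless, precisely because $u$ is additive, so the count-vector identity~(\ref{prop:col-1}) already encodes the multiset identity $\Delta = \Delta_x \uplus \Delta_z$; but it is where care is needed.
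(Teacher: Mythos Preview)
Your proposal is correct and follows essentially the same route as the paper: both directions are obtained by combining \autoref{thm:numberG} (to characterise the $\Pi_{Gr\cup Sr}$ block) with \autoref{thm:colapp-merge} (to characterise the $\Pi_{(\linearcontract\text{-split})}$ block), and then reading off the single ($\linearcontract$-left) step as the identity $\Delta = \Delta' \uplus \bigotimes_j \rho_j^{z_j}$ and the side condition $\bigotimes_j \lambda_j^{z_j} \subseteq \bigotimes_j \rho_j^{z_j}$, which unfold to conditions~(\ref{prop:col-1}) and~(\ref{prop:col-2}) via the additivity of $u_{(\cdot)}$. You also correctly flagged the index swap between \autoref{thm:numberG} and \autoref{def:conditions}; the paper's proof works around the same mismatch by a silent relabelling.
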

\begin{proof}
	Let 
	$\Omega_G = \{ G(\theta_i) \mid i \in [1, n] \} \cup \{ G(\delta_j) \mid j \in [1, m] \}$.
	Consider a derivation of the form 1.
	By \autoref{thm:numberG}, such a derivation exists if and only if there exist $x_1, \dots x_n$ and $z_1, \dots z_m$ nonnegative integers such that 
	{\normalsize \begin{align*}
		\emptyset = \Theta &= \{ \theta_i^{x_i} \mid i \in [1, n] \} \qquad \text{i.e. $x_i = 0$ for all $i \in [1, n]$} \\
		\Delta &= \{ \delta_j^{z_j} \mid j \in [1, m] \}
	\end{align*}}
	
	Consider now a derivation in the normal form 2.
	By \autoref{thm:numberG} the derivation $\Pi_{Gr \cup Sr}$ exists if and only if there exist $x_1, \dots x_n$ and $z_1, \dots z_m$ nonnegative integers such that
	{\normalsize \begin{align*}
		\Theta = \{ \theta_i^{x_i} \mid i \in [1, n] \}\quad\text{and}\quad
		\Delta' = \{ \delta_j^{z_j} \mid j \in [1, m] \}.
	\end{align*}}
	Then, by \autoref{thm:colapp-merge}, the derivation $\Pi_{\text{($\smlinearcontract$-split)}}$ exists if and only if
{\normalsize 	\[
	\theta = \bigotimes_{j = 1}^m \delta_j^{z_j} \linearcontract \bigotimes_{j = 1}^m (\delta_j')^{z_j}
	\]}
	and the rule ($\linearcontract$-left) is applicable iff both the following hold
	{\normalsize \begin{align}\label{cond:1app}
		\Delta = \Delta' \cup \{ \bigotimes_{j = 1}^m (\delta_j')^{z_j} \}\\
		\label{cond:2app}
		\bigotimes_{j = 1}^m \delta_j^{z_j} \subseteq \bigotimes_{j = 1}^m (\delta_j')^{z_j}
	\end{align}}
	
	Note that these conditions reduce to the ones in the normal form 1 when $x_i = 0$ for 
	every
	 $i \in [1, n]$.
	Thus, we can conclude that a derivation exists if and only if conditions~(\ref{cond:1app}) and ~(\ref{cond:2app}) are met.
	
	We conclude by showing that (\ref{cond:1app}) is equivalent to (\ref{prop:col-1}) and (\ref{cond:2app}) to (\ref{prop:col-2}) of~\autoref{def:conditions}.	
	Recall that $\mathcal{L}_{\Omega_G} = \{ \ell_1, \dots \ell_p \}$ is the set of linear implications between atomic propositions appearing as terms in $\Omega_G$.
	By definition of $\delta$, we can rewrite conditions~(\ref{cond:1app}) and (\ref{cond:2app}) respectively as follows.
{\normalsize 	\begin{gather*}
		\Delta = \{ (\bigotimes_{k = 1}^{p} \ell_k^{A_{k, i}})^{x_i} \mid i \in [1, n] \} \cup \{ \bigotimes_{j = 1}^m (\bigotimes_{k = 1}^{p} \ell_k^{C_{k, j}})^{z_j} \}\\
		\bigotimes_{j = 1}^m (\bigotimes_{k = 1}^{p} \ell_k^{B_{k, j}})^{z_j} \subseteq \bigotimes_{j = 1}^m (\bigotimes_{k = 1}^{p} \ell_k^{C_{k, j}})^{z_j}
	\end{gather*}}
	where for each $i$, and $k$, $A_{k, i}$ is the number of occurrences of $\ell_k$ in $\delta_i$; for each $j$, and $k$, $B_{k, j}$ is the number of occurrences of $\ell_k$ in $\delta_j$, and $C_{k, j}$ is the number of occurrences of $\ell_k$ in $\delta_j'$.
	By definition, $A_{\Omega_G}$, contains $A_{k, i}$ in row $k$, column $i$; and $B_{\Omega_G}$, and $C_{\Omega_G}$ contains $B_{k, j}$ and $C_{k, j}$ in row $k$, column $j$ respectively.
	The equivalence between conditions (\ref{prop:col-1}) and (\ref{cond:1app}) follows straightforwardly.
	
	Take any $z_1, \dots z_m$.
	Condition (\ref{cond:2app}) holds iff, for 
	every
	 $\ell_k$ the number of occurrences in the left part of (\ref{cond:2app}) is greater than the number of occurrences in the right part, i.e.,
{\normalsize 	\[
	\bigotimes_{j = 1}^m (\ell_k^{B_{k, j}})^{z_j} \subseteq \bigotimes_{j = 1}^m (\ell_k^{C_{k, j}})^{z_j}\qquad{\text{for every $\ell_k$}}
	\]
}	By definition, this holds if and only if the $k$-th rows $B_k$ of $B_{\Omega_G}$ and $C_k$ of $C_{\Omega_G}$ are such that
{\normalsize 	\[
	\begin{bmatrix}
		C_{k, 1} & \dots & C_{k, 1}
	\end{bmatrix}
	\begin{bmatrix}
		z_1 \\
		\vdots \\
		z_m
	\end{bmatrix}
	\geq
	\begin{bmatrix}
		B_{k, 1} & \dots & B_{k, 1}
	\end{bmatrix}
	\begin{bmatrix}
		z_1 \\
		\vdots \\
		z_m
	\end{bmatrix}
	\]
}	which, in turn, is true for every  $\ell_k$ if and only if condition (\ref{prop:col-2}) of~\autoref{def:conditions} holds.
\end{proof}

Consider condition~(\ref{prop:col-2}) of~\autoref{def:conditions}. For the Hilbert basis theorem~\cite{HB}, the set of nonnegative integer solutions can be expressed as 
\begin{align*}
	{\normalsize \begin{bmatrix}
		z_{1} \\
		\vdots \\
		z_{m}
	\end{bmatrix}
	=
	\begin{bmatrix}
		\\
		H_{\Omega_G}\\
		\\
	\end{bmatrix}
	\begin{bmatrix}
		y_{1} \\
		\vdots \\
		y_{q}
	\end{bmatrix}}
\end{align*}
with $y_1, \dots y_p$ nonnegative integers, and $H$ can be computed using~\cite{computeHB}.

Thus, we build the following system precisely characterising the solutions of conditions~(\ref{prop:col-1}) and~(\ref{prop:col-2}) of~\autoref{def:conditions}:
\begin{align*}
	{\normalsize \begin{bmatrix}
		\begin{array}{c|c}
			& \\
			{A_{\Omega_G}} & {C_{\Omega_G}} \cdot H_{\Omega_G}\\
			& \\
		\end{array}
	\end{bmatrix}
	\begin{bmatrix}
		x_{1} \\
		\vdots \\
		x_{n}\\
		y_1\\
		\vdots\\
		y_q
	\end{bmatrix}}
\end{align*}
Let $D_{\Omega_G}$ be the matrix above.
We take a multiset $\Omega_G'$ containing a proposition $G(\delta_v)$ for each column $v$ of $D_{\Omega_G}$ with
\vspace{-3mm}
{\normalsize \begin{align}\label{prop:col-3}
	\delta_v = \bigotimes_{\ell_k \in \mathcal{L}_{\Omega_G}} \ell_k^{v_k}
\end{align}}
where $v_k$ is the value with index $k$ in $v$, $\ell^0 = I$ and $\ell^{n+1} = \ell \otimes \ell^n$.

The derivability of $\Omega_G; \Sigma \vdash \sigma$ is the same as $\Omega_G'; \Sigma \vdash \sigma$.
Formally:
\sftoff*
\begin{proof}
	Let $\Omega_G'$ be as in condition~(\ref{prop:col-3}).
	The lemma holds by construction and~\autoref{thm:col-lltola}.
\end{proof}

\begin{thm}[\MuACLs\ decidability]\label{thm:MuACLsdecide}
	An always-terminating algorithm exists that decides if an initial sequent is valid in \MuACLs.
\end{thm}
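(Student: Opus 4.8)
The plan is to obtain the decision procedure by composing the normalisation theorem with the two decidability lemmas already established. By \autoref{thm:col-normal}, an initial sequent $\Omega;\Sigma\vdash\sigma$ is valid in \MuACLs\ if and only if it admits a normal proof, that is, a proof in the normal form 1 or in the normal form 2 (\autoref{def:normalforms}). Hence it suffices to decide, for each of the two shapes, whether a proof of that shape exists. For the normal form 1 this is precisely \autoref{thm:fstnfdecide}, whose always-terminating algorithm first canonicalises $\Omega$ to the computable $\Omega_\star$ (\autoref{thm:firsttostar}, using the intuitionistic decision procedure of~\cite{decideMIL} for the $\Vdash$-fragment) and then reduces the question to reachability in Petri nets (\autoref{thm:firststardec}, via~\cite{PetriReach}). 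So all that is left is to reduce the normal form 2 case effectively to the normal form 1 case.

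For that, given $\Omega;\Sigma\vdash\sigma$ I would first replace $\Omega$ by the computable multiset $\Omega_\star$ of \autoref{thm:firsttostar}: a proof of $\Omega;\Sigma\vdash\sigma$ in the normal form 2 exists iff one of $\Omega_\star;\Sigma\vdash\sigma$ does, and $\Omega_\star$ is a multiset of atomic $G$-formulas. Unfolding the shape of a normal-form-2 proof of $\Omega_\star;\Sigma\vdash\sigma$ (its bottom $\Pi_{Cr \cup Sr}$ block being then mere weakening/contraction), its existence is equivalent to the existence of a linear context $\Delta$ with $\Delta,\Sigma\vdash\sigma$ provable using only rules in $Lr\cup\{(\Sigma\text{-Ax}),(I\text{-right})\}$ together with a normal-form-2 derivation from $\Delta,\Sigma\vdash\sigma$ to $\Omega_\star;\Sigma\vdash\sigma$. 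Now \autoref{thm:col-sftoff}, applied with $\Omega_G:=\Omega_\star$, produces a computable multiset $\Omega_\star'$ of non-linear propositions --- by construction \eqref{prop:col-3}, a multiset of formulas $G(\delta)$ with no contractual implication --- such that the latter derivation exists iff a normal-form-1 derivation from $\Delta,\Sigma\vdash\sigma$ to $\Omega_\star';\Sigma\vdash\sigma$ exists. Quantifying over $\Delta$ and re-attaching the (again trivial) bottom block, this yields: $\Omega;\Sigma\vdash\sigma$ has a proof in the normal form 2 iff $\Omega_\star';\Sigma\vdash\sigma$ has a proof in the normal form 1 --- which is decidable by a second call to \autoref{thm:fstnfdecide}.

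The resulting algorithm is therefore: from $\Omega$ compute $\Omega_\star$ and then $\Omega_\star'$; run the normal-form-1 procedure of \autoref{thm:fstnfdecide} on $\Omega;\Sigma\vdash\sigma$ and on $\Omega_\star';\Sigma\vdash\sigma$; output ``valid'' exactly when at least one of the two runs succeeds. Termination is immediate since every ingredient terminates --- the intuitionistic decision step~\cite{decideMIL}, the Hilbert-basis computation~\cite{computeHB} inside the construction of $\Omega_\star'$, and the Petri-net reachability test~\cite{PetriReach} --- and only finitely many of them are invoked; correctness is \autoref{thm:col-normal} combined with the equivalences above. I do not expect a genuine obstacle: the statement is essentially a corollary of \autoref{thm:col-normal}, \autoref{thm:fstnfdecide} and \autoref{thm:col-sftoff}. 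The one point that needs care is the bookkeeping around the outermost $\Pi_{Cr \cup Sr}$ layer --- namely that passing from $\Omega$ to $\Omega_\star$ before applying \autoref{thm:col-sftoff} is sound, and that a normal-form-1 derivation ending in $\Omega_\star';\Sigma\vdash\sigma$, with $\Omega_\star'$ a multiset of formulas $G(\delta)$, extends to a bona fide normal-form-1 proof with an empty such layer --- both routine given the shape constraints of \autoref{def:normalforms}. The decidability of \MuACL\ itself then follows along the same lines, since by \autoref{thm:col-normal} normal \MuACL\ proofs never use $(*\text{-cut})$.
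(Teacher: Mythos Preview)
Your proposal is correct and follows essentially the same approach as the paper, which gives a two-line proof citing \autoref{thm:col-sftoff} to reduce the normal form 2 case to the normal form 1 case and then \autoref{thm:fstnfdecide} for the latter. You have simply spelled out the bookkeeping (the passage through $\Omega_\star$ via \autoref{thm:firsttostar} and the re-attachment of the trivial $\Pi_{Cr\cup Sr}$ layer) that the paper leaves implicit in its invocation of these lemmas.
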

\begin{proof}
	\autoref{thm:col-sftoff} reduces the problem of finding a proof in the normal form 2 to finding a proof in the normal form 1, which is proved decidable in~\autoref{thm:fstnfdecide}.
\end{proof}

\MuACLsdec*
\begin{proof}
	Follows directly by \autoref{thm:col-sftoff}, which reduces the problem of finding a proof in the normal form 2 to finding a proof in the normal form 1, proved decidable in~\autoref{thm:fstnfdecide-ncr}.
	Note that the ($*$-cut) rule is not used in the reductions we target, as the derivations in~\autoref{thm:col-sftoff} only use a common subset of the rules of \MuACL\ and \MuACLs.
\end{proof}

\section{Linear logic does not include \MuACL}

We prove a stronger result than~\autoref{thm:homo}, namely that a homomorphic map that works well with initial sequents does not exist.
\begin{lem}\label{thm:homolemma}
Let $m(\cdot)$ be a homomorphic map and $\Phi$ be a (multi)set of \exlo\  propositions $\varphi$. Then we have that both
$\quad (i) \ m(\varphi) = \varphi \quad$ and
$\quad (ii) \ \Phi \vdash_{\MuACL} \varphi$ iff $\Phi \vdash_{\exlo} \varphi$.
\end{lem}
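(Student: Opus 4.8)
The plan is to establish the two claims by structural inductions, treating (i) as an induction on the proposition $\varphi$ and (ii) as an analysis of \MuACL\ derivations, where the only genuine content is the conservativity of \MuACL\ over \exlo.

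For (i), I would induct on the structure of the \exlo\ proposition $\varphi$. The atoms and constants $\res@\usr$, $p(\usr_1,\dots,\usr_n)$, $\top$ and $I$ are the nullary operators of \exlo, so a map that is homomorphic for \exlo\ must fix them, giving $m(\varphi)=\varphi$ in the base cases. In the inductive step, the outermost connective of $\varphi$ is one of the \exlo\ operators $\otimes,\multimap,\land,\rightarrow,G$, each of which is preserved by $m$; hence $m(\varphi_1\oplus\varphi_2)=m(\varphi_1)\oplus m(\varphi_2)=\varphi_1\oplus\varphi_2$ by the induction hypothesis. Since $\varphi$ is an \exlo\ proposition it contains no occurrence of $\linearcontract$, so these cases are exhaustive and $m(\varphi)=\varphi$.

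For (ii), the implication $\Phi\vdash_{\exlo}\varphi \Rightarrow \Phi\vdash_{\MuACL}\varphi$ is immediate, since by construction the rules and formulas of \exlo\ form a subset of those of \MuACL, so any \exlo\ proof is verbatim a \MuACL\ proof. The substantial direction is $\Phi\vdash_{\MuACL}\varphi \Rightarrow \Phi\vdash_{\exlo}\varphi$, i.e. that the $\linearcontract$ machinery proves no new $\linearcontract$-free sequent. Here I would route through the normal-proof theorem (\autoref{thm:col-normal}): a provable sequent admits a normal proof, and the only way a normal proof can introduce $\linearcontract$ is by moving some $G(\theta)$ from $\Omega$ into $\Theta$ via (G-left-$\theta$) and then applying the rules in $Pr$. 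By the canonical-context construction of \autoref{thm:firsttostar}, such a $G(\theta)$ can occur in $\Omega_\star$ only if $\Omega\Vdash G(\theta)$. Since $\Phi$ is an \exlo\ multiset, its non-linear part $\Omega$ is $\linearcontract$-free, and $G(\theta)$ is a non-linear \emph{atom} with no right-introduction rule in the $\Vdash$ fragment; a routine induction on $\Vdash$-derivations shows that $\Omega\Vdash G(\theta)$ forces $G(\theta)$ to occur already in $\Omega$, which is impossible. Hence $\Omega_\star$ contains no $G(\theta)$, the contractual block of normal form $2$ is vacuous, and the proof lies in normal form $1$, which uses neither $Pr$ nor (G-left-$\theta$) and is therefore an \exlo\ proof.

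The main obstacle is the cut rule ($\Omega$-cut): because its cut formula need not be a subformula of the conclusion, a naive rule-by-rule induction fails. One can, for instance, derive the inert implication $G(\theta)\rightarrow G(\theta)$ from the empty non-linear context (by ($\rightarrow$-right) above ($\Omega$-Ax)) and cut it in, so a $\linearcontract$-free endsequent does not by itself force every sequent above it to be $\linearcontract$-free. This is precisely why I rely on normalization rather than on a bare induction: the normal-form theorem discards such inert cuts and weakenings, after which the atom-level observation that a $\linearcontract$-free $\Omega$ cannot derive the atom $G(\theta)$ closes the argument and yields the \exlo\ proof.
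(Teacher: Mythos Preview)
Your argument for (i) is correct and is the natural structural induction.

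For (ii), the paper claims the result follows ``easily by induction on deduction rules,'' giving no further detail. Your observation that the $(\Omega\text{-cut})$ case obstructs a naive induction is well-taken: the cut formula need not be a subformula of the endsequent, so a $\linearcontract$-free endsequent does not by itself force every sequent in the derivation to be $\linearcontract$-free. Routing through normalisation instead is a reasonable and more explicit alternative, and your key step --- that a $\linearcontract$-free non-linear context $\Omega$ cannot derive the atom $G(\theta)$ in the cut-free $\Vdash$ fragment --- is correct by the subformula property of that fragment.

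There is, however, a technical gap in your appeal to \autoref{thm:col-normal} and \autoref{thm:firsttostar}: both are stated only for \emph{initial} sequents $\Omega; \Sigma \vdash \sigma$, whereas the lemma allows $\Phi$ to contain $\delta$-formulas, so the sequent need not be initial. The fix is to invoke the more general normalisation result \autoref{thm:col-normalization2}, which applies to arbitrary sequents $\Omega; \Theta, \Delta, \Sigma \vdash \sigma$, and to observe that the construction of $\Omega_\star$ in the proof of \autoref{thm:firsttostar} goes through unchanged with a non-empty $\Delta$ in the conclusion. You should also remark that the $\Pi_{Cr \cup Sr}$ block of the resulting normal proof is itself $\linearcontract$-free: in the explicit construction of \autoref{thm:firsttostar} it is built from $\Vdash$-derivations of $\Omega \Vdash G(\delta)$ with both $\Omega$ and $G(\delta)$ $\linearcontract$-free, and the subformula property of the cut-free $\Vdash$ calculus then keeps every intermediate formula $\linearcontract$-free.
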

\begin{proof}
By induction and case analysis.
\end{proof}

\begin{thm}\label{thm:homoinit}
There is no homomorphic map of  \MuACL\ to \exlo\ that is complete and correct for initial sequents.
\end{thm}
\begin{proof}
Let $\Sigma = \{\res@\usr, \res'@\usr'\}$, $\delta = \res@\usr \multimap \res@\usr'$, $\delta' = \res'@\usr' \multimap \res'@\usr$.
Assume by refutation that there exists a homomorphic map $m(\cdot)$ complete and correct for initial sequents, and that $m(\cdot)$ maps $\delta' \linearcontract \delta$ to some unspecified \exlo\ proposition $m(\delta' \linearcontract \delta)$.
Consider the following propositions and multisets:
\begin{align*}
\Omega &= \{ G(\delta \linearcontract \delta'), G(m(\delta' \linearcontract \delta)) \} & \sigma = \res'@\usr \otimes \res@\usr' &&
\Omega' &= \{ G(\delta \linearcontract \delta'), G(\delta' \linearcontract \delta) \} 
\end{align*}
Note that $\Omega', \Sigma \vdash_{\MuACL} \sigma$.
Hence, since $m(\cdot)$ is complete, $m(\Omega'), m(\Sigma)  \vdash_{\exlo} m(\sigma)$ must hold.
Since $m(m(\delta' \linearcontract \delta)) = m(\delta' \linearcontract \delta)$ holds by~\autoref{thm:homolemma}, we have
\begin{align*}
m(\Omega') 
&= \{ m(G(\delta \linearcontract \delta')), m(G(\delta' \linearcontract \delta)) \} && \\
 &= \{ G(m(\delta \linearcontract \delta')), G(m(\delta' \linearcontract \delta)) \} &&  m(\cdot)\text{ is homomorphic}\\
 &= \{ G(m(\delta \linearcontract \delta')), G(m(m(\delta' \linearcontract \delta))) \} && \text{by~\autoref{thm:homolemma}}\\
 &= \{ m(G(\delta \linearcontract \delta')), m(G(m(\delta' \linearcontract \delta)))) \} && m(\cdot)\text{ is homomorphic}\\
&= m(\Omega).
\end{align*}
As a result, $m(\Omega), m(\Sigma) \vdash_{\exlo} m(\sigma)$ must hold, but, since $m(\cdot)$ is correct, $\Omega, \Sigma \vdash_{\MuACL} \sigma$ must hold, too.
By~\autoref{thm:col-normal}, a proof in normal form must exist for $\Omega, \Sigma \vdash_{\MuACL} \sigma$.
If it is in normal form 2, then, 
by~\autoref{thm:numberG} and \ref{thm:colapp-merge}, the proposition
$\theta$ of the normal form (\autoref{def:normalforms}) must be composed as 
$(\bigotimes_{i = 1}^n \delta) \linearcontract (\bigotimes_{i = 1}^n \delta')$ for some $n \geq 1$.
Since $\delta \not\subseteq \delta'$, it must be that $(\bigotimes_{i = 1}^n \delta) \not\subseteq (\bigotimes_{i = 1}^n \delta')$, and the rule ($\linearcontract$-left) cannot apply.
Therefore, only normal form 1 is possible, but then $\Omega_G$ of the normal form can only contain $G(m(\delta' \linearcontract \delta)) = m(G(\delta' \linearcontract \delta))$ because $m(\cdot)$ is homomorphic.
Thus, by the shape of normal form 1, it must be that $m(G(\delta' \linearcontract \delta)), \Sigma \vdash_{\MuACL} \sigma$, and, by~\autoref{thm:homolemma}, 
$m(G(\delta' \linearcontract \delta)), \Sigma \vdash_{\exlo} \sigma$.
Hence, $G(\delta' \linearcontract \delta), \Sigma \vdash_{\MuACL} \sigma$, because $m(\cdot)$ is correct, but the sequent is trivially not valid.
\end{proof}

\nohomo*
\begin{proof}
Follows from~\autoref{thm:homoinit}.
\end{proof}

\section{Correctness and Completeness of the Compilation}\label{app:compile}

We start with an auxiliary result stating a general property of \MuACL\ and \MuACLs: both are monotone with respect to the non-linear part of the antecedent of sequents.
\begin{prop}[Non-linear Monotony]~\label{thm:monotonicity}
	For each $\Omega, \Omega', \Theta, \Delta, \Sigma, \sigma$, if $\Omega; \Theta, \Delta, \Sigma \vdash \sigma$ is valid in \MuACL\ (or \MuACLs), then 
	$\Omega, \Omega'; \Theta, \Delta, \Sigma \vdash \sigma$ is valid in \MuACL\ (or \MuACLs).
\end{prop}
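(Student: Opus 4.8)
The statement is a pure weakening result for the non-linear context, and the calculus already contains the structural rule (L-Weak)
\[
\prftree[r]{(L-Weak)}{\Omega; \Theta, \Delta, \Sigma \vdash \sigma}{\Omega, \omega; \Theta, \Delta, \Sigma \vdash \sigma}
\]
which is stated (see~\autoref{fig:MuACLfull}) for sequents with arbitrary components $\Theta, \Delta, \Sigma$, not merely for initial sequents. The plan is therefore to argue by a straightforward induction on the cardinality of the multiset $\Omega'$, counted with multiplicity. In the base case $\Omega' = \emptyset$ there is nothing to prove, since $\Omega, \Omega' = \Omega$. For the inductive step, write $\Omega' = \Omega'', \omega$ for some formula $\omega$; by the induction hypothesis $\Omega, \Omega''; \Theta, \Delta, \Sigma \vdash \sigma$ is valid, and appending a single application of (L-Weak) to its proof yields a proof of $\Omega, \Omega'', \omega; \Theta, \Delta, \Sigma \vdash \sigma$, i.e., of $\Omega, \Omega'; \Theta, \Delta, \Sigma \vdash \sigma$.

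Since (L-Weak) is a rule of \MuACL\ and, a fortiori, of \MuACLs\ (which is \MuACL\ augmented only with $(*\text{-cut})$), the very same chain of (L-Weak) applications proves the claim for both logics simultaneously; no case analysis on the shape of the original proof is needed, as we only extend it at the root. There is essentially no obstacle here: the single point requiring a moment's care is that each use of (L-Weak) introduces exactly one non-linear formula, so one needs precisely $|\Omega'|$ applications (with multiplicities), and one must make sure the rule may indeed be applied to a sequent of the general form $\Omega; \Theta, \Delta, \Sigma \vdash \sigma$ rather than to an initial one — which is immediate from the rule's statement.
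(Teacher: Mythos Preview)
Your proof is correct and is essentially the same as the paper's: the paper also simply extends the given proof $\Pi$ by repeated applications of (L-Weak), written compactly with a double inference line, which is exactly your induction on $|\Omega'|$ unfolded.
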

\begin{proof}
	Assume a proof $\Pi$ exists for $\Omega; \Theta, \Delta, \Sigma \vdash \sigma$. Then a proof for $\Omega, \Omega'; \Theta, \Delta, \Sigma \vdash \sigma$ is
{\normalsize 	\[
	\prftree[r, doubleline]
	{(L-Weak)}
	{
		{
			\begin{matrix}
				\Pi\\
				\Omega; \Theta, \Delta, \Sigma \vdash \sigma				
			\end{matrix}
		}
	}
	{\Omega, \Omega'; \Theta, \Delta, \Sigma \vdash \sigma}\qedhere
	\]}
\vspace{-3mm}
\end{proof}

We prove now that the compilation of MuAC into \MuACL\ is correct and complete, and we estimate the size of a \MuACL\ proof for fair transitions and computations.
In the following, we assume as given a ruleset $R_{\usr}$ for each $\usr$, and a context $C$.
We first present some notation that links exchange environments and \MuACL\ theories.
\begin{nota}
	Given a transfer $\tr = \usr \xmapsto{\res} \usr'$, an exchange $\exc = \{\tr_i\}_{i \in I}$, a policy $\pol_{\usr}$, and a user $\usr$, we write:
	\begin{itemize}
		\item $\Delta_{\tr}$ for $\res@\usr \multimap \res@\usr'$;
		\item $exc_{+\usr}$ for $\{ \res@\usr' \multimap \res@\usr \in \exc \}$;
		\item $exc_{-\usr}$ for $\{ \res@\usr \multimap \res@\usr' \in \exc \}$;
		\item $\Delta_{\exc}$ for $\biguplus_{i \in I}\Delta_{\tr_i}$;
		\item $\Theta_{\pol_\usr}$ for 
		    $\{ \Delta_{\exc} \linearcontract (\res@\usr' \multimap \res@\usr'') \mid
			 (\usr' \xmapsto{\res} \usr'' \triangleleft \exc) \in \pol_\usr \}$;
		\item $\Omega_{\pol_\usr}$ for $\{ G(\theta) \mid \theta \in \Theta_{\pol_\usr} \}$;
		\item $\Omega_{\pol}$ for $\biguplus_{\usr \in \Usr} \Omega_{\pol_\usr}$;
		\item $\Omega_C$ for $\semantics{C}$;
		\item $\Omega_{R}$ for $\biguplus_{\usr \in \Usr} \semden{R_\usr}$.
	\end{itemize}
\end{nota}

Moreover, we 
define the \emph{size} of a derivation as the number of inference rules occurring in it.
Note that $\Omega_R$ is composed by formulas $\omega \rightarrow G(\theta)$ with $\omega$ being the conjunction of a number of non-linear atomic propositions $p(\usr_1, \dots, \usr_n)$.
For each atomic proposition $p(\usr_1, \dots, \usr_n)$, we call $\CS_{p(\usr_1, \dots, \usr_n)}$ the smallest proof for $\semantics{C} \Vdash p(\usr_1, \dots, \usr_n)$, if any.
We let $\CS_{C}$ be the maximal size of such proofs, and write $\CS_{C, R}$ for $\CS_{C}$ times the maximum number of atomic propositions in $\omega$ for $\omega \rightarrow G(\theta)$ in $\Omega_R$.
We write $|\Sigma|, |\Delta|, |\Theta|$ for the number of elements in the multisets, identifying $\{ \sigma \otimes \sigma' \}$ with $\{ \sigma, \sigma' \}$.

\subsection{Correctness}

Hereafter, we only consider proofs of initial sequents that are the encoding of MuAC rulesets, states and contexts. 
We start by noticing that the normal forms for such proofs have specific constraints, as shown in the following lemma.

\begin{lem}[MuAC normal form]\label{thm:specnormal}
	A proof $\Pi$ for a sequent $\biguplus_{\usr \in \Usr} \semden{R_\usr}, \semden{C}; \semden{\MuACstate} \vdash \semden{\MuACstate'}$ in \MuACLs\ or \MuACL\ is \emph{normal} iff it can be decomposed in either form	
{\normalsize 	\begin{center}
		\begin{tabular}{c}
			\begin{tabular}{c}
				\prfsummary[$\Pi_{Cr \cup Sr}$]
				{
					\prfsummary[$\Pi_{Gr \cup Sr}$]
					{
						\prftree
						{\Pi_{Lr \cup \{ \text{($\Sigma$-Ax), (I-right)}\}}}
						{\Sigma \vdash \sigma}
					}
					{\Omega_{G}; \Sigma \vdash \sigma}
				}
				{\Omega_R, \Omega_C; \Sigma \vdash \sigma}\\[0.2cm]
				\textit{{\normalsize normal form 1}}
			\end{tabular}
			
			\begin{tabular}{c}
				\hspace{0.5cm}	
				\prfsummary[$\Pi_{Cr \cup Sr}$]
				{
					\prfsummary[$\Pi_{Gr \cup Sr}$]
					{
						\prfsummary[$\Pi_{\{ \text{($\smlinearcontract$-split)} \}}$]
						{
							\prftree[r]
							{($\linearcontract$-left)}
							{
								\prftree
								{\Pi_{Lr \cup \{ \text{($\Sigma$-Ax), (I-right)}\}}}
								{\Delta, \Sigma \vdash \sigma}
							}
							{\theta, \Sigma \vdash \sigma}
						}
						{
							{\Theta, \Sigma \vdash \sigma}
						}
					}
					{\Omega_G; \Sigma \vdash \sigma}
				}
				{\Omega_R, \Omega_C; \Sigma \vdash \sigma}\\[0.2cm]
				\textit{{\normalsize normal form 2}}
			\end{tabular}
		\end{tabular}
	\end{center}
}\end{lem}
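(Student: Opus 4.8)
The plan is to obtain this statement as a specialisation of the general normalisation theorem \autoref{thm:col-normal} (the \MuACLs/\MuACL\ version extending \autoref{def:normalforms}) to sequents that are the compilation of a MuAC system. For such a sequent the antecedent is $\Omega_R,\Omega_C$, where by \autoref{def:compile} $\Omega_R=\biguplus_{\usr\in\Usr}\semden{R_\usr}$ is a multiset of formulas of the form $\Uplambda[u].\,\omega\rightarrow G(\theta)$ with $\theta$ a \emph{contractual} implication $\delta\linearcontract\delta'$, and by \autoref{def:compile-state} $\Omega_C=\semden{C}$ is a finite non-linear theory whose members are atomic predicates $p(\usr_1,\dots,\usr_n)$ only. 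The right-to-left implication is then immediate: the two displayed shapes are literally instances of the general normal forms~1 and~2 of \autoref{def:normalforms}, obtained by taking $\Delta=\emptyset$ in form~1 and the residual $\Delta'=\emptyset$ in form~2, so any $\Pi$ of either shape is normal by definition.

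For the left-to-right implication, assume $\Pi$ is normal, hence it decomposes in general normal form~1 or~2. The first and central step I would carry out is the structural observation that the auxiliary non-linear theory $\Omega_G$ sitting at the top of the bottom block $\Pi_{Cr\cup Sr}$ contains \emph{only} formulas $G(\theta)$ with $\theta$ contractual, and no plain $G(\delta)$. Indeed, by \autoref{thm:vvdashomega} every element of $\Omega_G$ is $\Vdash$-derivable from $\Omega_R,\Omega_C$; since the $\Vdash$-fragment has no rule acting on the opaque modality $G$, a straightforward induction on the $\Vdash$-derivation (the last rule is either ($\Omega$-Ax) or a left rule leaving the succedent untouched) shows that any succedent of the form $G(x)$ must already occur as a subterm of the antecedent, after expanding the finite conjunctions $\Uplambda[u]$, which are mere syntactic sugar. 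The $G$-subterms of $\Omega_R,\Omega_C$ are exactly the consequents $G(\theta)$ of (instances of) compiled rules, and there $\theta$ is a contractual implication by the grammar of $\Theta$ in \autoref{def:prop-muacl}; $\Omega_C$ contributes no $G$-subterm at all. This yields the claim.

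I would then conclude by case analysis. In the block $\Pi_{Gr\cup Sr}$ the rule (G-left-$\delta$) is never applicable since $\Omega_G$ has no $G(\delta)$; hence every contribution of $\Pi_{Gr\cup Sr}$ to the linear side goes, via (G-left-$\theta$), into the $\Theta$-component, and the $\Delta$-component it produces is empty. In general normal form~2 this empty component is precisely the residual $\Delta'$, so $\Delta'=\emptyset$ and $\Pi$ has the displayed form~2. In general normal form~1 no contractual rule is available, so a $\theta$ moved into $\Theta$ by (G-left-$\theta$) could never be eliminated by the remaining rules ($Sr$, and the linear rules of $\Pi_{Lr\cup\{(\Sigma\text{-Ax}),(I\text{-right})\}}$, all of whose sequents have empty $\Theta$, just as in the proof of \autoref{thm:col-normal}); therefore (G-left-$\theta$) is not applied either, every $G(\theta)$ of $\Omega_G$ is discarded by (L-Weak), the $\Delta$-component is empty, and the top sequent of $\Pi_{Gr\cup Sr}$ is $\Sigma\vdash\sigma$ — i.e. $\Pi$ has the displayed form~1. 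The argument is uniform for \MuACLs\ and \MuACL, since the presence of ($*$-cut) only affects the linear block $\Pi_{Lr\cup\{(\Sigma\text{-Ax}),(I\text{-right})\}}$ and is irrelevant to the reasoning about $\Omega_G$.

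I expect the main obstacle to be the structural observation of the second paragraph: isolating the ``subterm property'' of the $\Vdash$-fragment that guarantees a (satisfiable) compiled ruleset can only emit \emph{contractual} $G$-formulas, and keeping the bookkeeping of $\Uplambda[u]$-instantiation clean while doing so; once that is in place, the remaining case analysis on the shape of the general normal forms is routine.
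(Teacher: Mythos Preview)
Your proposal is correct and takes essentially the same approach as the paper: observe that the compiled antecedent $\Omega_R,\Omega_C$ contains only $G(\theta)$ as $G$-subterms (never $G(\delta)$), deduce that $\Omega_G$ consists solely of $G(\theta)$'s, and conclude that the $\Delta$-component in normal form~1 (resp.\ the residual $\Delta'$ in normal form~2) is empty. The paper compresses all of this into two lines---it states the grammar of the compiled $\omega$'s and then says ``Clearly, this is the case, since $G\delta$ is not a subterm of $\Omega_G$''---whereas you spell out the subterm property of the $\Vdash$-fragment via \autoref{thm:vvdashomega} and the case analysis on the $Gr$-rules explicitly; but the argument is the same.
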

\begin{proof}
	The \MuACL\ encoding of~\autoref{def:compile} implies every $\omega \in \Omega$ to be of the form 
	\[
	\omega ::= \top \mid p(\usr_1, \dots, \usr_n) \mid \omega \land \omega \mid \omega \rightarrow \omega \mid \omega \rightarrow G \theta \\
	\]	
	Take the normal form 1 of~\autoref{def:normalforms}.
	We must show that $\Delta = \emptyset$.
	Clearly, this is the case, since $G \delta$ is not a subterm of $\Omega_G$.
\\	
	For the same reason, in a proof in the normal form 2 of~\autoref{def:normalforms}, $\Delta'$ must be empty.
\end{proof}

Proofs in the normal form 1 are trivial because they correspond to proofs where the state does not change (and thus both the correctness and completeness in this case follow trivially).
Hence in the following we will only consider proofs in the normal form 2, i.e., the ones corresponding to nonempty exchanges.

\begin{lem}\label{thm:polinter}
	For 
	every
	 $\Sigma, \sigma$, a \MuACL\ derivation exists from $\Omega_\pol; \Sigma \vdash \sigma$ to $\Omega_R, \Omega_C; \Sigma \vdash \sigma$ of size $O(|\Omega_{\pol}| \cdot \CS_{C, R} + |\Omega_C|)$.
\end{lem}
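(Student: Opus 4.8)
\textbf{Proof proposal for Lemma~\ref{thm:polinter}.}

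The plan is to build the required derivation explicitly and then count the inference rules. Recall that $\Omega_R = \biguplus_{\usr \in \Usr} \semden{R_\usr}$, so each $\omega \in \Omega_R$ has the shape $\Uplambda[u].\,\omega_r \rightarrow G(\theta)$, where $\omega_r$ is a finite conjunction of atomic non-linear predicates $p(u_1,\dots,u_n)$ coming from the \texttt{with} part of a MuAC rule, and $\theta$ is the corresponding linear contractual implication. Also recall that $\Omega_C = \semantics{C}$ and that $\Omega_\pol = \biguplus_{\usr \in \Usr} \Omega_{\pol_\usr}$ consists exactly of the formulas $G(\theta)$ obtained by instantiating the $\Uplambda$-quantifiers and discharging the preconditions. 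The key observation is that each element $G(\theta) \in \Omega_\pol$ arises from a specific $\omega \in \Omega_R$ and a specific interpretation $\rho$ of the user variables: the $\Uplambda[u]$ is syntactic sugar for a finite conjunction, so choosing the right conjunct amounts to a bounded number of $(\land$-left1$)$/$(\land$-left2$)$ applications, and then $(\rightarrow$-left$)$ is applied to peel off $\omega_r \rightarrow G(\theta)$, which requires a proof of $\Omega_C \Vdash \omega_r$.

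First I would fix an enumeration $G(\theta_1),\dots,G(\theta_k)$ of $\Omega_\pol$ (with $k = |\Omega_\pol|$), and for each $i$ identify the source formula $\omega_i \in \Omega_R$ and the interpretation $\rho_i$. Working bottom-up from $\Omega_R, \Omega_C; \Sigma \vdash \sigma$, for each $i$ in turn I would: (i) duplicate $\omega_i$ with $(\textrm{L-Cont})$ if it must be used more than once (but since each $G(\theta_j)$ with distinct index comes from possibly the same $\omega_i$, this is absorbed into the count), (ii) apply $(\textrm{L-}\land\textrm{-left1})$/$(\textrm{L-}\land\textrm{-left2})$ a bounded number of times to select the conjunct of the finite $\Uplambda$-conjunction corresponding to $\rho_i$ — the number of such steps is bounded by the (fixed, finite) number of conjuncts, hence $O(1)$ per formula up to the implicit constant in the size of $\Usr$, (iii) apply $(\textrm{L-}\rightarrow\textrm{-left})$, whose left premise $\Omega_C \Vdash \omega_i^{\rho_i}$ is discharged by decomposing $\omega_i^{\rho_i}$ into its atomic conjuncts via $(\land$-right$)$ — one step per conjunct — and plugging in the minimal proofs $\CS_{p(\usr_1,\dots,\usr_n)}$ whose size is at most $\CS_C$ each, so the whole left-premise proof has size $O(\CS_{C,R})$ by definition of $\CS_{C,R}$ as $\CS_C$ times the max number of atomic predicates in any $\omega_r$. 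After processing all $k$ formulas, the remaining non-linear context is just $\Omega_C$, which I would simply discard with $|\Omega_C|$ applications of $(\textrm{L-Weak})$ once the linear sequent $\Omega_G; \Sigma \vdash \sigma$ (where $\Omega_G = \Omega_\pol$) is reached — actually, to match the stated target $\Omega_\pol; \Sigma \vdash \sigma$, the $\Omega_C$ part is weakened away, contributing $O(|\Omega_C|)$ rules.

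Summing the counts: the per-formula cost is $O(1 + \CS_{C,R})$ (constant structural steps for conjunct selection and $\rightarrow$-left, plus the precondition proof of size $O(\CS_{C,R})$), over $k = O(|\Omega_\pol|)$ formulas, giving $O(|\Omega_\pol| \cdot \CS_{C,R})$; plus $O(|\Omega_C|)$ for the final weakening; total $O(|\Omega_\pol| \cdot \CS_{C,R} + |\Omega_C|)$, as claimed. The correctness of the construction — that these rule applications are indeed legal and that the resulting object is a well-formed \MuACL\ derivation with the stated endpoints — follows by inspection of the rules in $Cr \cup Sr$ (specifically $(\textrm{L-Cont})$, $(\textrm{L-Weak})$, $(\textrm{L-}\land\textrm{-left})$, $(\textrm{L-}\rightarrow\textrm{-left})$) together with the fact (Proposition~\ref{thm:monotonicity}, non-linear monotony) that adding $\Omega_C$ to the antecedent preserves validity, so the $\Vdash$-premises can always be supplied. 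The main obstacle I anticipate is bookkeeping: one must be careful that when the \emph{same} source formula $\omega_i \in \Omega_R$ gives rise to several distinct instantiations in $\Omega_\pol$, the $(\textrm{L-Cont})$ duplications are counted correctly and do not blow up the bound — but since each duplication and subsequent conjunct-selection is charged to the corresponding element of $\Omega_\pol$, the linear dependence on $|\Omega_\pol|$ is preserved. A secondary subtlety is making precise what "$\Omega_\pol$ is reached" means given that $\Uplambda$ is only sugar; I would handle this by treating $\Uplambda[u].\psi$ as the explicit finite conjunction $\bigwedge_{\rho} \psi^\rho$ throughout, so that conjunct selection is literally iterated $(\land$-left$)$.
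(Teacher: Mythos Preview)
Your proposal is correct and follows essentially the same line as the paper's proof: for each $G(\theta)\in\Omega_\pol$ you locate the generating rule in $\Omega_R$, select the $\rho$-instance from the finite $\Uplambda$-conjunction, discharge the precondition $\omega_r$ against $\Omega_C$ at cost $O(\CS_{C,R})$, and finally weaken away $\Omega_C$. The only stylistic difference is that the paper packages each step as a $\Vdash$-lemma ($\semden{R_\usr},\semden{C}\Vdash G(\theta)$) and then plugs it in via $(\Omega\text{-cut})$, whereas you work directly in the $\vdash$-judgment with the L-variants of the same rules; this is the same argument inlined rather than factored.
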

\begin{proof}
	It follows from the property below
	by using ($\Omega$-cut) and 
	since  $\pol_\usr = \semdeno{R_\usr} C$.
	\[
	\text{If }\tr \triangleleft \exc \in \semdeno{R_{\usr}} C \text{ then } \semden{R_{\usr}}, \semden{C} \Vdash G(\Delta_{\exc} \linearcontract \Delta_{\tr}).
	\]
	Assume $\tr \triangleleft \exc \in \semdeno{R_{\usr}} C$, then $\tr \triangleleft \exc \in \semdeno{r} C$ for some MuAC rule $r \in R_\usr$.
	By definition and since $\tr \triangleleft \exc \in \semdeno{R_{\usr}} C$, $\semdeno{PredLs} C \rho$ holds for some $\rho$.
	Then, for the same $\rho$, by~\autoref{def:compile-state}, $\semden{C} \Vdash \semden{PredLs}[\rho(u) / u]$.
	
	Finally, let $\semdeno{r} C$ be $\Uplambda[u].\omega \rightarrow G(\theta)$, with $\omega = \semden{PredLs}_\usr$.
	The derivation can be constructed with a single application of (L-$\rightarrow$-left), and the size of the proof for the left premise is $O(\mathit{CS})$ by definition.
	
	The size of the derivation is thus $\CS_{C, R}$ for 
	each
	 formula in $\Omega_{\pol}$, plus an instance of the (L-Weak) rule for 
	 each
	  formula in $\Omega_C$.
\end{proof}

Hereafter, we write $\Omega^M_{\pol}$ for a multiset defined on $\Omega_{\pol}$, i.e., a function from $\Omega_{\pol}$ to $\mathbb{N}$.

\begin{lem}\label{thm:formavaltofair}
	If a \MuACL\ proof $\Pi$ exists for a sequent $\biguplus_{\usr \in \Usr} \semden{R_\usr}, \semden{C}; \semden{\MuACstate} \vdash \semden{\MuACstate'}$ in the normal form 2, then it can be decomposed as
{\normalsize 	\[
	\prfsummary[$\Pi_{Cr \cup Sr}$]
	{
		\prfsummary[$\Pi_{Sr}$]
		{
			\prftree[doubleline, r]
			{(G-left-$\theta$)}
			{
				\prfsummary[$\Pi_{\{ \text{($\smlinearcontract$-split)} \}}$]
				{
					\prftree[r]
					{($\linearcontract$-left)}
					{
						\prftree
						{\Pi_{Lr \cup \{ \text{($\Omega$-Ax), (I-right)}\}}}
						{\Delta, \Sigma \vdash \sigma}
					}
					{\theta, \Sigma \vdash \sigma}
				}
				{
					{\Theta, \Sigma \vdash \sigma}
				}
			}
			{\Omega^M_{\pol}; \Sigma \vdash \sigma}
		}
		{\Omega_{\pol}; \Sigma \vdash \sigma}
	}
	{\Omega_R, \Omega_C; \Sigma \vdash \sigma}
	\]
}\end{lem}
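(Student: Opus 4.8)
The plan is to start from a normal-form-2 proof $\Pi$ of $\Omega_R, \Omega_C; \semden{\MuACstate} \vdash \semden{\MuACstate'}$ — abbreviate $\Sigma = \semden{\MuACstate}$, $\sigma = \semden{\MuACstate'}$ — as handed to us by \autoref{thm:specnormal}, to keep its uppermost portion verbatim, and to rebuild the part below it so that the passage through $\Omega_G$ is replaced by the canonical passage through $\Omega_{\pol}$. Concretely, \autoref{thm:specnormal} splits $\Pi$ into: $\Pi_{Cr \cup Sr}$ from $\Omega_G; \Sigma \vdash \sigma$ down to $\Omega_R, \Omega_C; \Sigma \vdash \sigma$; $\Pi_{Gr \cup Sr}$ from $\Theta, \Sigma \vdash \sigma$ down to $\Omega_G; \Sigma \vdash \sigma$; and, above $\Theta, \Sigma \vdash \sigma$, the block consisting of $\Pi_{\{(\linearcontract\text{-split})\}}$, then ($\linearcontract$-left), then $\Pi_{Lr \cup \{(\Sigma\text{-Ax}),(I\text{-right})\}}$ proving $\Delta, \Sigma \vdash \sigma$. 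These last three pieces are exactly the upper portion demanded in the statement, so I retain them unchanged; it remains to produce, from $\Theta, \Sigma \vdash \sigma$ down to $\Omega_R, \Omega_C; \Sigma \vdash \sigma$, a derivation of the shape \emph{iterated} (G-left-$\theta$), then $\Pi_{Sr}$, then $\Pi_{Cr \cup Sr}$.

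Two observations drive the reconstruction. First, the linear antecedent of the sequent that feeds $\Pi_{Gr \cup Sr}$ is the pure contract multiset $\Theta$ (this is part of the normal-form shape of \autoref{thm:specnormal}); reading that block downward, the only rule of $Gr \cup Sr$ that removes a contract $\theta$ is (G-left-$\theta$), which deposits $G(\theta)$ in the non-linear antecedent, and from there it can at worst be merged with an identical copy by (L-Cont) but never discarded, so $G(\theta) \in \Omega_G$ for every contract $\theta$ occurring in $\Theta$. Second, since $\Pi_{Cr \cup Sr}$ derives $\Omega_G; \Sigma \vdash \sigma$ from $\Omega_R, \Omega_C; \Sigma \vdash \sigma$ using only rules of $Cr \cup Sr$, \autoref{thm:vvdashomega} gives $\Omega_R, \Omega_C \Vdash G(\theta)$ for each such $\theta$; and by correctness of the compilation (the same correspondence between $\pol_\usr$ and $\semdeno{R_\usr}\,C$ exploited in \autoref{thm:polinter}) this forces $\theta \in \bigcup_{\usr} \Theta_{\pol_\usr}$, i.e.\ $G(\theta)$ lies in the support of $\Omega_{\pol}$ (note $\Omega_R$ contains no $G(\delta)$ as a subterm, so no $G(\delta)$ atoms ever arise). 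Hence the multiset $\Omega^M_{\pol}$ that contains one copy of $G(\theta)$ for every occurrence of $\theta$ in $\Theta$ is indeed a multiset defined on $\Omega_{\pol}$.

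Now I assemble the new lower portion. Applying (G-left-$\theta$) once per element of the multiset $\Theta$ — the rule carries no side condition — derives $\Theta, \Sigma \vdash \sigma$ from $\Omega^M_{\pol}; \Sigma \vdash \sigma$, which is the iterated-(G-left-$\theta$) block. Since $\Omega^M_{\pol}$ and $\Omega_{\pol}$ have the same support and $\Omega_{\pol}$ carries exactly one copy of each of its elements, $\Omega_{\pol}; \Sigma \vdash \sigma$ derives $\Omega^M_{\pol}; \Sigma \vdash \sigma$ using only (L-Cont) to create the required duplicates and (L-Weak) to delete the elements of $\Omega_{\pol}$ absent from $\Omega^M_{\pol}$; this is the $\Pi_{Sr}$ block. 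Finally, \autoref{thm:polinter} supplies the $\Pi_{Cr \cup Sr}$ derivation from $\Omega_{\pol}; \Sigma \vdash \sigma$ down to $\Omega_R, \Omega_C; \Sigma \vdash \sigma$. Stacking these three blocks below the retained proof of $\Theta, \Sigma \vdash \sigma$ yields a proof of the original sequent with exactly the claimed decomposition. The argument never distinguishes \MuACL\ from \MuACLs, since all rules used lie in the common fragment and \autoref{thm:specnormal}, \autoref{thm:vvdashomega} and \autoref{thm:polinter} hold for both.

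The main obstacle is the second observation: one must trace the given proof to conclude that every contract $\theta$ that ever appears in the linear antecedent is one of the compiled policy contracts, which is precisely where the correctness half of the ruleset compilation is needed; the remainder is bookkeeping about structural and $G$-introduction rules. A subsidiary point to handle cleanly is that no (G-left-$\delta$) can occur inside $\Pi_{Gr \cup Sr}$ — such a rule would push a plain implication $\delta$ into the linear antecedent with nothing in this block able to remove it, contradicting the pure-$\Theta$ shape at the top — which is what keeps $\Omega^M_{\pol}$ confined to $G(\theta)$ atoms.
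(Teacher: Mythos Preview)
Your proof is correct and follows essentially the same route as the paper's: both invoke \autoref{thm:polinter} for the bottom $\Pi_{Cr\cup Sr}$ block, both need that every contract $\theta$ appearing in $\Theta$ satisfies $G(\theta)\in\Omega_{\pol}$, and both connect $\Theta,\Sigma\vdash\sigma$ to $\Omega_{\pol};\Sigma\vdash\sigma$ via structural and $G$-left rules. The paper is terser---it asserts ``by definition $\Omega_G\subseteq\Omega_{\pol}$'' and then applies monotonicity and \autoref{thm:G-CS} to reorder the existing $\Pi_{Gr\cup Sr}$---whereas you unpack that inclusion via \autoref{thm:vvdashomega} and rebuild the middle derivation from scratch; your version is more explicit but not genuinely different.

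One small citation slip: the implication you need, namely that $\Omega_R,\Omega_C\Vdash G(\theta)$ forces $\theta\in\bigcup_\usr\Theta_{\pol_\usr}$, is the \emph{correctness} direction of the compilation. \autoref{thm:polinter} proves only the converse (from $\tr\triangleleft\exc\in\pol_\usr$ to provability of $G(\Delta_\exc\linearcontract\Delta_\tr)$); the direction you rely on is stated in the proof of \autoref{thm:acceptderivetofair}. This does not affect the mathematics, only the cross-reference.
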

\begin{proof}
	By~\autoref{thm:polinter}, the derivation $\Pi_{Cr \cup Sr}$ exists.
	Moreover, by~\autoref{thm:specnormal} a derivation exists from $\Theta, \Sigma \vdash \sigma$ to $\Omega_G; \Sigma \vdash \sigma$.
	Note that, by definition, $\Omega_G \subseteq \Omega_{\pol}$, hence, by~\autoref{thm:monotonicity}, a proof exists from $\Theta, \Sigma \vdash \sigma$ to $\Omega_{\Pol}; \Sigma \vdash \sigma$.
	Finally, note that structural rules in $Sr$ can be moved in the bottom of the derivation by~\autoref{thm:G-CS}.
\end{proof}

\begin{lem}\label{thm:acceptderivetofair}
	Let $R$, $C$ be such that the following is a valid \MuACL\ derivation,
{\normalsize 	\[
\prfsummary[$\Pi_{Cr \cup Sr}$]
{
	\prfsummary[$\Pi_{Sr}$]
	{
		\prftree[doubleline, r]
		{(G-left-$\theta$)}
		{
			\prfsummary[$\Pi_{\{ \text{($\smlinearcontract$-split)} \}}$]
			{
				\prftree[r]
				{($\linearcontract$-left)}
				{
					{\Delta_\exc, \Sigma \vdash \sigma}
				}
				{\theta, \Sigma \vdash \sigma}
			}
			{
				{\Theta, \Sigma \vdash \sigma}
			}
		}
		{\Omega^M_{\pol}; \Sigma \vdash \sigma}
	}
	{\Omega_{\pol}; \Sigma \vdash \sigma}
}
{\Omega_R, \Omega_C; \Sigma \vdash \sigma}
\]
}	Then $\exc$ is a fair exchange.
\end{lem}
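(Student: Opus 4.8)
The plan is to treat the prescribed shape of the derivation as a decoding device: reading it bottom-up tells us exactly which multiset of transfers $\exc$ is, which exchange approvals of which users are being invoked, and how they are charged against $\exc$; from this data I then simply exhibit the witnesses $\exc_\usr$ demanded by \autoref{def:fairexc}. Throughout I use the notation $\Delta_{\tr}$, $\Delta_{\exc}$, $\Theta_{\pol_\usr}$, $\Omega_{\pol}$ and the (easy) injectivity of the encodings $\tr\mapsto\Delta_{\tr}$ and $\exc\mapsto\Delta_{\exc}$ up to associativity/commutativity of $\otimes$.

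First I would decode the top of the derivation. The iterated $(G\text{-left-}\theta)$ block turns $\Omega^M_{\pol}$, a multiset drawn from $\Omega_{\pol}=\biguplus_{\usr}\Omega_{\pol_\usr}$, into a linear multiset $\Theta=\mset{\theta_1,\dots,\theta_n}$; by the definition of $\Omega_{\pol_\usr}$ and $\Theta_{\pol_\usr}$ each $\theta_j$ has the shape $\Delta_{\exc_j}\linearcontract\Delta_{\tr_j}$ for some approval $(\tr_j\triangleleft\exc_j)\in\pol_{\usr_j}$ whose giver, by \autoref{def:policy}, is $\usr_j$; since $\Delta_{\tr_j}$ is a single implication and the encodings are injective, this reading of $\theta_j$ is unique. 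Next, \autoref{thm:colapp-merge} says the $(\linearcontract\text{-split})$ block collapses $\Theta$ into the single contract $\theta=\bigotimes_j\Delta_{\exc_j}\linearcontract\bigotimes_j\Delta_{\tr_j}$; the side condition of $(\linearcontract\text{-left})$ then yields $\biguplus_j\Delta_{\exc_j}\subseteq\biguplus_j\Delta_{\tr_j}$, and its premise, which the statement identifies as $\Delta_{\exc},\Sigma\vdash\sigma$, forces $\Delta_{\exc}=\biguplus_j\Delta_{\tr_j}$. Decoding gives $\exc=\biguplus_{j=1}^{n}\{\tr_j\}$ and $\biguplus_j\exc_j\subseteq\exc$.

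Then I would build the witnesses. For a user $\usr$ put $I_\usr=\{\,j\mid\usr_j=\usr\,\}$ and $\exc_\usr:=\biguplus_{j\in I_\usr}\exc_j$. Since $\exc=\biguplus_j\{\tr_j\}$ and each $\tr_j$ has giver $\usr_j$, the sub-multiset of transfers of $\exc$ given by $\usr$ is exactly $\biguplus_{j\in I_\usr}\{\tr_j\}$, whereas every $\exc_j$ with $j\in I_\usr$ contains no transfer given by $\usr$ (again \autoref{def:policy}); combining this with $\biguplus_j\exc_j\subseteq\exc$ shows that $\exc$ splits as $\biguplus_{j\in I_\usr}\{\tr_j\}\;\uplus\;\biguplus_{j\in I_\usr}\exc_j\;\uplus\;F$ with $F$ containing no transfer given by $\usr$. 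A straightforward induction that peels off the $\tr_j$ ($j\in I_\usr$) one at a time — each step an instance of clause~(2) of \autoref{def:accepted-exc} using the approval $(\tr_j\triangleleft\exc_j)$ and consuming the corresponding $\exc_j$, with the base case being clause~(1) applied to $F$ — establishes $\pol_\usr\vDash_{\exc_\usr}\exc$. Finally, because $\{I_\usr\}_{\usr}$ partitions $\{1,\dots,n\}$, we get $\biguplus_{\usr}\exc_\usr=\biguplus_j\exc_j\subseteq\exc$, so by \autoref{def:fairexc} the exchange $\exc$ is fair.

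I expect the main obstacle to be the bookkeeping forced by clause~(2) of \autoref{def:accepted-exc}: that clause removes the return-exchange of the invoked approval from the very multiset whose acceptance is being derived, so the proof must ensure that all the returns $\exc_j$ ($j\in I_\usr$) genuinely sit inside $\exc$ simultaneously and disjointly from the transfers $\tr_j$ they pay for. This is precisely where the inclusion $\biguplus_j\exc_j\subseteq\exc$ extracted from the $(\linearcontract\text{-left})$ side condition is indispensable, together with the fact that a return-exchange of $\usr$ never contains a transfer given by $\usr$; and the induction must be arranged so that these returns are produced \emph{within} the exchange being accepted rather than merely accumulated in the witness $\exc_\usr$.
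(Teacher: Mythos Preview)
Your proposal is correct and follows essentially the same route as the paper's proof: decode $\Theta$ into approvals $(\tr_j\triangleleft\exc_j)\in\pol_{\usr_j}$, read off $\exc=\biguplus_j\{\tr_j\}$ and the no-double-spending inclusion $\biguplus_j\exc_j\subseteq\exc$ from the $(\linearcontract\text{-left})$ instance, and assemble the per-user witnesses $\exc_\usr$. The only organisational difference is that the paper packages your peeling-off induction as a one-line compositionality fact about $\vDash$ (namely $\pol_\usr\vDash_{\exc_1'}\exc_1$ and $\pol_\usr\vDash_{\exc_2'}\exc_2$ imply $\pol_\usr\vDash_{\exc_1'\uplus\exc_2'}\exc_1\uplus\exc_2$), whereas you unfold that induction explicitly and in doing so make visible exactly the bookkeeping point you flag in your last paragraph.
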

\begin{proof}
	The result follows from the following property implied by $\pol_\usr = \semdeno{R_\usr} C$.
	\[
	\text{If } \semden{R_{\usr}}, \semden{C} \Vdash G(\Delta_{\exc} \linearcontract \Delta_{\tr}) \text{ then } \tr \triangleleft \exc \in \semdeno{R_{\usr}} C.
	\]
	Note indeed the $\Delta_{\exc} \linearcontract \Delta_{\tr} \in \Theta$ implies $\tr \triangleleft \exc \in \semdeno{R_{\usr}} C$ for some $\usr$, and hence
	$\pol_{\usr} \vDash_{\exc} \tr$ and $\pol_{\usr'} \vDash_{\emptyset} \tr$ for $\usr' \neq \usr$.
	Note also that $\pol_{\usr} \vDash_{\exc_1'} \exc_1$ and $\pol_{\usr} \vDash_{\exc_2'} \exc_2$ implies $\pol_{\usr} \vDash_{\exc_1' \uplus \exc_2'} \exc_1 \uplus \exc_2$.
	
	By definition, we also have that $\theta = \Delta_{\exc'} \linearcontract \Delta_{\exc}$ .
	Due to the previous result, we know that all the user policies accept the given exchange $\pol_{\usr} \vDash_{\exc_{\usr}} \exc$.
	Finally, the left premise of ($\linearcontract$-left) guarantees that no double spending occur, since 
	\[
		\Delta_{\exc'} = \biguplus_{\Delta_{\exc''} \linearcontract \Delta_{\tr} \in \Theta} \Delta_{\exc''} = \biguplus_{\usr \in \Usr} \Delta_{\exc_{\usr}}\qedhere
	\]
\end{proof}

\begin{lem}\label{thm:deltaexctofair}
	For 
	every
	 $\MuACstate, \MuACstate', \exc$, if $\Delta_{\exc}, \semantics{\MuACstate} \vdash \semantics{\MuACstate'}$ is valid in \MuACL\ then $\MuACstate \xrightarrow{\exc} \MuACstate'$.
\end{lem}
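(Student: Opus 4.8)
The plan is to prove the statement in two stages: first reduce to purely linear derivations, then extract the two conditions defining the transition relation by a structural induction on such a derivation.

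First, I would observe that any derivation of $\Delta_{\exc}, \semantics{\MuACstate} \vdash \semantics{\MuACstate'}$ in \MuACL\ uses only the linear rules and the linear axioms. Indeed, the antecedent $\Delta_{\exc}, \semantics{\MuACstate}$ contains no non-linear formula, no contractual implication $\delta \linearcontract \delta'$ and no $G$-lifted formula, while the succedent $\semantics{\MuACstate'}$ is a tensor of atoms $\res@\usr$. Reading a derivation bottom-up, each rule of $Cr \cup Sr \cup Gr \cup Pr$ (\autoref{not:rulesets}) would require in its conclusion a formula of a shape absent from the sequent — a non-linear connective, a $G(\cdot)$, or a $\linearcontract$ — and none of the linear rules in $Lr$ nor the axioms ever introduces such a formula when read bottom-up; the only subtle case is $(\Omega\text{-cut})$, which here can inject into the non-linear context only a tautology of the $\{\top,\land,\rightarrow\}$-fragment, a formula with no bearing on the linear part. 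Hence the whole derivation consists of rules in $Lr \cup \{(\Sigma\text{-Ax}),(I\text{-right})\}$, and the $\Omega$- and $\Theta$-components remain empty throughout. This subformula-style observation — in particular ruling out spurious uses of $(\Omega\text{-cut})$ and of the non-linear structural rules — is the step I expect to require the most care to state precisely; once it is granted, the rest is routine bookkeeping.

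Second, I would prove by induction on the size of such a linear derivation the following strengthening: whenever $\Delta_{\exc_0}, \Sigma_0 \vdash \sigma_0$ is derivable using only rules of $Lr \cup \{(\Sigma\text{-Ax}),(I\text{-right})\}$, with $\Sigma_0$ a multiset of atoms and $\sigma_0$ a tensor of atoms viewed as a multiset (with $I$ the empty one), then for every user $\usr$ and resource $\res$, writing $a = \res@\usr$, one has $\sum_{\usr'}\exc_0(\usr \xmapsto{\res}\usr') \le \Sigma_0(a)$ and $\sigma_0(a) = \Sigma_0(a) - \sum_{\usr'}\exc_0(\usr \xmapsto{\res}\usr') + \sum_{\usr''}\exc_0(\usr'' \xmapsto{\res}\usr)$. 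The axioms $(\Sigma\text{-Ax})$ and $(I\text{-right})$ are immediate; $(\otimes\text{-left-}\Sigma)$ and $(\otimes\text{-left-}\Delta)$ leave $\exc_0$, $\Sigma_0$ and $\sigma_0$ unchanged as multisets, so the induction hypothesis transfers; $(\otimes\text{-right})$ splits all three components additively between its two premises, and the claim, being additive in each component, is preserved; finally, for $(\multimap\text{-left})$, whose conclusion has the shape $\Sigma_0, (\res@\usr \multimap \res@\usr') \vdash \res@\usr'$ with premise $\Sigma_0 \vdash \res@\usr$, the induction hypothesis forces $\Sigma_0 = \{\res@\usr\}$ and $\exc_0 = \{\usr \xmapsto{\res} \usr'\}$, after which a one-line computation (using $\usr \neq \usr'$ for transfers) checks both the inequality and the equation. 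Instantiating the strengthening with $\exc_0 = \exc$, $\Sigma_0 = \semantics{\MuACstate}$, $\sigma_0 = \semantics{\MuACstate'}$ and unfolding $\semantics{\cdot}$ yields precisely the two conditions — availability of resources, and their conservative update — that define $\MuACstate \xrightarrow{\exc} \MuACstate'$, which completes the argument.
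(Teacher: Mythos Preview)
Your proposal is correct and takes essentially the same route as the paper: a rule induction on the derivation, noting that only the linear rules and axioms apply to a sequent of this shape, and then checking each of $(\Sigma\text{-Ax})$, $(\otimes\text{-left})$, $(\otimes\text{-right})$, and $(\multimap\text{-left})$. You are more explicit than the paper about the restriction to linear rules (the paper simply says ``we can ignore rules that are not applicable due to the form of the sequent'') and about the generalisation from states to arbitrary atom-multisets needed for the $(\otimes\text{-right})$ case (the paper just writes $\st_1 \uplus \st_2$), but the underlying argument is the same.
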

\begin{proof}
By induction on the rules of \MuACL.
We can ignore rules that are not applicable due to the form of the sequent.
The property trivially holds for ($\Sigma$-Ax) with $\exc = \emptyset$, and for ($\otimes$-left) since we assume formulas combined with $\otimes$ to be the same as multisets.
Consider the rule ($\otimes$-right), the property follows by the induction hypothesis since 
$\st_1 \xrightarrow{\exc_1} \st_1'$ and $\st_2 \xrightarrow{\exc_2} \st_2'$ implies $(\st_1 \uplus \st_2) \xrightarrow{\exc_1 \uplus \exc_2} (\st_1' \uplus \st_2')$.
Consider the rule ($\multimap$-left), and note that $\semantics{\st} \vdash \semantics{\st'}$ implies $\st = \st'$.
Then, the result follows by definition of $\Delta_{\exc}$.
\end{proof}

\begin{lem}[validity implies fairness]\label{thm:validitytofairness}
	For
	every
	 $\st, \st'$, if $\Omega_R, \Omega_C; \semantics{\st} \vdash \semantics{\st'}$ is valid in \MuACL, then 
	$\st \xrightarrow{\exc} \st'$ is a fair transition for some $\exc$.
\end{lem}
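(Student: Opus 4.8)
The plan is to reduce a valid derivation to one of the two normal forms identified by \autoref{thm:specnormal}, and then to chain together the bookkeeping lemmata \autoref{thm:formavaltofair}, \autoref{thm:acceptderivetofair} and \autoref{thm:deltaexctofair} that have already been established. Writing $\Sigma = \semantics{\st}$ and $\sigma = \semantics{\st'}$, suppose $\Omega_R, \Omega_C; \Sigma \vdash \sigma$ is valid in \MuACL. By \autoref{thm:col-normal} there is a normal proof $\Pi$ of it, and since its antecedent is the compilation of the rulesets $R_\usr$ and of a context $C$, \autoref{thm:specnormal} tells us that $\Pi$ is in normal form~$1$ or in normal form~$2$; I would treat these two cases separately.

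For normal form~$1$, \autoref{thm:specnormal} forces the $\Delta$-component to be empty, so the top of $\Pi$ already proves $\Sigma \vdash \sigma$. Choosing $\exc = \emptyset$, for which $\Delta_{\exc}$ is the empty multiset (i.e.\ $I$), the sequent $\Delta_{\exc}, \Sigma \vdash \sigma$ coincides with $\Sigma \vdash \sigma$ and is therefore valid, so \autoref{thm:deltaexctofair} yields $\st \xrightarrow{\emptyset} \st'$. The empty exchange is vacuously fair: for every $\usr$ we have $\pol_\usr \vDash_{\emptyset} \emptyset$ by clause~\ref{accept:others} of \autoref{def:accepted-exc}, and $\biguplus_{\usr \in \Usr} \emptyset = \emptyset \subseteq \emptyset$, so the requirement of \autoref{def:fairexc} is met and $\st \xrightarrow{\emptyset} \st'$ is a fair transition.

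For normal form~$2$, I would apply \autoref{thm:formavaltofair} to obtain the explicit decomposition of $\Pi$, in which a sub-derivation $\Pi_{Lr \cup \{ (\Omega\text{-Ax}), (I\text{-right}) \}}$ proves $\Delta, \Sigma \vdash \sigma$, where $\Delta$ is the right-hand side $\delta'$ of the contract $\theta = \delta \linearcontract \delta'$ eliminated by ($\linearcontract$-left), $\theta$ being assembled via ($\linearcontract$-split) from the contracts in $\Theta$ produced out of $\Omega^M_{\pol}$ by (G-left-$\theta$). The one small extra observation to record is that every element of $\Omega_{\pol}$ has the shape $G(\Delta_{\exc'} \linearcontract \Delta_{\tr})$ with $\tr \triangleleft \exc' \in \pol_\usr$, so $\Delta_{\tr}$ is a single linear implication $\res@\usr_1 \multimap \res@\usr_2$ with $\usr_1 \neq \usr_2$ because $\tr \in \Tran$; since ($\linearcontract$-split) merges contracts by tensoring their right-hand sides, $\Delta = \delta'$ is a tensor of such well-formed transfer implications and hence equals $\Delta_{\exc}$ for an exchange $\exc$ that is uniquely determined up to commutativity and associativity of $\otimes$. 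With this $\exc$, the decomposition delivered by \autoref{thm:formavaltofair} is precisely the hypothesis of \autoref{thm:acceptderivetofair}, so $\exc$ is a fair exchange; and the top sub-derivation witnesses the validity of $\Delta_{\exc}, \Sigma \vdash \sigma$, so by \autoref{thm:deltaexctofair} we get $\st \xrightarrow{\exc} \st'$. Hence $\st \xrightarrow{\exc} \st'$ is a fair transition, which is what we wanted.

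The main obstacle is not a deep step — the heavy work is delegated to the normal-form theorems and to \autoref{thm:formavaltofair}--\autoref{thm:deltaexctofair} — but the careful bookkeeping needed to identify the linear component $\Delta$ at the top of the normal-form-$2$ derivation with a genuine exchange $\Delta_{\exc}$ all of whose transfers have distinct giver and receiver, so that the hypotheses of the cited lemmata are met on the nose. One must also remember to dispatch the degenerate normal-form-$1$ case separately, since there the state is unchanged, $\Delta$ is empty, and no contract is eliminated.
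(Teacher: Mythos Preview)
Your proof is correct and follows essentially the same route as the paper, which simply cites \autoref{thm:formavaltofair}, \autoref{thm:acceptderivetofair} and \autoref{thm:deltaexctofair}. You are more explicit than the paper in two places: you spell out the normal-form-$1$ case (which the paper dismisses in a one-line remark before \autoref{thm:formavaltofair}), and you supply the missing bookkeeping showing that the $\Delta$ appearing at the top of the normal-form-$2$ decomposition is genuinely of the shape $\Delta_{\exc}$ for an exchange $\exc$, which is exactly what \autoref{thm:acceptderivetofair} requires as input; the paper silently identifies the two.
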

\begin{proof}
Follows from~\autoref{thm:formavaltofair}, \autoref{thm:acceptderivetofair} and ~\autoref{thm:deltaexctofair}
\end{proof}

\begin{lem}\label{thm:deltaexctofaircomp}
	For 
	every
	 $\MuACstate_0, \MuACstate_n$ and $\exc$, if $\Delta_{\exc}, \semantics{\MuACstate_0} \vdash \semantics{\MuACstate_n}$ is valid in \MuACLs\ then $\MuACstate_0 \xrightarrow{\exc_1} \MuACstate_1 \xrightarrow{\exc_2} \dots \xrightarrow{\exc_n} \MuACstate_n$ is a computation with $\uplus_{i = 1}^n \exc_i = \exc$.
\end{lem}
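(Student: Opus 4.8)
The plan is to prove the statement by induction on the \MuACLs\ derivation of $\Delta_{\exc}, \semantics{\MuACstate_0} \vdash \semantics{\MuACstate_n}$, obtaining Lemma~\ref{thm:deltaexctofaircomp} as the "computation" generalisation of \autoref{thm:deltaexctofair}, where the conclusion was a single transition. The endsequent contains only linear propositions built from transfer implications $\res@\usr \multimap \res@\usr'$ and state atoms $\res@\usr$, with empty non-linear and $\linearcontract$-components, so --- exactly as in the proof of \autoref{thm:deltaexctofair}, and as ensured by the normalised shape of compiled derivations (\autoref{thm:specnormal}, \autoref{th:fair-exchange}) --- the rules that can occur are those of $Lr \cup \{\text{($\Sigma$-Ax), (I-right)}\}$, which for \MuACLs\ now also includes ($*$-cut). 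For ($\Sigma$-Ax) and (I-right) one has $\exc = \emptyset$ and, since $\semantics{\cdot}$ is injective on states, $\MuACstate_0 = \MuACstate_n$, so the empty computation ($n=0$) works; ($\otimes$-left-$\Delta$) and ($\otimes$-left-$\Sigma$) are immediate because tensors are identified with multisets; and ($\multimap$-left) is handled verbatim as in \autoref{thm:deltaexctofair}, a single transition being a one-step computation. The genuinely new cases are ($*$-cut) and ($\otimes$-right), which split the derivation into two sub-derivations to be recombined \emph{sequentially}.

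The gluing fact I would isolate first is that adding resources preserves transitions: if $\MuACstate \xrightarrow{\exc} \MuACstate'$ then $\MuACstate \uplus \MuACstate'' \xrightarrow{\exc} \MuACstate' \uplus \MuACstate''$ for every state $\MuACstate''$, because conditions $(1)$ and $(2)$ in the definition of exchange environment are both stable under adding a constant state to both sides (the bound in $(1)$ only grows; $(2)$ adds the same quantity to each side). Hence padding preserves computations: $\MuACstate \rightarrow^* \MuACstate'$ implies $\MuACstate \uplus \MuACstate'' \rightarrow^* \MuACstate' \uplus \MuACstate''$ with the same sequence of step-exchanges. For ($*$-cut) the endsequent arises from premises $\Delta_{\exc_L}, \semantics{\MuACstate_0^L} \vdash \sigma'$ and $\Delta_{\exc_R}, \semantics{\MuACstate_0^R}, \sigma' \vdash \semantics{\MuACstate_n}$ where: $\exc = \exc_L \uplus \exc_R$, since $\tr \mapsto \Delta_{\tr}$ is a bijection onto transfer implications and so sub-multisets of $\Delta_{\exc}$ are exactly the $\Delta$ of sub-multisets of $\exc$; $\MuACstate_0 = \MuACstate_0^L \uplus \MuACstate_0^R$, since $\semantics{\cdot}$ is a bijection between states and finite multisets of atoms; and $\sigma' = \semantics{\MuACstate_a}$ for the unique state $\MuACstate_a$ it encodes (every $\Sigma$-proposition being a tensor of atoms). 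Thus the right premise is $\Delta_{\exc_R}, \semantics{\MuACstate_0^R \uplus \MuACstate_a} \vdash \semantics{\MuACstate_n}$.

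Applying the induction hypothesis to the two premises yields a computation $\MuACstate_0^L \rightarrow^* \MuACstate_a$ whose step-exchanges sum to $\exc_L$ and a computation $\MuACstate_0^R \uplus \MuACstate_a \rightarrow^* \MuACstate_n$ whose step-exchanges sum to $\exc_R$. Padding the first with $\MuACstate_0^R$ gives $\MuACstate_0 = \MuACstate_0^L \uplus \MuACstate_0^R \rightarrow^* \MuACstate_a \uplus \MuACstate_0^R$ with the same step-exchanges, and concatenating it with the second produces $\MuACstate_0 \rightarrow^* \MuACstate_n$ whose step-exchanges sum to $\exc_L \uplus \exc_R = \exc$, as required; ($\otimes$-right) is dispatched identically, padding each of the two sub-computations with the other side's final state. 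I expect the main obstacle to be not the logic but making precise the "discard the inapplicable rules" step and the split/concatenation bookkeeping: in particular one must check that a derivation of this sequent cannot end with ($\Omega$-cut) introducing a spurious non-linear formula via a classically valid $\Vdash \omega$ --- handled by noting such an $\omega$ behaves as $\top$ and the cut permutes away, or, more cleanly, by working with the normalised derivations of \autoref{thm:specnormal}, whose top part deriving $\Delta, \semantics{\MuACstate} \vdash \semantics{\MuACstate'}$ carries no non-linear context at all.
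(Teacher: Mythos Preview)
Your proposal is correct and follows the same overall strategy as the paper: induction on the \MuACLs\ derivation, with the non-($*$-cut) rules handled as in \autoref{thm:deltaexctofair} and ($*$-cut) handled by concatenating the two sub-computations obtained from the induction hypothesis.

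Where you differ from the paper is in the level of detail. The paper's proof treats the ($*$-cut) case only in the special shape where the entire state $\semantics{\MuACstate_0}$ goes to the left premise and the right premise has just the cut formula as its $\Sigma$-part; it then simply concatenates the two resulting computations. You instead treat the fully general instance of ($*$-cut), where $\semantics{\MuACstate_0}$ may be split as $\semantics{\MuACstate_0^L} \uplus \semantics{\MuACstate_0^R}$, and you introduce the explicit ``padding'' observation (transitions and computations are preserved under adding a constant state) to glue the pieces together. The same padding argument lets you handle ($\otimes$-right) at the level of computations rather than single transitions, whereas the paper dispatches ($\otimes$-right) by pointing back to the single-transition lemma. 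Your treatment of the potential ($\Omega$-cut) instance (where $\top$ could be introduced from an empty non-linear context) is also a detail the paper leaves implicit under ``ignore rules not applicable due to the form of the sequent''. None of this changes the proof's architecture, but your version closes the bookkeeping gaps that the paper's sketch leaves open.
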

\begin{proof}
By induction on the rules of \MuACLs.
We can ignore rules that are not applicable due to the form of the sequent.
For ($\Sigma$-Ax), ($\otimes$-left), ($\otimes$-right) and ($\multimap$-left) the result follows from~\autoref{thm:deltaexc} (note that a transition is a computation of length $1$).

Consider the rule 
{\normalsize \[
\prftree[r]
{($*$-cut)}
{\Delta_{\exc_1}, \semantics{\st} \vdash \semantics{\st'}}
{\Delta_{\exc_2}, \semantics{\st'} \vdash \semantics{\st''}}
{\Delta_{\exc_1}, \Delta_{\exc_2}, \semantics{\st} \vdash \semantics{\st''}}
\]
}By the induction hypothesis, we know that $\st \xrightarrow{\exc_{1,1}} \dots \xrightarrow{\exc_{1,n}} \st'$ and $\st' \xrightarrow{\exc_{2,1}} \dots \xrightarrow{\exc_{2,m}} \st''$ are computations, with $\biguplus_{i = 1}^n \exc_{1,i} = \exc_1$ and $\biguplus_{i = 1}^m \exc_{2,i} = \exc_2$.
The result then trivially derives from noticing that
$\Delta_{\exc_1} \uplus \Delta_{\exc_2} = \Delta_{\exc_1 \uplus \exc_2}$.
\end{proof}

\begin{lem}[validity implies eventual fairness]\label{thm:validitytofairnesscomp}
	For 
	every
	 $\st, \st'$, if $\Omega_R, \Omega_C; \semantics{\st} \vdash \semantics{\st'}$ is valid in \MuACLs, then 
	$\st \rightarrow^* \st'$ is an eventually fair computation.
\end{lem}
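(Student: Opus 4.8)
The plan is to follow the structure of the proof of Lemma~\ref{thm:validitytofairness}, the single-transition case, and to replace its appeal to Lemma~\ref{thm:deltaexctofair} by the many-step variant Lemma~\ref{thm:deltaexctofaircomp}; this is the only place where the extra rule ($*$-cut) of \MuACLs\ enters the argument.

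First I would normalise the given proof. Since $\Omega_R, \Omega_C; \semden{\st} \vdash \semden{\st'}$ is valid in \MuACLs, Theorem~\ref{thm:col-normal} together with Lemma~\ref{thm:specnormal} yields a normal proof that is either in normal form $1$ or in normal form $2$. In normal form $1$ no contractual rule is used and, by Lemma~\ref{thm:specnormal}, the linear component $\Delta$ is empty, so the innermost block $\Pi_{Lr\cup\{(\Sigma\text{-Ax}),(I\text{-right})\}}$ proves $\semden{\st}\vdash\semden{\st'}$ using only ($\Sigma$-Ax), ($I$-right), the $\otimes$-rules and possibly ($*$-cut); this is exactly $\Delta_{\emptyset}, \semden{\st}\vdash\semden{\st'}$ valid in \MuACLs, so by Lemma~\ref{thm:deltaexctofaircomp} we obtain a computation whose exchanges have empty disjoint union, forcing $\st=\st'$ and making eventual fairness trivial (the empty exchange is fair). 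Hence the interesting case is normal form $2$.

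For a normal-form-$2$ proof I would decompose it exactly as in Lemma~\ref{thm:formavaltofair}: its proof relies only on Lemmas~\ref{thm:polinter}, \ref{thm:specnormal}, \ref{thm:monotonicity} and~\ref{thm:G-CS}, all of which hold for \MuACLs, and --- since at the start of Appendix~\ref{app:proofs} the rule set $Lr$ is extended with ($*$-cut) --- every occurrence of ($*$-cut) is confined to the topmost block $\Pi_{Lr\cup\{(\Omega\text{-Ax}),(I\text{-right})\}}$. This produces an exchange $\exc$ with $\Delta=\Delta_\exc$ such that: (i) the contractual and structural part of the proof below that block has the shape required by Lemma~\ref{thm:acceptderivetofair}; and (ii) $\Delta_\exc, \semden{\st}\vdash\semden{\st'}$ is valid in \MuACLs, being precisely what the topmost block proves. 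For (i), the argument of Lemma~\ref{thm:acceptderivetofair} inspects only $\Theta$ and the applications of ($\linearcontract$-left) and ($\linearcontract$-split), hence is insensitive to whether the topmost block uses ($*$-cut); it therefore gives that $\exc$ is a fair exchange. Applying Lemma~\ref{thm:deltaexctofaircomp} to (ii) then produces a computation $\st=\st_0\xrightarrow{\exc_1}\st_1\xrightarrow{\exc_2}\cdots\xrightarrow{\exc_n}\st_n=\st'$ with $\biguplus_{i=1}^n\exc_i=\exc$. Since $\exc$ is fair, this computation is eventually fair by definition, which is the claim.

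The one delicate point --- and the main obstacle --- is justifying that Lemmas~\ref{thm:formavaltofair} and~\ref{thm:acceptderivetofair}, stated for \MuACL, still apply in the presence of ($*$-cut). This holds because ($*$-cut) is a purely linear rule: normalisation (Theorem~\ref{thm:col-normal}) pushes all its occurrences into the top linear block, leaving the non-linear reasoning and the handling of $G$ and of $\linearcontract$ structurally identical to the \MuACL\ case in which those two lemmas were proved. I would make this precise with a short remark that the proofs of Lemmas~\ref{thm:formavaltofair} and~\ref{thm:acceptderivetofair} never pattern-match on the rules occurring inside $\Pi_{Lr\cup\{(\Omega\text{-Ax}),(I\text{-right})\}}$, so they transfer verbatim, and conclude as above.
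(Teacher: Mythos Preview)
Your proposal is correct and follows essentially the same route as the paper, which simply cites Lemmas~\ref{thm:formavaltofair}, \ref{thm:acceptderivetofair} and~\ref{thm:deltaexctofaircomp}. Your version is more explicit: you separate out the normal-form-$1$ case and, more importantly, you articulate why Lemmas~\ref{thm:formavaltofair} and~\ref{thm:acceptderivetofair} (stated for \MuACL) still apply in \MuACLs\ --- a point the paper's one-line proof leaves implicit.
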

\begin{proof}
Follows from~\autoref{thm:formavaltofair}, \autoref{thm:acceptderivetofair} and ~\autoref{thm:deltaexctofaircomp}
\end{proof}

\subsection{Completeness and Size of \MuACL\ Proofs}\label{sec:fairnesstovalidity}

\begin{lem}\label{thm:acceptderive}
	For 
	every
	 $\exc$ and $\exc' \neq \emptyset$, if $\pol_\usr \vDash_{\exc'} \exc$ then, for 
	 every
	  $\Sigma$ and $\sigma$, 
	a \MuACL\ derivation exists of size $O(|\exc'|)$ from $\Omega_{\pol_\usr}; \Delta_{\exc'} \linearcontract \Delta_{\exc^{\ }_{-\usr}}, \Sigma \vdash \sigma$ to $\Omega_{\pol_\usr}; \Sigma \vdash \sigma$.
\end{lem}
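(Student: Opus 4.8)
The plan is to reduce the claim to a purely combinatorial fact about the witness $\exc'$ and then realize that fact as a single unbranched spine of \MuACL\ inference steps. First I would show, by a straightforward induction on the derivation of $\pol_\usr \vDash_{\exc'}\exc$, that there are exchange approvals $\tr_i \triangleleft E_i \in \pol_\usr$ for $i=1,\dots,n$ such that $\exc_{-\usr}=\{\tr_1,\dots,\tr_n\}$ (the transfers of $\exc$ in which $\usr$ is the giver) and $\exc'=\biguplus_{i=1}^n E_i$. In the base rule $\pol_\usr\vDash_\emptyset\exc$ we take $n=0$; in the inductive rule, which concludes $\pol_\usr\vDash_{E_0\uplus W}\{\tr\}\uplus E_0\uplus E$ from $\tr\triangleleft E_0\in\pol_\usr$ and $\pol_\usr\vDash_W E$, we prepend the approval $\tr\triangleleft E_0$ to the list supplied by the induction hypothesis; the side condition of Definition~\ref{def:policy} (no transfer of $E_0$ has $\usr$ as giver) guarantees that the giver-transfers of the larger exchange are exactly $\{\tr\}$ together with those of $E$. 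Since $\exc'\neq\emptyset$ the list is non-empty, i.e.\ $n\geq 1$.

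Next I would build the \MuACL\ derivation bottom-up, starting from the conclusion $\Omega_{\pol_\usr};\Sigma\vdash\sigma$ and working towards the open leaf. By the compilation of approvals, each $\tr_i\triangleleft E_i$ contributes $G(\theta_i)\in\Omega_{\pol_\usr}$ with $\theta_i=\Delta_{E_i}\linearcontract\Delta_{\tr_i}$; here I use that $\Delta_{(-)}$ is a homomorphism from $(\Exc,\uplus)$ to tensor products, so that $\Delta_{E\uplus E'}=\Delta_E\otimes\Delta_{E'}$. Reading the rules in the bottom-up direction: for $i=1,\dots,n$ I apply (L-Cont) to duplicate $G(\theta_i)$ (so one copy survives in $\Omega_{\pol_\usr}$) and then ($G$-left-$\theta$) to move one copy of $\theta_i$ into the $\Theta$-zone, reaching $\Omega_{\pol_\usr};\theta_1,\dots,\theta_n,\Sigma\vdash\sigma$. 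I then apply ($\linearcontract$-split) $n-1$ times, each application replacing two contracts $\delta\linearcontract\delta'$ and $\delta''\linearcontract\delta'''$ by their merge $\delta\otimes\delta''\linearcontract\delta'\otimes\delta'''$; after the merges the $\Theta$-zone holds the single contract $\bigotimes_i\Delta_{E_i}\linearcontract\bigotimes_i\Delta_{\tr_i}$, which by the homomorphism property and the decomposition above is exactly $\Delta_{\exc'}\linearcontract\Delta_{\exc_{-\usr}}$. That sequent is the required leaf. The construction never inspects $\Sigma$ or $\sigma$, so it is uniform in them, and it consists of $n$ (L-Cont), $n$ ($G$-left-$\theta$) and $n-1$ ($\linearcontract$-split) steps, hence of size $3n-1=O(|\exc'|)$.

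The main obstacle is orientation and bookkeeping rather than anything conceptual. One has to notice that, read in the direction in which the tree is constructed (leaves above, conclusion below), ($\linearcontract$-split) \emph{merges} two contracts rather than splitting them, so the composite $\Delta_{\exc'}\linearcontract\Delta_{\exc_{-\usr}}$ must appear at the leaf, with the atomic contracts $\theta_i$ sitting lower down, adjacent to the ($G$-left-$\theta$)/(L-Cont) introductions; and one has to keep $\Omega_{\pol_\usr}$ intact at the leaf, which forces a duplication step before each ($G$-left-$\theta$). Note also that ($\linearcontract$-left) is \emph{not} used here — only in the follow-up lemma — so its side condition $\delta\subseteq\delta'$ never arises. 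Checking that the left- and right-hand sides of the merged contract coincide with $\Delta_{\exc'}$ and $\Delta_{\exc_{-\usr}}$ then reduces, as noted, to the homomorphism property of $\Delta_{(-)}$ together with the structural decomposition of $\vDash$ established in the first step.
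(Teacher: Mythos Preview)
Your proof is correct and follows essentially the paper's approach: both induct on the derivation of $\pol_\usr \vDash_{\exc'} \exc$ and assemble the required derivation from (L-Cont), (G-left-$\theta$) and ($\linearcontract$-split), the only cosmetic difference being that the paper interleaves these three rules step by step along the induction whereas you first extract the full list of approvals and then batch the rule applications. One minor caveat on the size bound: your count $3n-1$ is in terms of $n=|\exc_{-\usr}|$, and $n=O(|\exc'|)$ only holds when no approval used has empty payoff $E_i=\emptyset$; the paper's inductive size argument has the same hidden assumption, and in any case the looser $O(|\exc_{-\usr}|)$ bound is all that the downstream Lemma~\ref{thm:fairderive} actually requires.
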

\begin{proof}
	By induction on the definition of $\vDash$.
	The base case is trivial.
	Let $\pol_{\usr} \vDash_{\exc \uplus \exc''} \{\tr \} \uplus \exc \uplus \exc'$.
	Then $\tr \triangleleft \exc \in \pol_{\usr}$ and $\pol_{\usr} \vDash_{\exc''} \exc'$.
	
	Assume $\exc'' \neq \emptyset$.
	We can write the following, where $\Pi$ of size $O(|\exc''|)$ exists by the induction hypothesis.
{\normalsize 	\[
	\prftree[r]
	{(L-Cont)}
	{
		\prftree[r]
		{($G$-left-$\theta$)}
		{
			\prfsummary[$\Pi$]
			{
				\prftree[r]
				{($\linearcontract$-split)}
				{
						\Omega_{\pol_\usr};
						(\Delta_{\exc} \otimes \Delta_{\exc''}) \linearcontract (\Delta_{tr} \otimes \Delta_{\exc'_{-\usr}}),
						\Sigma 
					\vdash \sigma}
				{
						\Omega_{\pol_\usr};
						\Delta_{\exc} \linearcontract \Delta_{tr},
						\Delta_{\exc''} \linearcontract \Delta_{\exc'_{-\usr}},
						\Sigma 
					\vdash \sigma}
			}
			{
				\Omega_{\pol_\usr}; \Delta_{\exc} \linearcontract \Delta_{tr}, \Sigma \vdash \sigma}
		}
		{G(\Delta_{\exc} \linearcontract \Delta_{tr}), \Omega_{\pol_\usr}, \Sigma \vdash \sigma}
	}
	{\Omega_{\pol_\usr}; \Sigma \vdash \sigma}
	\]
}	The case for $\exc'' = \emptyset$ trivially follows by noticing that $\Delta_{({\{\tr \} \uplus \exc \uplus \exc'})_{-\usr}} = \Delta_{\tr}$.
\end{proof}

\begin{lem}\label{thm:fairderive}
	For 
	every
	 fair $\exc$, and for 
	 every
	 $\Sigma, \sigma$, a derivation exists from $\Delta_{\exc}, \Sigma \vdash \sigma$ to
	$\Omega_R, \Omega_C; \Sigma, \sigma$ of size $O(|\exc| + |\Omega_R|\cdot \CS_{C, R} + |\Omega_C|)$.
\end{lem}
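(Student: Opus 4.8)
The plan is to build the required derivation bottom‑up, in a stack of five layers; the key layer makes precise that the no‑double‑spending clause in the definition of fairness is exactly the side condition that $(\linearcontract$-left$)$ imposes on the \emph{combined} contract of all the participants. First I would unfold fairness. Since $\exc$ is fair, Definition~\ref{def:fairexc} gives, for every $\usr \in \Usr$, an exchange $\exc_\usr$ with $\pol_\usr \vDash_{\exc_\usr} \exc$ and $E := \biguplus_{\usr \in \Usr}\exc_\usr \subseteq \exc$. Let $\usr_1,\dots,\usr_k$ enumerate the users that give at least one resource in $\exc$ (a user that gives nothing satisfies $\pol_\usr \vDash_{\emptyset}\exc$ by clause~(1) of Definition~\ref{def:accepted-exc}, so its witness may be taken to be $\emptyset$), and let $\exc_{-\usr}$ be the sub‑multiset of transfers of $\exc$ whose giver is $\usr$. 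The $\exc_{-\usr_i}$ partition $\exc$, so $\Delta_\exc = \bigotimes_{i}\Delta_{\exc_{-\usr_i}}$ and $\Delta_E = \bigotimes_{i}\Delta_{\exc_{\usr_i}}$, using that $\otimes$ is just multiset union; put $\theta_{\usr_i} := \Delta_{\exc_{\usr_i}} \linearcontract \Delta_{\exc_{-\usr_i}}$.

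Then I would assemble the derivation, reading from the open leaf $\Delta_\exc, \Sigma \vdash \sigma$ down to the conclusion $\Omega_R,\Omega_C;\Sigma\vdash\sigma$, as follows. $(1)$ One application of $(\linearcontract$-left$)$ moves the whole linear context $\Delta_\exc$ into $\Theta$ as the single grand contract $\Delta_E \linearcontract \Delta_\exc$; its side condition $\Delta_E \subseteq \Delta_\exc$ is precisely $E \subseteq \exc$, the multiset inclusion guaranteed by fairness. $(2)$ Then $k-1$ applications of $(\linearcontract$-split$)$ factor the grand contract into the per‑user contracts $\theta_{\usr_1},\dots,\theta_{\usr_k}$, pairing $\Delta_{\exc_{\usr_i}}$ on the precondition side with $\Delta_{\exc_{-\usr_i}}$ on the promise side, which is legitimate by the two tensor factorisations above. $(3)$ A block of (L-Weak) steps introduces $\Omega_\pol=\biguplus_\usr \Omega_{\pol_\usr}$ on the non‑linear side. $(4)$ For $i=1,\dots,k$ I invoke Lemma~\ref{thm:acceptderive} for $\usr_i$, carrying $\Omega_\pol$ as extra non‑linear context and $\theta_{\usr_{i+1}},\dots,\theta_{\usr_k},\Sigma$ as the (generic) linear context; this consumes $\theta_{\usr_i}$, and after all $k$ steps we are at $\Omega_\pol;\Sigma\vdash\sigma$. $(5)$ Finally, the derivation of Lemma~\ref{thm:polinter} rewrites $\Omega_\pol;\Sigma\vdash\sigma$ into $\Omega_R,\Omega_C;\Sigma\vdash\sigma$.

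For the size bound I would simply add up the layers: $(1)$ costs $O(1)$; $(2)$ costs $O(k)=O(|\exc|)$, since each $\usr_i$ contributes a distinct transfer; $(3)$ costs $O(|\Omega_\pol|)$; $(4)$ costs $\sum_i O(|\exc_{\usr_i}|)=O(|E|)=O(|\exc|)$ by Lemma~\ref{thm:acceptderive}; and $(5)$ costs $O(|\Omega_\pol|\cdot\CS_{C,R}+|\Omega_C|)$ by Lemma~\ref{thm:polinter}. Because every exchange approval in every $\pol_\usr$ is a single ground instance of some compiled rule of $\semden{R_\usr}$, we have $|\Omega_\pol|=O(|\Omega_R|)$, so the five contributions sum to $O(|\exc| + |\Omega_R|\cdot\CS_{C,R} + |\Omega_C|)$, as claimed.

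The hard part is getting steps $(1)$–$(2)$ right: one has to see that merging all the participants' single‑transfer contracts with $(\linearcontract$-split$)$ yields a contract whose precondition side collects exactly $E=\biguplus_\usr\exc_\usr$ and whose promise side collects exactly $\exc$, so that $(\linearcontract$-left$)$ becomes applicable \emph{iff} no double spending occurs — this is also where the affine flavour of $\linearcontract$ (inclusion rather than equality) is needed, allowing a user to receive a resource for free. A secondary technicality is that Lemma~\ref{thm:acceptderive} is stated with the non‑linear context $\Omega_{\pol_\usr}$ and with $\Sigma$ the state, whereas here it is applied with a larger non‑linear context $\Omega_\pol \supseteq \Omega_{\pol_{\usr_i}}$ and with extra linear assumptions (the remaining $\theta_{\usr_j}$) in the context; this is immediate by inspecting its proof, since the rules it uses — (L-Cont), (G-left-$\theta$), $(\linearcontract$-split$)$ — are insensitive to such extra hypotheses, and, for the non‑linear part, it also follows from Proposition~\ref{thm:monotonicity}. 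The remaining rule counting is routine.
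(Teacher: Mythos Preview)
Your proposal is correct and follows essentially the same approach as the paper: both build the derivation by composing Lemma~\ref{thm:polinter} at the bottom, then the per-user derivations of Lemma~\ref{thm:acceptderive}, then a block of $(\linearcontract\text{-split})$ steps, a single $(\linearcontract\text{-left})$ whose side condition is exactly the no-double-spending inclusion $\biguplus_\usr\exc_\usr\subseteq\exc$, and (L-Weak) to introduce $\Omega_\pol$. The only cosmetic difference is that the paper inserts the (L-Weak) block just above $(\linearcontract\text{-left})$ rather than below the $(\linearcontract\text{-split})$ block, which is immaterial since the contractual rules do not touch $\Omega$.
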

\begin{proof}
	Consider the following derivation.
	{\normalsize \[
	\prfsummary[$\Pi'$]
	{
		\prfsummary[$\Pi$]
		{
			\prftree[noline]
			{\Delta_{\exc}, \Sigma \vdash \sigma}
		}
		{\Omega_{\pol}; \Sigma \vdash \sigma}
	}
	{\Omega_R, \Omega_C; \Sigma \vdash \sigma}
	\]}
\mbox{
$\!\!\!\!\!$
By~\autoref{thm:polinter}, $\exists\, \Pi'$ of size $O(|\Omega_{\pol}| \cdot \CS_{C, R} + |\Omega_C|)$, it suffices then showing that $\Pi$ exists.
}
\\	
	By~\autoref{def:fairexc}, $\exc'_{\usr}$ exists for each user such that 
\mbox{$\pol_{\usr} \vDash_{\exc'_{\usr}} \exc$,
with $\uplus_{\usr} \exc'_{\usr} \subseteq \exc$.}
	Then by~\autoref{thm:acceptderive}, 
	every
	 $\Omega_{\pol_\usr}; \Sigma \vdash \sigma$ is derivable from $\Omega_{\pol_\usr}; \Delta_{\exc'_{\usr}} \linearcontract \Delta_{\exc_{-\usr}}; \Sigma \vdash \sigma$.
	We can easily
	compose these derivations obtaining the following.
	{\normalsize \[
	\prfsummary[]
	{
		\prftree[noline]
		{\Omega_{\pol}; \biguplus_{\usr} (\Delta_{\exc'_{\usr}} \linearcontract \Delta_{\exc_{-\usr}}); \Sigma \vdash \sigma}
	}
	{\Omega_{\pol}; \Sigma \vdash \sigma}
	\]}
	The size of this derivation is $O(\sum_{\usr}|\exc'_{\usr}|)$, which is limited by $O(|\exc|)$ since $\uplus_{\usr} \exc'_{\usr} \subseteq \exc$.
	We then build the top of $\Pi$ as follows.
	{\normalsize \[
	\prftree[r, doubleline]
	{($\linearcontract$-split)}
	{
		\prftree[r]
		{($\linearcontract$-left)}
		{\biguplus_{\usr} \Delta_{\exc'_\usr} \subseteq \biguplus_{\usr} \Delta_{exc_{-\usr}}}
		{
			\prftree[r, doubleline]
			{(L-Weak)}
			{\biguplus_{\usr}\Delta_{\exc_{-\usr}}, \Sigma \vdash \sigma}
			{
					\Omega_{\pol};
					\biguplus_{\usr}\Delta_{\exc_{-\usr}},
					\Sigma
				\vdash \sigma}
		}
		{\Omega_{\pol}; (\biguplus_{\usr} \Delta_{\exc'_{\usr}}) \linearcontract (\biguplus_{\usr}\Delta_{\exc_{-\usr}}); \Sigma \vdash \sigma}
	}
	{\Omega_{\pol}; \biguplus_{\usr} (\Delta_{\exc'_{\usr}} \linearcontract \Delta_{\exc_{-\usr}}); \Sigma \vdash \sigma}
	\]
}	
	Note that $\exc = \biguplus_{\usr} exc_{-\usr}$ by definition, and the left premise of ($\linearcontract$-left) is satisfied because $\biguplus_{\usr} \exc'_{\usr} \subseteq \exc$. 
	Note also that, from the previous formula, the number of (\mbox{$\linearcontract$-}split) applications  is bounded by the size of $\exc$. 
	Moreover, the number of (L-Weak) applications is bounded by $O(\Omega_{\pol})$, which in turns is less then $O(\Omega_R)$.
	Hence the total size of the derivation is $O(|\exc| + |\Omega_R|\cdot \CS_{C, R} + |\Omega_C|)$.
\end{proof}

\begin{lem}\label{thm:deltaexc}
	For 
	every
	 $\MuACstate, \MuACstate'$ and $\exc$, if $\MuACstate \xrightarrow{\exc} \MuACstate'$ then $\Delta_{\exc}, \semantics{\MuACstate} \vdash \semantics{\MuACstate'}$ is provable in \MuACL\ proof of size $O(|\semantics{\MuACstate}|)$.
\end{lem}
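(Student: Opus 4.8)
The plan is to construct the required \MuACL\ proof explicitly, by induction on the number $|\semantics{\MuACstate'}|$ of atoms occurring in the target state; recall that condition~(2) of the definition of an exchange environment conserves resources, so $|\semantics{\MuACstate'}| = |\semantics{\MuACstate}|$, and condition~(1) gives $|\exc| = \sum_{\usr,\res}\sum_{\usr'}\exc(\usr\xmapsto{\res}\usr') \le \sum_{\usr,\res}\MuACstate(\usr)(\res) = |\semantics{\MuACstate}|$. Throughout I read tensors as multisets, as the paper does, so $\semantics{\MuACstate'}$ is the multiset containing $\MuACstate'(\usr)(\res)$ copies of each atom $\res@\usr$, and $\Delta_\exc = \{\, \res@\usr \multimap \res@\usr' \mid (\usr\xmapsto{\res}\usr')\in\exc \,\}$.

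\emph{Base case.} If $\MuACstate'$ is empty, conservation forces $\MuACstate$ empty, and then condition~(1) forces $\exc$ empty; the sequent is $\vdash I$, proved by $(I\text{-right})$ in one step.

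\emph{Inductive step.} Fix a pair $(\usr',\res)$ with $\MuACstate'(\usr')(\res)\ge 1$. Condition~(1) gives $\MuACstate(\usr')(\res) \ge \sum_{\usr''}\exc(\usr'\xmapsto{\res}\usr'')$, so exactly one of the following holds. \textbf{(A) The resource stays put:} $\MuACstate(\usr')(\res) - \sum_{\usr''}\exc(\usr'\xmapsto{\res}\usr'') \ge 1$. Let $\MuACstate_1$ (resp.\ $\MuACstate'_1$) be obtained from $\MuACstate$ (resp.\ $\MuACstate'$) by decrementing the $(\usr',\res)$-entry by one; one checks that $\MuACstate_1 \xrightarrow{\exc} \MuACstate'_1$ still satisfies conditions~(1) and~(2), so the induction hypothesis yields a proof $\Pi_1$ of $\Delta_\exc,\semantics{\MuACstate_1}\vdash\semantics{\MuACstate'_1}$ of size $O(|\semantics{\MuACstate_1}|)$, and applying $(\otimes\text{-right})$ with left premise the $(\Sigma\text{-Ax})$ axiom $\res@\usr'\vdash\res@\usr'$ and right premise $\Pi_1$ gives the claim, since $\semantics{\MuACstate} = \{\res@\usr'\}\uplus\semantics{\MuACstate_1}$ and $\semantics{\MuACstate'} = \res@\usr'\otimes\semantics{\MuACstate'_1}$. \textbf{(B) The resource was received:} otherwise $\MuACstate(\usr')(\res) - \sum_{\usr''}\exc(\usr'\xmapsto{\res}\usr'') = 0$, and then condition~(2) gives $\sum_{\usr}\exc(\usr\xmapsto{\res}\usr') = \MuACstate'(\usr')(\res)\ge 1$, so there is a transfer $\tr = \usr\xmapsto{\res}\usr'\in\exc$; by condition~(1) again $\MuACstate(\usr)(\res)\ge 1$. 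Put $\exc_1 = \exc\setminus\{\tr\}$, let $\MuACstate_1$ be $\MuACstate$ with the $(\usr,\res)$-entry decremented and $\MuACstate'_1$ be $\MuACstate'$ with the $(\usr',\res)$-entry decremented; using $\usr\ne\usr'$ one checks $\MuACstate_1\xrightarrow{\exc_1}\MuACstate'_1$ again satisfies conditions~(1) and~(2). The induction hypothesis provides a proof $\Pi_1$ of $\Delta_{\exc_1},\semantics{\MuACstate_1}\vdash\semantics{\MuACstate'_1}$, and one application of $(\otimes\text{-right})$ with left premise $\res@\usr,\ \res@\usr\multimap\res@\usr'\vdash\res@\usr'$ — itself derived by $(\multimap\text{-left})$ from $\res@\usr\vdash\res@\usr$ — and right premise $\Pi_1$ finishes the case, because $\semantics{\MuACstate} = \{\res@\usr\}\uplus\semantics{\MuACstate_1}$, $\Delta_\exc = \{\res@\usr\multimap\res@\usr'\}\uplus\Delta_{\exc_1}$, and $\semantics{\MuACstate'} = \res@\usr'\otimes\semantics{\MuACstate'_1}$.

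Each step adds only a constant number of rule applications ($(\otimes\text{-right})$ together with one $(\Sigma\text{-Ax})$, or with one $(\multimap\text{-left})$ and one $(\Sigma\text{-Ax})$) on top of a recursively obtained proof, and there are exactly $|\semantics{\MuACstate'}| = |\semantics{\MuACstate}|$ steps, so the total size is $O(|\semantics{\MuACstate}|)$. The one genuinely tedious point I expect is the verification, in cases~(A) and~(B), that the decremented states (and the reduced exchange) still form a legal transition: this is a finite case analysis of conditions~(1) and~(2) at the at most two coordinates $(\usr,\res)$ and $(\usr',\res)$ that changed, tracking which transfers count as incoming or outgoing at each. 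It is made painless by the structural fact — a direct consequence of condition~(1) — that a user holding no copy of $\res$ makes no outgoing $\res$-transfer, so within a single exchange no resource is ever forwarded and the atoms of $\semantics{\MuACstate'}$ split cleanly into ``stationary'' atoms (case~A) and ``just received'' atoms (case~B).
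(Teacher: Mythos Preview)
Your proof is correct. The difference from the paper's argument is the choice of induction parameter: the paper inducts on $|\exc|$, with a base case $\exc=\emptyset$ that builds the whole $O(|\semantics{\MuACstate}|)$ block of $(\Sigma\text{-Ax})/(\otimes\text{-right})$ steps for the stationary atoms at once, and an inductive step that peels off one transfer $\tr$ and one atom from each of $\MuACstate,\MuACstate'$; you instead induct on $|\semantics{\MuACstate'}|$ and handle stationary and received atoms uniformly via your case split (A)/(B). The paper's route avoids your case analysis (every inductive step is your case~(B)), at the price of a heavier base case; your route has a trivial base case but needs the extra verification that in case~(A) the decremented states still satisfy condition~(1), which is exactly the inequality you isolate. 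Both decompositions produce the same proof tree up to the order in which the $(\otimes\text{-right})$ applications are assembled, and both give size $O(|\semantics{\MuACstate}|)$.
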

\begin{proof}
	By induction on the size of $\exc$.
	If $\exc = \emptyset$ then we can build a proof with
	 $O(\semantics{\MuACstate})$ applications of ($\otimes$-left-$\Sigma$) and of ($\otimes$-right) and of ($\Sigma$-Ax).

	Given a proof for $\Delta_{\exc}, \semantics{\MuACstate} \vdash \semantics{\MuACstate'}$, let $\exc$ be $\exc' \cup \{ \tr \}$ with $\tr = \usr \xmapsto{\res} \usr'$.
	By definition, $\st = \{ (\usr, \res) \} \uplus \st''$, $\st' = \{ (\usr', \res) \} \uplus \st'''$ with $\st'' \xrightarrow{\exc'} \st'''$.
	
	By induction hypothesis, a proof $\Pi$ exists of size $O(|\semantics{\st''}|)$ for $\Delta_{\exc'}, \semantics{\st''} \vdash \semantics{\st'''}$.
	Then the following is a proof for $\Delta_{\exc}, \semantics{\st} \vdash \semantics{\st'}$.
	
{\normalsize 	\[
	\prftree[r]
	{($\otimes$-right)}
	{
		\prftree[r]
		{($\Sigma$-Ax)}
		{\res@\usr \vdash \res@\usr}
		{
				\Delta_{\tr},
				\res@\usr 
			\vdash \res@\usr'
		}
	}
	{
		\prftree
		{\Pi}
		{
				\Delta_{\exc'},
				\semantics{\st''}
			\vdash \semantics{\st'''}}
	}
	{\Delta_{\exc}, \semantics{\st} \vdash \semantics{\st'}}\qedhere
	\]
}\end{proof}

\begin{lem}[fairness implies validity]\label{thm:fairnesstovalidity}
	For 
	every
	 $\st, \st', \exc$, if $\st \xrightarrow{\exc} \st'$ is a fair transition, then $\Omega_R, \Omega_C; \semantics{\st} \vdash \semantics{\st'}$ is provable in \MuACL\ with a proof of size $O(|\exc| + |\Omega_R|\cdot \CS_{C, R} + |\Omega_C| + |\semantics{\MuACstate}|)$.
\end{lem}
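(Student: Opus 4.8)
The plan is to build the required \MuACL\ proof by stacking the two auxiliary results already established, namely \autoref{thm:fairderive} and \autoref{thm:deltaexc}, and then adding up their sizes. Concretely, from the hypothesis that $\st \xrightarrow{\exc} \st'$ is a fair transition I extract two facts: first, the label $\exc$ is fair in the sense of \autoref{def:fairexc}; second, $\st \xrightarrow{\exc} \st'$ is a legitimate transition of the underlying exchange environment (so users own what they give and resources are conserved). These are exactly the hypotheses needed by the two lemmata.

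First I would instantiate \autoref{thm:fairderive} with $\Sigma = \semantics{\st}$ and $\sigma = \semantics{\st'}$. Since $\exc$ is fair, this yields a \MuACL\ derivation whose conclusion is $\Omega_R, \Omega_C; \semantics{\st} \vdash \semantics{\st'}$, whose unique open leaf is the purely linear sequent $\Delta_{\exc}, \semantics{\st} \vdash \semantics{\st'}$, and whose size is $O(|\exc| + |\Omega_R| \cdot \CS_{C,R} + |\Omega_C|)$. Next I would invoke \autoref{thm:deltaexc} on the valid transition $\st \xrightarrow{\exc} \st'$, obtaining a closed \MuACL\ proof of exactly that leaf, $\Delta_{\exc}, \semantics{\st} \vdash \semantics{\st'}$, of size $O(|\semantics{\st}|)$. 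Substituting this closed proof for the open leaf of the previous derivation gives a closed \MuACL\ proof of $\Omega_R, \Omega_C; \semantics{\st} \vdash \semantics{\st'}$, and its size is the sum of the two sizes above, namely $O(|\exc| + |\Omega_R| \cdot \CS_{C,R} + |\Omega_C| + |\semantics{\st}|)$, which is precisely the claimed bound.

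The main (and essentially only) point requiring care is checking that the two lemmata glue together syntactically: the open leaf produced by \autoref{thm:fairderive} must coincide verbatim with the end sequent of \autoref{thm:deltaexc}, in particular having an empty non-linear antecedent and the linear antecedent $\Delta_{\exc}, \semantics{\st}$ that \autoref{thm:deltaexc} handles. Once this is verified, the substitution of proof trees is immediate and the additivity of proof sizes is routine, so the lemma follows with no genuine obstacle.
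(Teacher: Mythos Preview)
Your proposal is correct and follows exactly the paper's approach: the paper's proof is the single line ``By composing the derivations of~\autoref{thm:fairderive} and \autoref{thm:deltaexc},'' which is precisely the gluing you describe, and your size accounting matches as well.
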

\begin{proof}
	By composing the derivations of~\autoref{thm:fairderive} and \autoref{thm:deltaexc}.
\end{proof}

We can now prove the compilation from MuAC to \MuACL\ to be correct and complete.
\correction*
\begin{proof}
	By~\autoref{thm:validitytofairness} and~\autoref{thm:fairnesstovalidity}.
\end{proof}

We investigate now computations, and prove that eventual fairness implies validity.
We first need an intermediate result.
\begin{lem}\label{thm:deltaexccmp}
	If $\MuACstate_0 \xrightarrow{\exc_1} \MuACstate_1 \xrightarrow{\exc_2} \dots \xrightarrow{\exc_n} \MuACstate_n$
	then $\Delta_{\uplus_{i = 1}^n \exc_i}, \semantics{\MuACstate_0} \vdash \semantics{\MuACstate_n}$ is a \MuACLs\ proof of size $O(n \cdot |\semantics{\MuACstate_0}|)$.
\end{lem}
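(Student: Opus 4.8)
The plan is to proceed by induction on the length $n$ of the computation, using the linear cut rule $(*\text{-cut})$ of \MuACLs\ to splice the proofs of consecutive transitions together. For $n = 0$ the computation is the single state $\MuACstate_0$, $\Delta_{\uplus_{i=1}^0 \exc_i}$ is the empty multiset $I$, and $\semantics{\MuACstate_0} \vdash \semantics{\MuACstate_0}$ is derived with $O(|\semantics{\MuACstate_0}|)$ instances of ($\Sigma$-Ax) and ($\otimes$-right); for $n = 1$ the claim is exactly \autoref{thm:deltaexc} (every \MuACL\ proof is also a \MuACLs\ proof, since the latter only adds $(*\text{-cut})$), which yields a proof of size $O(|\semantics{\MuACstate_0}|)$.

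For the inductive step, split the computation $\MuACstate_0 \xrightarrow{\exc_1} \dots \xrightarrow{\exc_{n-1}} \MuACstate_{n-1} \xrightarrow{\exc_n} \MuACstate_n$ at its last transition. By the induction hypothesis there is a \MuACLs\ proof $\Pi_1$ of $\Delta_{\uplus_{i=1}^{n-1}\exc_i}, \semantics{\MuACstate_0} \vdash \semantics{\MuACstate_{n-1}}$ of size $O((n-1)\cdot|\semantics{\MuACstate_0}|)$, and by \autoref{thm:deltaexc} there is a \MuACLs\ proof $\Pi_2$ of $\Delta_{\exc_n}, \semantics{\MuACstate_{n-1}} \vdash \semantics{\MuACstate_n}$ of size $O(|\semantics{\MuACstate_{n-1}}|)$. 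Because transitions conserve resources (condition $(2)$ of the definition of exchange environment), $\semantics{\MuACstate_{n-1}}$ and $\semantics{\MuACstate_0}$ contain the same number of atoms, so $|\Pi_2| = O(|\semantics{\MuACstate_0}|)$. Applying $(*\text{-cut})$ with cut formula $\semantics{\MuACstate_{n-1}}$ to $\Pi_1$ (left premise) and $\Pi_2$ (right premise) produces a proof of $\Delta_{\uplus_{i=1}^{n-1}\exc_i}, \Delta_{\exc_n}, \semantics{\MuACstate_0} \vdash \semantics{\MuACstate_n}$; since $\Delta_{(-)}$ distributes over $\uplus$ (by definition $\Delta_{\exc} = \biguplus_i \Delta_{\tr_i}$), the antecedent is precisely $\Delta_{\uplus_{i=1}^{n}\exc_i}, \semantics{\MuACstate_0}$. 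The total size is $|\Pi_1| + |\Pi_2| + 1 = O(n\cdot|\semantics{\MuACstate_0}|)$, closing the induction.

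The only delicate point — which I expect to be the main, though minor, obstacle — is that the right premise of $(*\text{-cut})$ carries the cut formula in the single-proposition slot $\sigma'$, whereas \autoref{thm:deltaexc} delivers $\semantics{\MuACstate_{n-1}}$ as a multiset of atoms inside the linear context. Under the paper's convention identifying a tensor product with the multiset of its factors (and $I$ with the empty multiset) this is immediate; if one prefers to be explicit, a few applications of ($\otimes$-left-$\Sigma$) fold those atoms into one proposition, adding only $O(|\semantics{\MuACstate_0}|)$ rule applications and leaving the asymptotic bound untouched. Checking the resource-conservation fact $|\semantics{\MuACstate_i}| = |\semantics{\MuACstate_0}|$ for all $i$ is the other routine ingredient.
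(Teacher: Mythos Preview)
Your proposal is correct and follows essentially the same argument as the paper's proof: induction on $n$, with the base case supplied by \autoref{thm:deltaexc}, the inductive step obtained by applying $(*\text{-cut})$ to the inductive-hypothesis proof and the \autoref{thm:deltaexc} proof for the last transition, and the size bound justified via resource conservation $|\semantics{\MuACstate_i}| = |\semantics{\MuACstate_0}|$. Your additional remarks on the $n=0$ case and on the tensor/multiset identification for the cut formula are harmless elaborations that the paper leaves implicit.
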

\begin{proof}
	By induction on $n$.
	The base case is given by~\autoref{thm:deltaexc}.
	Taken then $\MuACstate_0 \xrightarrow{\exc_1} \MuACstate_1 \xrightarrow{\exc_2} \dots \xrightarrow{\exc_n} \MuACstate_n$.
	Note that $|\semantics{st}_0| = |\semantics{st}_1| = \cdots = |\semantics{st}_n|$, because resources are neither created nor destroyed.
	
	By induction hypothesis, a proof $\Pi$ exists of size $O( (n-1) \cdot |\semantics{\MuACstate_0}|)$ for $\Delta_{\uplus_{i = 1}^{n-1} \exc_i}, \semantics{\MuACstate_0} \vdash \semantics{\MuACstate_{n-1}}$.
	Moreover, by~\autoref{thm:deltaexc}, a proof $\Pi'$ exists of size $O(|\semantics{\st_0}|)$ for $\Delta_{\exc_n}, \semantics{\MuACstate_{n-1}} \vdash \semantics{\MuACstate_{n}}$.
	The result is eventually proved by composing $\Pi$ and $\Pi'$ as follows.	
{\normalsize 	\[
	\prftree[r]
	{($*$-cut)}
	{
		\prftree
		{\Pi}
		{\Delta_{\uplus_{i = 1}^{n-1} \exc_i}, \semantics{\st_0} \vdash \semantics{\st_{n-1}}}
	}
	{
		\prftree
		{\Pi'}
		{\Delta_{\exc_n}, \semantics{\st_{n-1}} \vdash \semantics{\st_n}}
	}
	{\Delta_{\exc}, \semantics{\st_0} \vdash \semantics{\st_n}}\qedhere
	\]
}\end{proof}

\begin{lem}[eventual fairness implies validity for computations]\label{thm:fairnestovaliditycomp}
	If the computation $\MuACstate_0 \xrightarrow{\exc_1} \MuACstate_1 \xrightarrow{\exc_2} \dots \xrightarrow{\exc_n} \MuACstate_n$
 is eventually fair, then $\Omega_R, \Omega_C; \semantics{\MuACstate_0} \vdash \semantics{\MuACstate_n}$ 
	has a \MuACLs\ proof of size $O(\sum_{\st_{i-1} \xrightarrow{\exc_i} \st_{i+1}}  (|\exc_i| + |\Omega_R|\cdot \CS_{C, R} + |\Omega_C| + |\semantics{\MuACstate_0}|))$.
\end{lem}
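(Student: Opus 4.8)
The plan is to obtain the proof by composing two results already in hand, in direct analogy with the proof of \autoref{thm:fairnesstovalidity} for single transitions, only replacing the transition-level bridging lemma by its computation-level counterpart \autoref{thm:deltaexccmp}. First I would put $\exc = \biguplus_{i=1}^n \exc_i$. By the definition of an eventually fair computation, $\exc$ is a fair exchange; note that fairness of a \emph{label}, as given by \autoref{def:fairexc}, is a property of the multiset $\exc$ alone and does not mention any state, so it is legitimate to speak of $\biguplus_{i=1}^n \exc_i$ being fair even though this combined exchange need not be applicable from $\MuACstate_0$ in a single step.

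Second, I would apply \autoref{thm:deltaexccmp} to the given computation to get a \MuACLs\ proof $\Pi$ of the sequent $\Delta_{\exc}, \semden{\MuACstate_0} \vdash \semden{\MuACstate_n}$ of size $O(n\cdot|\semden{\MuACstate_0}|)$. Then, since $\exc$ is fair, I would apply \autoref{thm:fairderive} with $\Sigma = \semden{\MuACstate_0}$ and $\sigma = \semden{\MuACstate_n}$: this yields a derivation $\Pi'$ whose unique open leaf is $\Delta_{\exc}, \semden{\MuACstate_0} \vdash \semden{\MuACstate_n}$, whose root is $\Omega_R, \Omega_C; \semden{\MuACstate_0} \vdash \semden{\MuACstate_n}$, and whose size is $O(|\exc| + |\Omega_R|\cdot\CS_{C,R} + |\Omega_C|)$. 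Although \autoref{thm:fairderive} is stated for \MuACL, every \MuACL\ rule is also a \MuACLs\ rule, so $\Pi'$ is equally a \MuACLs\ derivation. Grafting $\Pi$ onto the open leaf of $\Pi'$ produces a closed \MuACLs\ proof of $\Omega_R, \Omega_C; \semden{\MuACstate_0} \vdash \semden{\MuACstate_n}$ of size $O\bigl(n\cdot|\semden{\MuACstate_0}| + |\exc| + |\Omega_R|\cdot\CS_{C,R} + |\Omega_C|\bigr)$.

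Finally I would match this with the stated bound. Since $|\exc| = \sum_{i=1}^n |\exc_i|$, and since resources are conserved along the computation so that $|\semden{\MuACstate_0}| = |\semden{\MuACstate_i}|$ for every $i$, and since for $n\ge 1$ we have $|\Omega_R|\cdot\CS_{C,R}\le n\,|\Omega_R|\cdot\CS_{C,R}$ and $|\Omega_C|\le n\,|\Omega_C|$, the size just obtained is $O\bigl(\sum_{i=1}^n(|\exc_i| + |\Omega_R|\cdot\CS_{C,R} + |\Omega_C| + |\semden{\MuACstate_0}|)\bigr)$, which is exactly the claimed bound, with the sum ranging over the $n$ transitions of the computation. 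I do not expect any genuine obstacle here: the whole argument is a two-step composition plus elementary bookkeeping, and the only points that deserve a line of justification are that ``$\exc$ fair'' is meaningful as a label-level predicate (so that \autoref{thm:fairderive} applies to $\biguplus_i\exc_i$), that $\Pi'$ transfers from \MuACL\ to \MuACLs, and that grafting a \MuACLs\ proof into the open leaf of a \MuACLs\ derivation again yields a valid \MuACLs\ proof.
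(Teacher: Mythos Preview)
Your proposal is correct and follows the same two-step composition as the paper: apply \autoref{thm:fairderive} to the fair label $\exc=\biguplus_i\exc_i$ and plug in the bridging proof for $\Delta_{\exc},\semden{\MuACstate_0}\vdash\semden{\MuACstate_n}$. The paper's one-line proof actually cites \autoref{thm:deltaexc} rather than \autoref{thm:deltaexccmp}, which is almost certainly a copy-paste slip from the single-transition case; you correctly invoke the computation-level lemma and supply the size bookkeeping that the paper omits.
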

\begin{proof}
	By composing the derivations of~\autoref{thm:fairderive} and \autoref{thm:deltaexc}.
\end{proof}

The following corollary states that fair computations are exactly the ones encoded by valid initial sequents.
\Logiccorrectnessstar*
\begin{proof}
	By~\autoref{thm:validitytofairnesscomp} and~\autoref{thm:fairnestovaliditycomp}.
\end{proof}

\section{Exploring Reachable States}\label{app:solutions}

We prove some properties useful for exploring the states that are reachable with 
fair transitions or eventually fair computations.

Given a state $\MuACstate$, the problem is to find a state $\MuACstate'$ reachable through a fair transition or an eventually fair computation where $\MuACstate'$ satisfies some desired properties, or to asses that there is none.
\autoref{thm:MuACLdecide} and~\autoref{thm:MuACLsdecide} do not help much because there is an infinite number of possible candidates for $\MuACstate'$.
We solve the problem by showing invariant properties on the reachable states, restricting our candidates to a finite set of possibilities.

Intuitively, the \emph{quantity} of a linear proposition is the number of atomic linear propositions appearing in it which are not bound by logical connectives other than $\otimes$.
\begin{defi}
Let the \emph{quantity} of a linear formula $\sigma$ be the number of occurrences of atomic linear propositions appearing in $\sigma$.

The quantity of a set of linear propositions $\Sigma$ is the sum of the quantity of its elements.
\end{defi}

The following simple property about quantity preservation holds for initial sequents.
\begin{lem}\label{thm:quantity}
	For each $\Omega, \Sigma, \sigma$, if $\Omega; \Sigma \vdash \sigma$ is valid either in \MuACL\ or \MuACLs, then $q(\Sigma) = q(\sigma)$.
\end{lem}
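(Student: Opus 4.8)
The plan is to obtain the statement as the special case $\Theta = \Delta = \emptyset$ of the stronger fact that $q(\Sigma) = q(\sigma)$ holds for \emph{every} provable \MuACL\ (resp.\ \MuACLs) sequent $\Omega; \Theta, \Delta, \Sigma \vdash \sigma$. The only conceptual point is the choice of invariant. A linear implication $\res@\usr \multimap \res@\usr'$ has quantity $2$ as a formula, but, used as a linear assumption, it merely consumes one atom and produces one; similarly a contract $\delta \linearcontract \delta'$ is eventually turned by ($\linearcontract$-left) into the tensor $\delta'$ of such implications. Hence the invariant I would push through the induction ignores $\Delta$, $\Theta$ and $\Omega$ altogether and reads exactly $q(\Sigma) = q(\sigma)$, with $\Sigma$ the resource-ownership slot only. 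As a preliminary I would record that \MuACL\ has no right rule for $\multimap$, so every linear implication occurring in a derivation is a subformula of the original antecedent; by the grammar of \autoref{def:prop-muacl} the only implications there are of the form $\res@\usr \multimap \res@\usr'$, so both sides of any implication met during the proof have quantity $1$.

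Next I would proceed by induction on the derivation, mutually with the non-linear judgments $\Omega \Vdash \omega$, for which there is nothing to prove since they carry no resource succedent and feed only into $\vdash$-rules whose conclusion shares the $\Sigma$ and $\sigma$ of the corresponding $\vdash$-premise. The case analysis on the last rule splits into four groups. First, the L-variants of the structural and left non-linear rules — (L-Weak), (L-Cont), (L-$\land$-left1), (L-$\land$-left2), (L-$\rightarrow$-left) — together with ($G$-left-$\theta$), ($G$-left-$\delta$), ($\Omega$-cut), ($\linearcontract$-left), ($\linearcontract$-split), ($\otimes$-left-$\Theta$) and ($\otimes$-left-$\Delta$) leave both the $\Sigma$-slot and the succedent untouched — for ($\linearcontract$-left) one uses that the promised $\delta'$ lands in the $\Delta$-component — so the claim follows from the induction hypothesis. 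Second, the axioms: ($\Sigma$-Ax) gives $q(\res@\usr) = 1 = q(\res@\usr)$ and ($I$-right) gives $q(\emptyset) = 0 = q(I)$. Third, as $q$ is additive over $\otimes$ with $q(I) = 0$, rule ($\otimes$-left-$\Sigma$) preserves $q(\Sigma)$, and ($\otimes$-right) gives $q(\Sigma \cup \Sigma') - q(\sigma \otimes \sigma') = (q(\Sigma) - q(\sigma)) + (q(\Sigma') - q(\sigma')) = 0$ by the two induction hypotheses. Fourth, ($\multimap$-left) adds an implication $\res@\usr \multimap \res@\usr'$ to the linear-assumption component — not to $\Sigma$ — and replaces the succedent $\res@\usr$ by $\res@\usr'$, so both $q(\Sigma)$ and the quantity of the succedent are unchanged. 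For \MuACLs\ the single extra rule is ($*$-cut): its left premise gives $q(\Sigma) = q(\sigma')$ and its right premise gives $q(\Sigma') + q(\sigma') = q(\sigma)$, so the conclusion satisfies $q(\Sigma \cup \Sigma') = q(\Sigma) + q(\Sigma') = q(\sigma') + (q(\sigma) - q(\sigma')) = q(\sigma)$.

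Finally I would instantiate the invariant at an initial sequent $\Omega; \Sigma \vdash \sigma$, where $\Theta = \Delta = \emptyset$ by definition, obtaining $q(\Sigma) = q(\sigma)$; the argument is uniform over \MuACL\ and \MuACLs. I expect the main obstacle to be precision rather than depth: one must realise that the inductive strengthening should \emph{discard} $\Delta$ and $\Theta$ from the measure rather than give them a compensating weight, and must treat the notational shorthand of ($\multimap$-left) carefully — reading ``$\Sigma, \sigma \multimap \sigma'$'' as ``$\sigma \multimap \sigma'$ sitting in the $\Delta$-component'' and invoking that this implication is atom-to-atom — so that the book-keeping closes.
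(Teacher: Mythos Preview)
Your proposal is correct and is precisely the rule induction the paper has in mind; the paper's own proof is the single line ``Trivially holds by rule induction.'' Your careful choice of invariant (measuring only the $\Sigma$-slot and ignoring $\Theta,\Delta,\Omega$) and your handling of ($\multimap$-left) and ($*$-cut) are exactly what that one line unpacks to.
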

\begin{proof}
	Trivially holds by rule induction.
\end{proof}

\begin{defi}
Let the \emph{atomic linear subformulas} of a formula $\theta, \delta, \sigma$ or $\omega$ be as follows:
\begin{gather*}
asub(r@u)  = \{ r@u \} \qquad\quad asub(G\varphi) = asub(\varphi) \\
	asub(\varphi \star \varphi') = asub(\omega) \cup asub(\omega') \text{\ \ \ \ with } \star \in \{\otimes, \multimap, \linearcontract, \land, \rightarrow\}
\end{gather*}
We homomorphically extend this definition to multisets of linear and non-linear predicates.
\end{defi}

\begin{lem}\label{thm:subformula}
	For each $\Omega, \Sigma, \sigma$, if $\Omega; \Sigma \vdash \sigma$ is valid either in \MuACL\ or \MuACLs, then \\
	$asub(\sigma) \subseteq asub(\Omega) \cup asub(\Sigma)$.
\end{lem}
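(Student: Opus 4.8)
The plan is to prove a strengthened statement by rule induction, routed through the normal proofs of \autoref{thm:col-normal}. First I would generalise the claim from initial sequents to arbitrary ones: for every $\Omega;\Theta,\Delta,\Sigma\vdash\sigma$ valid in \MuACL\ (resp.\ \MuACLs) one should have $asub(\sigma)\subseteq asub(\Omega)\cup asub(\Theta)\cup asub(\Delta)\cup asub(\Sigma)$, where $asub$ is extended to the constants $I,\top$ and to the non-linear atoms $p(\usr_1,\dots,\usr_n)$ by $\emptyset$. By \autoref{thm:col-normal} it suffices to treat a normal proof; by the shape of normal proofs (\autoref{fig:normals}) such a proof decomposes into an \emph{upper part}, built only from rules in $Lr\cup Pr\cup Gr\cup Sr\cup\{(\Sigma\text{-Ax}),(I\text{-right})\}$, sitting on top of $\Pi_{Cr\cup Sr}$, which turns some $\Omega_G;\Sigma\vdash\sigma$ into $\Omega;\Sigma\vdash\sigma$ without changing $\Sigma$ or $\sigma$.

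For the upper part I would argue by rule induction. Every rule is routine: $(\Sigma\text{-Ax})$ and $(I\text{-right})$ are immediate; the three $(\otimes\text{-left-}\cdot)$ rules and $(\linearcontract\text{-split})$ leave $asub$ of the antecedent unchanged, using $asub(\varphi\otimes\varphi')=asub(\varphi)\cup asub(\varphi')$ and $asub(\delta\otimes\delta''\linearcontract\delta'\otimes\delta''')=asub(\delta\linearcontract\delta')\cup asub(\delta''\linearcontract\delta''')$; $(\otimes\text{-right})$ and the linear cut $(*\text{-cut})$ follow by combining the two induction hypotheses, since $asub(\sigma\otimes\sigma')=asub(\sigma)\cup asub(\sigma')$ and the cut formula is bounded by the hypothesis on the left premise; and $(\multimap\text{-left})$, $(\linearcontract\text{-left})$, $(G\text{-left-}\theta)$, $(G\text{-left-}\delta)$, $(\text{L-Weak})$, $(\text{L-Cont})$ only enlarge (or leave fixed) the $asub$ of the antecedent while the succedent either stays a subformula of something already present or is unchanged. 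Note that none of these rules has a $\Vdash$-premise, so no reasoning about non-linear judgments enters the upper part.

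The bottom layer $\Pi_{Cr\cup Sr}$ carries the real content. Since it leaves $\Sigma$ and $\sigma$ fixed, it is enough to show $asub(\Omega_G)\subseteq asub(\Omega)$, whence $asub(\sigma)\subseteq asub(\Omega_G)\cup asub(\Sigma)\subseteq asub(\Omega)\cup asub(\Sigma)$. By \autoref{thm:vvdashomega}, $\Omega\Vdash\omega$ for every $\omega\in\Omega_G$, and by construction of the normal form every such $\omega$ has the shape $G(\delta)$ or $G(\theta)$. It thus remains to establish the auxiliary fact that $\Omega\Vdash G(\delta)$ implies $asub(\delta)\subseteq asub(\Omega)$ (and likewise for $G(\theta)$). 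This I would prove by induction on the $\Vdash$-derivation: because $G(\delta)$ is neither $\top$, nor a conjunction, nor an implication, the last rule cannot be $(\top\text{-right})$, $(\land\text{-right})$ or $(\rightarrow\text{-right})$; it is therefore $(\Omega\text{-Ax})$, which is the base case, or one of the left rules $(\text{Weak})$, $(\text{Cont})$, $(\land\text{-left1})$, $(\land\text{-left2})$, $(\rightarrow\text{-left})$, each of which keeps $G(\delta)$ on the right of $\Vdash$ and only enlarges the $asub$ of the antecedent, the case $(\rightarrow\text{-left})$ using that the active premise $\Omega',\omega'\Vdash G(\delta)$ satisfies $asub(\omega')\subseteq asub(\omega\rightarrow\omega')$ while its $\Vdash$-sibling premise is irrelevant. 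Specialising the generalised statement to $\Theta=\Delta=\emptyset$ then gives \autoref{thm:subformula}.

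The main obstacle is the interaction of $(\Omega\text{-Cut})$ with $(\rightarrow\text{-right})$: a naive direct induction over arbitrary proofs would require ``$\Omega\Vdash\omega$ implies $asub(\omega)\subseteq asub(\Omega)$'', which is simply false --- for instance $(\rightarrow\text{-right})$ derives $\Vdash(\varphi\rightarrow\varphi)\rightarrow\top$ with $\varphi$ containing atoms foreign to the (here empty) context. Passing through the normal form is exactly what confines $(\Omega\text{-Cut})$ to a layer in which only $G$-formulas are cut in and the succedent is unchanged, so that the weaker (and true) auxiliary above is enough; the rest is the mechanical $asub$-monotonicity bookkeeping of the upper layers. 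A secondary point to be careful about is that \autoref{thm:col-normal} is stated for \MuACL\ and \MuACLs\ alike, so the same argument --- with $(*\text{-cut})\in Lr$ handled in the upper part as above --- covers both logics at once.
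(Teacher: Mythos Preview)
Your proposal is correct. The paper's own proof is a single line, ``By rule induction,'' so in terms of detail you have gone far beyond what the paper supplies.

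The genuine contribution of your write-up is the diagnosis of the $(\Omega\text{-cut})/(\rightarrow\text{-right})$ interaction: you are right that a naive simultaneous induction would need ``$\Omega\Vdash\omega$ implies $asub(\omega)\subseteq asub(\Omega)$,'' and your counterexample $\Vdash(G(\delta)\rightarrow G(\delta))\rightarrow\top$ shows this auxiliary claim is false. The paper's one-liner does not address this, so either it is tacitly assuming a stronger invariant or it is glossing over exactly the difficulty you isolate. Your route through \autoref{thm:col-normal} is a clean fix: in a normal proof the $Cr$-layer sits strictly below $\Omega_G;\Sigma\vdash\sigma$, the only $\omega$'s that must be transported across that boundary are $G$-formulas (by \autoref{thm:vvdashomega} and the shape of $\Omega_G$), and for those the restricted auxiliary ``$\Omega\Vdash G(\delta)$ implies $asub(\delta)\subseteq asub(\Omega)$'' does go through by induction on $\Vdash$, precisely because the three problematic right rules are excluded by the form of the succedent. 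The case analysis you give for the upper layers ($Lr\cup Pr\cup Gr\cup Sr$, plus $(*\text{-cut})$ in \MuACLs) is routine and correct; in particular your treatment of $(*\text{-cut})$ by chaining the two hypotheses is the right one.

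What your approach buys is an honest proof at the cost of invoking the heavier \autoref{thm:col-normal}. An alternative that stays closer to a ``pure'' rule induction would be to replace $asub$ on $\Omega$ by a polarised variant that ignores the left argument of $\rightarrow$; one can then try to push a direct induction through, but $(\rightarrow\text{-right})$ still resists a clean invariant, so your normal-form detour is arguably the path of least resistance here.
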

\begin{proof}
	By rule induction.
\end{proof}

\computation*
\begin{proof}
	The problem to solve is equivalent to find a reachable (in a fair way) $\MuACstate'$ such that $\res_i@\usr \in \semden{\MuACstate'}$ for $i = 1,\dots, n$.
	By Theorems~\ref{thm:correctcompletestar} and~\ref{th:fair-exchange}, the fairness of transitions and computations can be reduced to proving \MuACL\ and \MuACLs\ sequents with $\sigma = \semden{\MuACstate'}$.
	The propositions $\sigma$ to consider are finite by~\autoref{thm:quantity} and~\ref{thm:subformula}, and for each of them we can check validity by~\autoref{thm:MuACLdecide} and \autoref{thm:MuACLsdecide}.
	Finally, note that the encoding of MuAC states into \MuACL\ is clearly a bijection, hence we can recover $\MuACstate'$ from $\sigma$.
\end{proof}

\section{Optimizations for the Blockchain Implementation Schema}\label{app:bcopt}

Till now, we have always assumed the client to send a proof for a \emph{complete} initial sequent of the form $\uplus_{\usr \in \Usr} \semantics{R_{\usr}}, \semantics{C}; \semantics{\st} \vdash \semantics{\st'}$, with $R_{\usr}$ the entire ruleset of $\usr$, $C$ the whole context and $\st$, $\st'$ states of the exchange environment.
An optimisation consists of allowing the user to send proofs for a smaller sequent $\Omega; \Sigma \vdash \sigma$, with $\Omega \subseteq \uplus_{\usr \in \Usr} \semantics{R_{\usr}}, \semantics{C}$, $\Sigma \subseteq \semantics{\st}$ and $\sigma \subseteq \semantics{\st'}$, while maintaining the same guarantees as before (namely that validity of the sequent implies fairness).

The non-linear monotonicity of \MuACL, stated by~\autoref{thm:monotonicity}, allows the smart contract to accept proofs with $\Omega \subseteq \uplus_{\usr \in \Usr} \semantics{R_{\usr}}, \semantics{C}$.
For example, the compilation of the rulesets of the users that are not involved in the exchange can be omitted, as well as the part of the context and the rules of the involved users that are not necessary  for the exchange. 
By verifying the received proof, the smart contract certifies the validity of the complete initial sequent, and thus the fairness of the exchange.
For the linear part, we rely on the following result, allowing us to send proofs with $\Sigma \subseteq \semantics{\st}$ and $\sigma \subseteq \semantics{\st'}$.
\begin{prop}\label{thm:compositional}
	For each $\Omega, \Theta, \Delta, \Sigma, \sigma, \sigma'$, if $\Omega; \Theta, \Delta, \Sigma \vdash \sigma$ is valid in \MuACL\ (or \MuACLs), then 
	$\Omega; \Theta, \Delta, \Sigma, \sigma' \vdash \sigma \otimes \sigma'$ is valid in \MuACL\ (or \MuACLs).
\end{prop}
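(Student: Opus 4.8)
The plan is to prove this stronger, ``framed'' statement by taking a derivation of $\Omega; \Theta, \Delta, \Sigma \vdash \sigma$, appending a trivial derivation of $\Omega; \sigma' \vdash \sigma'$, and gluing the two together with a single application of ($\otimes$-right). Since ($\otimes$-right) and the handful of rules needed for the trivial derivation all belong to \MuACL, and \MuACL\ is a subsystem of \MuACLs, the very same construction settles both logics at once; in particular no use of ($*$-cut) is required.

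The first step I would carry out is an \emph{identity lemma}: for every non-linear multiset $\Omega$ and every $\sigma' \in \Sigma$, the sequent $\Omega; \sigma' \vdash \sigma'$ is valid in \MuACL. This goes by induction on the grammar of $\Sigma$-propositions. If $\sigma' = \res@\usr$, the sequent is exactly the axiom ($\Sigma$-Ax), which is already stated with an arbitrary non-linear context on the left. If $\sigma' = I$, then recalling that $I$ on the left of $\vdash$ is identified with the empty multiset, the sequent is $\Omega; {} \vdash I$, obtained from ($I$-right) by $|\Omega|$ applications of (L-Weak). If $\sigma' = \sigma_1 \otimes \sigma_2$, the induction hypothesis yields $\Omega; \sigma_1 \vdash \sigma_1$ and $\Omega; \sigma_2 \vdash \sigma_2$; an application of ($\otimes$-right) combines them into $\Omega; \sigma_1, \sigma_2 \vdash \sigma_1 \otimes \sigma_2$, and ($\otimes$-left-$\Sigma$) then contracts the antecedent to the single formula, giving $\Omega; \sigma_1 \otimes \sigma_2 \vdash \sigma_1 \otimes \sigma_2$.

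The main step is then immediate. Given a derivation $\Pi$ of $\Omega; \Theta, \Delta, \Sigma \vdash \sigma$ and the derivation $\Pi'$ of $\Omega; \sigma' \vdash \sigma'$ supplied by the identity lemma, one application of ($\otimes$-right) with the $\Theta', \Delta'$ components of its right premise taken empty and $\Sigma' = \{\sigma'\}$ gives
\[
\prftree[r]{\mbox{($\otimes$-right)}}{\Omega; \Theta, \Delta, \Sigma \vdash \sigma}{\Omega; \sigma' \vdash \sigma'}{\Omega; \Theta, \Delta, \Sigma, \sigma' \vdash \sigma \otimes \sigma'}
\]
whose conclusion is exactly the claimed sequent. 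For \MuACLs\ the identical tree is a valid derivation, since every rule used is a \MuACL\ rule.

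There is essentially no real obstacle here; the two points that need a little care are (i) that ($\Sigma$-Ax) is stated with a \emph{parametric} non-linear context rather than an empty one, which is precisely why the identity lemma must be established for an arbitrary $\Omega$ (so as to match the shared-$\Omega$ shape of ($\otimes$-right) in \MuACL), and (ii) the bookkeeping of treating $\sigma'$ consistently as a formula on the left of $\vdash$ (to be unpacked via ($\otimes$-left-$\Sigma$)) and as a conjunct on the right of the conclusion. Both are routine.
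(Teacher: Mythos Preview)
Your proposal is correct and follows essentially the same route as the paper: build a derivation of $\sigma' \vdash \sigma'$ by induction on $\sigma'$ using ($\Sigma$-Ax), ($\otimes$-right) and ($\otimes$-left-$\Sigma$), then combine it with the given $\Pi$ via a single ($\otimes$-right). You are in fact a bit more careful than the paper about keeping the non-linear context $\Omega$ aligned in both premises of ($\otimes$-right) and about the $I$ case, but the argument is the same.
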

\begin{proof}
	Let $\Pi$ be a proof for $\Omega; \Theta, \Delta, \Sigma \vdash \sigma$, then the following is a proof for $\Omega; \Theta, \Delta, \Sigma, \sigma' \vdash \sigma \otimes \sigma'$.
	{\normalsize \[
	\prftree[r, doubleline]
	{($\otimes$-right)}
	{
		{
			\begin{matrix}
				\Pi\\
				\Omega; \Theta, \Delta, \Sigma \vdash \sigma				
			\end{matrix}\qquad
		}
		{
			\begin{matrix}
				\Pi'_{\{(\Sigma\text{-Ax}), (\otimes\text{-right}), (\otimes\text{-left})\}}\\
				\sigma' \vdash \sigma'
			\end{matrix}
		}
	}
	{\Omega, \Omega'; \Theta, \Delta, \Sigma, \sigma' \vdash \sigma \otimes \sigma'}
	\]}
	$\!\!\!\!\!$
with $\Pi'$  defined by induction on the size of $\sigma'$ using ($\Sigma$-Ax), ($\otimes$-right) and ($\otimes$-left).
\end{proof}
Note that $\semantics{\st}$ and $\semantics{\st'}$ represent resource associations also for users and resources that are not involved in the exchange.
Instead, due to the previous result, we can just send a proof that only involves the linear atomic propositions of resource associations that change during the transition or computation.
\\
Consider~\autoref{thm:fairnesstovalidity} and assume the context $C$ to be fixed.
The size of the \MuACL\ proof can thus be reduced from $O(|\exc| + |\Omega_R|\cdot \CS_{C, R} + |\Omega_C| + |\semantics{\MuACstate}|)$ to $O(|\exc|\cdot \CS_{C, R})$ since
\begin{itemize}
	\item $\Omega_R$ can be reduced by~\autoref{thm:monotonicity} to the needed formulas, which are of the same size of $\exc$ since each of them results in at least a transfer of resources.
	\item $|\Omega_C|$ is assumed to be constant (note that it can also be reduced by~\autoref{thm:monotonicity});
	\item $|\semantics{\MuACstate}|$ can be reduced by~\autoref{thm:compositional} to the set of linear atomic propositions that are involved in the exchange, 
and this set has the same size of the exchange itself.
\end{itemize}

The same result holds for eventually fair computations $\MuACstate_0 \xrightarrow{\exc_1} \MuACstate_1 \xrightarrow{\exc_2} \dots \xrightarrow{\exc_n} \MuACstate_n$, for which the size of the \MuACLs\ proof can be reduced to $O(\sum_{i = 1}^n |\exc_i|\cdot \CS_{C, R})$.

\end{document}